\documentclass[acmsmall,usenames,dvipsnames]{acmart}
\settopmatter{printfolios=true,printccs=false,printacmref=false}

\usepackage{ifthen}


\newboolean{talk}
\setboolean{talk}{false}
\newboolean{paper}
\setboolean{paper}{false}


\newboolean{IEEEstyle}
\setboolean{IEEEstyle}{false}
\newboolean{lipicsstyle}
\setboolean{lipicsstyle}{false}
\newboolean{eptcsstyle}
\setboolean{eptcsstyle}{false}
\newboolean{sigplanstyle}
\setboolean{sigplanstyle}{false}
\newboolean{PACMPL}
\setboolean{PACMPL}{false}
\newboolean{acmartstyle}
\setboolean{acmartstyle}{false}


\newboolean{needstheorems}
\setboolean{needstheorems}{false}
\newboolean{withimages}
\setboolean{withimages}{false}
\newboolean{withproofs}
\setboolean{withproofs}{true}


\newboolean{french}
\setboolean{french}{false}

\setboolean{PACMPL}{true}
\ifthenelse{\boolean{talk}}{}{\usepackage[utf8]{inputenc}}
\ifthenelse{\boolean{PACMPL}}{}{
	\ifthenelse{\boolean{french}}{\usepackage[french]{babel}}{
	  \ifthenelse{\boolean{lipicsstyle}}{}{\usepackage[english]{babel}}}
	  }
\ifthenelse{\boolean{PACMPL}}{}{
	\usepackage{amssymb} }
\usepackage{amsmath}
\usepackage{graphicx}
\usepackage{bussproofs}
\usepackage{fancybox}
\usepackage{cmll}
\usepackage{stmaryrd}
\usepackage{multirow}
%
\ifthenelse{\boolean{IEEEstyle}}{
\usepackage[bookmarks={false}]{hyperref}}{
\usepackage{hyperref}}
\usepackage{proof}
\usepackage{hhline}
\usepackage{xspace}
\usepackage{booktabs}
\usepackage{subcaption}
\usepackage{wrapfig}
\usepackage{array}
\usepackage{arydshln}
\usepackage{commath}
\ifthenelse{\boolean{IEEEstyle}}{
\setboolean{needstheorems}{true}}{}

\ifthenelse{\boolean{eptcsstyle}}{
\setboolean{needstheorems}{true}}{}

\ifthenelse{\boolean{sigplanstyle}}{
\setboolean{needstheorems}{true}}{}


\ifthenelse{\boolean{needstheorems}}{

\usepackage{amsthm}

    \newtheorem{theorem}{Theorem}[section]
    \newtheorem{lemma}[theorem]{Lemma}
    \newtheorem{corollary}[theorem]{Corollary}
    \newtheorem{proposition}[theorem]{Proposition}
    \newtheorem{definition}[theorem]{Definition}
    \newtheorem{remark}[theorem]{Remark}
    \newtheorem{example}[theorem]{Example}
}{}

\newcommand{\myproof}[1]{
\ifthenelse{\boolean{withproofs}}{#1}{}}

\newcommand{\withproofs}[1]{
\ifthenelse{\boolean{withproofs}}{#1}{}}

\newcommand{\withoutproofs}[1]{
\ifthenelse{\boolean{withproofs}}{}{#1}}



\newcommand{\tm}{t}
\newcommand{\tmtwo}{u}
\newcommand{\tmthree}{r}
\newcommand{\tmfour}{w}
\newcommand{\tmfive}{s}

\newcommand{\var}{x}
\newcommand{\vartwo}{y}
\newcommand{\varthree}{z}


\newcommand{\Rew}[1]{\rightarrow_{#1}}

\renewcommand{\to}{\Rew{}}








\newcommand{\towh}{\Rew{wh}}




\newcommand{\symfont}[1]{\mathsf{#1}}

\newcommand{\varsym}{{\symfont{var}}}



\newcommand{\ctxholep}[1]{[#1]}
\newcommand{\ctxhole}{\ctxholep{\cdot}}

\newcommand{\ctx}{C}
\newcommand{\ctxtwo}{D}
\newcommand{\ctxthree}{E}

\newcommand{\ctxp}[1]{\ctx\ctxholep{#1}}

\newcommand{\hctx}{H}
\newcommand{\hctxtwo}{K}
\newcommand{\hctxthree}{G}
\newcommand{\hctxp}[1]{\hctx\ctxholep{#1}}
\newcommand{\hctxtwop}[1]{\hctxtwo\ctxholep{#1}}
\newcommand{\hctxthreep}[1]{\hctxthree\ctxholep{#1}}




\newcommand{\nbvctxtwo}[1]{\nbvctxtwo{#1}}



\newcommand{\defeq}{:=}
\newcommand{\eqdef}{=:}
\newcommand{\grameq}{::=}

\newcommand{\isub}[2]{\{#1/#2\}}
\newcommand{\esub}[2]{[#1/#2]}







\newcommand{\llbrace}{\{ \kern -0.27em \vert}
\newcommand{\rrbrace}{\vert \kern -0.27em \}}



\newcommand{\grammarpipe}{\mathrel{\big |}}

\renewcommand{\l}{\lambda}
\newcommand{\ie}{\textit{i.e.}\xspace}
\newcommand{\eg}{\textit{e.g.}\xspace}
\newcommand{\ih}{\textit{i.h.}\xspace}
\newcommand{\fv}[1]{\symfont{fv}(#1)}


\newcommand{\red}[1]{{\color{red} {#1}}}
\newcommand{\blue}[1]{{\color{blue} {#1}}}

\newcommand{\ignore}[1]{}

\newcommand{\myinput}[1]{\ifthenelse{\boolean{withimages}}{\input{#1}}{}}


\newcommand{\levy}{{L{\'e}vy}\xspace}













\newcommand{\nat}{\mathbb{N}}














\newcommand{\size}[1]{|#1|}
\newcommand{\sizeparam}[2]{|#1|_{#2}}






\newcommand{\clos}{c}

\newcommand{\env}{E}
\newcommand{\envtwo}{E'}
\newcommand{\envthree}{E''}

\newcommand{\stack}{S}

\ifthenelse{\boolean{PACMPL}}{\renewcommand{\state}{s}}{\newcommand{\state}{s}}
\newcommand{\statetwo}{{s'}}
\newcommand{\statethree}{s''}
\newcommand{\statefour}{q}



\newcommand{\tokam}{\Rew{\KAM}}




\newcounter{numberone}

\newenvironment{varenumerate}
{
\begin{list}{\arabic{numberone}.}
{
  \usecounter{numberone}
  \setlength{\itemsep}{0pt}
  \setlength{\topsep}{0pt}
  \setlength{\parsep}{0pt}
  \setlength{\partopsep}{0pt}
  \setlength{\leftmargin}{15pt}
  \setlength{\rightmargin}{0pt}
  \setlength{\itemindent}{0pt}
  \setlength{\labelsep}{5pt}
  \setlength{\labelwidth}{15pt}
}}
{
\end{list} 
}

\newcounter{numbertwo}


\newcommand{\trpos}{logged position\xspace}
\newcommand{\trposs}{logged positions\xspace}

\newcommand{\TrPoss}{Logged Positions\xspace}

\newcommand{\Log}{Log\xspace}

\newcommand{\JAMtoIAM}[1]{I(#1)}
\newcommand{\JAMtoIAMstate}[1]{I#1}
\newcommand{\indstate}[1]{#1^\circ}

\renewcommand{\ctxholep}[1]{\langle #1\rangle}
\newcommand{\ctxtwop}[1]{\ctxtwo\ctxholep{#1}}
\newcommand{\ctxthreep}[1]{\ctxthree\ctxholep{#1}}

\newcommand{\dom}[1]{dom(#1)}





\newcommand{\reflemma}[1]{Lemma~\ref{l:#1}}
\newcommand{\reflemmap}[2]{Lemma~\ref{l:#1}.\ref{p:#1-#2}}

\newcommand{\refprop}[1]{Prop.~\ref{prop:#1}}
\newcommand{\refsect}[1]{Sect.~\ref{sect:#1}}

\newcommand{\refeq}[1]{(\ref{eq:#1})}

\newcommand{\refthm}[1]{Theorem~\ref{thm:#1}}
\newcommand{\refthmp}[2]{Theorem~\ref{thm:#1}.\ref{p:#1-#2}}

\newcommand{\reffig}[1]{Fig.~\ref{fig:#1}}
\newcommand{\refapp}[1]{Appendix~\ref{sect:#1}}

\newcommand{\lctx}[1]{\ctx_{#1}}
\newcommand{\lctxp}[2]{\lctx{#1}\ctxholep{#2}}

\newcommand{\octx}[1]{O_{#1}}


\renewcommand{\esub}[2]{[#1{\shortleftarrow}#2]}
\renewcommand{\isub}[2]{\{#1{\shortleftarrow}#2\}}

\newcommand{\resm}{\psym}
\renewcommand{\resm}{\bullet}

\newcommand{\lpos}{p}
\renewcommand{\lpos}{l}
\newcommand{\lpostwo}{{\lpos'}}

\newcommand{\pos}{p}
\newcommand{\postwo}{\pos'}

\newcommand{\sizelpos}[1]{\size{#1}_{\lpos}}
\newcommand{\sizeclpos}[1]{\size{#1}_{\clpos}}

\newcommand{\upp}{\blue{\uparrow}}
\newcommand{\downp}{\red{\downarrow}}
\newcommand{\uppt}{\red{\uparrow}}
\newcommand{\downpt}{\blue{\downarrow}}
\newcommand{\tlog}{L}
\newcommand{\tlogtwo}{\tlog'}
\newcommand{\tlogthree}{\tlog''}
\newcommand{\tlogn}{L_n}
\newcommand{\tape}{T}
\newcommand{\tapetwo}{\tape'}
\newcommand{\tapethree}{\tape''}
\newcommand{\pol}{d}
\newcommand{\poltwo}{\pol'}

\newcommand{\run}{\pi}
\newcommand{\runtwo}{\sigma}
\newcommand{\runthree}{\rho}



\newcommand{\relf}{{\blacktriangleright}}

\newcommand{\nopolstate}[5]{(#1,#2,#4,#3,#5)}
\newcommand{\dstate}[4]{(\red{\underline{#1}},#2,#4,#3)}
\newcommand{\kamstate}[4]{(\red{#1},#2,#4,#3)}
\newcommand{\ustate}[4]{(#1,\blue{\underline{#2}},#4,#3)}

\newcommand{\dstatetab}[4]{\red{\underline{#1}} & #2 & #4 & #3 }
\newcommand{\kamstatetab}[4]{\red{#1} & #2 & #4 & #3 }

\newcommand{\ustatetab}[4]{#1 & \blue{\underline{#2}} & #4 & #3 }

\newcommand{\ndstatetab}[5]{\red{\underline{#1}} & #2 & #4 & #3 & #5}
\newcommand{\nustatetab}[5]{#1 & \blue{\underline{#2}} & #4 & #3 & #5}

\newcommand{\cons}{{\cdot}}

\newcommand{\mach}{\mathrm{M}}

\newcommand{\IAM}{IAM\xspace}
\newcommand{\JAM}{JAM\xspace}
\newcommand{\LIAM}{$\lambda$IAM\xspace}
\newcommand{\LJAM}{$\lambda$JAM\xspace}
\newcommand{\PAM}{PAM\xspace}
\newcommand{\LPAM}{$\lambda$PAM\xspace}
\newcommand{\HAM}{HAM\xspace}
\newcommand{\SIAM}{SIAM\xspace}
\newcommand{\KAM}{KAM\xspace}

\newcommand{\IAMold}{\mathrm{IAM}}
\newcommand{\JAMold}{\mathrm{JAM}}

\newcommand{\tomachhole}[1]{\rightarrow_{#1}}
\newcommand{\tomach}{\tomachhole{}}

\newcommand{\btsym}{\mathsf{bt}}
\newcommand{\tomachdotone}{\tomachhole{\resm 1}}
\newcommand{\tomachdottwo}{\tomachhole{\resm 2}}
\newcommand{\tomachvar}{\tomachhole{\varsym}}
\newcommand{\tomachbttwo}{\tomachhole{\btsym 2}}

\newcommand{\iamdap}{\tomachdotone}
\newcommand{\iamdlamone}{\tomachdottwo}
\newcommand{\iamdvar}{\tomachvar}
\newcommand{\iamdlamtwo}{\tomachbttwo}

\newcommand{\argsym}{\mathsf{arg}}
\newcommand{\tomachdotthree}{\tomachhole{\resm 3}}
\newcommand{\tomachdotfour}{\tomachhole{\resm 4}}
\newcommand{\tomacharg}{\tomachhole{\argsym}}
\newcommand{\tomachbtone}{\tomachhole{\btsym 1}}
\newcommand{\iamuapltwo}{\tomachdotthree}
\newcommand{\iamulam}{\tomachdotfour}
\newcommand{\iamuaplone}{\tomacharg}
\newcommand{\iamuapr}{\tomachbtone}

\newcommand{\tomachbttwodec}{\tomachhole{\btsym 2, \lpos}}
\newcommand{\tomachbtonedec}{\tomachhole{\btsym 1,\lpos}}
\newcommand{\tomachbtonedecp}[1]{\tomachhole{\btsym 1,#1}}
\newcommand{\tomachbttwodecp}[1]{\tomachhole{\btsym 2,#1}}

\newcommand{\tomachup}{\tomachhole{\upp}}
\newcommand{\tomachdown}{\tomachhole{\downp}}



\newcommand{\jumpsym}{\mathsf{jmp}}
\newcommand{\tomachjump}{\tomachhole{\jumpsym}}
\newcommand{\iamujump}{\tomachjump}

\newcommand{\appsym}{\mathsf{app}}
\newcommand{\abssym}{\mathsf{abs}}
\newcommand{\tomachapp}{\tomachhole{\appsym}}
\newcommand{\tomachabs}{\tomachhole{\abssym}}
\newcommand{\kamdap}{\tomachapp}
\newcommand{\kamdlam}{\tomachabs}

\newcommand{\varjsym}{\varsym \mathsf{J}}
\newcommand{\varksym}{\symfont{hop}/\varsym \mathsf{K}}
\newcommand{\tomachdotoneapp}{\tomachhole{\resm 1/\appsym}}
\newcommand{\tomachdottwoabs}{\tomachhole{\resm 2/\abssym}}
\newcommand{\tomachvarj}{\tomachhole{\varjsym}}
\newcommand{\tomachvark}{\tomachhole{\varksym}}
\newcommand{\jamdvar}{\tomachvarj}


\newcommand{\toliam}{\rightarrow_{\lambda\textsc{IAM}}}
\newcommand{\toljam}{\rightarrow_{\lambda\textsc{JAM}}}
\newcommand{\tosiam}{\rightarrow_{\textsc{SIAM}}}
\newcommand{\toham}{\rightarrow_{\textsc{HAM}}}
\newcommand{\tohamj}{\rightarrow_{\textsc{HAM}_{\symfont{J}}}}
\newcommand{\tohamk}{\rightarrow_{\textsc{HAM}_{\symfont{K}}}}
\newcommand{\tolpam}{\rightarrow_{\lambda\textsc{PAM}}}
\newcommand{\stempty}{\epsilon}

\newcommand{\sizee}[1]{\sizelpos{#1}}

\newcommand{\la}[1]{\lambda #1.}

\newcommand{\sem}[2]{\llbracket#1\rrbracket_{#2}}

\newcommand{\exstates}{\mathcal{E}}

\newcommand{\midd}{\; \; \mbox{\Large{$\mid$}}\;\;}

\newcommand{\timem}[1]{|#1|_\mathtt{t}}

\newcommand{\bigo}[1]{\mathcal{O}(#1)}

\newcommand{\sizevar}[1]{|#1|_{\varsym}}
\newcommand{\sizenotvar}[1]{|#1|_{\neg\varsym}}
\newcommand{\depthsym}{\mathsf{depth}}
\newcommand{\spdepthnopar}[1]{\depthsym #1}
\newcommand{\spdepth}[1]{\depthsym(#1)}

\newcommand{\sizeabs}[1]{|#1|_{\mathsf{abs}}}

\newcommand{\terminates}[2]{#1(#2){\Downarrow}}
\newcommand{\diverges}[2]{#1(#2){\Uparrow}}

\newcommand{\ccallbn}{Closed Call-by-Name\xspace}
\newcommand{\ccbn}{Closed CbN\xspace}
\newcommand{\history}{H}
\newcommand{\historytwo}{\history'}

\newcommand{\bisimtape}{\simeq_\tape}
\newcommand{\bisimlog}{\simeq_{\tlog\history}}
\newcommand{\bisimstate}{\simeq}

\newcommand{\pamstatenopol}[6]{(#1,#2,#3,#4,#5,#6)}
\newcommand{\pamstated}[5]{(\red{\underline{#1}},#2,#3,#4,#5)}
\newcommand{\pamstateu}[5]{(#1,\blue{\underline{#2}},#3,#4,#5)}

\newcommand{\pamstatedtab}[5]{\red{\underline{#1}} & #2 & #3 & #4 & #5}
\newcommand{\pamstateutab}[5]{#1 & \blue{\underline{#2}} & #3 & #4 & #5}

\newcommand{\hamstatenopol}[6]{(#1,#2,#3,#4,#5,#6)}
\newcommand{\hamstated}[5]{(\red{\underline{#1}},#2,#3,#4,#5)}
\newcommand{\hamstateu}[5]{(#1,\blue{\underline{#2}},#3,#4,#5)}

\newcommand{\hamstatedtab}[5]{\red{\underline{#1}} & #2 & #3 & #4 & #5}
\newcommand{\hamstateutab}[5]{#1 & \blue{\underline{#2}} & #3 & #4 & #5}

\newcommand{\Id}{\symfont{I}}

\newcommand{\sizeup}[1]{\sizeparam{#1}{\upp}}

\newcommand{\lclos}{\hat{\clos}}
\newcommand{\lclostwo}{\lclos'}

\newcommand{\clpos}{\hat{\lpos}}

\newcommand{\mset}[1]{[#1]}
\newcommand{\emmset}{[]}

\newcommand{\initty}{\star}
\newcommand{\ty}{\tau}
\newcommand{\tytwo}{\rho}

\renewcommand{\ty}{A}
\renewcommand{\tytwo}{\ty'}

\newcommand{\mty}{\mathcal{A}}
\newcommand{\mtytwo}{\mathcal{B}}
\renewcommand{\mty}{S}
\renewcommand{\mtytwo}{\mty'}
\newcommand{\arr}[2]{#1\rightarrow #2}
\newcommand{\tyctx}{\chi}
\newcommand{\tyctxp}[1]{\tyctx\ctxholep{#1}}
\renewcommand{\tyctx}{\mathbb{B}}
\renewcommand{\tyctxp}[1]{\tyctx\ctxholep{#1}}
\newcommand{\tyctxtwo}{\tyctx'}
\newcommand{\tyctxtwop}[1]{\tyctxtwo\ctxholep{#1}}
\newcommand{\tyctxthree}{\tyctx''}
\newcommand{\tyctxthreep}[1]{\tyctxthree\ctxholep{#1}}
\newcommand{\mtyctx}{\mathbb{S}}

\newcommand{\tye}{\Gamma}
\newcommand{\tyetwo}{\Delta}
\newcommand{\tjudg}[3]{#1\vdash #2:#3}
\newcommand{\tjudgi}[3]{#1\vdash_i #2:#3}
\newcommand{\ctjudg}[2]{\vdash #1:#2}
\newcommand{\wtjudgone}[1]{\vdash^{\textcolor{Violet}{#1}}}
\newcommand{\wtjudg}[4]{#1\stackrel{\textcolor{Violet}{#2}}{\vdash}#3:#4}

\newcommand{\wtjudgt}[4]{#1\stackrel{\textcolor{Violet}{#2}}{\vdash}#3 #4}
\renewcommand{\wtjudgt}[4]{#1\vdash^{\textcolor{Violet}{#2}} #3 #4}

\newcommand{\tyvar}{\textsc{T-Var}}
\newcommand{\tylamstar}{\tylam_\star}
\newcommand{\tylam}{\textsc{T-}\l}
\newcommand{\tyapp}{\textsc{T-@}}

\newcommand{\tyd}{\pi}
\newcommand{\tydtwo}{\tyd'}

\newcommand{\pof}{\;\triangleright}

\newcommand{\WeightTimeKAM}[1]{\mathbf{W}_{\KAM}(#1)}
\newcommand{\WeightTimeIAM}[1]{\mathbf{W}_{\text{\LIAM}}(#1)}

\newcommand{\occstar}[1]{\norm{#1}}

\newcommand{\tsys}[1]{\mathsf{T}_{#1}}


\newcommand{\DiPref}[1]{\mathsf{DiPref}(#1)}
\newcommand{\myldots}{...}

\newcommand{\ruleoc}{J}

\newcommand{\extr}[1]{\mathsf{ext}(#1)}

\newcommand{\extsym}{\symfont{ext}}
\newcommand{\bisimtypes}{\simeq_{\extsym}}
\newcommand{\etape}[1]{\tape_{\extsym}(#1)}
\newcommand{\etapeaux}[2]{\tape_{\extsym}^{#2}(#1)}
\newcommand{\etapeauxs}[1]{\tape_{\extsym}^{\state}(#1)}
\newcommand{\elog}[1]{\tlog_{\extsym}(#1)}
\newcommand{\elpos}[1]{\lpos_{\extsym}(#1)}
\newcommand{\estate}[1]{\state_{\extsym}(#1)}

\newcommand{\indp}[2]{(#1,#2)}

\newcommand{\relfrdx}{\relf_\mathsf{rdx}}
	\newcommand{\relfbody}{\relf_\mathsf{body}}
	\newcommand{\relfarg}{\relf_\mathsf{arg}}
	\newcommand{\relfext}{\relf_\mathsf{ext}}

\newcommand{\focus}{f}

\usepackage{tikz}
\makeatletter
\protected\def\tikz@nonactivecolon{\ifmmode\mathrel{\mathop\ordinarycolon}\else:\fi} 
\makeatother

\usetikzlibrary{matrix,arrows}
\usetikzlibrary{decorations.shapes}
\usetikzlibrary{decorations.text}
\usetikzlibrary{decorations.pathmorphing}
\usetikzlibrary{decorations.markings}
\usetikzlibrary{arrows.meta}



\usepackage{versions}
\excludeversion{LONG}
\excludeversion{SHORT}

\acmJournal{PACMPL}
\acmVolume{1}
\acmNumber{CONF} 
\acmArticle{1}
\acmYear{2018}
\acmMonth{1}
\acmDOI{} 
\startPage{1}

\setcopyright{none}
\bibliographystyle{ACM-Reference-Format}
\citestyle{acmauthoryear}   

\usepackage{booktabs}   
\usepackage{subcaption} 

\begin{document}

\title{The (In)Efficiency of Interaction}        



\author{Beniamino Accattoli}
\orcid{nnnn-nnnn-nnnn-nnnn}             
\affiliation{  
  \department{LIX}              
  \institution{Inria \& LIX, \'Ecole Polytechnique, UMR 7161}            
  \country{France}                    
}
\email{beniamino.accattoli@inria.fr}          

\author{Ugo Dal Lago}
\orcid{nnnn-nnnn-nnnn-nnnn}             
\affiliation{
  \institution{Universit\`{a} di Bologna \& INRIA}            
  \country{Italy}                    
}
\email{ugo.dallago@unibo.it}          

\author{Gabriele Vanoni}
\orcid{nnnn-nnnn-nnnn-nnnn}             
\affiliation{
  \institution{Universit\`{a} di Bologna \& INRIA}            
  \country{Italy}                    
}
\email{gabriele.vanoni2@unibo.it}          

\begin{abstract}
Evaluating higher-order functional programs through abstract machines
  inspired by the geometry of the interaction is known to
  induce \emph{space} efficiencies, the price being \emph{time}
  performances often poorer than those obtainable with traditional,
  environment-based, abstract machines. Although families of
  lambda-terms for which the former is exponentially less efficient
  than the latter do exist, it is currently unknown how \emph{general}
  this phenomenon is, and how far the inefficiencies can go, in
  the worst case.  We answer these questions formulating four different
  well-known abstract machines inside a common definitional framework,
  this way being able to give sharp results about the relative time
  efficiencies. We also prove that non-idempotent intersection type
  theories are able to precisely reflect the time performances of the
  interactive abstract machine, this way showing that its
  time-inefficiency ultimately descends
  from the presence of higher-order types.

\end{abstract}

\begin{CCSXML}
<ccs2012>
<concept>}
<concept_id>10011007.10011006.10011008</concept_id>
<concept_desc>Software and its engineering~General programming languages</concept_desc>
<concept_significance>500</concept_significance>
</concept>
<concept>
<concept_id>10003456.10003457.10003521.10003525</concept_id>
<concept_desc>Social and professional topics~History of programming languages</concept_desc>
<concept_significance>300</concept_significance>
</concept>
</ccs2012>
\end{CCSXML}

\ccsdesc[500]{Software and its engineering~General programming languages}
\ccsdesc[300]{Social and professional topics~History of programming languages}

\newcommand{\UDL}[1]{\textcolor{red}{#1}}

\keywords{lambda-calculus, abstract machines, geometry of interaction}  

\maketitle

\section{Introduction}
\label{sect:intro}
Sometimes, simple objects such as natural numbers can generate
theories of marvelous richness, such as number theory. Something
similar happens with the $\lambda$-calculus, the universally accepted
model of purely functional programs. Its definition is simple: three
constructors and just one rewriting rule, $\beta$-reduction, based on
a natural notion of substitution. The theory of $\beta$-reduction,
however, is surprisingly rich, and still the object of research,
despite decades of deep investigations.

In the eighties, Barendregt's book~\cite{barendregt_lambda_1984} presented a 
stable operational and
denotational theory, \levy had already developed his sophisticated
optimality theory~\cite{levy_reductions_1978}, and languages such as Haskell 
were using tricky
sharing mechanisms in their implementations. In 1987, however, the
linear logic~\cite{girard_linear_1987} earthquake came together with a 
completely new viewpoint
on the $\lambda$-calculus, requiring to revisit the whole theory. For
our story, {two of its byproducts} are relevant, namely the geometry of
interaction~\cite{girard_geometry_1989} (shortened to GoI) and game 
semantics~\cite{DBLP:journals/iandc/HylandO00,DBLP:journals/iandc/AbramskyJM00}.

\paragraph{GoI and Game Semantics} At the time, GoI was a radically new interpretation of proofs, arising from 
connections between linear logic and functional analysis, and based on
an abstract notion of interactive execution {for proofs}. Game semantics
was introduced to solve the \emph{full abstraction problem for PCF}~ 
\cite{DBLP:journals/tcs/Milner77},
and along the years affirmed itself as the sharpest and most flexible
form of semantics for higher-order languages. Roughly, the models 
known
at the time were not able to capture fine computational behaviors---that is, 
they were not \emph{intensional} enough.
Strategies from game semantics, instead, allow to faithfully model
these behaviors of $\l$-terms: program composition is modeled as the
interplay between {the corresponding} strategies---a concrete form of 
interaction---having 
the flavor of executions in some sort of abstract machine.  In
fact, there are two styles of game semantics. One, AJM games, is due
to Abramsky, Jagadeesan, and
Malacaria~\cite{DBLP:journals/iandc/AbramskyJM00}, and {it} is directly
inspired by GoI. Another one, HO games, is due to Hyland and
Ong~\cite{DBLP:journals/iandc/HylandO00}, and models interaction in a
different, pointer-based, way.

\paragraph{Game Machines} The computational content of GoI was first explored by~\citet{danos_regnier_1995} and~\citet{mackie_geometry_1995}, who proposed a new form of
implementation schema called \emph{interaction abstract machine}
(shortened to \IAM). The \IAM works in a fundamentally
different way with respect to environment-based abstract machines,
which are the standard and time-efficient way of modeling the
implementations of functional languages. The link between game semantics and abstract machines was first explored by 
\citet{DBLP:conf/lics/DanosHR96}. They showed the \IAM to be the machinery behind AJM
games, and the new \emph{pointer abstract machine} (\PAM) the one
behind HO games. They also established a correspondence between the
two styles of games, providing an indirect relationship between the
\IAM and the \PAM. Finally, from a technical study of the
\IAM, \citet{DR99} introduced an optimized machine, the
\emph{jumping abstract machine} (\JAM), claiming it isomorphic to
the \PAM despite using very different data structures. In the
following, we refer collectively to the \IAM, the \JAM, and
the \PAM, as to \emph{game machines} (\emph{interaction machines} would be
ambiguous, because of the \IAM).

\paragraph{A Blind Spot} Despite the existence of a huge literature about GoI and game semantics, their related 
abstract machines remain---somewhat surprisingly---not well understood. Game 
machines are quite sophisticated and their presentations
are hard to grasp, sometimes even far from being formally defined. For
instance, the \PAM has always been presented informally, as an
algorithm described in natural language or pseudocode.  Additionally, the
relationship between these machines is not clear, especially at the
level of the relative performances. One of the aims of this paper is taking the first steps towards a
proper theory of the efficiency of game machines.

\paragraph{Space and Interactions} 
It is well known 
that environment machines can be space inefficient, because the
environment (or closure) mechanism they rely on uses space
proportional to the number of $\beta$-steps, \ie\ the natural time
cost model of the $\l$-calculus. Using as much space as time is in
fact the worst one can do, from a space efficiency point of view.
The \IAM relies on a different mechanism, that---similarly to
offline Turing machines~\cite{DBLP:conf/esop/LagoS10}---sacrifices
time in order to be space-efficient. {This phenomenon was first pointed out by 
\citet{mackie_geometry_1995}, 
but it is the extensive work by Sch\"opp and
coauthors~\cite{bllspace,DBLP:conf/esop/LagoS10,Schopp14,Schopp15}
that showed that the \IAM allows} for capturing
sub-linear space computations\footnote{evaluating a $\lambda$-term
  without fully inspecting it is indeed possible if the term
  is accessed by way of pointers, in the spirit of so-called offline Turing machines (themselves
  an essential ingredient in the definition of complexity classes
  such as $\mathsf{LOGSPACE}$), and this is precisely the way the \IAM works;
  see~\cite{dal_lago_computation_2016} for a thorough discussion
  about sub-linear space computations in the $\lambda$-calculus.}, something impossible in environment machines.
Along the same lines, one can mention the \emph{Geometry of
  Synthesis}~\cite{ghica_geometry_2007,DBLP:journals/entcs/GhicaS10},
in which the geometry of interaction is seen as a compilation
scheme towards circuits, and computation space is \emph{finite},
and of paramount importance.


\paragraph{Time and Interactions} About time, instead, not much is known for game machines.
Since the early papers on the
\IAM~\cite{danos_regnier_1995,mackie_geometry_1995}, it is known
that it can be exponentially slower than environment machines. As an
example, the family of terms $\tm_n$ defined as $\tm_1 \defeq \Id$ and
$\tm_{n+1} \defeq \tm_n \Id$ (where $\Id$ is the identity combinator)
takes time exponential in $n$ to be evaluated by the \IAM, but
only linear time in any environment machine.  Therefore, game and environment machines are fundamentally
different devices, and  game ones---at least the \IAM---\emph{can} be
time-inefficient. There remain however various open questions. Is
  the inefficiency of the \IAM a \emph{general} phenomenon, that
  is, are \emph{all} $\lambda$-terms concerned? What about the other
  game machines? How bad can the aforementioned phenomenon be,
quantitatively speaking? On \emph{which} $\lambda$-terms does the
phenomenon show up? The main objective of this paper is to provide
answers to these questions.

\paragraph{Jumping is Dizzying} The time inefficiency of the \IAM is addressed by Danos and
Regnier and Mackie via an optimized machine, the
\JAM~\cite{mackie_geometry_1995,DR99}. In which relation the
\JAM is with other machines is unclear.  Danos and Regnier
present the \JAM as an optimization of the \IAM defined on
top of proof nets. Then, they claim (without proving it) that if one
considers the call-by-name translation of the $\lambda$-calculus into
proof nets, the \JAM is isomorphic to the \PAM, while they
prove that using the call-by-value translation one obtains the
\KAM. This is somehow puzzling, since the \KAM is a
call-by-\emph{name} machine.

\paragraph{Time, Environments, and Types} Another natural question comes from 
the study of {the 
relationship between intersection types and} environment
machines. {The non-idempotent variant of intersection types---here 
shortened to
\emph{multi types}---provides a type theoretic
understanding of time for environment machines, as shown by \citet{deCarvalho18}, since the execution time of 
environment machines can be extracted from multi types derivations}. It is natural to
wonder whether similar connections exist between game machines and
multi types, or some other form of type system. That would be
particularly useful as a way of comparing the time behaviour of a
given term when evaluated by distinct machines.

\paragraph{This Paper} The aim of this work is precisely giving the first sharp
results about the time (in)efficiency of the interaction mechanism at
work in game machines. We adopt the simplest possible setting, that
is, weak call-by-name evaluation on closed terms, called here
\emph{\ccbn}, and we provide four main contributions.

\paragraph{Contribution 1: a Formal Common Framework}
Inspired by a very recent reformulation of the \IAM on $\l$-terms
(rather than proof nets, as in the original papers) by
\citet{IamPPDPtoAppear}, called \LIAM, we provide new similar
presentations of the \JAM and the \PAM, called \LJAM and
\LPAM. These formulations are easily manageable and comparable,
enabling neat formal results about them---in particular, ours is the
first formal and manageable definition of the \PAM.

\paragraph{Contribution 2: Comparative Complexity}
We provide bisimulations between the \LIAM, the \LJAM, the
\LPAM, and additionally the \KAM, taken as the reference for
environment machines. This allows for a precise comparison of the time
behavior of the four machines:
\begin{enumerate}
\item \emph{Hierarchy}: we show that the \KAM is never slower
than the \LJAM which is never slower than the \LIAM, establishing
a sort of hierarchy. 
\item \emph{\LJAM is (slowly) reasonable}: a close inspection shows that the 
\LJAM is at most quadratically 
slower than the \KAM. Since the \KAM is a \emph{reasonable}\footnote{\emph{Reasonable} is a technical word meaning 
polynomially related to Turing 
machines. In our context, a machine for \ccbn is reasonable if the number of transition it takes on $\tm$ is polynomial 
in the number of weak head $\beta$-steps to reduce $\tm$ and in the size $\size\tm$ of $\tm$. For more details about 
reasonable cost models for the $\l$-calculus, see the overview in
\cite{DBLP:journals/entcs/Accattoli18}.} implementation scheme, we obtain than 
the \LJAM is reasonable as well.
\item \emph{\LJAM and \LPAM isomorphism}: we confirm Danos and Regnier's 
claim that the
\LJAM and the \LPAM are isomorphic (and have the same time
behavior), working out the elegant and yet far from trivial
isomorphism\footnote{In the note \cite{Danos04headlinear}, the authors claim that the \PAM "is faster than the \KAM 
in many cases" referring to private communication with Herbelin. Our results falsify the claim, as the 
\PAM---behaving 
as the \LJAM---is never faster and at most quadratically slower than the 
\KAM.}.
\end{enumerate}

\paragraph{Contribution 3: \LIAM Time and Multi Types} We show how to extract 
the length of the \LIAM run on a term 
$\tm$---that is, the time cost---from multi type derivations for
$\tm$. This
study complements de Carvalho's, showing that multi types can measure
also the time of the \IAM, not just the \KAM.  A key point is
that, by comparing how multi types measure game and environment
machines, we obtain a clear insight about the time gap between the two
approaches: the time usage of the \LIAM depends on the multi type
derivation \emph{and} the size of the involved multi types, while the time
usage of the \KAM depends only on the former. Therefore, the gap is
bigger on terms whose \KAM evaluation is much shorter than {the involved 
types}\footnote{Note that even the smallest {multi type 
derivation for} the
  inefficient \IAM family $\tm_n$ given above {uses types}
  exponential in $n$ {(inside the derivation)}.}.

\paragraph{Contribution 4: a Uniform Proof Technique}
The proofs of the three most challenging theorems---namely the bisimulations of
the \LIAM and the \LJAM, of the \LJAM and the \KAM, and the correctness 
of cost analyses via the type system---are all proved using the same novel
technique. To prove the correctness of the \LIAM,
\citet{IamPPDPtoAppear} study a new invariant, the
\emph{exhaustible (state) invariant}, expressing a form of coherence
of the data structures used by the \LIAM. Our theorems are all
proved by adapting the exhaustible invariant to each specific case,
providing concise technical developments and conceptual
unity. This point contrasts strikingly with the original papers on
  game machines, whose proof techniques are involved and
  indirect\footnote{The relationship between the \IAM and
      the \PAM in \cite{DBLP:conf/lics/DanosHR96} is not direct as it goes through both AJM games and HO games. 
Similarly, the
      relationship between the \IAM and the \JAM in
      \cite{DR99} is not self-contained, as it is based on the
      non-trivial equivalence between regular and legal paths proved
      in \cite{DBLP:conf/lics/AspertiDLR94}.}, and often informal.
  The exhaustible invariant---while certainly 
  technical---is simpler and provides direct arguments. It
seems to be the key tool to study game machines. As a slogan,
\emph{interacting is exhausting}.

\paragraph{Our Two Cents about Space}
At the end of the paper we also
briefly discuss space. Specifically, we provide examples showing that
the \LIAM can use more space than the \LJAM, despite the former
being considered space-efficient and the latter being as
space-inefficient as possible. This fact does not contradict the space
efficiency of the \LIAM, as it concerns only specific terms.
Our example however shows that space relationships
between game machines---if they can be established at all---are
subtler than the time ones, and of a less uniform nature.

\paragraph{Our Results, at a Distance} The body of the paper is quite technical 
and this is inevitable, because 
abstract machines---for as much as they can be abstract---are low-level tools. It is however easy to provide a 
high-level perspective.
Comparing with the original papers on game machines, our presentations
play the role of a \emph{Rosetta stone}, allowing to connect concepts
and decode many technical subtleties and invariants. Our exhaustible
invariant, additionally, removes the need to resort to game semantics or
legal paths when relating the machines.  Our complexity study suggests
that interaction as modeled by HO games (seen as the \LJAM and the
\LPAM) is a time-reasonable process, while as modeled by the GoI
and AJM games (seen as the \LIAM) is a time-inefficient
process\footnote{Whether the \IAM is reasonable is unclear. The
  aforementioned time inefficiency of the \IAM is not a proof that
  it is unreasonable.}. Our
multi type study, however, suggests that the gap between the two is
big only on terms whose type derivations are much smaller than {the involved} 
types.  Focusing on HO 
games, the
quadratic overhead of the \LJAM with respect to the \KAM shows
that interaction as modeled by HO games is time-reasonable but not
time-efficient. Summing up, \emph{interacting takes time, and is
  exhausting.}

\paragraph{Evaluating without Duplicating} Let us provide a conclusive insight. 
$\beta$-reducing a $\l$-term 
(potentially) duplicates arguments, whose different copies may be used differently, typically being applied to 
different 
further arguments. The machines in this paper never duplicate parts of the code\footnote{Note that not all machines 
avoid duplications: machines with a single global environment duplicate pieces of the code, see 
\cite{DBLP:conf/ppdp/AccattoliB17}. Perhaps surprisingly, performing duplications is not as costly as one may expect. 
Global environment machines are indeed time-efficient and faster than the game machines studied in this paper.}, but 
have nonetheless to distinguish 
different uses of a same piece of code during execution. Each one does it in a clever and sophisticated different 
way---multi types also fit this view, as they remove duplications altogether by taking all the needed copies in advance 
(see 
\refsect{typed-invariant}). This paper can then be seen as a systematic and thorough study of \emph{the art of 
evaluating without duplicating}.

\paragraph{Related Work}
Beyond the works already cited above, game machines are also studied
by
\citet{CurienHFlops,DBLP:journals/corr/abs-0706-2544} who consider different 
machines as directly obtained by game semantics, 
\citet{DBLP:conf/wollic/Mackie17} who derives a proof net based token machine 
for System \textsf{T}, \citet{DBLP:conf/tlca/Pinto01} who develops a parallel 
implementation of the \IAM, and \citet{DBLP:conf/gg/FernandezM02} who extend 
token machines to call-by-value. A
game machine accommodating the additive connectives of linear logic is
in \cite{DBLP:conf/tlca/Laurent01}. Space-efficiency of variants of
the IAM is addressed by
\citet{DBLP:conf/csl/Mazza15,DBLP:conf/icalp/MazzaT15}. Game machines
for languages beyond the $\lambda$-calculus, like $\lambda$-calculi
with algebraic effects, quantum $\lambda$-calculi or concurrent
calculi are in~\cite{hoshino_memoryful_2014,DBLP:conf/csl/LagoFHY14,
  DBLP:conf/lics/LagoFVY15,DBLP:conf/lics/LagoTY17}. A different kind
of machine inspired by the GoI is
in~\cite{danos_local_1993,DBLP:journals/tocl/PediciniQ07}. Interaction
and rewriting are mixed in recent work
by~\citet{DBLP:conf/csl/MuroyaG17,DBLP:journals/lmcs/MuroyaG19}. 
\citet{DBLP:conf/fossacs/Clairambault11,DBLP:journals/corr/Clairambault15}
uses an abstraction of the \PAM to bound evaluation lengths, and
similar studies are also done by
\citet{DBLP:journals/jsyml/Aschieri17}.

The space inefficiency of environment machines has already been observed by
Krishnaswami, Benton, and Hoffman, who proposed some techniques to
alleviate it in the context of functional-reactive programming and
based on linear types~\cite{DBLP:conf/popl/KrishnaswamiBH12}.

The time efficiency of environment machines has been recently studied
in depth. Before 2014, the topic had been mostly neglected---the only
two counterexamples being
\citet{DBLP:conf/fpca/BlellochG95,DBLP:conf/birthday/SandsGM02}. Since
2014---motivated by advances on time cost models for the $\l$-calculus
by \citet{accattoli_leftmost-outermost_2016}---the topic has actively
been studied
\cite{DBLP:conf/icfp/AccattoliBM14,DBLP:journals/scp/AccattoliG19,DBLP:conf/ppdp/AccattoliB17,
DBLP:conf/ppdp/AccattoliCGC19}.

Intersection types are a standard tool to study
$\lambda$-calculi---see standard references such as
\cite{DBLP:journals/aml/CoppoD78,DBLP:journals/ndjfl/CoppoD80,Pottinger80,krivine1993lambda}.
Non-idempotent intersection types, \ie multi types, were first
considered by \citet{DBLP:conf/tacs/Gardner94}, and then by
\citet{DBLP:journals/logcom/Kfoury00,DBLP:conf/icfp/NeergaardM04,Carvalho07,deCarvalho18}---a
survey is~\cite{BKV17}. De Carvalho's use of multi types to give
bounds to evaluation lengths has also been used in
\cite{DBLP:journals/tcs/CarvalhoPF11,Bernadet-Lengrand2013,DBLP:journals/jfp/AccattoliGK20,DBLP:conf/aplas/AccattoliG18,
DBLP:conf/esop/AccattoliGL19,DBLP:conf/lics/KesnerV20}.

\section{Preliminaries: \ccallbn, and Abstract Machines}

	Let $\mathcal{V}$ be a countable set 
	of variables. 
	Terms of the \emph{$\lambda$-calculus} $\Lambda$ are defined as follows.
	\begin{center}$\begin{array}{rrcl}
	\textsc{$\l$-terms} & \tm,\tmtwo,\tmthree & \grameq & x\in\mathcal{V}\midd 
	\lambda x.\tm\midd 
	\tm\tmtwo.
	\end{array}$
\end{center}
	\emph{Free} and \emph{bound variables} are defined as 
	usual: $\la\var\tm$ binds $\var$ in $\tm$. A term is \emph{closed} when 
	there are no free occurrences of variables in it.
	Terms are considered modulo $\alpha$-equivalence, and capture-avoiding (meta-level) substitution of 
	all the free occurrences of $\var$ for $\tmtwo$ in $\tm$ is noted 
	$\tm\isub\var\tmtwo$. Contexts are just $\lambda$-terms containing exactly 
	one occurrence of a special symbol, the hole $\ctxhole$, intuitively standing for a removed subterm. Here we adopt 
	leveled contexts, whose index, \ie\ the level, stands for the number of 
	arguments (\ie\ the 
	number of !-boxes in linear logic terminology) the hole lies in.
	\begin{center}$
	\begin{array}{rclrrrcl}
	\multicolumn{8}{c}{\textsc{Leveled contexts}}
	\\
	\ctx_0		& \grameq &	\ctxhole \midd \la\var\ctx_0 \midd \ctx_0\tm;
	&&&
	\ctx_{n+1}	& \grameq &	\ctx_{n+1}\tm\midd\la\var\ctx_{n+1}\midd\tm\ctx_{n}.
	\end{array}
$\end{center}
	We simply write $\ctx$ for a context whenever the level is not relevant. 
	The operation replacing the hole $\ctxhole$ with a term $\tm$ 
in a context $\ctx$ is noted $\ctxp\tm$ and called \emph{plugging}.
	
	The operational semantics that we adopt here is weak head evaluation $\towh$, defined as follows:
	\begin{center}$
(\la\vartwo \tm) \tmtwo \tmthree_1 \ldots 
\tmthree_h \ \ \towh \ \ \tm \isub\vartwo \tmtwo 
\tmthree_1 \ldots \tmthree_h.
$\end{center}
We further restrict the setting by considering only closed terms, and refer to 
our framework as \emph{\ccallbn} (shortened to \ccbn). Basic well known facts 
are that in \ccbn the normal forms are precisely the abstractions and that 
$\towh$ is 
deterministic.

\paragraph{Abstract Machines Glossary}  In this paper, an \emph{abstract 
machine} $\mach = (\state, \tomach)$ is a transition system $\tomach$ over a 
set of states, noted $\state$. 
The machines considered in this paper move over the code without ever changing it. A \emph{position} in a term 
$\tm$ is represented as a pair $(\tmtwo,\ctx)$ of a sub-term $\tmtwo$ and a context $\ctx$ such that $\ctxp\tmtwo=\tm$. 
The shape of states depends on the specific machine, but they always include a position $(\tmtwo,\ctx)$ plus some data 
structures. 

A state is \emph{initial}, and noted $\state_\tm$, if it is positioned on $(\tm,\ctxhole)$, $\tm$ is closed, and all the 
data structures are empty. We may write $\state_\tm^{\mach}$ to stress the machine, and $\tm$ is always implicitly 
considered closed, without further mention. A state is \emph{final} if no transitions apply.
 
 A \emph{run} $\run: \state \tomach^*\statetwo$ is a possibly empty sequence of transitions, whose length is noted 
$\size\run$. If $a$ and $b$ are transitions labels (that is, $\tomachhole{a}\subseteq \tomach$ and 
$\tomachhole{b}\subseteq \tomach$) then $\tomachhole{a,b} \defeq \tomachhole{a}\cup \tomachhole{b}$, $\sizeparam\run a$ 
is the number of $a$ transitions in $\run$, and $\sizeparam\run{\neg a}$ is the size of transitions in $\run$ that are 
not $\tomachhole{a}$. An \emph{initial run} is a run from an initial state $\state_\tm$, and it is also called \emph{a run from $\tm$}. A state $\state$ is \emph{reachable} if it 
is the target state of an initial run. A \emph{complete run} is an initial run ending on a final state. Given a 
machine 
$\mach$, we write $\terminates\mach\tm$ if  $\mach$ 
reaches a final state starting from $\state_\tm^{\mach}$, and $\diverges\mach\tm$ otherwise. We say that \emph{$\mach$ 
implements \ccbn} when $\terminates\mach\tm$ if and only if $\towh$ terminates on $\tm$, for every closed term $\tm$.


\section{The Interaction Abstract Machine, Revisited}
\label{sec:IJK}

In this section we provide an overview of the Interaction Abstract Machine (IAM). We adopt the $\l$-calculus 
presentation of the $\IAMold$, rather called \LIAM and recently developed by 
\citet{IamPPDPtoAppear}---we refer to 
their work for an in-depth study of the \LIAM. 
The literature usually studies the ($\l$-)$\IAMold$ with respect to head evaluation of potentially open terms. Here 
we only deal with \ccbn, that is closer to the practice of functional 
programming and also the setting underlying the \KAM. We first define the 
machine and then provide explanations and 
examples. Keep in mind that the \LIAM is an unusual machine, and that finding 
it hard to grasp is 
normal---probably, the next sections about the \LJAM and the \KAM provide 
clarifying insights.
 \begin{figure*}[t]
	\input{machines/LIAM}
	\vspace{-8pt}
	\caption{Data structures and transitions of the $\lambda$ Interaction 
	Abstract Machine (\LIAM).}
	\label{fig:iam}
\end{figure*}

\paragraph{\LIAM States}
Intuitively, the behaviour of the \LIAM can be seen as 
that of a token that travels around the syntax tree of the program under 
evaluation---the transitions and all the data structures are defined in \reffig{iam}.
The \LIAM travels on a $\l$-term $\tm$ carrying data 
structures---representing the token---storing 
information about the computation and determining the next transition to apply. 
A key point is that navigation is done locally, moving only between adjacent 
positions\footnote{Note that also the transition from the 
  variable occurrence to the binder in $\tomachvar$ and $\tomachbttwo$ are local if $\l$-terms are represented by implementing occurrences as pointers to their binders, as in the proof net 
  representation of $\l$-terms, upon which some concrete implementation schemes 
  are based, see \cite{DBLP:conf/ppdp/AccattoliB17}.}. The \LIAM has also a 
  \emph{direction} of navigation that is either 
$\downp$ or $\upp$ (pronounced \emph{down} and \emph{up}).
The token is given by two stacks, 
called \emph{log} and \emph{tape}, whose main components are \emph{\trposs}. 
Roughly, a log is a trace of the relevant 
positions in the history of a computation, and a logged 
position is a position plus a log, meant to trace the history that led to that position. Logs and logged positions are 
defined by mutual induction\footnote{This is similar to the KAM, where closures and environments are defined by mutual 
induction, but logs and logged positions play a different role. Moreover, there 
also is a constraint about the length.}. 
We use $\cdot$ also to concatenate logs, 
writing, \eg, $\tlog_n\cdot\tlog$, using $\tlog$ for a log of unspecified 
length. The \emph{tape} $\tape$ is a list of logged positions plus occurrences of the special symbol 
$\resm$, needed to record the crossing of abstractions 
and applications. A \emph{state} of the machine is given by a position and a 
token (that is, a log $\tlog$ and a tape $\tape$), together with a \emph{direction}.
Initial states have the form $\state_{\tm}\defeq \dstate{\tm}{\ctxhole}{\epsilon}{\epsilon}$.
Directions are often omitted and represented via colors and underlining: 
$\downp$ is represented by a
\red{red} and underlined code term, $\upp$ by a \blue{blue} and underlined code context.

\paragraph{Transitions.} Intuitively, the machine
evaluates the term $\tm$ until the head abstraction of the head normal form is 
found (more explanations below).
The transitions of the \LIAM are in 
\reffig{iam}. Their union is noted $\toliam$. The idea is 
that $\downp$-states 
$\dstate\tm\ctx\tape\tlog$ are queries about the head variable of (the head 
normal form of) $\tm$ and 
$\upp$-states $\ustate\tm\ctx\tape\tlog$ are queries about the argument of an 
abstraction. 

Next, we explain how the transitions realize three entangled mechanisms. Let us 
anticipate that the 
\LJAM shall be obtained by short-circuiting the third mechanism, backtracking, 
and
the KAM by the further removal of the second one, 
that shall also require to 
modify the first one.

	\paragraph{Mechanism 1: Search Up to $\beta$-Redexes} 
	\begingroup
	\setlength{\intextsep}{0pt}
	\begin{wrapfigure}{r}{0pt}${\footnotesize
			\begin{array}{l@{\hspace{0.1cm}}l|c|c|c|l}
			&\mathsf{Sub}\mbox{-}\mathsf{term} & \mathsf{Context} & 
			\mathsf{\Log} & 
			\mathsf{Tape} & \mathsf{Dir}
			\\
			\cline{1-6}
			&\ndstatetab{(\la\vartwo{\la\var{\var\vartwo}})\mathsf{II}} 
			{\ctxhole} 
			{\epsilon} {\epsilon} 
			\downp\\
			\iamdap&\ndstatetab{(\la\vartwo{\la\var{\var\vartwo}})\mathsf{I}} 
			{\ctxhole\mathsf{I}} {\resm} {\epsilon} \downp\\
			\iamdap&\ndstatetab{\la\vartwo{\la\var{\var\vartwo}}} 
			{\ctxhole\mathsf{II}} 
			{\resm\cdot\resm} {\epsilon} \downp\\
			\iamdlamone&\ndstatetab{\la\var{\var\vartwo}} 
			{(\la\vartwo\ctxhole)\mathsf{II}} 
			{\resm} 
			{\epsilon} \downp\\
			\iamdlamone&\ndstatetab{\var\vartwo} 
			{(\la\vartwo{\la\var\ctxhole})\mathsf{II}} 
			{\epsilon} {\epsilon} 
			\downp
			\end{array}}
		$
	\end{wrapfigure}
Note that $\iamdap$ skips the 
 argument and adds a $\resm$ on the tape. The idea is that $\resm$ keeps track 
 that an argument has been encountered---its identity is however forgotten. 
 Then $\iamdlamone$ does the dual job: it skips an abstraction when the tape 
 carries a $\resm$, that is, the trace of a previously encountered 
 argument. Note that, when the \LIAM moves through a $\beta$-redex with the 
 two steps one after the other, the token is 
left unchanged. This mechanism thus realizes search \emph{up to $\beta$-redexes}, 
 that is, without ever recording them. Note that 
 $\iamuapltwo$ and $\iamulam$ realize the same during the $\upp$ phase. 
 Let us 
 illustrate this mechanism with an example (on the right): the first steps of 
 evaluation on 
 the term $(\la\vartwo{\la\var{\var\vartwo}})\mathsf{II}$, where $\mathsf{I}$ 
 is the identity combinator. One can notice that the \LIAM traverses two 
 $\beta$-redexes without altering the token, that is empty both at the 
 beginning and at the end.

\paragraph{Mechanism 2: Finding Variables and Arguments} As a first approximation, navigating in direction $\downp$ 
corresponds to looking for the head variable of the term code, while navigating with direction $\upp$ corresponds to 
looking for the sub-term to replace the previously found head variable, what we 
call \emph{the argument}.\endgroup More 
precisely, when the head variable $\var$ of the active subterm is found, transition $\iamdvar$  switches direction from 
$\downp$ to $\upp$, and the machine starts looking for potential substitutions 
for $\var$. The \LIAM then moves to 
the position of the binder $\lambda\var$ of $\var$, and starts exploring the context $\ctx$, looking for the first 
argument up to $\beta$-redexes. The relative position of $\var$ w.r.t. its binder is recorded in a new \trpos that is 
added to the tape. Since the machine moves out of a context of level $n$, namely $\ctxtwo_n$, the \trpos contains the 
first $n$ \trposs of the log. Roughly, this is an encoding of the run that led 
from the level of 
$\la\var\ctxtwo_n\ctxholep\var$ to the occurrence of $\var$ at hand, in case 
the machine would later need to backtrack.
 
 When the argument $\tm$ for the abstraction binding the variable $\var$ in 
 $\lpos$ is found, transition $\iamuaplone$ switches direction from $\upp$ to 
 $\downp$, making the machine looking for 
the head variable of $\tm$. 
 Note that moving to $\tm$, the level 
 increases, and that the \trpos $\lpos$ is moved from the tape to the 
log. 
The idea is that $\lpos$ is now a completed argument query, 
 and it becomes part of the history of
  how the machine got to the current 
 position, to be potentially used for backtracking.
 \begingroup
 \setlength{\intextsep}{0pt}
 \begin{wrapfigure}{r}{0pt}${\footnotesize
 		\begin{array}{l@{\hspace{0.1cm}}l|c|c|c|c}
 		&\mathsf{Sub}\mbox{-}\mathsf{term} & \mathsf{Context} & \mathsf{\Log} & 
 		\mathsf{Tape} & \mathsf{Dir}
 		\\
 		\cline{1-6}
 		&\ndstatetab{\var\vartwo} {(\la\vartwo{\la\var\ctxhole})\mathsf{II}} 
 		{\epsilon} {\epsilon} \downp\\
 		\iamdap&\ndstatetab{\var} 
 		{(\la\vartwo{\la\var{\ctxhole\vartwo}})\mathsf{II}} 
 		{\resm} {\epsilon} \downp\\
 		\iamdvar&\nustatetab{\la\var{\var\vartwo}} 
 		{(\la\vartwo\ctxhole)\mathsf{II}} 
 		{(\var,\la\var{\ctxhole\vartwo},\stempty)\cons\resm} 
 		{\epsilon} \upp\\
 		\iamulam&\nustatetab{\la\vartwo{\la\var{\var\vartwo}}} 
 		{\ctxhole\mathsf{II}} 
 		{\resm\cons(\var,\la\var{\ctxhole\vartwo},\stempty)\cons\resm} 
 		{\epsilon} \upp\\
 		\iamuapltwo&\nustatetab{(\la\vartwo{\la\var{\var\vartwo}})\mathsf{I}} 
 		{\ctxhole\mathsf{I}} 
 		{(\var,\la\var{\ctxhole\vartwo},\stempty)\cons\resm} 
 		{\epsilon} \upp\\
 		\iamuaplone&\ndstatetab{\mathsf{I}} 
 		{(\la\vartwo{\la\var{\var\vartwo}})\mathsf{I}\ctxhole} 
 		{\resm} 
 		{(\var,\la\var{\ctxhole\vartwo},\stempty)} \downp\\
 		\end{array}}
 	$
 \end{wrapfigure}
  We continue the example of 
 the previous point: the machine finds the head variable $\var$ and looks for 
 its argument in $\upp$ mode. When it has been found, the direction turns 
 $\downp$ again.

\paragraph{Mechanism 3: Backtracking} It is started by transition $\iamuapr$. 
 The idea is that the search for an argument of the $\upp$-phase has to 
 temporarily stop, because there are no arguments left at the current level. 
 The search of the argument then has to be done among the arguments of the 
 variable occurrence that triggered the search, encoded in $\lpos$. Then the 
 machine enters into backtracking mode, which is denoted by a $\downp$-phase 
 with a \trpos on the tape, to reach the position in $\lpos$. 
 Backtracking is over when $\iamdlamtwo$ is fired.
 \endgroup
 
 The $\downp$-phase and the \trpos
 on the tape mean that the \LIAM is backtracking. In fact, in this 
 configuration the machine is 
 not looking for the head variable of the current subterm $\la\var\tm$, it is 
 rather going back to the variable position in the tape, to find its 
 argument. This is realized by moving to the position in the tape and 
 changing direction. Moreover, the log $\tlog_n$
 encapsulated in the \trpos is put back on the global log. An invariant 
 shall guarantee that the \trpos on the tape always contains a position 
 relative to the active abstraction. In our running example, a backtracking 
 phase, noted with a \textsf{BT} label starts when the \IAM looks for the 
 argument of $\varthree$. Since $\la\varthree\varthree$ has been virtually 
 substituted for $\var$, its argument its actual{}ly $\vartwo$. Backtracking is 
 needed to recover the variable a term was virtually substituted for. 
  \begin{center}${\footnotesize
  	\begin{array}{c@{\hspace{0.2cm}}l@{\hspace{0.2cm}}l|c|c|c|c}
  	&&\mathsf{Sub}\mbox{-}\mathsf{term} & \mathsf{Context} & \mathsf{\Log} & 
  	\mathsf{Tape} & \mathsf{Dir}
  	\\
  	\cline{1-7}
  	&&\ndstatetab{\la\varthree\varthree} 
  	{(\la\vartwo{\la\var{\var\vartwo}})\mathsf{I}\ctxhole} 
  	{\resm} 
  	{(\var,\la\var{\ctxhole\vartwo},\stempty)} \downp\\
  	&\iamdlamone&\ndstatetab{\varthree} 
  	{(\la\vartwo{\la\var{\var\vartwo}})\mathsf{I}(\la\varthree\ctxhole)} 
  	{\stempty} 
  	{(\var,\la\var{\ctxhole\vartwo},\stempty)} \downp\\
  	&\iamdvar&\nustatetab{\la\varthree\varthree} 
  	{(\la\vartwo{\la\var{\var\vartwo}})\mathsf{I}\ctxhole} 
  	{(\varthree,\la\varthree\ctxhole,\stempty)} 
  	{(\var,\la\var{\ctxhole\vartwo},\stempty)} \upp\\
  	\mathsf{BT}&\iamuapr&\ndstatetab{(\la\vartwo{\la\var{\var\vartwo}})\mathsf{I}}
  	{\ctxhole\mathsf{I}} {(\var,\la\var{\ctxhole\vartwo},\stempty)\cons 
  		(\varthree,\la\varthree\ctxhole,\stempty)} {\epsilon} \downp\\
  	\mathsf{BT}&\iamdap&\ndstatetab{\la\vartwo{\la\var{\var\vartwo}}} 
  	{\ctxhole\mathsf{II}} 
  	{\resm\cons(\var,\la\var{\ctxhole\vartwo},\stempty)\cons 
  		(\varthree,\la\varthree\ctxhole,\stempty)} {\epsilon} \downp\\
  	\mathsf{BT}&\iamdlamone&\ndstatetab{\la\var{\var\vartwo}} 
  	{(\la\vartwo\ctxhole)\mathsf{II}} 
  	{(\var,\la\var{\ctxhole\vartwo},\stempty)\cons 
  		(\varthree,\la\varthree\ctxhole,\stempty)} {\epsilon} \downp\\
  	&\iamdlamtwo&\nustatetab{\var} 
  	{(\la\vartwo\la\var{\ctxhole\vartwo})\mathsf{II}} 
  	{(\varthree,\la\varthree\ctxhole,\stempty)} {\epsilon} \upp\\
  	\end{array}}$\end{center}
For the sake of completeness, we conclude the example, which runs until
 the head abstraction of the weak head normal form of the term under 
evaluation, namely the first occurrence of \textsf{I}, is found.
\begin{center}${\small
	\begin{array}{l@{\hspace{0.2cm}}l|c|c|c|c}
	&\mathsf{Sub}\mbox{-}\mathsf{term} & \mathsf{Context} & \mathsf{\Log} & 
	\mathsf{Tape} & \mathsf{Dir}
	\\
	\cline{1-6}
	&\nustatetab{\var} 
	{(\la\vartwo\la\var{\ctxhole\vartwo})\mathsf{II}} 
	{(\varthree,\la\varthree\ctxhole,\stempty)} {\epsilon} \upp\\
	\iamuaplone&\ndstatetab{\vartwo} 
	{(\la\vartwo\la\var{\var\ctxhole})\mathsf{II}} 
	{\stempty} {(\varthree,\la\varthree\ctxhole,\stempty)} \downp\\
	\iamdvar&\nustatetab{\la\vartwo\la\var{\var\vartwo}} 
	{\ctxhole\mathsf{II}} 
	{(\vartwo,\la\var{\var\ctxhole},(\varthree,\la\varthree\ctxhole,\stempty))} 
	{\stempty} \upp\\
	\iamuaplone&\ndstatetab{\mathsf{I}} 
	{(\la\vartwo\la\var{\var\vartwo})\ctxhole\mathsf{I}} 
	{\stempty} 
	{(\vartwo,\la\var{\var\ctxhole},(\varthree,\la\varthree\ctxhole,\stempty))} 
	\downp\\
	\end{array}}$\end{center}

 \paragraph{Basic invariants.} Given a state 
$(\tm,\ctx,\tlog,\tape,\pol)$, the log and the tape, \ie the token, 
verify two easy invariants connecting them to the position $(\tm,\ctx)$ and the 
direction $\pol$. The log $\tlog$, together with the position 
$(\tm,\ctx)$,  forms a \trpos, \ie the length of $\tlog$ is exactly the level of 
the code context 
$\ctx$\footnote{ Then, the length of $\tlog$ is exactly the number of (linear logic) \emph{boxes} in which the code term is contained.}. This fact guarantees 
that the \LIAM never gets stuck because the log is not long enough for 
transitions $\iamdvar$ and $\iamuapr$ to apply.

About the tape, note that every time the machine switches from a 
$\downp$-state to an $\upp$-state (or vice versa), a \trpos is 
pushed (or popped) from the tape $\tape$. Thus, for reachable states, the number of \trposs in $\tape$ gives the 
direction of the state. These intuitions are formalized by the \emph{tape and direction} invariant below. Given a direction $\pol$ we use
$\pol^n$ for the direction obtained by switching $\pol$ exactly $n$
times (i.e., $\downp^0=\downp$, $\upp^0=\upp$, $\downp^{n+1} =
\upp^{n}$ and $\upp^{n+1}=\downp^{n}$).
%
\begin{lemma}[\LIAM basic invariants]\label{l:invarianttwo}
  Let $\state = \nopolstate{\tm}{\ctx_n}{\tape}{\tlog}{\pol}$ be a reachable state and $\sizee\tape$ the number of 
\trposs in $\tape$. Then 
  \begin{enumerate}
  	\item \emph{Position and log}: $(\tm,\ctx_n, \tlog)$ is a \trpos, and 
	\item \emph{Tape and direction}: $\pol=\downp^{\sizee\tape}$.
  \end{enumerate}
\end{lemma}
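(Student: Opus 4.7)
The plan is to prove both invariants simultaneously by induction on the length $k$ of the run $\run \colon \state_\tm \toliam^k \state$ that reaches the state $\state$.

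For the base case ($k = 0$), the state is initial, of the form $\dstate{\tm}{\ctxhole}{\epsilon}{\epsilon}$. The context $\ctxhole$ has level $0$ and the log $\epsilon$ has length $0$, so $(\tm,\ctxhole,\epsilon)$ is a \trpos. The tape has no logged positions, so $\sizee\tape = 0$, and the direction $\downp$ matches $\downp^0$. Both invariants hold.

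For the inductive step, I would proceed by case analysis on the last transition $\state' \toliam \state$, using the induction hypothesis on $\state'$. The eight transitions split naturally into three groups according to how they act on the data structures:
\begin{itemize}
\item The \emph{search-up-to-$\beta$} transitions $\tomachdotone$, $\tomachdottwo$, $\tomachdotthree$, $\tomachdotfour$ leave both the context level and the log unchanged, and modify the tape only by pushing/popping the symbol $\resm$. Thus $\sizee\tape$ and the direction are unaffected together, and the position/log invariant is preserved trivially.
\item The \emph{argument-crossing} transitions $\tomacharg$ and $\tomachbtone$ move between function and argument positions of an application: they change the context level by exactly $\pm 1$ (because $\ctxhole\tmtwo$ has level $0$ while $\tmtwo\ctxhole$ has level $1$), and simultaneously move one logged position between the tape and the log. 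Hence the log length stays equal to the new context level, $\sizee\tape$ changes by one, and the direction flips, preserving the tape/direction invariant.
\item The \emph{variable-binder} transitions $\tomachvar$ and $\tomachbttwo$ jump between a variable occurrence $\var$ inside a level-$n$ context $\ctxtwo_n$ under its binder $\la\var$ and the abstraction itself. Here the context level changes by exactly $n$, and correspondingly $n$ entries are moved between the log and the component $\tlog_n$ stored in the logged position $(\var,\la\var\ctxtwo_n,\tlog_n)$ on the tape. By the induction hypothesis, the initial $n$ entries of the log indeed form a valid $\tlog_n$, which is what allows transition $\tomachvar$ to fire. Exactly one logged position is added to or removed from the tape, so the direction invariant is maintained.
\end{itemize}

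The routine but delicate step is the level arithmetic in the last group: one has to verify that plugging $\la\var\ctxtwo_n$ into $\ctx$ (of level $k$, say) yields a context of level $k+n$, so that a log of length $n+k$ fits, and conversely after the transition the context $\ctx$ of level $k$ matches a log of length $k$. The main conceptual obstacle is therefore understanding precisely why the log is always long enough to be split as $\tlog_n\cons\tlog$ in transition $\tomachvar$; this is exactly what the position/log invariant at the source state guarantees, so the two parts of the lemma are genuinely proved by simultaneous induction rather than sequentially.
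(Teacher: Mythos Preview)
Your proof is correct and follows exactly the approach the paper uses for all its basic-invariant lemmas: induction on the length of the run reaching $\state$, with a case analysis on the last transition (cf.\ the proofs of \reflemma{ham-simple-tape} and \reflemma{pam-simple-tape} in the appendix, which are dispatched with the single sentence ``by induction on the length of the run reaching $\state$, together with an immediate inspection of the transitions using the \ih''). Your three-way grouping of the transitions and the explicit level arithmetic are more detailed than what the paper spells out, but the underlying argument is the same.
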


\paragraph{Final States and Interpretation}
If the \LIAM starts on the initial state 
$\state_{\tm}$, the execution may either 
never stop or end in a state $\state$ of the shape 
$\state=\dstate{\la\var\tmtwo}{\ctx}{\epsilon}{\tlog}$. 
The fact that no other shapes are possible for $s$ is proved in~\citet{IamPPDPtoAppear}. The \emph{tape and 
direction} invariant guarantees that the machine never stops because the log or 
the tape have not enough logged positions to apply a $\iamdvar$, $\iamuapr$, or a $\iamuaplone$ transition. 
Additionally, on states such as 
$\dstate{ \la\var\ctxtwop\var }{ \ctx }{ \lpos\cons\tape }{ \tlog  }$, the 
logged position $\lpos$ has shape $(\var, \la\var\ctxtwo, \tlog')$, so that 
transition $\iamdlamtwo$ can always apply---this is a consequence of the \emph{exhaustible state invariant} in 
\refsect{invariant}, as shown in \citet{IamPPDPtoAppear}.

The exhaustible state invariant shall be the technical blueprint for the proof of the relationship between the \LIAM 
and the \LJAM, amounting to short-circuiting backtracking phases. Similarly, we shall use it to relate the 
KAM and the \LJAM, and the \LIAM with multi type derivations.

\paragraph{Implementation} Usually, the \LIAM is shown to implement (a 
micro-step variant of) head reduction. The details are quite different from 
those in the usual notion of implementation for environment machines, such as 
the KAM. Essentially, it is shown that the \LIAM induces a 
semantics $\sem{\cdot}{\text{\LIAM}}$ of terms that is a sound and adequate with 
respect to head reduction, rather than showing a bisimulation between the 
machine and head reduction---this is explained at length in 
\cite{IamPPDPtoAppear}. 
For the sake of simplicity, here we restrict to \ccbn. The \LIAM semantics then reduces to just observing 
termination: $\sem{\tm}{\text{\LIAM}}$ is defined if and only if weak head 
reduction terminates on $\tm$. Therefore, we avoid 
discussing semantics and only study termination. 

\begin{theorem}[\cite{IamPPDPtoAppear}]
	The \LIAM implements \ccbn.
\end{theorem}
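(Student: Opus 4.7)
The plan is to establish the biconditional $\terminates{\LIAM}{\tm} \iff \tm \text{ is } \towh\text{-normalizing}$, building on the general sound-and-adequate semantic interpretation of the \LIAM proved by \citet{IamPPDPtoAppear} for head reduction, then specialized to \ccbn. In the closed weak setting, every normal form is an abstraction and the only observable is termination, so the semantic equivalence $\sem{\tm}{\text{\LIAM}}$ defined $\iff$ $\tm$ normalizes collapses to exactly what we need. My plan is to decompose the argument into three conceptual stages, all driven by the invariants already sketched.

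First, I would verify that the basic invariants of \reflemma{invarianttwo}, together with the exhaustible state invariant anticipated in \refsect{invariant}, guarantee that the \LIAM only blocks at states of the shape $\dstate{\la\var\tmtwo}{\ctx}{\epsilon}{\tlog}$. This is a case analysis on the transition table: for every reachable non-final state, either the active code is an application, abstraction, or variable, and the \emph{position and log} invariant plus the \emph{tape and direction} invariant ensure that a matching transition fires. The only subtle case is $\iamdlamtwo$, which needs that the \trpos sitting on top of the tape actually mentions the currently-active binder; this is precisely what the exhaustible invariant is built to supply.

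Second, for the direction $\terminates{\LIAM}{\tm} \Rightarrow \tm$ normalizing, I would use the above to argue contrapositively: if $\towh$ diverges on $\tm$, then at no point can the \LIAM settle on a final $\dstate{\la\var\tmtwo}{\ctx}{\epsilon}{\tlog}$, because reaching such a state witnesses that $\tm$ has a weak head normal form. This requires the observation that the position-and-tape configuration of a final state projects (via a decoding back through the recorded logs and \trposs) onto a genuine head abstraction of $\tm$, an argument that is largely structural once the exhaustible invariant is in hand.

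Third, and most delicately, the converse $\tm$ normalizes $\Rightarrow \terminates{\LIAM}{\tm}$ requires a projection-style simulation: if $\tm \towh \tmtwo$ with $\tm=(\la\var\tmthree)\tmfour\tmthree_1\ldots\tmthree_h$, I would show that the \LIAM run from $\state_{\tm}$ can be factored so that each future query which the calculus resolves by substituting $\tmfour$ for $\var$ is resolved by the \LIAM via a finite sequence of $\iamdvar$–$\iamuapr$–backtracking–$\iamuaplone$ moves that yield a state bisimilar to the one the \LIAM would be in while processing the already-substituted $\tmtwo$. The main obstacle is exactly here: since the \LIAM never rewrites the code, the correspondence must be maintained purely through the coherence of logs and tapes. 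This is where the exhaustible invariant does the real work, since it certifies that the $\tlog_n$-segment saved at a $\iamdvar$ step and later restored by $\iamdlamtwo$ matches precisely the \trpos popped from the tape, so that ``virtual substitution'' is faithfully simulated. Once this projection is established, termination transfers step-by-step from $\towh$ to $\toliam$, completing the implementation theorem.
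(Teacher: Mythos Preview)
The paper does not prove this theorem: it is stated as a citation to \cite{IamPPDPtoAppear} and no argument is given in the present paper. The surrounding prose explicitly says that correctness of the \LIAM ``is explained at length in \cite{IamPPDPtoAppear}'' and that here one only needs the termination-equivalence corollary. So there is no in-paper proof to compare your proposal against.

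That said, the paper does reveal something about the technique of \cite{IamPPDPtoAppear}: in Appendix~\ref{ssect:acyclic-app} it remarks that the loop-preserving bisimulation $\relf$ between \SIAM states on $\tyd\pof\tjudg{}{\tm}{\initty}$ and on $\tydtwo\pof\tjudg{}{\tmtwo}{\initty}$ (for $\tm\towh\tmtwo$) ``has been already exploited by~\citet{IamPPDPtoAppear}, in order to prove the correctness of the \LIAM.'' This matches the spirit of your third stage: relate \LIAM states on $\tm$ to \LIAM states on the one-step reduct $\tmtwo$ and transfer termination along $\towh$. Your first stage (final-state shape via the basic and exhaustible invariants) is also exactly what the body of the paper asserts informally. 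Where your sketch is thinnest is the second stage: the claim that a final state ``projects onto a genuine head abstraction of $\tm$'' is not obvious, because the \LIAM never rewrites the code and the final position $(\la\var\tmtwo,\ctx)$ is a subterm occurrence of the \emph{original} $\tm$, not of its normal form; turning that into ``$\tm$ has a weak head normal form'' genuinely needs the same $\tm$-to-$\tmtwo$ state correspondence you invoke for the converse, not just a structural decoding. In the cited work both directions are handled uniformly by that bisimulation, rather than by an asymmetric soundness/adequacy split as you outline.
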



\paragraph{Cost of \LIAM Transitions} For all the abstract machines in this paper 
we take random access machines (shortened to RAM) with the 
uniform cost model as the computational model of reference. This is standard in 
the time analyses of abstract machines 
for functional languages. Roughly, it amounts to seeing variables and positions 
as objects  whose manipulation 
takes constant time.

 Every \LIAM transition can then be 
implemented on RAM in constant time but for transition $\tomachvar$, 
whose cost is bounded by the integer $n$ given by $\ctxtwo_n$ (referring to the 
notation of the rules), as the rule needs to split the log after the first $n$ entries. This is in accordance with 
the proof nets interpretation of the \LIAM, because transitions 
$\tomachvar$ correspond to \emph{sequences} of $\IAMold$ transitions on proof nets---see 
\cite{IamPPDPtoAppear}\footnote{Actually, also transition $\tomachbttwo$ corresponds to $n$ $\IAMold$ transitions on 
proof nets. By implementing logs as bi-linked lists, however, $\iamdlamtwo$ can be implemented in constant time. For 
$\tomachvar$ instead, 
there is no easy way out.}. Note that $n$ is bound by the size $\size\tm$ of the (immutable) initial
code $\tm$. The cost of implementing on RAM a \LIAM run $\run$ from $\tm$ then is $\sizenotvar\run + 
\sizevar \run \cdot \size{\tm}$.

\paragraph{Two Useful Properties of the \LIAM} A key property is that the \LIAM is \emph{bi-deterministic}, that 
is, it is deterministic and also deterministically reversible. Another more technical property is that it verifies a 
sort of context-freeness with respect 
to the tape $\tape$. Namely, extending the tape preserves the shape of the 
run and of the final state (up to the extension).

\begin{lemma}[\LIAM tape lift]
\label{l:iam-pumping}
Let $\tape$ be a tape and $ \run: \state = 
\nopolstate{\tm}{\ctx}{\tapetwo}{\tlog}{\pol}\toliam^n
	\nopolstate{\tmtwo}{\ctxtwo}{\tapethree}{\tlogtwo}{\poltwo} = \statetwo$ a run. Then there is a \LIAM run 
	$\run^\tape: \state^\tape = 
	\nopolstate{\tm}{\ctx}{\tapetwo\cons\tape}{\tlog}{\pol}\toliam^n
	\nopolstate{\tmtwo}{\ctxtwo}{\tapethree\cons\tape}{\tlogtwo}{\poltwo} = \statetwo^\tape$.
\end{lemma}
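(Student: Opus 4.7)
The statement is a straightforward context-freeness property of the tape under LIAM transitions. The plan is to proceed by induction on the length $n$ of the run $\run$.

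The base case $n=0$ is immediate: the empty run from $\state$ trivially lifts to the empty run from $\state^\tape$, as the starting and ending states coincide and appending $\tape$ preserves this equality.

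For the inductive step, it suffices to establish the statement for $n=1$, i.e.\ for a single transition $\state \toliam \statetwo$, and then to close by induction. The key observation driving the single-step case is that every LIAM transition inspects and modifies \emph{only the top of the tape}: a quick inspection of \reffig{iam} shows that each rule either (i) pushes a single symbol ($\resm$ or a \trpos) onto the tape, or (ii) pops a single top symbol, or (iii) leaves the tape untouched. No rule examines more than the top entry, and no rule is gated by the emptiness of the portion strictly below the top. Consequently, if the transition fires on $\state$ with tape $\tapetwo$, it fires identically on $\state^\tape$ with tape $\tapetwo \cons \tape$, producing the same effect on the top while leaving the suffix $\tape$ untouched. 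This yields the target state $\statetwo^\tape$ and closes the case analysis.

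Once this single-step lift is established, the general case follows immediately: applying the single-step lemma to each transition of $\run$ in sequence produces a run $\run^\tape$ of the same length traversing the corresponding tape-extended states, ending in $\statetwo^\tape$ as required. The remark worth making explicit is that the other components of a state---position, log, and direction---are untouched by the lifting, so the induction does not need to track any additional invariant: the lift is a pointwise append of $\tape$ to the tape component of every state in $\run$.

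\textbf{Main (minor) obstacle.} The only subtle point is verifying that no transition's premise contains a hidden constraint on the tape shape below the top---in particular, checking that the pattern $\resm\cons\tape$ in rules like $\tomachdottwo$ and $\tomachdotfour$ is interpreted as ``top is $\resm$, arbitrary rest'', not ``the tape is exactly $\resm\cons\tape$ for a specific $\tape$''. Once this reading is fixed (as is standard for stack machines), the case analysis is routine and mechanical.
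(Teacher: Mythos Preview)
Your proposal is correct and matches the paper's treatment: the paper states this lemma without proof, treating it as an immediate consequence of the fact that every \LIAM transition only inspects or modifies the top of the tape, which is exactly the observation you spell out. The induction on $n$ with a single-step case analysis over \reffig{iam} is the natural (and only reasonable) argument here.
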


\section{The Jumping Abstract Machine, Revisited}
\label{sect:jam}
The Jumping Abstract Machine ($\JAMold$) is introduced in \cite{DR99} as an optimization of the $\IAMold$ obtained 
via a sophisticated analysis of $\IAMold$ runs. Here we present the \LJAM, the recasting of the $\JAMold$ in the 
same syntactic framework of the \LIAM. In particular, the \LIAM and the \LJAM rest on the same 
grammars and data structures, they only differ on some transitions.

\paragraph{Jumping Around the Log} The difference between the \LIAM and the \LJAM is in how they create logged 
positions, and consequently on how they backtrack. The \LIAM has a 

\emph{local} approach to logs, and backtracks via potentially long sequences of 
transitions, while the \LJAM follows a \emph{global} approach to logs, and it 
backtracks in just one \emph{jump}.
The transition system is presented in \reffig{jam}. The details of the two 
variations over the \LIAM are:
\begin{itemize}
	\item \emph{Global \trpos}: \trposs created by rule $\iamdvar$ are now 
	global, in that they record the global position of the variable, and not 
	only 
	the position relative to its binder. This way, also the log has to be 
	entirely copied. Differently from the \LIAM, there is some duplication of 
	information.
	\item \emph{Backtracking is short-circuited}: backtracking is a phase of a 
	\LIAM run which is contained between $\iamuapr$ and $\iamdlamtwo$ 
	transitions. It starts when the machine has to rebuild the history of a 
	redex/substitution and ends when the substituted variable occurrence 
	$\lpos$ is found. The optimization at the heart of the \LJAM comes out 
	from the observation that $\lpos$ is exactly the leftmost position on the 
	log. This way, one could directly jump to that position, instead of doing 
	the backtracking. Of course, this is possible only if positions are saved 
	globally in the log, because the $\iamujump$ transitions is not local, but 
	global. 
\end{itemize}
\begin{figure*}[t]
	\input{machines/LJAM}
	\vspace{-8pt}
	\caption{Transitions of the $\lambda$ Jumping Abstract Machine (\LJAM).}
	\label{fig:jam}
\end{figure*}

The absence of the backtracking phase makes the \LJAM easier to 
understand than the \LIAM. In particular, the $\downp$ and $\upp$
phases have now a precise meaning: the former being the quest for the head 
variable of the current subterm, and the latter being the search of the 
argument of the \emph{only} variable occurrence in the tape. This is the second point of the following lemma.

\begin{lemma}[\LJAM basic invariants]\label{l:jam-simple-tape}
  Let $\state = \nopolstate{\tm}{\ctx_n}{\tape}{\tlog}{\pol}$ be a reachable state. Then 
  \begin{enumerate}
  	\item \emph{Position and log}: $(\tm,\ctx_n, \tlog)$ is a \trpos, and 
	\item \emph{Tape and direction}: if $\pol=\downp$, then $\tape$ does not contain any \trpos, 
    otherwise, if $\pol=\upp$, then $\tape$ contains exactly one \trpos.
  \end{enumerate}
\end{lemma}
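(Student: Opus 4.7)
\medskip

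\noindent\textbf{Plan of proof.} The plan is to proceed by induction on the length $k$ of an initial run $\run: \state_\tm \toliam^{*}\! \state$ (using here $\toliam$ to denote the LJAM transition relation of \reffig{jam}). To make the induction go through, I will strengthen the statement with a third auxiliary invariant:
\begin{enumerate}
\item[(iii)] \emph{Stored logged positions are well-formed}: every \trpos $(\vartwo,\ctxtwo,\tlogtwo)$ occurring in the tape $\tape$ or inside the log $\tlog$ of the reachable state is itself a \trpos, \ie the length of $\tlogtwo$ equals the level of $\ctxtwo$.
\end{enumerate}
The base case $k=0$ is immediate: in $\state_\tm = \dstate{\tm}{\ctxhole}{\epsilon}{\epsilon}$ the context has level $0$ and the log is empty (giving (1)), the tape is empty (giving (2), since $\pol=\downp$), and there are no stored \trposs (giving (iii)).

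For the inductive step, I consider a transition $\state \toliam \state'$ and check that each invariant is preserved. Invariant (1) is routine for the ``search'' transitions $\iamdap, \iamdlamone, \iamuapltwo, \iamulam$, which change neither the level of the context nor the length of the log. Transition $\iamuaplone$ increases both by exactly one (the context goes from $\ctxp{\ctxhole\tmtwo}$ to $\ctxp{\tm\ctxhole}$, which is deeper by one box, and the log gains the \trpos popped from the tape). Transition $\iamdvar$ decreases both by exactly $n$, where $n$ is the level of $\ctxtwo_n$: going from $\ctxp{\la\var\ctxtwo_n}$ to $\ctx$ drops $n$ levels, and the log loses the prefix $\tlog_n$ of length $n$.

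Invariant (2) is verified by a case analysis on the transitions, using the fact that the only transitions that switch direction are $\iamdvar$ (from $\downp$ to $\upp$, which pushes exactly one \trpos on the tape) and $\iamuaplone$ (from $\upp$ to $\downp$, which pops the unique \trpos from the tape, whose existence is guaranteed by the induction hypothesis (2) on the source state). The remaining transitions preserve both the direction and the number of \trposs on the tape: $\iamdap$ and $\iamdlamone$ add/remove a $\resm$, $\iamuapltwo$ and $\iamulam$ do the same in the $\upp$ phase, and $\iamujump$ leaves the tape untouched.

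The main obstacle, and the reason for the auxiliary invariant (iii), is the $\iamujump$ transition for invariant (1): the new position $(\var,\ctxtwo)$ and the new log $\tlogtwo$ come out of a \trpos $(\var,\ctxtwo,\tlogtwo)$ previously stored at the top of the log, and there is no way to certify that $\tlogtwo$ has length equal to the level of $\ctxtwo$ without such an invariant. That (iii) is preserved is itself easy: stored logged positions are created only by $\iamdvar$, which builds $(\var,\ctxp{\la\var\ctxtwo_n},\tlog_n\cdot\tlog)$ out of the source position and source log, well-formed by induction hypothesis (1); they are merely shuffled by $\iamuaplone$ (tape to log) and $\iamujump$ (the consumed one disappears); and no other transition touches them. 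With (iii) in hand, the $\iamujump$ case of (1) is immediate, and similarly the $\iamuaplone$ case of (1) for the newly formed log $\lpos\cons\tlog$.
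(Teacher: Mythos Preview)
Your proof is correct and follows the same approach as the paper, which (for the analogous \HAM invariant in the appendix) merely says ``by induction on the length of the run reaching $\state$, together with an immediate inspection of the transitions using the \ih''. Your auxiliary invariant (iii) makes explicit what the paper leaves implicit in the mutually inductive grammar of logs and logged positions---since $\lpos \grameq (\tm,\ctx_n,\tlog_n)$ by definition has matching indices, any well-formed log already satisfies (iii), so the $\iamujump$ case of (1) goes through without further work once one reads ``is a \trpos'' according to that grammar; your unpacking of this point is nonetheless a useful clarification.
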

\begingroup
\setlength{\intextsep}{0pt}
\begin{wrapfigure}{r}{0pt}${\footnotesize
		\begin{array}{l@{\hspace{0.1cm}}l|c|c|c|l}
			&\mathsf{Sub}\mbox{-}\mathsf{term} & \mathsf{Context} & 
			\mathsf{\Log} & 
			\mathsf{Tape} & \mathsf{Dir}
			\\
			\cline{1-6}
			&\ndstatetab{(\la\vartwo{\la\var{\var\vartwo}})\mathsf{II}} 
			{\ctxhole} 
			{\epsilon} {\epsilon} 
			\downp\\
			\iamdap&\ndstatetab{(\la\vartwo{\la\var{\var\vartwo}})\mathsf{I}} 
			{\ctxhole\mathsf{I}} {\resm} {\epsilon} \downp\\
			\iamdap&\ndstatetab{\la\vartwo{\la\var{\var\vartwo}}} 
			{\ctxhole\mathsf{II}} 
			{\resm\cdot\resm} {\epsilon} \downp\\
			\iamdlamone&\ndstatetab{\la\var{\var\vartwo}} 
			{(\la\vartwo\ctxhole)\mathsf{II}} 
			{\resm} 
			{\epsilon} \downp\\
			\iamdlamone&\ndstatetab{\var\vartwo} 
			{(\la\vartwo{\la\var\ctxhole})\mathsf{II}} 
			{\epsilon} {\epsilon}\downp\\
			\iamdap&\ndstatetab{\var} 
			{(\la\vartwo{\la\var\ctxhole\vartwo})\mathsf{II}} 
			{\resm} {\epsilon}\downp
	\end{array}}
	$
\end{wrapfigure}
We present the \LJAM execution trace of the same term considered for the \LIAM. 
In particular, the first transitions are identical to the \LIAM execution since 
no $\iamdvar$ and $\iamuapr$ rules are involved. Instead, we observe 
that full 
context and log are saved at the occurrence of $\iamdvar$ transitions. We put 
$\lpos_\var\defeq(\var,(\la\vartwo{\la\var{\ctxhole\vartwo}}) 
\mathsf{I}(\la\varthree\varthree),\stempty)$.
\begin{center}${\footnotesize
	\begin{array}{l@{\hspace{0.1cm}}l|c|c|c|c}
		&\mathsf{Sub}\mbox{-}\mathsf{term} & \mathsf{Context} & \mathsf{\Log} & 
		\mathsf{Tape} & \mathsf{Dir}
		\\
		\cline{1-6}
		&\ndstatetab{\var} 
		{(\la\vartwo{\la\var{\ctxhole\vartwo}})\mathsf{I}(\la\varthree\varthree)}
		{\resm} {\epsilon} \downp\\
		\iamdvar&\nustatetab{\la\var{\var\vartwo}} 
		{(\la\vartwo\ctxhole)\mathsf{I}(\la\varthree\varthree)} 
		{\lpos_\var\cons\resm}
		{\epsilon} \upp\\
		\iamulam&\nustatetab{\la\vartwo{\la\var{\var\vartwo}}} 
		{\ctxhole\mathsf{I}(\la\varthree\varthree)} 
		{\resm\cons\lpos_\var\cons\resm}
		{\epsilon} \upp\\
		\iamuapltwo&\nustatetab{(\la\vartwo{\la\var{\var\vartwo}})\mathsf{I}} 
		{\ctxhole(\la\varthree\varthree)} 
		{\lpos_\var\cons\resm}
		{\epsilon} \upp\\
		\iamuaplone&\ndstatetab{(\la\varthree\varthree)} 
		{(\la\vartwo{\la\var{\var\vartwo}})\mathsf{I}\ctxhole} 
		{\resm} 
		{\lpos_\var}
		 \downp\\
		\iamdlamone&\ndstatetab{\varthree} 
		{(\la\vartwo{\la\var{\var\vartwo}})\mathsf{I}(\la\varthree\ctxhole)} 
		{\stempty} 
		{\lpos_\var}
		 \downp\\
		\iamdvar&\nustatetab{\la\varthree\varthree} 
		{(\la\vartwo{\la\var{\var\vartwo}})\mathsf{I}\ctxhole} 
		{(\varthree,(\la\vartwo{\la\var{\var\vartwo}})\mathsf{I} 
		(\la\varthree\ctxhole),\lpos_\var)}
		{\lpos_\var}
		 \upp\\
\end{array}}$\end{center}
Finally, as already explained, backtracking is \emph{jumped}: the \LJAM 
restores the 
previously encountered state, saved in the logged position, when 
exiting from 
the right-hand side of an application. We put 
$\lpos_\varthree\defeq(\varthree,(\la\vartwo{\la\var{\var\vartwo}})\mathsf{I} 
(\la\varthree\ctxhole),\lpos_\var)$.
\begin{center}${\footnotesize
	\begin{array}{l@{\hspace{0.2cm}}l|c|c|c|c}
		&\mathsf{Sub}\mbox{-}\mathsf{term} & \mathsf{Context} & \mathsf{\Log} 
		& 
		\mathsf{Tape} & \mathsf{Dir}
		\\
		\cline{1-6}
		&\nustatetab{\la\varthree\varthree} 
		{(\la\vartwo{\la\var{\var\vartwo}})\mathsf{I}\ctxhole} 
		{\lpos_\varthree}{\lpos_\var}\upp\\
		\iamujump&\nustatetab{\var} 
		{(\la\vartwo\la\var{\ctxhole\vartwo})\mathsf{I}(\la\varthree\varthree)} 
		{\lpos_\varthree} {\epsilon} \upp\\
		\iamuaplone&\ndstatetab{\vartwo} 
		{(\la\vartwo\la\var{\var\ctxhole})\mathsf{II}} 
		{\stempty} {\lpos_\varthree} \downp\\
		\iamdvar&\nustatetab{\la\vartwo\la\var{\var\vartwo}} 
		{\ctxhole\mathsf{II}} 
		{(\vartwo,(\la\vartwo\la\var{\var\ctxhole})\mathsf{II},\lpos_\varthree)}
		{\stempty} \upp\\
		\iamuaplone&\ndstatetab{\mathsf{I}} 
		{(\la\vartwo\la\var{\var\vartwo})\ctxhole\mathsf{I}} 
		{\stempty} 
		{(\vartwo,(\la\vartwo\la\var{\var\ctxhole})\mathsf{II},\lpos_\varthree)}
		\downp\\
\end{array}}
$\end{center}

Since the \LJAM is an optimization of the \LIAM, its final states have the 
same shape, namely $\dstate{\la\var\tmtwo}{\ctx}{\epsilon}{\tlog}$ (the fact that the log is always long 
enough to apply transition $\tomachvar$ is given by the \emph{position and log} invariant above). 
In the next section we shall prove that the \LIAM and the \LJAM are 
termination equivalent, obtaining as a corollary that the \LJAM implements \ccbn.

\paragraph{Cost of \LJAM Transitions} The cost of 
implementing \LJAM transitions and runs on RAM is exactly the same as for the 
\IAM: all transitions are atomic but \endgroup
for $\iamdvar$, whose cost is given by the level $n$ of the involved context $\ctxtwo_n$, itself bound by the size of 
the initial code $\tm$. Note that 
this means that in $\iamdvar$ the duplication of 
the log $\tlog$ amounts to the duplication of the pointer to the concrete representation of $\tlog$, and not of the 
whole of $\tlog$ (that would make the cost of $\iamdvar$ much higher, namely depending on the length of the whole run 
that led to the transition).

\section{Krivine Abstract Machine}
\label{sect:kam}
The Krivine abstract machine \cite{krivine_call-by-name_2007} (\KAM) is a 
standard environment machine for \ccbn 
whose time behavior has been studied thoroughly---in \refsect{jam-complexity} we recall the literature about it. In 
particular, it is a time reasonable implementation of \ccbn, where \emph{reasonably} means polynomially 
related to the time cost model of Turing machines.  To 
be uniform with respect to the other machines, we present the \KAM adding 
information about the 
context, which is not needed.

\begin{figure*}[t]
	\input{machines/KAM}
	\vspace{-8pt}
	\caption{Data structures and transitions of the Krivine Abstract Machine 
	(\KAM).}
	\label{fig:kam}
\end{figure*}

\paragraph{Hopping on Arguments} The \KAM (in \reffig{kam}) differs from the 
\LIAM and \LJAM as it \emph{does 
record} every $\beta$-redex that it encounters---thus 
explicitly entangling time and space consumption---using two data structures. Log and tape are replaced 
by the \emph{(local) environment} $\env$ and the \emph{(applicative) stack} $\stack$.
The basic idea is that, by saving encountered $\beta$-redexes in the environment $\env$, when the machine finds a 
variable occurrence $\var$ it simply looks up in $\env$ for the argument of the binder $\lambda \var$ binding $\var$, 
avoiding the $\upp$-phase of the \LJAM---note that the \KAM has no $\upp$ phase 
and no logs. Mimicking the 
\emph{jump} terminology, one may say that \KAM transition $\tomachvar$ 
\emph{hops} directly on the argument, skipping 
the search for it.
The stack $\stack$ is used to collect encountered arguments that still have to be paired to abstractions to form 
$\beta$-redexes, and then go into the environment $\env$. The intuition is that the stack has an entry for every 
occurrences of $\resm$ on the tape of the \LJAM in the $\downp$-phase, but such entries are more 
informative, they actually record the encountered argument (and a copy of the environment, explained next), and not 
just 
acknowledge its presence.

\paragraph{Closures, Stacks, and Environments.} The mutually recursive grammars for  \emph{closures} and 
\emph{environments}, plus the independent one for \emph{stacks} are defined in \reffig{kam}, together with the 
definition of states.
The idea is that every piece of code comes with an environment, forming a 
closure, which is why environments and closures are mutually 
defined. Also, when the machine executes a closed term $\tm$, every closure 
$(\tmtwo,\ctx, \env)$ in a reachable state is such that for any free variable 
$\var$ of $\tmtwo$ there is an entry $\esub\var\clos$ in $\env$ every, thus 
$\env$ 'closes' $\tmtwo$, whence the name \emph{closures}.

\paragraph{Transitions, Initial and Final States.} Initial states of the KAM 
are in the form $\state_\tm=\kamstate{\tm}{\ctxhole}{\stempty}{\stempty}$. The 
transitions of the \KAM are in 
\reffig{kam}---their union is noted $\tokam$. The idea is 
that the $\iamdvar$ transition looks in the environment for the argument of the 
variable under evaluation. 
As for the other machines, the \KAM
evaluates the term $\tm$ until the top abstraction of the weak head normal 
form of $\tm$ is found, that is a run  either 
never stops or ends in a state $\state$ of the shape 
$\state=\kamstate{\la\var\tmtwo}{\ctx}{\epsilon}{E}$. This is guaranteed by the 
mentioned and standard (but omitted) invariant ensuring that when the initial 
term is closed then every variable appearing in the code has an associated 
closure in the environment, so that the \KAM never gets stuck on a $\tomachvar$ 
transition.
In the next section we shall prove that the \LJAM and the \KAM
are termination equivalent. We show the \KAM execution trace of our running 
example. Initially, the \KAM looks for the head variable keeping track of the 
encountered arguments. We put 
$\mathsf{Q}\defeq\la\vartwo{\la\var{\var\vartwo}}$.
\begin{center}${\footnotesize
		\begin{array}{l@{\hspace{0.1cm}}l|c|c|c}
		&\mathsf{Sub}\mbox{-}\mathsf{term} & \mathsf{Context} & 
		\mathsf{Env.} & 
		\mathsf{Stack}
		\\
		\cline{1-5}
		&\dstatetab{(\la\vartwo{\la\var{\var\vartwo}})\mathsf{II}} 
		{\ctxhole} 
		{\epsilon} {\epsilon} \\
		\kamdap&\dstatetab{(\la\vartwo{\la\var{\var\vartwo}})\mathsf{I}} 
		{\ctxhole\mathsf{I}} 
		{(\mathsf{I},\mathsf{Q}\mathsf{I}\ctxhole,\stempty)}
		{\epsilon}\\
		\kamdap&\dstatetab{\la\vartwo{\la\var{\var\vartwo}}} 
		{\ctxhole\mathsf{II}} 
		{(\mathsf{I},\mathsf{Q}\ctxhole\mathsf{I},\stempty)\cdot(\mathsf{I},\mathsf{Q}\mathsf{I}\ctxhole,\stempty)}
		{\epsilon}\\
		\kamdlam&\dstatetab{\la\var{\var\vartwo}} 
		{(\la\vartwo\ctxhole)\mathsf{II}} 
		{(\mathsf{I},\mathsf{Q}\mathsf{I}\ctxhole,\stempty)} 
		{\esub\vartwo{(\mathsf{I},\mathsf{Q}\ctxhole\mathsf{I},\stempty)}}\\
		\kamdlam&\dstatetab{\var\vartwo} 
		{(\la\vartwo{\la\var\ctxhole})\mathsf{II}} 
		{\epsilon} 
		{\esub\var{(\mathsf{I},\mathsf{Q}\mathsf{I}\ctxhole,\stempty)}\cdot 
			\esub\vartwo{(\mathsf{I},\mathsf{Q}\ctxhole\mathsf{I},\stempty)}=\env}\\
		\kamdap&\dstatetab{\var} 
		{(\la\vartwo{\la\var\ctxhole\vartwo})\mathsf{II}} 
		{(\vartwo,(\la\vartwo{\la\var\var\ctxhole})\mathsf{II},\env)} {\env}
		\end{array}}$\end{center}
Thanks to the information saved in the environment, the \KAM is able to 
directly hop to the argument of $\var$, namely the second identity. Moreover, 
the environment is restored from the closure.
\begin{center}${\footnotesize
		\begin{array}{l@{\hspace{0.1cm}}l|c|c|c}
		&\mathsf{Sub}\mbox{-}\mathsf{term} & \mathsf{Context} & \mathsf{Env.} & 
		\mathsf{Stack}
		\\
		\cline{1-5}
		&\dstatetab{\var} 
		{(\la\vartwo{\la\var{\ctxhole\vartwo}})\mathsf{I}(\la\varthree\varthree)}
		{(\vartwo,(\la\vartwo{\la\var\var\ctxhole})\mathsf{II},\env)} 
		{\env}\\
		\iamdvar&\dstatetab{(\la\varthree\varthree)} 
		{(\la\vartwo{\la\var{\var\vartwo}})\mathsf{I}\ctxhole} 
		{(\vartwo,(\la\vartwo{\la\var\var\ctxhole})\mathsf{II},\env)} 
		{\stempty}\\
		\end{array}}$\end{center}
Then, the computation continues. All application arguments are saved in the 
stack as closures, \ie together with their environment, and then moved to the 
environment when a binder $\lambda\var$ is encountered (they are also linked to 
$\var$). Whenever a variable $\var$ is reached, its argument is retrieved, 
together with its environment from the closure linked to that variable $\var$.
\begin{center}${\footnotesize
		\begin{array}{l@{\hspace{0.1cm}}l|c|c|c}
		&\mathsf{Sub}\mbox{-}\mathsf{term} & \mathsf{Context} & \mathsf{Env.} & 
		\mathsf{Stack}
		\\
		\cline{1-5}
		&\dstatetab{(\la\varthree\varthree)} 
		{(\la\vartwo{\la\var{\var\vartwo}})\mathsf{I}\ctxhole} 
		{(\vartwo,(\la\vartwo{\la\var\var\ctxhole})\mathsf{II},\env)} 
		{\stempty}\\
		\kamdlam&\dstatetab{\varthree} 
		{(\la\vartwo{\la\var{\var\vartwo}})\mathsf{I}(\la\varthree\ctxhole)} 
		{\stempty} 
		{\esub\varthree{(\vartwo,(\la\vartwo{\la\var\var\ctxhole})\mathsf{II},\env)}}\\
		\iamdvar&\dstatetab{\vartwo} 
		{(\la\vartwo\la\var{\var\ctxhole})\mathsf{II}} 
		{\stempty} 
		{\esub\var{(\mathsf{I},\mathsf{Q}\mathsf{I}\ctxhole,\stempty)}\cdot 
			\esub\vartwo{(\mathsf{I},\mathsf{Q}\ctxhole\mathsf{I},\stempty)}}\\
		\iamdvar&\dstatetab{\mathsf{I}} 
		{(\la\vartwo\la\var{\var\vartwo})\ctxhole\mathsf{I}} 
		{\stempty} 
		{\stempty}
		\end{array}}$\end{center}

\paragraph{Cost of \KAM Transitions} The idea is that environments are implemented as linked lists, so that the 
duplication 
and insertion operations in transitions $\tomachapp$ and $\tomachabs$ can be implemented in constant time. Transition 
$\tomachvar$ needs to access the environment, whose size is bounded by 
$\size{\tm}$, the size of the initial term $\tm$ of the 
run. By adopting smarter implementations of envrionments, one $\tomachvar$ transition costs $\log 
\size{\tm}$---see 
\citet{DBLP:conf/ppdp/AccattoliB17} for discussions about implementations of the \KAM. Then implementing on RAM a 
\KAM 
run $\run$ from $\tm$ costs $\sizenotvar\run + \sizevar \run \cdot \log 
\size{\tm}$.

\section{The Exhaustible State Invariant}
\label{sect:invariant}

Here we present the \emph{exhaustible (state) invariant}. In 
\cite{IamPPDPtoAppear}, this is a key 
ingredient for the proof of the \LIAM implementation theorem. In this paper, 
we give it in various forms to 
establish the relationships between the various machines. Here we 
present the basic concepts.

The intuition behind the invariant is that whenever a
\trpos $\lpos$ occurs in a reachable state, it is there \emph{for a
  reason}: no \trpos occurs in initial states,
and transitions only add \trposs to which the machine may come back. 
In particular, if the state is set in the right way (to be explained), the 
\LIAM can reach $\lpos$, \emph{exhausting} it.
  
\medskip\emph{Preliminaries.} Exhaustible states rest on some \emph{tests} for their \trposs. More specifically, each 
\trpos $\lpos$ in a 
state $\state$ has an associated test state $s_\lpos$ that tunes the data structures of $\state$ as to test for the 
reachability of $\lpos$. Actually, there shall be \emph{two} 
classes of test states, one accounting for the \trposs in the tape of 
$s$, and one for the
 those in the log of $s$. The technical definition of log tests, however, is in 
\refapp{iam-jam-app}. They are essential for the proof of the exhaustible 
 invariant, but they are not needed for showing the main consequence of 
 interest in this section, that is, that backtracking always succeeds 
 (\reflemma{iam-backtracking-succeeds} below), which is why they are omitted.

\medskip\emph{Tape Tests.} Tape tests are easy to define. They focus on one of 
the \trposs in the tape, discarding everything that follows that position on the 
tape.
\begin{definition}[Tape tests]
	Let $\state = 
	\nopolstate{\tm}{\ctx_n}{\tapetwo\cons\lpos\cons\tapethree}{\tlog_n}{\pol}$ 
	be a state. Then the
	\emph{tape test of $\state$ of focus $\lpos$} is the state
	$\state_\lpos=\nopolstate{\tm}{\ctx_n}{\tapetwo 
	\cons\lpos}{\tlog_n}{\upp^{\sizee{\tapetwo \cons\lpos}} }$. 
\end{definition}
Note that the direction of tape tests is reversed with respect to what stated 
by the \emph{tape and direction} invariant (\reflemma{invarianttwo}), and so, in general, they are not reachable 
states. 
Such a counter-intuitive fact is needed for the invariant to go through, no more no less. When we shall use the 
properties of tests to prove properties of the \LIAM (\reflemma{iam-backtracking-succeeds} below), we shall extend 
their 
tape via the tape lifting property (\reflemma{iam-pumping}) as to satisfy the invariant and be 
reachable.
Exhausting a logged position $\lpos$ means backtracking to it. We then decorate the backtracking transition 
$\tomachbtone$ and $\tomachbttwo$ as $\tomachbtonedec$ and $\tomachbttwodec$ to specify the 
involved logged position $\lpos$. We also need a notion of state positioned in $\lpos$ and having an empty tape, 
which is meant to be the target state of $\tomachbttwodec$ when exhausting $\lpos$ starting on $\state_\lpos$.
\begin{definition}[State surrounding a position]
 Let $\lpos=(\tm,\ctxtwo,\tlogtwo)$ be a \trpos. A  state $\state$ surrounds 
 $\lpos$ if $\state = 
 \ustate{\tm}{\ctx_n\ctxholep{\ctxtwo}}{\stempty}{\tlogtwo\cdot\tlog_n}$ for 
 some 
 $\ctx_n$ and $\tlog_n$.
\end{definition}

\paragraph{The Exhaustibility Invariant.} After having introduced all the 
necessary preliminaries, we can now  formulate the property of states that we 
shall soon prove to be an invariant.
\begin{definition}[Exhaustible States]
   $\exstates$ is the smallest
   set of states $\state$ such that if $\state_\lpos$ is a tape or a log 
   test of $\state$ then there exists a run
   $\run:\state_\lpos\toliam^*\tomachbttwodec\statethree$, where
   $\statethree$ surrounds $\lpos$ and for the shortest of such runs $\run$ it 
   holds that $\statethree\in\exstates$. States in $\exstates$ are called 
   \emph{exhaustible}.
\end{definition}
Informally, exhaustible states are those for which every \trpos can be 
successfully tested,  that is, the \LIAM can backtrack to (an exhaustible 
state surrounding) it, if properly initialized. Roughly, a state is exhaustible 
if the backtracking information encoded in its \trposs is coherent. The set
$\exstates$ being the \emph{smallest} set of such states implies
that checking that a state is exhaustible can be finitely certified, \ie there must be a finitary proof.
\begin{proposition}[Exhaustible invariant \cite{IamPPDPtoAppear}]
\label{prop:good-invariant}
   Let $\state$ be a \LIAM reachable state. Then $\state$ is exhaustible.
\end{proposition}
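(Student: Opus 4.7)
\medskip
\emph{Proof plan.} The plan is to prove exhaustibility by induction on the length $n$ of an initial run $\run: \state_\tm \toliam^n \state$. For the base case $n=0$, the initial state $\state_\tm = \dstate{\tm}{\ctxhole}{\epsilon}{\epsilon}$ carries no \trposs at all in either its tape or its log, hence the family of tests to be checked is empty and the state is vacuously in $\exstates$.

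For the inductive step, assume the claim holds for all reachable states after at most $n$ transitions and consider $\state \toliam \statetwo$. I proceed by case analysis on the transition used. The ``administrative'' transitions $\tomachdotone,\tomachdottwo,\tomachdotthree,\tomachdotfour$ only push or pop $\resm$ markers and do not touch \trposs; transitions $\tomacharg$ and $\tomachbtone$ shuffle one \trpos between tape and log. In all these cases the \trposs occurring in $\statetwo$ are exactly those occurring in $\state$, so each tape or log test of $\statetwo$ differs from a corresponding test of $\state$ only by trivial rearrangements of the active position and direction; the tape lift lemma (\reflemma{iam-pumping}) lets me transport the witnessing backtracking run provided by the inductive hypothesis into a witnessing run for $\statetwo$, and the surrounding state of $\lpos$ is in $\exstates$ again by IH.

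The two substantial cases are $\tomachvar$ and $\tomachbttwo$. In $\tomachvar$ a fresh \trpos $\lpos = (\var,\la\var\ctxtwo_n,\tlog_n)$ is pushed on the tape of $\statetwo$; I must show that the tape test $\statetwo_\lpos$ can be driven by the \LIAM to a $\tomachbttwodec$ transition targeting a state that surrounds $\lpos$. The key tool here is bi-determinism of the \LIAM: the test state $\statetwo_\lpos$ is engineered to be precisely the ``mirror'' of the predecessor of the $\tomachvar$ step, so running the machine forward from $\statetwo_\lpos$ retraces, in reverse, a prefix of $\run$ until the $\tomachbttwo$ transition fires and lands in a state that surrounds $\lpos$; that the resulting state is itself exhaustible follows from the IH via the tape lift, since its \trposs form a sub-collection of those populating $\state$. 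For the already existing \trposs of $\state$ that are inherited by $\statetwo$, I again appeal to the IH after adjusting the surrounding tape via \reflemma{iam-pumping}. The symmetric case $\tomachbttwo$ is handled analogously using the log-test machinery of \refapp{iam-jam-app}.

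The main obstacle is the $\tomachvar$ case: making precise the statement that the test $\statetwo_\lpos$ ``runs in reverse'' the pre-$\tomachvar$ segment requires carefully matching directions, tape contents, and the level of the context involved, and then verifying that the state reached by the terminating $\tomachbttwodec$ transition is exactly one surrounding $\lpos$. Once this alignment is set up, the inductive hypothesis combined with bi-determinism and the tape lift lemma closes every remaining obligation uniformly, showing that $\statetwo \in \exstates$ and completing the induction.
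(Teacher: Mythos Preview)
The paper does not prove this proposition in the body (it is cited from \cite{IamPPDPtoAppear}); the closest in-paper analogue is the proof of the $I$-exhaustible invariant for the \LJAM in \refapp{iam-jam-app}, which follows exactly the template you describe: induction on the length of the initial run and case analysis on the last transition, using the invariance properties of log tests (\reflemma{outer-inv}) to transport witnesses. So your overall architecture is correct and matches the intended approach.

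Two points where your sketch drifts from what actually happens. First, in the $\tomachvar$ case the mechanism is simpler than you suggest: the fresh tape test $\statetwo_\lpos = \dstate{\la\var\ctxtwo_n\ctxholep\var}{\ctx}{\lpos}{\tlog}$ fires $\tomachbttwo$ \emph{immediately}, in a single step, landing on $\statethree = \ustate{\var}{\ctxp{\la\var\ctxtwo_n}}{\stempty}{\tlog_n\cons\tlog}$, which surrounds $\lpos$. Bi-determinism and ``retracing a prefix in reverse'' are not needed; what is needed is to argue that $\statethree\in\exstates$, and this follows because $\statethree$ has only log tests, which by \reflemma{outer-inv} coincide with the log tests of the predecessor $\state$, hence are exhausted by the inductive hypothesis. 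Second, you group $\tomacharg$ and $\tomachbtone$ with the administrative transitions and claim the tape lift alone handles them. That is not enough: these transitions move a \trpos between tape and log, so a tape test of the target corresponds to a \emph{log} test of the source (or vice versa), and you must invoke the definition of log tests and their invariance properties, not just \reflemma{iam-pumping}. Similarly, $\tomachbttwo$ unfolds a nested log $\tlog_n$ onto the top-level log, creating $n$ new log tests in $\statetwo$ that were not log tests of $\state$; handling these requires the recursive structure of exhaustibility (the surrounding state $\statethree$ obtained when exhausting $\lpos$ in $\state$ already carries $\tlog_n$ at top level), which your sketch does not address.
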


A key consequence is the fact that backtracking always succeeds, as it amounts to exhausting the first logged 
position on the log.

\begin{lemma}[Backtracking always succeeds]
\label{l:iam-backtracking-succeeds}
 Let $\state$ a \LIAM reachable state. If $\state \tomachbtonedec \statetwo$ 
 then there exists $\statethree$ such that 
$\statetwo\toliam^*\tomachbttwodec \statethree$.
\end{lemma}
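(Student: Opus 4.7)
\medskip\noindent\textbf{Proof plan.}
The plan is to bridge backtracking with exhaustibility via the tape-lift lemma. Let $\state$ be reachable with $\state\tomachbtonedec\statetwo$. Looking at the $\tomachbtone$ transition schema, $\state$ must have the form $\ustate{\tm}{\ctxp{\tmtwo\ctxhole}}{\tape}{\lpos\cons\tlog}$ and
\[\statetwo \;=\; \dstate{\tmtwo}{\ctxp{\ctxhole\tm}}{\lpos\cons\tape}{\tlog}.\]
Since $\state$ is reachable and $\statetwo$ is one step away, $\statetwo$ is reachable too, and by the exhaustible invariant (Proposition~\ref{prop:good-invariant}) it is exhaustible.

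Next I would consider the tape test of $\statetwo$ with focus $\lpos$, taking $\tapetwo=\epsilon$ and $\tapethree=\tape$ in the definition. The resulting test state is
\[(\statetwo)_\lpos \;=\; \nopolstate{\tmtwo}{\ctxp{\ctxhole\tm}}{\lpos}{\tlog}{\upp^{\sizee{\lpos}}}.\]
Now $\sizee{\lpos}=1$, so $\upp^1 = \downp$, matching the direction of $\statetwo$. Hence $(\statetwo)_\lpos$ coincides with $\statetwo$ except that the tape is truncated from $\lpos\cons\tape$ down to $\lpos$. By exhaustibility applied to $\statetwo$, there is a run
\[(\statetwo)_\lpos \toliam^* \tomachbttwodec \statethree\]
where $\statethree$ surrounds $\lpos$.

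Finally, I would apply the tape lift lemma (Lemma~\ref{l:iam-pumping}) to extend every intermediate tape of this run by the suffix $\tape$. Because $(\statetwo)_\lpos$ becomes exactly $\statetwo$ once its tape is extended by $\tape$, lifting produces a run $\statetwo \toliam^* \tomachbttwodec \statethree^\tape$ of the required shape, with $\statethree^\tape$ the tape-extension of $\statethree$. This yields the desired $\statetwo \toliam^* \tomachbttwodec \statethree^\tape$.

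The only delicate point—and the one I would check carefully—is that the tape test is well-aligned with $\statetwo$: the focus $\lpos$ must lie at the very head of $\statetwo$'s tape so that the suffix $\tape$ is precisely what separates the test state from $\statetwo$, and the induced direction $\upp^{\sizee{\lpos}}=\downp$ must coincide with that of $\statetwo$. Both are guaranteed by the shape forced by the $\tomachbtone$ transition, so no real obstacle arises; the proof is essentially a packaged application of the exhaustible invariant modulo tape lifting.
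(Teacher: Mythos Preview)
Your proof is correct and follows essentially the same route as the paper: apply the exhaustible invariant to the reachable state $\statetwo$, take the tape test focused on $\lpos$ (which is $\statetwo$ with tape truncated to $\lpos$), obtain the exhausting run ending in $\tomachbttwodec$, and then tape-lift by the suffix $\tape$ to recover a run from $\statetwo$ itself. Your explicit check that the test direction $\upp^{\sizee{\lpos}}=\downp$ matches that of $\statetwo$ is a nice clarification that the paper leaves implicit.
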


\begin{proof}
Consider $\state=\ustate{ \tm }{ \ctxp{\tmtwo\ctxhole} }{ \tape }{ \lpos\cdot\tlog } 
	\tomachbtonedec 
	\dstate{ \tmtwo }{ \ctxp{\ctxhole\tm} }{ \lpos\cdot\tape }{ \tlog } = \statetwo$. 
		Since $\statetwo$ is reachable then it is exhaustible, and so its tape test $\statetwo_\lpos 
		\defeq 
		\dstate{ \tmtwo }{ \ctxp{\ctxhole\tm} }{ \lpos }{ \tlog }$ can be 
		exhausted, that is, there is a \LIAM run
		$\run:\statetwo_\lpos	\toliam^*\tomachbttwodec \statefour$ for a 
		state $\statefour$ surrounding $\lpos$. Note that $\statetwo_\lpos$ is 
		$\statetwo$ where the tape contains only $\lpos$. Now, we lift $\run$ to a run $\run^\tape: 
		\statetwo \toliam^*\tomachbttwodec \statethree$ using the tape 
		lifting lemma (\reflemma{iam-pumping}).
\end{proof}

\section{Relating the $\lambda$-IAM and the $\lambda$-JAM: Jumping is Exhausting}
\label{sect:iam-jam}
In this section we prove that the \LJAM is a time optimization of the \LIAM via an adaptation of the exhaustible 
invariant. Our proof is based on the construction of a bisimulation which also provides, as a corollary, the 
implementation theorem for the \LJAM.
The basic idea is that the two machines are equivalent \emph{modulo backtracking}. Indeed, the \LJAM 
evaluates terms as the \LIAM, but for the backtracking phase, which is 
short-circuited and done with just one \emph{jump} transition. Then one has to show 
that the \emph{jump} is actually simulated by the \LIAM.

\medskip\emph{Log Tests.} For simulating jumps we need log tests.
The idea is the same underlying tape tests: they focus on a given \trpos in the log. Their definition however requires more than simply stripping down the log, as the new log and the position still have to form a \trpos---said differently, the 
\emph{position and the log} invariant (\reflemma{invarianttwo}) has to be preserved. 
Roughly, the log test $\state_{\lpos_m}$ focussing on the $m$-th \trpos $\lpos_m$ in the log of a state $ 
\nopolstate{\tm}{\ctx_n}{\tape}{\lpos_n\cdots \lpos_2\cdot\lpos_1}{\pol}$ is 
obtained by removing the prefix $\lpos_n\cdots \lpos_{m+1}$ (if any), and 
moving the current 
position up by $n-m$ levels. Moreover, the tape is emptied and 
the direction is set to $\upp$. 

In the argument for the simulation of jumps given below, we need only log tests of a very simple form. Namely, given a state $\state = \ustate{ \tm }{ \ctxp{\tmtwo\ctxhole} }{ \tape }{ \lpos\cdot\tlog }$ from which the \LJAM jumps, we shall consider the log test $\state_\lpos \defeq \ustate{ \tm }{ \ctxp{\tmtwo\ctxhole} }{ \stempty }{ \lpos\cdot\tlog }$, that is, the tape is emptied and (in this case) the position does not change.
The more general form of log tests needing the position change is technical and defined in \refapp{iam-jam-app}---it is 
unavoidable for proving the invariant, but we fear that giving it here would obfuscate the use of the exhaustible 
technique, whose idea is instead quite simple.

\paragraph{I-Exhaustible Invariant} The \LIAM exhaustible invariant proves 
that backtracking phases always succeeds, and it is the key ingredient to relate the 
\LIAM and the \LJAM. While the underlying idea is clear, there is an 
important detail that has to be addressed: to establish the simulation, we have 
to prove that the \LIAM can exhaust logged positions \emph{of the \LJAM}, rather than its 
owns. 

Since the two machines use logs differently, we have to use a function $\JAMtoIAM{\cdot}$ that maps the log-related 
notions of the \LJAM to those of the \LIAM (where $\Gamma$ ranges over both logs and tapes):
\begin{center}$
\begin{array}{rcl}
	\textsc{Logged positions} 
	&&
	\JAMtoIAM{\var, \ctxp{\la\var\ctxtwo_n},\tlogn{\cdot}\tlog}\defeq(\var, 
	\la\var\ctxtwo_n,\JAMtoIAM{\tlogn})
	\\[3pt]
	\textsc{Tapes and Logs} 
	&&
	\JAMtoIAM{\stempty}\defeq\stempty\qquad\qquad
	\JAMtoIAM{\lpos{\cdot}\Gamma}\defeq\JAMtoIAM{\lpos}{\cdot}\JAMtoIAM{\Gamma}\qquad\qquad
	\JAMtoIAM{\resm{\cdot}\tape}\defeq\resm{\cdot}\JAMtoIAM{\tape}
	\\[3pt]
	\textsc{States} 
	&&
	
\JAMtoIAMstate{\nopolstate{\tm}{\ctx}{\tape}{\tlog}{\pol}}\defeq\nopolstate{\tm}{\ctx}{\JAMtoIAM{\tape}}{\JAMtoIAM{\tlog
}}{\pol}
\end{array}$
\end{center}
Another point is that the state surrounding the exhausted position now is 
uniquely determined by the logged position. Given a logged position 
$\lpos=(\var,\ctxtwo,\tlog)$, the \emph{state induced by} $\lpos$ is 
$\indstate\lpos \defeq \ustate{\var}{\ctxtwo}{\stempty}{\tlog}$.

\begin{definition}[I-Exhaustible States]
	$\exstates_I$ is the smallest
	set of \LJAM states $\state$ such that for any tape or log test 
	$\state_\lpos$ of	$\state$ of focus $\lpos$, there 
exists a run 
		$\run:\JAMtoIAM{\state_\lpos}\toliam^*\tomachbttwodecp{\JAMtoIAM\lpos} 
		\JAMtoIAM{\indstate\lpos}$ such that $\indstate\lpos\in\exstates_I$.
	States in $\exstates_I$ are called \emph{I-exhaustible}.
\end{definition}

\begin{lemma}[I-exhaustible invariant]
\label{l:good-invariant-jam}
		Let $\state$ be a \LJAM reachable state. Then $\state$ is I-exhaustible. 
\end{lemma}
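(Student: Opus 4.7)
The plan is to proceed by induction on the length $n$ of the \LJAM initial run reaching $\state$. The base case $n=0$ is vacuous: the initial state $\state_\tm = \dstate{\tm}{\ctxhole}{\stempty}{\stempty}$ contains no logged position in either its tape or its log, so neither tape tests nor log tests exist and the I-exhaustibility condition is trivially satisfied.

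For the inductive step, assume $\state$ is I-exhaustible and fix a transition $\state \toljam \statetwo$; I would do a case analysis on which \LJAM transition is fired. The transitions $\iamdap$, $\iamdlamone$, $\iamulam$, $\iamuapltwo$ only add or remove $\resm$-markers on the tape and shift the code position along the context, while $\iamuaplone$ moves one logged position from tape to log. In each of these five cases, every logged position $\lpos$ occurring in $\statetwo$ already occurs in $\state$, and the test on $\statetwo$ of focus $\lpos$ differs from the corresponding test on $\state$ only by a shift of the code position along the context and possibly by an extra $\resm$ on the tape. Applying the inductive hypothesis to $\state$ and then using the \LIAM tape lift (\reflemma{iam-pumping}), the \LIAM run that exhausts the test in $\state$ can be extended to exhaust the test in $\statetwo$, landing at the same surrounding state, which is thus I-exhaustible by induction.

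The two substantive cases are $\iamdvar$ and $\iamujump$. For $\iamdvar$, a fresh logged position $\lpos = (\var, \ctxp{\la\var\ctxtwo_n}, \tlog_n\cdot\tlog)$ is pushed on the tape. Its \LIAM image is $\JAMtoIAM\lpos = (\var, \la\var\ctxtwo_n, \JAMtoIAM{\tlog_n})$, and the translation of the corresponding tape test has the form $\dstate{\la\var\ctxtwo_n\ctxholep\var}{\ctx}{\JAMtoIAM\lpos}{\JAMtoIAM\tlog}$. From this state a single \LIAM transition $\tomachbttwodecp{\JAMtoIAM\lpos}$ reaches exactly $\JAMtoIAM{\indstate\lpos}$, as required; tests focussed on older positions are handled as in the previous paragraph. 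For $\iamujump$, the target state $\statetwo = \ustate{\var}{\ctxtwo}{\tape}{\tlogtwo}$ coincides with $\indstate\lpos$ extended by the tape $\tape$ inherited from $\state$, where $\lpos = (\var,\ctxtwo,\tlogtwo)$ is the popped position. Applying the inductive hypothesis to the log test of $\state$ of focus $\lpos$ both certifies I-exhaustibility of $\indstate\lpos$ and produces the \LIAM run that exhausts that test. Tests on $\statetwo$ focussed on positions in $\tlogtwo$ then correspond, up to the extra tape $\tape$, to tests on $\indstate\lpos$, and tape lifting yields the desired \LIAM runs; tests focussed on positions in $\tape$ reduce to tests on $\state$ since the tape is unchanged by $\iamujump$.

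The main obstacle will be the fully general form of log tests, whose definition is deferred to the appendix: when the focussed position is an inner logged position, the test must move the current code position upward along the context so that the \emph{position and log} invariant (\reflemma{jam-simple-tape}) is preserved, and the \LIAM run exhausting such a test must then mirror, in reverse, the chain of \LJAM transitions that installed the focussed position on the log. Unfolding this mirroring uniformly across the cases above, and checking that the induced states surrounding the exhausted positions are themselves I-exhaustible (so that the smallest-fixed-point condition is actually met), is the technical heart of the argument.
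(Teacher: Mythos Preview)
Your overall architecture---induction on the length of the initial run with a case analysis on the last transition---matches the paper's approach, and your treatment of the base case and of the fresh logged position in $\iamdvar$ is correct. There are, however, two places where your sketch is too quick.

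First, for the ``easy'' transitions you invoke tape lifting, but what is actually needed there is the battery of \emph{invariance properties of log tests} recorded in \reflemma{outer-inv} (direction, tape, head translation, inclusion). These show that the log tests of $\statetwo$ are \emph{literally the same states} as log tests of $\state$, not merely tape-lifts of them. In particular, for $\iamuaplone$ the top log test of the target (focused on the position just moved from tape to log) is connected to the corresponding tape test of the source by a single \LIAM $\tomachbtone$ step, not by a tape extension.

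Second, and more seriously, your $\iamujump$ argument for \emph{tape} tests does not go through as stated. You write that ``tests focussed on positions in $\tape$ reduce to tests on $\state$ since the tape is unchanged'', but while $\tape$ is preserved by the jump, the position and the log change completely: the tape test $\statetwo_{\lpostwo}=\dstate{\var}{\ctxtwo}{\tapetwo'\cons\lpostwo}{\tlogtwo}$ and the tape test $\state_{\lpostwo}=\dstate{\tm}{\ctxp{\tmtwo\ctxhole}}{\tapetwo'\cons\lpostwo}{\lpos\cons\tlog}$ are genuinely different states, and no amount of tape lifting alone connects them. The missing idea is the \emph{duality} of the \LIAM (a consequence of bi-determinism): flipping all directions turns each transition into its mirror ($\tomachbttwo\leftrightarrow\tomachvar$, $\tomacharg\leftrightarrow\tomachbtone$, etc.), so the exhausting run for the log test $\state_\lpos$, once dualized, yields a \LIAM run from $\dstate{\var}{\ctxtwo}{\stempty}{\JAMtoIAM\tlogtwo}$ to $\dstate{\tm}{\ctxp{\tmtwo\ctxhole}}{\stempty}{\JAMtoIAM\lpos\cons\JAMtoIAM\tlog}$. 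Tape-lifting this with $\tapetwo'\cons\JAMtoIAM\lpostwo$ gives precisely a run $\JAMtoIAM{\statetwo_{\lpostwo}}\toliam^*\JAMtoIAM{\state_{\lpostwo}}$, which you can then concatenate with the exhausting run for $\state_{\lpostwo}$ supplied by the induction hypothesis. You should also note explicitly that $\indstate\lpos$ has empty tape, so its I-exhaustibility (from the log-test branch of the IH) only covers the log tests of $\statetwo$; the tape test really does require this extra argument.
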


\paragraph{Jumping is Exhausting} From the invariant and the tape lifting 
property of the \LIAM, it follows easily that hops can be simulated via 
backtracking, from which the relationship between the \LIAM and the \LJAM 
immediately follows. We write $\tomachhole{\jumpsym,\lpos}$ for a $\tomachjump$ 
transition jumping to $\lpos$.

\begin{lemma}[Jumps simulation via backtracking]
\label{l:jumps-simulation}
Let $\state$ be a \LJAM reachable and $\state \tomachhole{\jumpsym,\lpos} \statetwo$. Then 
$\JAMtoIAM\state\tomachbtonedecp{\JAMtoIAM\lpos} 
\toliam^*\tomachbttwodecp{\JAMtoIAM\lpos}\JAMtoIAM\statetwo$.
\end{lemma}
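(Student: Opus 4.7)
The plan is to compose the I-exhaustible invariant (\reflemma{good-invariant-jam}) with the \LIAM tape-lifting lemma (\reflemma{iam-pumping}), in essentially the same spirit as the proof of \reflemma{iam-backtracking-succeeds}, but using a \emph{log} test of the \LJAM state rather than a tape test of a \LIAM state. Fix the jumping step as $\state = \ustate{\tm}{\ctxp{\tmtwo\ctxhole}}{\tape}{\lpos\cdot\tlog} \tomachhole{\jumpsym,\lpos} \ustate{\var}{\ctxtwo}{\tape}{\tlogtwo} = \statetwo$ with $\lpos = (\var,\ctxtwo,\tlogtwo)$. The associated log test of $\state$ focused on $\lpos$ is, by the simple form recalled in the excerpt, $\state_\lpos = \ustate{\tm}{\ctxp{\tmtwo\ctxhole}}{\stempty}{\lpos\cdot\tlog}$.

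Since $\state$ is reachable, \reflemma{good-invariant-jam} supplies a \LIAM run
\[
\run \,:\, \JAMtoIAM{\state_\lpos} \,\toliam^* \,\tomachbttwodecp{\JAMtoIAM\lpos}\, \JAMtoIAM{\indstate\lpos}.
\]
The only local observation to make is that the first transition of $\run$ must be $\tomachbtonedecp{\JAMtoIAM\lpos}$. Indeed, the state $\JAMtoIAM{\state_\lpos} = \ustate{\tm}{\ctxp{\tmtwo\ctxhole}}{\stempty}{\JAMtoIAM\lpos\cdot\JAMtoIAM\tlog}$ is a $\upp$-state with empty tape and a context of shape $\ctxp{\tmtwo\ctxhole}$; inspecting \reffig{iam}, the only \LIAM rule enabled in such a configuration is $\iamuapr$, which pops the topmost log entry $\JAMtoIAM\lpos$ onto the tape, i.e.\ exactly $\tomachbtonedecp{\JAMtoIAM\lpos}$. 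Therefore $\run$ factors as
\[
\JAMtoIAM{\state_\lpos}\,\tomachbtonedecp{\JAMtoIAM\lpos}\,\toliam^*\,\tomachbttwodecp{\JAMtoIAM\lpos}\, \JAMtoIAM{\indstate\lpos}.
\]

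To finish, apply \reflemma{iam-pumping} to $\run$ with the tape $\JAMtoIAM\tape$, yielding a \LIAM run $\run^{\JAMtoIAM\tape}$ with the same transitions, starting at $\JAMtoIAM{\state_\lpos}^{\JAMtoIAM\tape}$ and ending at $\JAMtoIAM{\indstate\lpos}^{\JAMtoIAM\tape}$. Unfolding the definition of $\JAMtoIAM\cdot$ on tapes and of the induced state $\indstate\lpos = \ustate{\var}{\ctxtwo}{\stempty}{\tlogtwo}$ gives the identifications $\JAMtoIAM{\state_\lpos}^{\JAMtoIAM\tape} = \JAMtoIAM\state$ and $\JAMtoIAM{\indstate\lpos}^{\JAMtoIAM\tape} = \JAMtoIAM\statetwo$, which is precisely the run asserted by the lemma.

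The genuine difficulty has been absorbed upstream: once the I-exhaustible invariant is in hand, this lemma is a three-line composition. The only subtlety at this level is verifying that the exhaustibility run must begin with $\tomachbtonedecp{\JAMtoIAM\lpos}$, and this is an immediate case analysis on the \LIAM rules enabled at $\JAMtoIAM{\state_\lpos}$. The actual obstacle lives in \reflemma{good-invariant-jam}, whose proof has to propagate I-exhaustibility through all \LJAM transitions—most notably $\iamdvar$ and $\iamujump$, which respectively create and consume logged positions—and for that it must handle the more general form of log test (where the focus is not the topmost entry and the current position is shifted up), whereas here only the simple topmost-focus form is needed.
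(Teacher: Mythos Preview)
Your proof is correct and follows essentially the same approach as the paper: invoke the I-exhaustible invariant on the simple log test $\state_\lpos$, observe that the resulting \LIAM run must begin with $\tomachbtonedecp{\JAMtoIAM\lpos}$, and then tape-lift by $\JAMtoIAM\tape$ to recover the endpoints $\JAMtoIAM\state$ and $\JAMtoIAM\statetwo$. Your explicit justification for why the first transition is forced (case analysis on \reffig{iam} at a $\upp$-state with empty tape and context $\ctxp{\tmtwo\ctxhole}$) is a welcome addition that the paper leaves implicit.
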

\begin{proof}
Let $\lpos \defeq (\var,\ctxtwo,\tlogtwo)$ and consider $\state=\ustate{ \tm }{ \ctxp{\tmtwo\ctxhole} }{ \tape }{ 	
		(\var,\ctxtwo,\tlogtwo)\cdot\tlog }
		\tomachhole{\jumpsym,\lpos} 
		\ustate{ \var }{ \ctxtwo }{ \tape }{ \tlogtwo } = \statetwo$.
		Since $\state$ is reachable then it is I-exhaustible, so its log test $\state_\lpos 
		\defeq 
		\ustate{ \tm }{ \ctxp{\tmtwo\ctxhole} }{ \epsilon }{ 
			\lpos\cdot\tlog }$ can be exhausted, that is, there is a \LIAM run
		$\run:\JAMtoIAM{\state_\lpos}	
		\toliam^*\tomachbttwodecp{\JAMtoIAM\lpos} 
\JAMtoIAMstate{\ustate{\var}{\ctxtwo}{\stempty}{\tlogtwo}}=\statethree$. Note that the first transition of $\run$ is 
necessarily $\tomachbtonedecp{\JAMtoIAM\lpos}$. Moreover, $\JAMtoIAM{\state_\lpos}$ and $\statethree$ are exactly 
$\JAMtoIAM{\state}$ and $\JAMtoIAM{\statetwo}$ with empty tape.	
			We lift $\run$ to a run $\run^{\JAMtoIAM\tape}: \JAMtoIAM{\state_\lpos}^{\JAMtoIAM\tape}
\tomachbtonedecp{\JAMtoIAM\lpos} \toliam^*\tomachbttwodecp{\JAMtoIAM\lpos}
		\statethree^{\JAMtoIAM\tape}$ using \reflemma{iam-pumping}. Now, $\run^{\JAMtoIAM\tape}$ is exactly the \LIAM 
simulation of the jump, because $\JAMtoIAM{\state_\lpos}^{\JAMtoIAM\tape}=\JAMtoIAM{\state}$ and 
$\statethree^{\JAMtoIAM\tape} = \JAMtoIAM{\state}$.
\end{proof}

From the lemma it easily follows a bisimulation between the \LIAM and the \LJAM, showing that the latter is faster. 
In \refapp{iam-jam-app}, there is a general theorem relating also potentially diverging runs. Here we give only the 
more concise statement about complete runs.

\begin{theorem}[\LIAM and \LJAM relationship]
\label{thm:ij-concise}
There is a complete \LJAM run $\run_J$ from  $\tm$ if and only if there 
is a complete \LIAM 
run $\run_I$ from $\tm$. In particular, the \LJAM implements \ccbn. Moreover, 
$\size{\run_J}\leq \size{\run_I}$ and $\sizevar{\run_J}\leq \sizevar{\run_I}$.
\end{theorem}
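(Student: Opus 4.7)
The plan is to establish a step-by-step simulation of the \LJAM by the \LIAM through the translation $\JAMtoIAM{\cdot}$, and then derive the bisimulation and size inequalities using determinism of both machines. The simulation splits into two cases: each non-jump \LJAM transition translates to a single \LIAM transition of the same name, while each \LJAM $\iamujump$ step expands into a \LIAM backtracking sequence of length at least two, as already furnished by Lemma~\ref{l:jumps-simulation}.

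The first case I would verify by inspecting the rules pairwise. The five transitions common to both machines---$\iamdap$, $\iamdlamone$, $\iamuapltwo$, $\iamulam$, $\iamuaplone$---commute trivially with $\JAMtoIAM{\cdot}$, since the translation acts component-wise on tapes and logs and leaves the $\resm$ tokens untouched. The only nontrivial case is $\iamdvar$: the \LJAM records on the tape the global logged position $(\var,\ctxp{\la\var\ctxtwo_n},\tlog_n\cdot\tlog)$, and the translation is defined precisely so that this projects to the localized $(\var,\la\var\ctxtwo_n,\JAMtoIAM{\tlog_n})$ produced by the corresponding \LIAM $\iamdvar$ step. A straightforward induction on the length of a \LJAM run then produces, for every \LJAM run $\run_J$ from $\tm$ ending at $\statetwo$, a \LIAM run $\run_I$ from $\tm$ ending at $\JAMtoIAM{\statetwo}$, using that $\JAMtoIAM{\cdot}$ fixes initial states.

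The size inequalities drop out of the simulation: each non-jump \LJAM step contributes one \LIAM step and each $\iamujump$ contributes at least two, yielding $\size{\run_J}\leq\size{\run_I}$; and since every \LJAM $\iamdvar$ contributes one \LIAM $\iamdvar$ to the simulation, we also get $\sizevar{\run_J}\leq\sizevar{\run_I}$. The forward direction of the iff then follows because $\JAMtoIAM{\cdot}$ preserves the shape of final states $\dstate{\la\var\tmtwo}{\ctx}{\epsilon}{\tlog}$, on which no transition of either machine can fire.

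The main obstacle is the converse direction of the iff: if the \LIAM terminates on $\tm$, then so does the \LJAM. A naive worry is that the \LIAM could diverge through unbounded backtracking with no counterpart on the \LJAM side, but Proposition~\ref{prop:good-invariant} rules this out, since every backtracking phase exhausts some logged position in finitely many steps. Hence any infinite \LIAM run from $\tm$ must contain infinitely many non-backtracking transitions, and by the simulation above together with determinism of the \LJAM this matches an infinite \LJAM run from $\tm$. Contrapositively, termination of the \LIAM forces termination of the \LJAM, closing the equivalence.
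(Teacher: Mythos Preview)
Your forward simulation via $\JAMtoIAM{\cdot}$, the forward direction of the iff, and the size inequalities are correct and coincide with the paper's Theorem~\ref{thm:ij-relationship}, point~1.

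The final paragraph, however, has a genuine logical slip. You argue that an infinite \LIAM run forces an infinite \LJAM run, and then write that ``contrapositively, termination of the \LIAM forces termination of the \LJAM''. But the contrapositive of ``\LIAM diverges $\Rightarrow$ \LJAM diverges'' is ``\LJAM terminates $\Rightarrow$ \LIAM terminates''---which is the forward direction you already have, not the converse you are after. What you actually need for the converse is the \emph{opposite} implication, ``\LJAM diverges $\Rightarrow$ \LIAM diverges'', and that one is immediate from your forward simulation (each \LJAM step yields at least one \LIAM step), with no appeal to Proposition~\ref{prop:good-invariant} or to finiteness of backtracking phases required. There is a second, smaller gap: to pass from ``\LJAM terminates'' to ``\LJAM has a complete run'' you must know the \LJAM cannot get stuck at a non-final state; this uses Lemma~\ref{l:jam-simple-tape}, which you do not invoke.

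The paper takes a different route for the converse: rather than contraposition, it establishes a direct \LIAM-to-\LJAM simulation (Theorem~\ref{thm:ij-relationship}, point~2), showing that every \LIAM run extends to the $\JAMtoIAM$-image of some \LJAM run; the \LJAM tape-and-direction invariant is used there to rule out the $\iamdlamtwo$ case. Your one-sided approach is lighter and does suffice once the contraposition is taken in the right direction and Lemma~\ref{l:jam-simple-tape} is cited, but as written the converse does not go through.
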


\paragraph{Exponential Gap} The time gap between the \LIAM and the \LJAM can be exponential, as it is shown by the 
family of terms $\tm_n$ ($\tm_1 \defeq \Id$ and $\tm_{n+1} \defeq \tm_n \Id$) mentioned in the introduction. The results 
of this paper provide a nice high-level proof. Next section shows that the time of the \LJAM is polynomial in the time 
of the \KAM, that takes time polynomial in the number of $\beta$-steps to 
evaluate $\tm_n$, that is, $n$. The study of 
multi types in \refsect{types} instead shows that the time of the \LIAM depends on the size of the smallest type 
$\ty_n$ of $\tm_n$, which is easily seen to be exponential in $n$. In fact, 
using the notation of \refsect{types}, $\ty_1 \defeq 
\arr{\mset\initty}\initty$, and $\ty_{n+1} \defeq \arr{\mset{\ty_n}}{\ty_n}$.

\section{Entangling the \LJAM and the \KAM: the \HAM}
\label{sect:jam-kam}

Now we turn to the relationship between the \LJAM and the \KAM. We prove that \KAM runs can be obtained from \LJAM 
ones via \emph{hops} that 
short-circuit the search for arguments realised by the blue transitions. It then follows that the \KAM can be seen as 
a time improvement of the \LJAM.


\begin{figure}[t]
	\input{machines/HAM}
	\vspace{-8pt}
	\caption{Data structures and transitions of the Hopping Abstract Machine (\HAM).}
	\label{fig:jkam}
\end{figure}
\paragraph{The \HAM} To prove that the \KAM is a time improvement of the \LJAM, we 
introduce an intermediate machine, the \emph{Hopping Abstract Machine} (\HAM) in \reffig{jkam}, that merges the two. 
The \HAM is a technical tool addressing an inherent difficulty: the \LJAM  and the \KAM use different data 
structures and it is impossible to turn a \KAM state into a \LJAM state without having to look at the whole run that 
led to that state, as it is instead possible for the \LJAM and the \LIAM. 

The idea behind the \HAM is to entangle the data structures of both machines (so that their states get paired by 
construction), and to allow it to behave non-deterministically either as the 
\LJAM or as
the \KAM. 
The \HAM deals with two enriched objects, \emph{logged closures} $\lclos$ and \emph{closed (logged) positions} $\clpos$
(defined in 
\reffig{jkam}, overloading some of the notations of the previous sections), obtained by adding a log to closures and an 
environment to logged positions. Of course, environments and logs have to be redefined as containing these enriched 
objects. There is also a \emph{(closed) tape} $\tape$, that is, a data structure obtained by merging the 
roles of the stack and the tape and containing both logged closures and closed positions. In fact the closed tape is 
obtained from the \LJAM tape by upgrading every $\resm$ entry to a logged closure $\lclos$, and every logged position 
$\lpos$
to a closed one $\clpos$. 
Note that logged closures and closed positions contain the same information (a term, a 
context, a log, and an environment) but they play different roles.

The non-determinism of the machine amounts to the presence of \emph{two} transitions $\tomachvarj$ and $\tomachvark$ for 
the variable case, that are simply the $\varsym$ transitions of the \LJAM and the \KAM, lifted to the new data 
structures. In particular, transition $\tomachvark$ short-circuits a whole $\upp$ phase of the \LJAM \emph{hopping} 
directly to the argument.

It is evident that by removing environments, turning every logged closure into $\resm$, and removing $\tomachvark$ we 
obtain the \LJAM. Similarly, by removing logs, $\tomachvarj$, and the $\upp$ transitions, one obtains the \KAM. We 
avoid spelling out these immediate projections. Instead, we see \KAM runs inside the \HAM as given by the transition 
$\tohamk \defeq \tomachdotoneapp \cup \tomachdottwoabs \cup \tomachvark$. Similarly, the \LJAM is seen as transition 
$\tohamj$, defined as the union of all \HAM transitions but $\tomachvark$.

The \HAM verifies the same basic properties of the \LJAM, simply lifted to 
the enriched data structures. Moreover, it verifies a tape lifting property.
\begin{lemma}[\HAM tape lift]
\label{l:ham-pumping}
Let $ \run: \state = \hamstatenopol{\tm}{\ctx}{\tlog}{\env}{\tapetwo}{\pol} \toham^n
	\hamstatenopol{\tmtwo}{\ctxtwo}{\tlogtwo}{\envtwo}{\tapethree}{\poltwo} = \statetwo$ be a run and $\tape$ be a tape. 
Then there is a run 
	$\run^\tape: \state^\tape = \hamstatenopol{\tm}{\ctx}{\tlog}{\env}{\tapetwo\cons\tape}{\pol}\toham^n
	\hamstatenopol{\tmtwo}{\ctxtwo}{\tlogtwo}{\envtwo}{\tapethree\cons\tape}{\poltwo} = \statetwo^\tape$.
\end{lemma}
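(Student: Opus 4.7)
The statement is a direct analogue of \reflemma{iam-pumping} for the \LIAM, and the plan is to prove it in exactly the same way: by induction on the length $n$ of the run $\run$, relying on the fact that every \HAM transition is \emph{local on the tape}, i.e.\ it inspects, pushes, or pops only the topmost entry of the tape, and does not look at, nor modify, anything below it.

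The base case $n = 0$ is trivial: $\run$ is the empty run on $\state$, so $\run^\tape$ is the empty run on $\state^\tape$. For the inductive step, write $\run$ as a single transition $\state \toham \state'$ followed by a shorter run $\run': \state' \toham^{n-1} \statetwo$. I would then proceed by case analysis on the first transition and show, for each case, that the same transition applies to $\state^\tape$, producing the state $(\state')^\tape$, i.e.\ $\state'$ with $\tape$ appended at the bottom of its tape. The cases are uniform:
\begin{itemize}
\item In $\tomachdotoneapp$, $\tomachvarj$, $\tomachdotfour$ the tape grows at the top by a new entry (a logged closure or a closed position). Appending $\tape$ at the bottom does not affect this push.
\item In $\tomachdottwoabs$, $\tomachdotthree$, $\tomacharg$ the tape shrinks at the top by one entry (of the appropriate kind). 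Since the required top entry is the same in $\state$ and $\state^\tape$ (the extra suffix $\tape$ sits strictly below it), the transition still fires and leaves $\tape$ untouched at the bottom.
\item In $\tomachvark$ and $\iamujump$ the tape is not touched at all, so the transition is insensitive to the suffix.
\end{itemize}
In each case the resulting state coincides with $(\state')^\tape$. Applying the induction hypothesis to $\run'$ with the same suffix $\tape$ yields $(\run')^\tape: (\state')^\tape \toham^{n-1} \statetwo^\tape$, and prepending the lifted first transition gives the desired $\run^\tape$ of length $n$.

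There is no real obstacle here: the content of the lemma is that the tape acts as a pure stack with respect to \HAM transitions, so no transition has a global dependency on it. The only minor care needed is to double-check the two variable transitions $\tomachvarj$ and $\tomachvark$, since they do complex manipulations of the environment, the log and (in the case of $\tomachvarj$) of the tape; but even there the only tape operation is a single push at the top, so the argument goes through uniformly. The statement about positions, directions, environments and logs in $\statetwo^\tape$ being unchanged is immediate from the fact that $\tape$ is never read nor rewritten along $\run^\tape$.
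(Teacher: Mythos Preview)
Your proposal is correct and is exactly the natural argument: the paper itself does not spell out a proof for this lemma (nor for the analogous \reflemma{iam-pumping}), treating it as immediate from the fact that every transition acts only on the top of the tape. Your case analysis of the eight \HAM transitions is accurate, including the observation that $\tomachvark$ and $\iamujump$ leave the tape untouched.
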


\section{Hopping is Also Exhausting} 
\label{sect:hopping}
Since jumping and hopping amount to a similar idea, the proof technique that we use to relate the \LJAM and the \KAM 
is obtained by another variation on the exhaustible invariant.

\paragraph{Testing Logged Closures} The main difference is that now we exhaust \emph{logged closures} instead of logged 
positions. Via the $\upp$-exhaustible invariant below we shall show that the \HAM can exhaust a logged closure---that 
is it can recover the argument in the closure---by using only \LJAM $\upp$ transitions. This capability shall then be 
used to show that the \LJAM can simulate hops.

Since logged closures are both in the environment and in the tape, we have two kinds of test. The definition 
of tape tests is in the Appendix. They are essential for the proof of 
the $\upp$-exhaustible invariant, but 
they are not needed for the argument at work in the simulation, spelled out below.

\paragraph{Environment Tests} Given a \HAM state 
$\hamstatenopol{\tm}{\ctx}{\tlog}{\env}{\tape}{\pol}$ consider an entry 
$\esub\var\lclos$ in $\env$. The idea is that one wants to exhaust $\lclos$ to 
return to the state saved in $\lclos$. Remember that the \LJAM looks for the 
argument starting from the binder of $\var$. Then, the test associated to 
$\lclos$ is obtained by positioning the machine on the binder $\lambda\var$ for 
$\var$, and modifying the log and the environment accordingly. Moreover, the 
tape is emptied.

\begin{definition}[\HAM Environment tests]
	Let $\state =	\hamstatenopol{\tm}{\ctxp{\la\var\ctxtwo_{n}}}{\tlog_{n}\cons \tlog}{E'\cons\esub\var \lclos\cons 
E}{\tape}{\pol}$ be a
	state. Then, $\state_{\lclos} \defeq \hamstateu{\la\var\ctxtwo_n\ctxholep{\tm}}{\ctx}{\tlog}{E}{\stempty}$
	is an environment test for $\state$ of focus $\lclos$.
\end{definition}

As in the previous section, we need a notion of state induced by a logged closure $\lclos$, that is the state reached 
by a run exhausting $\lclos$. The definition may seem wrong, an explanation follows.
\begin{definition}[\HAM State induced by a logged closure]
	Given a logged closure 
	$\lclos=(\tmtwo,\ctxtwop{\tm\ctxhole},\env)^\tlog$,
	the state $\indstate{\lclos}$ induced  by $\lclos$ is defined as
	$\indstate{\lclos} \defeq 
	\hamstateu{\tm}{\ctxtwop{\ctxhole\tmtwo}}{\tlog}{\env}{\stempty}$.
\end{definition}
The previous definition is counter-intuitive, as one would expect $\indstate{\lclos}$ to rather be the state $\statetwo 
\defeq \hamstated{\tmtwo}{\ctxtwop{\tm\ctxhole}}{\tlog}{\env}{\stempty}$, but for technical reasons this is not 
possible. In the simulations of hops below, however, $\indstate{\lclos}$ is tape lifted to a state that makes a 
$\tomacharg$ transition to (a tape lifting of) $\statetwo$, as one would expect. We set $\tomachup \defeq 
\tomachhole{\resm 3, \resm 4 , \argsym, \jumpsym}$.
\begin{definition}[\HAM $\upp$-Exhaustible states]
	$\exstates_{\upp}$ is the smallest
	set of those states $\state$ such that for any tape or environment test $\state_{\lclos}$ of $\state$,
		there exists a run
		$\run_{\upp}:\state_{\lclos}\tomachup^*\indstate\lclos$ and 
		$\indstate\lclos \in\exstates_{\upp}$.
	States in $\exstates_{\upp}$ are called \emph{$\upp$-exhaustible} (pronounced \emph{up}-exhaustible).
\end{definition}

\begin{lemma}

		Let $\state$ be a \HAM reachable state. Then $\state$ is $\upp$-exhaustible.
\end{lemma}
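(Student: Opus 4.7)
The plan is to prove the lemma by induction on the length of an initial run reaching $\state$, in the style of the proofs of the \LIAM exhaustible invariant (\refprop{good-invariant}) and its \LJAM variant (\reflemma{good-invariant-jam}). The base case is immediate: an initial state $\state_\tm = \hamstatenopol{\tm}{\ctxhole}{\epsilon}{\epsilon}{\epsilon}{\downp}$ has no logged closures in its environment or tape, so the universally quantified condition defining $\exstates_{\upp}$ holds vacuously.

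For the inductive step, I would take $\state' \toham \state$ with $\state' \in \exstates_{\upp}$ by inductive hypothesis, and do a case analysis on the last transition, verifying that every tape or environment test $\state_\lclos$ of $\state$ admits a run $\state_\lclos \tomachup^* \indstate\lclos$ with $\indstate\lclos \in \exstates_{\upp}$. For the transitions that merely shuffle logged closures between tape, environment and (indirectly via $\clpos$) the log---namely $\tomachdotthree$, $\tomachdotfour$, $\tomachdottwoabs$, $\tomacharg$, $\tomachvarj$, $\tomachvark$, and $\iamujump$---each test in $\state$ translates to a corresponding test in $\state'$, whose exhausting $\upp$-run exists by the inductive hypothesis and lifts to the required one via the \HAM tape-lifting lemma (\reflemma{ham-pumping}).

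The only transition creating a genuinely fresh logged closure is $\tomachdotoneapp$, which places $\lclos = (\tmtwo, \ctxp{\tm\ctxhole}, \env)^\tlog$ on top of the tape. Here the tape test focused on this new $\lclos$ must be related to $\indstate\lclos = \hamstateu{\tm}{\ctxp{\ctxhole\tmtwo}}{\tlog}{\env}{\stempty}$ by a short, explicitly constructed $\upp$-run. The counter-intuitive shape of $\indstate\lclos$ (landing on the function side, as remarked after its definition) is precisely what makes this construction work: the $\upp$-transitions $\tomachdotthree$ and $\tomacharg$ are tailored as approximate inverses of $\tomachdotoneapp$. The recursive closure requirement $\indstate\lclos \in \exstates_{\upp}$ then follows because all its logged closures come from $\env$, which is already present in $\state'$ and handled by the inductive hypothesis (combined with tape lifting to empty the tape).

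The main obstacle will be the migration of a logged closure between the tape and the environment under $\tomachdottwoabs$ and $\tomachdotfour$: tape tests and environment tests for the same $\lclos$ are defined differently, so one must show that an exhausting $\upp$-run for the one can be converted into an exhausting $\upp$-run for the other. This will require a short $\upp$-run prefix that repositions the machine at the relevant binder, followed by an appeal to the inductive hypothesis and tape lifting. Conceptually, this is exactly the slogan of the section: hopping---short-circuiting the \LJAM's $\upp$-phase by consulting the environment---works precisely because the $\upp$-phase it replaces can always be carried out, i.e., is exhaustible.
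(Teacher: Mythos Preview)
Your overall approach—induction on the length of the initial run with a case analysis on the last transition—matches the paper's, and your handling of the base case and of $\tomachdotoneapp$ is correct (indeed the tape test for the freshly created $\lclos$ \emph{coincides} with $\indstate\lclos$, so the exhausting run is empty there).

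The gap is in your treatment of $\tomachvark$ and $\iamujump$. You group them with the ``shuffling'' transitions and claim that every test of $\state$ translates to a test of the immediate predecessor $\state'$. But in both cases the environment of $\state$ is \emph{extracted from inside a data structure} of $\state'$: for $\tomachvark$ it is the environment $F$ stored in the looked-up logged closure $(\tmtwo,\ctxtwo,F)^{\tlogtwo}$, and for $\iamujump$ it is the environment $\envtwo$ stored in the closed position $\clpos=(\var,\ctxtwo,\tlogtwo)^{\envtwo}$ on top of the log. These environments are not suffixes of the environment of $\state'$, so environment tests of $\state$ are \emph{not} environment tests of $\state'$ and cannot be discharged directly by the inductive hypothesis applied to $\state'$.

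The paper closes this gap with \reflemma{lclosure-were-visited}: every logged closure and every closed position occurring in a reachable state records a state that the run has already passed through. For $\tomachvark$, the closure $(\tmtwo,\ctxtwo,F)^{\tlogtwo}$ in the environment of $\state'$ was created at some strictly earlier state $\state''$ whose environment is exactly $F$; the inductive hypothesis applies to $\state''$, and the invariance properties of environment tests (\reflemma{env-inv}) then transfer the exhausting runs from $\state''$ to $\state$. The $\iamujump$ case is symmetric, using the closed-position half of the lemma. Without this extra ingredient your argument for these two transitions does not go through.
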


\paragraph{Simulating Hops} From the invariant and the tape lifting property of the 
\HAM, it follows easily that hops can be simulated via $\tomachup$, as the next lemma shows.

\begin{lemma}[Hops simulation via $\upp$]
\label{l:hops-simulation}
Let $\state$ be a \HAM reachable state and $\state \tomachvark \statetwo$. Then 
$\state\tomachvarj\statethree\tomachup^+\statetwo$.
\end{lemma}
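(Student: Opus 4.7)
The plan is to adapt the argument of Lemma~\ref{l:jumps-simulation} almost verbatim, swapping the $I$-exhaustibility invariant of \LJAM states for the $\upp$-exhaustibility invariant of \HAM states and using the \HAM tape-lifting property (Lemma~\ref{l:ham-pumping}). Intuitively, firing $\tomachvarj$ first and then exhausting the logged closure associated to the looked-up variable should rebuild what $\tomachvark$ achieves in a single hop.

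Concretely, I would start by unfolding $\tomachvark$: $\state$ has subterm $\var$ and environment $\envtwo\cons\esub\var\lclos\cons\envthree$ with $\lclos=(\tmtwo,\ctxtwo,F)^{\tlogtwo}$ and $\clpos=(\var,\ctx,\tlog)^\env$, while $\statetwo$ has subterm $\tmtwo$, context $\ctxtwo$, log $\clpos\cons\tlogtwo$, environment $F$, and tape $\tape$. Two standard reachability invariants then apply: the analogue of the \emph{position and log} invariant gives a decomposition $\ctx = D\ctxholep{\la\var C}$ matching the shape of $\tlog$, and an induction on \HAM runs shows that any $\lclos$ appearing in a reachable environment was created by an earlier $\tomachdotoneapp$, whence $\ctxtwo = D'\ctxholep{\tmthree\ctxhole}$ for some $D', \tmthree$. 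Firing $\tomachvarj$ from $\state$ then produces a state $\statethree$ which, by direct inspection, is exactly the environment test $\state_\lclos$ of $\state$ focused on $\lclos$ but tape-lifted by $\clpos\cons\tape$ in place of $\stempty$.

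Since $\state$ is reachable and hence $\upp$-exhaustible, there exists a run $\rho\colon \state_\lclos\tomachup^*\indstate\lclos$, where by definition $\indstate\lclos$ has subterm $\tmthree$, context $D'\ctxholep{\ctxhole\tmtwo}$, log $\tlogtwo$, environment $F$, and empty tape. Lifting $\rho$ by $\clpos\cons\tape$ via Lemma~\ref{l:ham-pumping} yields $\statethree\tomachup^*\indstate\lclos^{\clpos\cons\tape}$, from which a single further $\tomacharg$ transition lands on the state with subterm $\tmtwo$, context $D'\ctxholep{\tmthree\ctxhole}=\ctxtwo$, log $\clpos\cons\tlogtwo$, environment $F$, and tape $\tape$: this is precisely $\statetwo$. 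Because $\tomacharg$ is one of the transitions composing $\tomachup$, the concatenated run is $\tomachup^+$, as required.

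The hard part will be the closure-creation bookkeeping: one must verify that the context stored in any environment-level $\lclos$ is genuinely of the form $D'\ctxholep{\tmthree\ctxhole}$, so that $\indstate\lclos$ is well-defined and lines up with the expected $\tomacharg$ target. This is of the same flavor as the position-and-log invariants already exploited for the \LIAM and \LJAM, and I expect it to follow by a routine induction over reachable \HAM states.
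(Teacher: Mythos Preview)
Your proposal is correct and follows essentially the same route as the paper: fire $\tomachvarj$, observe that the resulting state $\statethree$ is the tape-lift (by $\clpos\cons\tape$) of the environment test $\state_\lclos$, apply $\upp$-exhaustibility to get a $\tomachup^*$ run to $\indstate\lclos$, tape-lift that run via Lemma~\ref{l:ham-pumping}, and finish with a single $\tomacharg$ step. Your explicit mention of the closure-creation invariant (that the context stored in $\lclos$ is of the form $D'\ctxholep{\tmthree\ctxhole}$) is a point the paper leaves implicit by simply writing $\lclos=(\tmtwo,\ctxtwop{\tm\ctxhole},F)^{\tlogtwo}$ from the outset; it is indeed a routine consequence of how $\tomachdotoneapp$ creates logged closures (cf.\ Lemma~\ref{l:lclosure-were-visited} in the appendix).
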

\begin{proof}
The hypothesis is: $\state=\hamstated{ \var }{ \ctx }{ \tlog}{ \env }{ \tape } 
		\tomachvark
	\hamstated{ \tmtwo}{ \ctxtwop{\tm\ctxhole} }{ \clpos\cons \tlogtwo }{ F }{ \tape } = \statetwo$
	where 
	$\env=\envtwo\cons\esub\var{\lclos}\cons \envthree$ with $\lclos 
	=(\tmtwo,\ctxtwop{\tm\ctxhole},F)^{\tlogtwo}$ and $\clpos\defeq(\var, 
	\ctx,\tlog)^{\env}$. From $\state$ the \HAM can also do a $\tomachvarj$ transition:
		$\state = \hamstated{ \var }{ \ctx'\ctxholep{\la\var\ctxtwo_{n}'} } { \tlogn\cons\tlogthree}{ \envtwo\esub\var 
\lclos\envthree }{\tape }    
		\tomachvarj 
		\hamstateu{ \la\var\ctxtwo_{n}'\ctxholep\var}{ \ctx' }{ \tlogthree }{ \envthree }{ \clpos\cons\tape} = 
\statethree$
where $\tlog =  \tlogn\cons\tlogthree$ and 
$\ctx=\ctx'\ctxholep{\la\var\ctxtwo_{n}'}$.
	Now consider the environment test $\statetwo_{\lclos}= \hamstateu{ 
	\la\var\ctxtwo_{n}\ctxholep\var}{ \ctx }{ \tlogthree }{ \envthree }{ \stempty }$. By $\upp$-exhaustibility we obtain 
	$\run: \statetwo_{\lclos} \tomachup^* \indstate{\lclos} = \hamstateu{ \tm}{ \ctxtwop{\ctxhole\tmtwo} }{ \tlogtwo }{ F 
}{ \stempty }$
	Then, lifting $\indstate{\lclos}$ with the tape $\clpos\cons\tape$, one has
	$
	\indstate{\lclos}_{\clpos\cons\tape}=\hamstateu{ \tm}{ 
		\ctxtwop{\ctxhole\tmtwo} }{ \tlogtwo }{ F }{ \clpos\cons\tape } 
		\iamuaplone 
		\hamstated{ \tmtwo }{ \ctxtwop{\tm\ctxhole} }{ \clpos\cons\tlogtwo }{ F }{ \tape }$
	Thus, $\state\jamdvar\statethree\tomachup^*\indstate{\lclos}_{\clpos\cons\tape} 
	\iamuaplone\statetwo$.
\end{proof}
	From the lemma it easily follows a bisimulation between the \LJAM and the \KAM, showing that the latter is 
faster. In \refapp{hopping-app}, there is a theorem relating their runs inside the \HAM, considering also 
potentially diverging runs. Here we give only the 
more concise statement about complete runs.

\begin{theorem}[\LJAM and \KAM relationship]
\label{thm:jk-concise}
There is a complete \LJAM run $\run_J$ from 
$\tm$ if and only if 
there is a complete \KAM run $\run_K$ from $\tm$. Moreover, 
$\size{\run_J}= \size{\run_K} + \sizeup{\run_J}$ and $\sizevar{\run_J}= \sizevar{\run_K}$.
\end{theorem}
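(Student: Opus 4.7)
The strategy is to route the correspondence through the \HAM, which was designed precisely to host both the \LJAM and the \KAM. A complete \KAM run $\run_K$ from $\tm$ lifts uniquely, by determinism and the natural projection, to a complete \HAM run from the initial state $\state_\tm^\HAM$ using only $\tohamk$ transitions; symmetrically, a complete \LJAM run $\run_J$ lifts to a complete $\tohamj$ run. The theorem therefore reduces to comparing the $\tohamj$ and $\tohamk$ traces from $\state_\tm^\HAM$.

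The key ingredient is \reflemma{hops-simulation}: for any reachable \HAM state $\state$ at a variable occurrence, the single step $\state \tomachvark \statetwo$ is mirrored by a path $\state \tomachvarj \statethree \tomachup^+ \statetwo$ ending in the same state $\statetwo$. I would proceed by induction, maintaining the invariant that a $\tohamj$ prefix and a $\tohamk$ prefix end at the same \HAM state and agree on their non-$\upp$ transitions. The base case is the shared initial state. In the inductive step, if the common state is final (code is an abstraction, tape is empty, direction $\downp$, a condition that coincides with finality for both modes), both runs are complete; if the next common transition is $\tomachdotoneapp$ or $\tomachdottwoabs$, both modes fire identically; at a variable occurrence, we invoke \reflemma{hops-simulation} to extend $\run_J$ by $\tomachvarj \tomachup^+$ and $\run_K$ by a single $\tomachvark$, preserving the invariant. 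In the opposite direction, determinism of $\tohamj$ forces the $\tomachvarj$ and its ensuing $\tomachup^+$ fragment in $\run_J$ to coincide with the simulation path given by \reflemma{hops-simulation} for the (always applicable) $\tomachvark$, so the induction is symmetric.

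The counts follow by bookkeeping: every $\tomachvark$ in $\run_K$ matches one $\tomachvarj$ plus a nonempty $\tomachup$ fragment in $\run_J$, and the non-variable transitions coincide, yielding $\size{\run_J} = \size{\run_K} + \sizeup{\run_J}$ and $\sizevar{\run_J} = \sizevar{\run_K}$. The main obstacle is justifying the \LJAM-to-\KAM direction: one must show that $\tomachvark$ is applicable whenever $\run_J$ fires $\tomachvarj$, and that the $\tomachup^+$ fragment in $\run_J$ terminates. The former reduces to the standard closure invariant—the environment of a reachable \HAM state contains an entry for every free variable of the current code, inherited from closedness of the initial term. The latter is automatic along $\run_J$ by hypothesis, and combined with determinism the endpoint must coincide with the state predicted by \reflemma{hops-simulation}.
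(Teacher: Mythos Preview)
Your proposal is correct and follows essentially the same route as the paper: both arguments factor through the \HAM, use \reflemma{hops-simulation} as the only non-trivial step, and rely on the closure invariant to guarantee that $\tomachvark$ is always enabled at variables. The paper organizes the argument as three separate statements (a \KAM-to-\LJAM simulation, a \LJAM-to-\KAM simulation with a residual $\tomachup^*$ tail, and a termination equivalence proved via the contrapositive), whereas you run a single lockstep induction maintaining a common \HAM state; this is a minor presentational difference, and your observation that intermediate states of the $\tomachup^+$ segment are all $\upp$-states (hence non-final) is exactly what makes the \LJAM-to-\KAM direction go through for complete runs.
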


\section{The $\lambda$-JAM is Slowly Reasonable}
\label{sect:jam-complexity}
In this section we provide bounds for the complexity of the \LJAM. First, we show that it is quadratically slower 
than the \KAM, and then, by using results from the literature about the \KAM, we obtain bounds with respect to the 
two parameters for complexity analyses of abstract machines, namely, the 
size $\size\tm$ of the evaluated term and the number $\#\beta$ of $\towh$-steps to 
evaluate $\tm$.

\paragraph{Locating the \LJAM} We have proved in the previous two sections that a run $\run_J$ of the \LJAM 
from $\tm$ is such that $\size{\run_K}\leq\size{\run_J}\leq\size{\run_I}$, 
where $\run_K$ and $\run_I$ are the runs from $\tm$ respectively of the \KAM 
and of the \LIAM. However, this tells nothing about the inherent complexity of 
evaluating a term with the \LJAM. In fact, it is well known that 
$\size{\run_K}$ is polynomial in $\#\beta$ and $\size\tm$ (namely quadratic in $\#\beta$ and linear in $\size\tm$), 
while $\size{\run_I}$ can be exponential in both $\#\beta$ and $\size\tm$ (the 
typical example of exponential behavior being the family of terms $\tm_n$ defined as $\tm_1 \defeq \Id$ and $\tm_{n+1} 
\defeq \tm_n \Id$). What about the \LJAM? Is it polynomial or 
exponential? It turns out that the \LJAM is polynomial, and precisely at most quadratically slower than the \KAM.

\paragraph{Bounding $\upp$ Phases} Since the \KAM is the \LJAM less the (\blue{blue}) $\upp$ phases, and the 
complexity of the \KAM is known, we only have to study the length of $\upp$ phases. The length 
of a $\upp$ phase extending a run $\run$ from $\tm$ is bound by $\sizevar{\run}\cdot\size{\tm}$, and the length of all 
$\upp$ phases together is bound by $\sizevar{\run}^2\cdot\size{\tm}$. The proof is in three steps. First, we show that 
in absence of jumps a 
$\upp$ phase cannot be longer than $\size\tm$. An immediate induction on $\size\ctx$ proves the following lemma.
\begin{lemma}\label{l:boundC}
	Let $\run:\ustate{\tm}{\ctx}{\tape}{\tlog}\tomachhole{\resm 3, \resm 4}^*\state$. Then 
	$\size\run\leq \size\ctx\leq \size{\ctxp\tm}$.
\end{lemma}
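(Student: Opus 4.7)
The plan is to prove the lemma by induction on the length of the run, exploiting the fact that each transition in $\tomachhole{\resm 3, \resm 4}$ strictly decreases the size of the context component of the state. The second inequality $\size\ctx \leq \size{\ctxp\tm}$ is immediate from the definition of plugging, since replacing the hole with the subterm $\tm$ (of size at least 1) can only increase the overall syntactic size.

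For the main inequality, the base case of an empty run is trivial, with $\size\run = 0 \leq \size\ctx$. For the inductive step, I would inspect the first transition of $\run$. If it is $\tomachdotthree$, the context changes from $\ctxp{\ctxhole\tmtwo}$ to $\ctx$, with $\size{\ctxp{\ctxhole\tmtwo}} \geq \size\ctx + 1$ (the application node alone contributes at least one unit). If it is $\tomachdotfour$, the context changes from $\ctxp{\la\var\ctxhole}$ to $\ctx$, again with $\size{\ctxp{\la\var\ctxhole}} \geq \size\ctx + 1$. In both cases the context strictly shrinks by at least one. Applying the induction hypothesis to the tail run starting from the smaller context then yields $\size\run \leq 1 + (\size\ctx - 1) = \size\ctx$ (where here I mean the size of the context in the original state).

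I expect no real obstacle: there is no subtlety about tapes or logs since $\tomachhole{\resm 3, \resm 4}$ never inspects or pops logged positions, and the transitions are purely structural on the context. The only thing to double-check is the convention about $\size\ctxhole$ in the definition of $\size{\ctxp\tm}$, but under the standard convention ($\size\ctxhole = 0$ or $1$, with $\size\tm \geq 1$), one has $\size{\ctxp\tm} \geq \size\ctx$ unconditionally.
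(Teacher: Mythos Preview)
Your proposal is correct and essentially matches the paper's own argument. The paper states only that ``an immediate induction on $\size\ctx$ proves the following lemma''; you induct on the length of $\run$ instead, but the core observation is identical---each $\tomachdotthree$ or $\tomachdotfour$ transition strictly shrinks the context---and the two induction variables are interchangeable here.
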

Second, we need an invariant. To estimate the number of jumps in a $\upp$ phase, we 
need to link the structure of logs with the number of $\iamdvar$ transitions encountered so far.
We introduce the notion of \emph{depth} of a tape/log 
$\Gamma$, defined in 
the following way:
\[
\begin{array}{rcl@{\hspace{.7cm}}rcl}
\spdepth\stempty &\defeq &0&
\spdepth{\resm\cdot\tape} &\defeq& \spdepth\tape
\\
\spdepth{\lpos\cdot\Gamma} &\defeq& \spdepth\lpos&
\spdepth{(\var,\ctx,\tlog)} &\defeq& 1+\spdepth\tlog
\\
\spdepthnopar{\nopolstate{\tm}{\ctx}{\tape}{\tlog}{\upp}} &\defeq &\spdepth\tape
&
\spdepthnopar{\nopolstate{\tm}{\ctx}{\tape}{\tlog}{\downp}} &\defeq 
&\spdepth\tlog

\end{array}\]
\begin{proposition}[Depth invariant]
\label{l:ljam-var-invariant}
	Let $\run:\state_\tm\toljam^*\state$ be 
	an initial run of the \LJAM. Then 
	$ \spdepth\state = \sizevar{\run}$. Moreover $\spdepth\state \geq \spdepth\lpos$ for every logged position $\lpos$ in 
$\state$.
\end{proposition}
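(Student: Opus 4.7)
The proof proceeds by induction on the length of the run $\run$. The base case is immediate: the initial state $\state_\tm = \dstate{\tm}{\ctxhole}{\epsilon}{\epsilon}$ has direction $\downp$ and empty log, hence $\spdepth{\state_\tm} = \spdepth{\epsilon} = 0$, matching the empty run's $\sizevar{\run}=0$; vacuously, there are no logged positions to bound. The inductive step is a case analysis on the last LJAM transition, and the guiding observation is that exactly one transition, namely $\iamdvar$, simultaneously increments $\sizevar{\run}$ and the state depth by $1$; all other transitions preserve both quantities.

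For the \emph{first claim}, I would go through the seven transitions of \reffig{jam} keeping track of the direction, since $\spdepth{\cdot}$ reads the log in $\downp$-states and the tape in $\upp$-states. For $\iamdap$ and $\iamdlamone$ the direction stays $\downp$ and the log does not change, so depth is preserved. For $\iamuapltwo$ and $\iamulam$ the direction stays $\upp$ and the tape changes only by pushing or popping a $\resm$, which is transparent to $\spdepth$. For $\iamujump$ the direction stays $\upp$, the tape is untouched, and depth is preserved. For $\iamuaplone$ the direction flips from $\upp$ to $\downp$, and the focused position $\lpos$ is moved from the top of the tape to the top of the log: both before and after, the relevant quantity equals $\spdepth{\lpos}$. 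Finally, for $\iamdvar$, the source depth is $\spdepth{\tlog_n\cdot\tlog}$, while the target is an $\upp$-state whose tape is $(\var,\ctxp{\la\var\ctxtwo_n},\tlog_n\cdot\tlog)\cdot\tape$, and by the definition of $\spdepth{\cdot}$ on \trposs this equals $1+\spdepth{\tlog_n\cdot\tlog}$; since $\iamdvar$ is a variable transition, this matches $\sizevar{\run'} = \sizevar{\run}+1$.

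For the \emph{second claim}, I would observe that the only transition introducing a new logged position is $\iamdvar$, and the very computation above shows that the freshly created $\lpos$ has $\spdepth{\lpos} = 1+\spdepth{\tlog_n\cdot\tlog} = \spdepth{\state'}$, so the bound is tight on the new entry. For logged positions that were already present, I would use the inductive hypothesis together with the monotonicity just established: positions sitting in parts of the data structures that are copied unchanged are still bounded, since the state depth does not decrease under $\iamdvar$ (the only increasing transition); and for $\iamuaplone$, the position $\lpos$ moved from tape to log has $\spdepth{\lpos}$ equal to both the old and the new state depth. The most delicate case is $\iamujump$, which exposes positions previously nested inside a popped $(\var,\ctxtwo,\tlogtwo)$: by the inductive hypothesis applied recursively, every such nested position has depth at most $\spdepth{(\var,\ctxtwo,\tlogtwo)}-1 = \spdepth{\tlogtwo}$, and the post-jump state depth remains $\spdepth{\tape}$, which by the inductive hypothesis on the source already dominated $\spdepth{(\var,\ctxtwo,\tlogtwo)}$ and hence dominates $\spdepth{\tlogtwo}$ as well.

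The expected main obstacle is bookkeeping rather than conceptual: one must commit to whether "logged position $\lpos$ in $\state$" is read shallowly or recursively, and check that the chosen reading is preserved by every transition, with $\iamujump$ being the only case where recursive nesting matters. Beyond that, the proof is a direct and essentially mechanical unfolding of the definition of $\spdepth{\cdot}$ against the seven rules of \reffig{jam}.
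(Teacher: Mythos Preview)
Your approach is the same as the paper's: induction on the run length with a case analysis on the last transition, and the first claim is handled exactly as the paper does.

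For the second claim, however, your treatment of $\iamujump$ is overcomplicated and, as written, not justified. You claim that every position nested inside the popped $(\var,\ctxtwo,\tlogtwo)$ has depth at most $\spdepth{(\var,\ctxtwo,\tlogtwo)}-1=\spdepth{\tlogtwo}$, attributing this to ``the inductive hypothesis applied recursively''. But the inductive hypothesis only bounds each logged position by the \emph{state} depth, not by the depth of the logged position that contains it; the tighter inequality you invoke would need a separate argument. The paper avoids this detour entirely: with the recursive reading of ``logged position in $\state$'', every logged position of the target state (including those in $\tlogtwo$) was already a logged position of the source state, and since $\iamujump$ leaves the tape---hence the $\upp$-state depth---unchanged, the inductive hypothesis applies directly. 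The same ``already present in the source'' observation handles $\iamuapltwo$, $\iamulam$, and $\iamuaplone$ in one line, and for $\iamdvar$ the new position has depth exactly $\spdepth{\state}$ while all others are strictly bounded via the source.
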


Third, we bound $\upp$ phases. The number of jumps in a single phase $\state\tomachup^*\statetwo$ of $\upp$ transitions 
is bound by $\spdepth\state$, and 
pairing it up with \reflemma{boundC} we obtain a bound on the phase. By the depth invariant the bound can be given 
relatively to $\sizevar{\run}$, and a standard argument extends the bound 
to all $\upp$ phases in a run, adding a quadratic dependency. Let $\sizeup\run$ be the number of $\tomachup$ 
transitions in 
$\run$.
\begin{lemma}[Bound on $\upp$ phases]
\label{l:bound-up-phase} 
\hfill
\begin{enumerate}
\item \emph{One $\upp$ phase}: if $\state=\ustate{\tm}{\ctx}{\tape}{\tlog}$ is a reachable state and 
	$\run:\state\tomachup^*\statetwo$ then 
	$\size\run\leq \spdepth\state\cdot\size{\ctxp\tm}$.
	
	\item \label{p:bound-up-phase-global}
	\emph{All $\upp$ phases}:	if $\run:\state_\tm\toljam^*\state$ then 
	$\sizeup\run \leq \sizevar\run^2\cdot\size\tm$. 
\end{enumerate}
\end{lemma}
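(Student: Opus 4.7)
The proof will proceed by handling the two parts in sequence, with part 1 providing the workhorse lemma used in part 2.

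For part 1, the plan is to analyze the structure of a $\upp$-phase starting from a reachable state $\state = \ustate{\tm}{\ctx}{\tape}{\tlog}$. The transitions in $\tomachup$ split into two classes: the "local" moves $\resm 3, \resm 4$ (which only shuffle $\resm$ markers between the code position and the tape) and the two "non-local" moves $\jumpsym$ and $\argsym$. The transition $\argsym$ switches the direction to $\downp$, so it can occur only as the final transition of a maximal $\upp$-phase. Therefore any run $\run:\state\tomachup^*\statetwo$ decomposes into $j+1$ maximal sub-runs of $\resm 3, \resm 4$ transitions interleaved with $j$ jumps and possibly a concluding $\argsym$. The first key observation is that each jump decreases the log depth by exactly one: a transition $\iamujump$ replaces the log $(\var,\ctxtwo,\tlogtwo)\cdot\tlog$ by $\tlogtwo$, so its depth falls from $1+\spdepth{\tlogtwo}$ to $\spdepth{\tlogtwo}$, while $\resm 3$ and $\resm 4$ leave the log unchanged. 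Since the "moreover" clause of the depth invariant (\refprop{ljam-var-invariant}) gives $\spdepth\state \geq \spdepth\tlog$, the log depth cannot decrease more than $\spdepth\state$ times, so $j \leq \spdepth\state$.

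The second observation is that each of the $j+1$ pure $\resm 3, \resm 4$ sub-runs is bounded in length by $\size{\ctxp\tm}$ by \reflemma{boundC}, because the underlying term $\ctxp{\tm}$ is invariant under all \LJAM transitions (the machine moves on a fixed code). Adding the $j$ jumps and at most one $\argsym$ yields $\size\run \leq (j+1)\cdot\size{\ctxp\tm} + j + 1 \leq \spdepth\state \cdot \size{\ctxp\tm}$, after absorbing low-order terms using $\spdepth\state \geq 1$ whenever there is at least one jump (the trivial case where the phase contains no jumps is immediate from \reflemma{boundC}).

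For part 2, the plan is a straightforward summation. A $\upp$-phase can be entered only from a $\downp$ state via $\iamdvar$, the unique transition switching direction to $\upp$; therefore the number of distinct maximal $\upp$-phases occurring in $\run:\state_\tm \toljam^* \state$ is at most $\sizevar\run$. For each such phase starting at a state $\state_p$ reached by an initial prefix $\run_p$ of $\run$, part 1 bounds its length by $\spdepth{\state_p}\cdot\size\tm$, and the depth invariant (\refprop{ljam-var-invariant}) gives $\spdepth{\state_p} = \sizevar{\run_p} \leq \sizevar\run$. Summing over at most $\sizevar\run$ phases yields $\sizeup\run \leq \sizevar\run \cdot \sizevar\run \cdot \size\tm = \sizevar\run^2 \cdot \size\tm$.

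The most delicate step is the clean accounting in part 1: one must be careful that jumps preserve the tape (so $\spdepth\state = \spdepth\tape$ stays constant during the phase, which is why bounding via the \emph{log} depth—and then appealing to the inequality $\spdepth\tlog \leq \spdepth\state$ supplied by the depth invariant—is essential) and that $\argsym$ can appear at most once, at the end. Once this structural picture is fixed, the rest is bookkeeping.
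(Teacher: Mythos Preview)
Your proposal is correct and follows essentially the same approach as the paper's proof. Both arguments decompose a $\upp$-phase into maximal $\{\resm 3,\resm 4\}$ segments separated by jumps, bound each segment by $\size{\ctxp\tm}$ via \reflemma{boundC}, bound the number of jumps by the log depth (which is $\leq \spdepth\state$ via the \emph{moreover} clause of the depth invariant), and then sum over the at most $\sizevar\run$ phases for part~2. You are slightly more explicit than the paper about the role of $\tomacharg$ (which can only terminate a phase) and about the constant-factor slack in the inequality $(j+1)\cdot\size{\ctxp\tm}+j+1 \leq \spdepth\state\cdot\size{\ctxp\tm}$; the paper glosses over both points in the same way, and the discrepancy is immaterial for the asymptotic complexity result that the lemma feeds into.
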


\paragraph{The Complexity of the \KAM} We need to recall the complexity analysis of the \KAM from 
\cite{DBLP:conf/rta/AccattoliL12,DBLP:conf/icfp/AccattoliBM14,DBLP:conf/ppdp/AccattoliB17}. The length of a complete 
\KAM run $\run$ from $\tm$ verifies $\size\run = \sizevar\run + 2\cdot\sizeabs\run$ and we have 
$\sizevar\run = \bigo{\sizeabs\run^2}$ (the bound is tight, as there are examples reaching it). Since 
$\sizeabs\run$ is exactly the number $\#\beta$ of $\towh$ steps to evaluate $\tm$ (the cost model of reference for 
\ccbn), we have that $\size\run = \bigo{\#\beta^2}$. Now since the cost of implementing \KAM single transitions on RAM 
is 
bound by $\size\tm$, the complexity of implementing the \KAM is $\bigo{\#\beta^2\cdot\size\tm}$, that is, the \KAM is a 
reasonable machine.

\paragraph{The Complexity of the \LJAM} From the complexity of the \KAM, the fact 
that the \LJAM and the \KAM do exactly the same number of $\tomachvar$ transitions, and that the number of $\upp$ 
transition of the \LJAM are bound by $\sizevar\run^2\cdot\size\tm$, we easily obtain the following results.

\begin{theorem}[\LJAM complexity]
	Let $\tm$ be a closed term such that $\tm \towh^n \tmtwo$, $\tmtwo$ be 
	$\towh$ normal, and $\run_J$ and $\run_K$ be the complete \LJAM and  \KAM runs from $\tm$. Then:
	\begin{enumerate}
		\item \emph{The \LJAM is quadratically slower than the \KAM}: $\size{\run_K} \leq \size{\run_J} 
		= \bigo{\size{\run_K}^2\cdot \size\tm}$.

		\item \emph{The \LJAM is (slowly) reasonable}: $\size{\run_J} = \bigo{n^4\cdot \size\tm}$, and the cost of 
		implementing $\run_J$ on a RAM is also $\bigo{n^4\cdot 
			\size\tm}$.
	\end{enumerate}
\end{theorem}

\section{The Pointer Abstract Machine, Revisited}
\label{sect:pam}

The Pointer Abstract Machine (\PAM), due to Danos and Regnier \cite{DBLP:conf/lics/DanosHR96,Danos04headlinear}, 
gives an operational account of the 
interaction process at work in Hyland and Ong game semantics. The machine 
is always described rather 
informally via a pseudo-code algorithm. Here we define it according 
to our syntactic style, calling it \LPAM, and provide its first formal and manageable
presentation as an actual abstract machine.

Our result concerning the \LPAM is that it is strongly bisimilar to the 
\LJAM. Roughly, the two are the same machine, with exactly the same time 
behavior, they just use different data structures. This connection is mentioned 
 in \cite{DR99}, but not proved. We find it instructive to 
spell it out, as the connection is elegant but far from being evident.

\paragraph{{Fragmented} $vs$ {Monolitic} Run Traces} Both machines jump and 
need to store 
information about the run, to jump to the right place. They differ on how they 
represent this information. The \LJAM uses logged positions, that is, 
positions coming with the information to be restored after the jump. The 
approach can be seen as \emph{fragmented}, as the trace of the run is 
distributed 
among {all the} logged positions {in the state}. The \LPAM adopts a 
\emph{monolitic} approach, storing all the information in a unique 
\emph{history} $\history$, a new data structure encoding the whole run 
in a minimalistic and sophisticated way. {Roughly, the history $\history$ saves 
all the variable positions $\pos$ for which an argument as been found, each one 
with a pointer (under the form of an index $i$) to a previous variable position 
$\postwo$ in $\history$. The index $i$ intuitively realizes a mechanism to 
retrieve the log associated to $\pos$ by the \LJAM.} We first define the 
machine and then 
explain the relationship between the two approaches. 

\paragraph{Data Structures}  All the data structures of the \PAM are defined 
in \reffig{pam}. Positions are no longer logged, and noted with $\pos$, 
$\postwo$, etc. An \emph{index} $i$ is simply a natural number. \emph{Indexed positions} are 
pairs $\indp \pos i$. A history $\history$ is a sequence of indexed variable 
positions ({accumulated from right to left}). The idea is that {indices are 
pointers to entries in the history, 
that is,} if the $i$-th entry of $\history$ is $\indp \pos j$ 
then $j$ points to a previous entry in $\history$, that is, $j<i$.  The tape of 
the \LPAM is a  stack containing variable positions and occurrences of 
$\resm$. 

\paragraph{Transitions and Look-Up} Initial states have the form $\state_\tm\defeq 
\pamstated{\tm}{\ctxhole}{\stempty}{0}{\stempty}$, the transitions of the \LPAM are in 
\reffig{pam}, they are labeled exactly as in the \LJAM, and their union is noted $\tolpam$. Transitions $\tomachvar$ and $\tomachjump$ need to retrieve information from the history $\history$, for which there is some dedicated notation. We use $i_k^\history,\var_k^\history,\ctxtwo_k^\history$ to 
denote, respectively, the index, variable, and context of the $k$-th indexed position in $\history$. 

Transition $\tomachvar$ moreover looks up into $\history$ in an unusual way. The idea is that it accesses $\history$ 
$n$ times to retrieve an index. The first time it retrieves the indexed position $\indp {\pos_1} {j_1}$ of index $i$, 
to then retrieve the position $\indp {\pos_2} {j_2}$ of index $j_1$, and so on, until it retrieves $j_n$ and makes it 
the new state index. This is formalized using the look-up function $\phi_\history:\mathbb{N}\to\mathbb{N}$ defined as 
$\phi_\history(k)=i_k^\history$, and whose powers $\phi_\history^n$ are defined as 
$\phi_\history^n(k)=\phi_\history(\phi_\history^{(n-1)}(k))$, where $\phi_\history^0(k)=k$. Note that implementing 
$\tomachvar$ on RAM then costs $n$, that is bound by the size $\size\tm$ of the initial term, exactly as for the 
\LJAM, while all other transitions have constant cost.

\paragraph{Final States and Invariants} Final states of the \LPAM have, as 
expected, shape $\pamstatenopol {\red{\underline{\la\var\tm}}}\ctx  \history i 
\stempty\downp$. This follows from the fact that the machine is never stuck on 
$\tomachvar$ steps because $\phi_\history^n(i)$ is undefined. Note indeed a 
subtle point: $\phi_\history(0)$ is undefined, so, potentially, 
$\phi_\history^n(i)$ may  be undefined. We then need an invariant ensuring that---in the source state of $\tomachvar$---
$\phi_\history^n(i)$ is always defined. The next statement collects also other 
minor invariants of the \LPAM.

We say that $\history$ \emph{has depth} (at least) $n\in \nat$ at $i$ if $n=0$ or if $n>0$ and $\phi_\history^m(i)>0$ for every $m<n$. 

\begin{lemma}[\LPAM invariants]
\label{l:pam-simple-tape}
    Let $\state=\pamstatenopol\tm{\ctx_n} \history i \tape\pol$ be a reachable 
    \PAM state. 
    Then:
  \begin{enumerate}
  	\item \emph{Depth}: $\history$ has depth $n$ at $i$. Moreover, if $\indp {(\tmtwo,\ctxtwo_m)} j$ is the $k$-th 
indexed position of $\history$, with $k>0$, then $\history$ has depth $m$ at $k-1$.
	\item \emph{Tape, index, and direction}: if $\pol=\downp$, then 
$i = \size\history$ and $\tape$ does not contain any \trpos, otherwise if $\pol=\upp$ then $\tape$ contains exactly 
one position.	
  \end{enumerate}
\end{lemma}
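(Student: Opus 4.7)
I would proceed by induction on the length of an initial run reaching $\state$, doing a case analysis on the last transition. The \emph{base case} is immediate: for $\state_\tm = \pamstated{\tm}{\ctxhole}{\stempty}{0}{\stempty}$ the context $\ctxhole$ has level $0$, so the depth invariant requires depth $0$ at the index $0$, which holds trivially; the ``moreover'' clause is vacuous because $\history$ is empty; and direction $\downp$ with $i = 0 = \size{\stempty}$ and no position on the tape gives the second invariant.

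For the inductive step, the four \emph{search} transitions $\iamdap$, $\iamdlamone$, $\iamuapltwo$, and $\iamulam$ do not modify $\history$, $i$, the context level, or the direction; they only push or pop occurrences of $\resm$ on the tape. Both invariants are therefore preserved trivially. Transition $\iamuaplone$ switches the direction from $\upp$ to $\downp$, moves into an argument slot (so the context level increases by one), extends $\history$ with the unique tape position paired with the current index as its pointer, and resets the index to $\size\history$. The depth invariant at the new state, together with the ``moreover'' clause for the freshly pushed entry, both follow from the depth invariant at the source state, because the source index was precisely the one on which that invariant applied.

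The genuinely interesting cases are $\iamdvar$ and $\iamujump$. For $\iamdvar$, firing requires $\phi_\history^n(i)$ to be defined, where $n$ is the level of the abstraction context $\ctxtwo_n$; this is exactly the content of the depth invariant at the source state, so the transition never gets stuck. The new index equals $\phi_\history^n(i)$ and the target context is $\ctx$, whose level is reduced by exactly $n$, so the depth invariant at the target follows by unfolding the $n$-fold look-up. The direction switches to $\upp$ and exactly one position is pushed on the tape, giving the second invariant. For $\iamujump$, the machine restores an index $j$ and a context $\ctxtwo_m$ from an entry $\indp{(\tmtwo,\ctxtwo_m)}{j}$ of $\history$; the depth invariant at the target follows directly from the ``moreover'' clause applied to this entry, which guarantees that $\history$ has depth $m$ at the right index.

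\textbf{The main obstacle} is the formulation and preservation of the ``moreover'' clause, and in particular its verification in the case of $\iamuaplone$, which is the only transition extending $\history$. The first part of the depth invariant alone would be too weak to be inductively preserved by $\iamujump$, because after jumping one must know \emph{ex nihilo} that the restored (index, context level) pair is depth-compatible. The strengthening is essentially the statement that \emph{every} indexed position ever pushed on $\history$ recorded, in its pointer, enough information to reconstruct the surrounding depth at the moment it was pushed. The delicate point is then a small check that in $\iamuaplone$ the new pointer is set to the source index and that the depth data carried by the source state's invariant matches the level of the context stored in the new entry.
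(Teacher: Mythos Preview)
Your approach---induction on the length of an initial run with case analysis on the last transition---is exactly the paper's, which gives only that one-line sketch. You supply considerably more detail and correctly identify the ``moreover'' clause as the strengthening needed to make the $\tomachjump$ case go through.

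There is, however, a gap in your treatment of $\tomacharg$ when establishing the ``moreover'' clause for the freshly pushed entry. You write that it follows ``from the depth invariant at the source state, because the source index was precisely the one on which that invariant applied.'' But the source of $\tomacharg$ is in direction $\upp$, so part~2 does \emph{not} give $i = \size\history$; and part~1 only tells you that $\history$ has depth $n$ at $i$, where $n$ is the level of the \emph{source context} $\ctxp{\ctxhole\tmtwo}$. The ``moreover'' clause for the new entry $\indp\pos i$ sitting at position $\size\history+1$ instead requires that $\history$ have depth $m$ at $\size\history$, where $m$ is the level of the context carried by the tape position $\pos$. Neither $i$ and $\size\history$, nor $n$ and $m$, coincide in general.

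The fix is to strengthen the invariant with: ``if $\pol=\upp$ and the unique tape position is $(\var,\ctxtwo_m)$, then $\history$ has depth $m$ at $\size\history$.'' This is established at $\tomachvar$ (the $\downp$ source has index $\size\history$ by part~2, and the pushed position records precisely that source context), preserved trivially by $\tomachdotthree$, $\tomachdotfour$, and $\tomachjump$ (which touch neither $\history$ nor the tape position), and is exactly what is needed at $\tomacharg$. Alternatively, one can argue non-locally by tracing back to the last $\tomachvar$ and observing that no intervening $\upp$-transition modifies $\history$.
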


\begin{wrapfigure}{r}{0pt}${\footnotesize
		\begin{array}{l|c|c|c|c|c|c}
		&\mathsf{Sub}\mbox{-}\mathsf{term} & \mathsf{Context} & \mathsf{Hist.} 
		& \mathsf{Index} & \mathsf{Tape} & \mathsf{Dir}
		\\
		\cline{1-7}
			&\pamstatedtab{(\la\vartwo{\la\var{\var\vartwo}})\mathsf{II}} 
			{\ctxhole} 
			{\epsilon}{0} {\epsilon}& 
			\downp\\
			\iamdap&\pamstatedtab{(\la\vartwo{\la\var{\var\vartwo}})\mathsf{I}} 
			{\ctxhole\mathsf{I}} {\epsilon}{0}{\resm}&  \downp\\
			\iamdap&\pamstatedtab{\la\vartwo{\la\var{\var\vartwo}}} 
			{\ctxhole\mathsf{II}} 
			 {\epsilon}{0} {\resm\cdot\resm}& \downp\\
			\iamdlamone&\pamstatedtab{\la\var{\var\vartwo}} 
			{(\la\vartwo\ctxhole)\mathsf{II}} 
			{\epsilon}{0}{\resm} &\downp\\
			\iamdlamone&\pamstatedtab{\var\vartwo} 
			{(\la\vartwo{\la\var\ctxhole})\mathsf{II}} 
			{\epsilon}{0} {\epsilon}&\downp\\
			\iamdap&\pamstatedtab{\var} 
			{(\la\vartwo{\la\var\ctxhole\vartwo})\mathsf{II}} 
			{\epsilon}{0} {\resm}& \downp
	\end{array}}
	$
\end{wrapfigure}

\paragraph{An Example} As for the other machines we have considered in this 
paper, we give the execution trace of the \LPAM on the term 
$(\la\vartwo{\la\var{\var\vartwo}})\mathsf{II}$. The reader can 
grasp some intuition considering that the \PAM is strongly bisimilar to the 
\LJAM. In particular, the \LPAM considers explicit pointers. Indeed, as we have 
already pointed out, \LJAM logs are not actually copied in the \LJAM $\iamdvar$ 
transition: what is duplicated is just a pointer to them. The \LPAM 
handles this mechanism directly in its definition, and can thus be considered 
as a low-level 
implementation of the \LJAM. In the following we will explain this in more 
detail. After having looked for the head variable through the spine of the 
term, the \LPAM, now in $\upp$ mode, queries the argument of $\var$, namely 
$\la\varthree\varthree$, that then explores. The argument of its head variable 
$\varthree$ is $\vartwo$, that has to be found via \emph{backtracking} or 
\emph{jumping}. We put 
$\pos_\var=(\var,(\la\vartwo{\la\var{\ctxhole\vartwo}}) 
\mathsf{I}(\la\varthree\varthree))$.

\begin{center}$
	{\footnotesize
	\begin{array}{l|c|c|c|c|c|c}
	&\mathsf{Sub}\mbox{-}\mathsf{term} & \mathsf{Context} & \mathsf{Hist.} 
	& \mathsf{Index} & \mathsf{Tape} & \mathsf{Dir}
	\\
	\cline{1-7}
	&\pamstatedtab{\var} 
	{(\la\vartwo{\la\var{\ctxhole\vartwo}})\mathsf{I}(\la\varthree\varthree)}
	 {\epsilon}{0}{\resm} &\downp\\
	\iamdvar&\pamstateutab{\la\var{\var\vartwo}} 
	{(\la\vartwo\ctxhole)\mathsf{I}(\la\varthree\varthree)} 
	{\epsilon}{0}{\pos_\var\cons\resm} &\upp\\
	\iamulam&\pamstateutab{\la\vartwo{\la\var{\var\vartwo}}} 
	{\ctxhole\mathsf{I}(\la\varthree\varthree)} 
	{\epsilon}{0}{\resm\cons\pos_\var\cons\resm} &\upp\\
	\iamuapltwo&\pamstateutab{(\la\vartwo{\la\var{\var\vartwo}})\mathsf{I}} 
	{\ctxhole(\la\varthree\varthree)} 
	{\epsilon}{0}{\pos_\var\cons\resm} &\upp\\
	\iamuaplone&\pamstatedtab{(\la\varthree\varthree)} 
	{(\la\vartwo{\la\var{\var\vartwo}})\mathsf{I}\ctxhole} 
	 {(\pos_\var,0)}{1}{\resm}&\downp\\
	\iamdlamone&\pamstatedtab{\varthree} 
	{(\la\vartwo{\la\var{\var\vartwo}})\mathsf{I}(\la\varthree\ctxhole)}  {(\pos_\var,0)}{1}{\stempty}&\downp\\
	\iamdvar&\pamstateutab{\la\varthree\varthree} 
	{(\la\vartwo{\la\var{\var\vartwo}})\mathsf{I}\ctxhole} 
	{(\pos_\var,0)}{1}{(\varthree,(\la\vartwo{\la\var{\var\vartwo}})\mathsf{I} 
		(\la\varthree\ctxhole))}&\upp\\
	\end{array}}
$\end{center}

The jump is simulated by the \LPAM retrieving the position saved in the history at the current index, and then updating the index accordingly, \ie diminishing it by one. Intuitively, this corresponds to the ``unpacking'' made by the \LJAM in the $\iamujump$ transition. We put 
$\pos_\varthree=(\varthree,(\la\vartwo{\la\var{\var\vartwo}})\mathsf{I} 
(\la\varthree\ctxhole))$ and $\pos_\vartwo=(\vartwo,(\la\vartwo\la\var{\var\ctxhole})\mathsf{II})$.
\begin{center}${\footnotesize
	\begin{array}{l|c|c|c|c|c|c}
	&\mathsf{Sub}\mbox{-}\mathsf{term} & \mathsf{Context} & \mathsf{Hist.} 
	& \mathsf{Index} & \mathsf{Tape} & \mathsf{Dir}
	\\
	\cline{1-7}
	&\pamstateutab{\la\varthree\varthree} 
	{(\la\vartwo{\la\var{\var\vartwo}})\mathsf{I}\ctxhole} 
	{(\pos_\var,0)}{1}{\pos_\varthree}&\upp\\
	\iamujump&\pamstateutab{\var} 
	{(\la\vartwo\la\var{\ctxhole\vartwo})\mathsf{I}(\la\varthree\varthree)} 
	{(\pos_\var,0)}{0}{\pos_\varthree} & \upp\\
	\iamuaplone&\pamstatedtab{\vartwo} 
	{(\la\vartwo\la\var{\var\ctxhole})\mathsf{II}} 
	 {(\pos_\varthree,0)\cons(\pos_\var,0)}{2}{\stempty} &\downp\\
	\iamdvar&\pamstateutab{\la\vartwo\la\var{\var\vartwo}} 
	{\ctxhole\mathsf{II}} 
	{(\pos_\vartwo,0)\cons(\pos_\varthree,0)\cons(\pos_\var,0)}{0}{\pos_\vartwo}&\upp\\
	\iamuaplone&\pamstatedtab{\mathsf{I}} 
	{(\la\vartwo\la\var{\var\vartwo})\ctxhole\mathsf{I}} 
	{(\pos_\vartwo,0)\cons(\pos_\varthree,0)\cons(\pos_\var,0)}{3} {\stempty}&\downp\\
	\end{array}}
$\end{center}

\paragraph{History, Indices, and Logs} The history $\history$ essentially stores the sequence of $\tomachvar$ queries, consisting of the position of a variable needing an argument, that the \LPAM has completed, that is, for which it has found the argument. The key point is that it stores them with an index $i$. Indices are a low-level mechanism to retrieve logs, that are crumbled and shuffled all over $\history$. 

Let us explain how a log $(\pos_1,\tlog_1)\cons\ldots\cons(\pos_n,\tlog_n)$ of a reachable \LJAM state  is represented by the index $i_1$ and the history $\history$ of the corresponding \LPAM state. There are two ideas:
\begin{itemize}
\item \emph{The sequence of positions}: $\pos_1$ is in the $i_1$-th entry $\indp {\pos_1}  {i_2}$ of $\history$, 
$\pos_2$ is in the $i_2$-th entry $\indp {\pos_2} {i_3}$, and so on. 
\item \emph{The log of each position}: the log $\tlog_1$ of $\pos_1$ is represented in $\history$ (recursively 
following the same principle) starting from index $i_1-1$, the log $\tlog_2$ starting from index $i_2-1$, and so on.
\end{itemize}
\begin{figure}[t]
	\input{machines/LPAM}
	\vspace{-8pt}
	\caption{Data structures and transitions of the $\lambda$ Pointer Abstract Machine (\LPAM).}
	\label{fig:pam}
\end{figure}

\paragraph{The Bisimulation} The given explanation underlies the following definition 
of relations $\bisimtape$ ,$\bisimlog$ and $\bisimstate$ between data 
structures and states of the \LJAM and the \LPAM, 
that induce a strong bisimulation. {The intended meaning of the relation $\tlog 
\bisimlog (\history, i)$ is that the log $\tlog$ is represented in the history 
$\history$ starting from index $i$.}
\begin{definition}
	The relations $\bisimtape,\bisimlog$ and $\bisimstate$ are defined as follows.
\begin{center}
\begin{tabular}{rccc}
	\textsc{Tapes}
	&
	 $\infer{\epsilon\bisimtape\epsilon}{}$
	&
	$\infer{\resm\cons\tape_J\bisimtape\resm\cons\tape_P}{\tape_J\bisimtape\tape_P}$
	&
	$\infer{(\var,\ctx,\tlog)\cons\tape_J\bisimtape(\var,\ctx)\cons\tape_P}{\tape_J\bisimtape\tape_P}$
	\\[4pt]
	\textsc{Log-Histories}
	&
	$\infer{\epsilon\bisimlog(\history, 0)}{}$
	& 
	\multicolumn{2}{c}{
	$\infer{(\var,\ctx,\tlogtwo)\cons\tlog\bisimlog(\history,i)}{
    (\var,\ctx)=(\var_i^\history,\ctxtwo_i^\history) 
		\qquad \tlog\bisimlog(\history,\phi_\history(i))
		\qquad \tlogtwo\bisimlog(\history,i-1)}$}
	\\[4pt]
	\textsc{States}
	&
	\multicolumn{3}{c}{$
	\infer{(\tm,\ctx,\tlog,\tape_J,\pol)\bisimstate(\tm,\ctx,\history,i,\tape_P,\pol)}
	{\tape_J\bisimtape\tape_P\qquad\tlog\bisimlog(\history,i)}
	$}
	\end{tabular}
\end{center}
\end{definition}
Note that {in the second rule for $\bisimlog$ the index $i$ is $\geq 1$, and 
that} $\bisimstate$ contains all pairs of 
initial states. Note also that the (logged) positions case of $\bisimtape$ 
(rightmost rule for $\bisimtape$) the 
log $\tlog$ has no matching construct on the \LPAM side. This is why the next 
theorem is stated together with an invariant (the \emph{moreover} part), 
allowing to retrieve that log from the history. 
%

\begin{theorem}[$\bisimstate$ is a strong bisimulation]
	\hfill
	\begin{enumerate}
		\item for every run $\run_J: \state_{\tm}^{\text{\LJAM}} 
		\toljam^* s_J$  there exists a run $\run_P: 
\state_{\tm}^{\text{\LPAM}} \tolpam^* s_P$ such that $s_J\bisimstate s_P$ and $\size{\run_J} = \size{\run_P}$ and 
performing 
exactly the same transitions;
		\item for every run $\run_P: 
\state_{\tm}^{\text{\LPAM}} \tolpam^* s_P$ there exists a  run $\run_J: 
\state_{\tm}^\text{\LJAM} \toljam^* s_J$ such 
that $s_J\bisimstate s_P$ and $\size{\run_J} = \size{\run_P}$ and performing 
exactly the same transitions.
	\end{enumerate}
	Moreover, if $s_J = (\tm,\underline{\blue\ctx},\tlog,\tape_J,\upp) 
	\bisimstate (\tm,\underline{\blue\ctx},\history,i,\tape_P,\upp) = s_P$ and 
$(\var,\ctxtwo,\tlogtwo)$ is the unique logged position in $\tape_J$ then $\tlogtwo\bisimlog(\history,|\history|)$.
\end{theorem}

Strong bisimulations trivially preserve termination and the length of runs.

\begin{corollary}[Termination and \LPAM implementation]
$\terminates{\text{\LJAM}}\tm$ if and only if $\terminates{\text{\LPAM}}\tm$, and the 
two runs use exactly the same transitions. 
Therefore, the \LPAM implements \ccbn.
\end{corollary}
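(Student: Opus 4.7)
The plan is to reduce this corollary to a direct application of the preceding strong bisimulation theorem. First, I would verify that the two initial states $\state_\tm^{\text{\LJAM}}$ and $\state_\tm^{\text{\LPAM}}$ are in $\bisimstate$: both carry the code $(\tm,\ctxhole)$ with direction $\downp$, their tapes are $\epsilon$ (and $\epsilon\bisimtape\epsilon$ by the base rule of $\bisimtape$), and the empty log is related to the empty history at index $0$ via the base rule of $\bisimlog$. Hence the bisimulation theorem applies to runs issued from them.

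Next I would observe that $\bisimstate$ sends \LJAM final states to \LPAM final states and conversely. A \LJAM final state has the shape $\dstate{\la\var\tmtwo}{\ctx}{\epsilon}{\tlog}$; the definition of $\bisimstate$ preserves the code, context, and direction, and rule inversion on $\bisimtape$ shows that $\epsilon\bisimtape\tape_P$ forces $\tape_P=\epsilon$. Hence any \LPAM state related to it has shape $\pamstatenopol{\la\var\tmtwo}{\ctx}{\history}{i}{\epsilon}{\downp}$, which is a \LPAM final state (no transition is enabled: $\tomachdotone,\tomachdottwo,\tomachvar$ require a non-abstraction head, and $\tomachdotthree,\tomachdotfour,\tomacharg,\tomachjump$ require direction $\upp$). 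The converse direction is symmetric, since a \LPAM final state has the dual shape and $\bisimtape$ similarly forces the matching \LJAM tape to be empty.

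Combining these two facts, if \LJAM terminates on $\tm$ via the complete run $\run_J:\state_\tm^{\text{\LJAM}}\toljam^* s_J$, then part (1) of the bisimulation theorem yields a \LPAM run $\run_P:\state_\tm^{\text{\LPAM}}\tolpam^* s_P$ of the same length with $s_J\bisimstate s_P$; by the previous observation $s_P$ is final, so \LPAM terminates. The converse direction uses part (2) symmetrically, and the claim that the two runs perform exactly the same transitions is already contained in the theorem's statement. The final sentence then follows by transitivity: the \LIAM--\LJAM relationship theorem in \refsect{iam-jam} establishes that \LJAM implements \ccbn, so \LPAM does as well.

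The argument is essentially routine once the bisimulation theorem is available; the only substantive point is the inversion check that empty tapes correspond under $\bisimtape$, ensuring that finality is preserved by the bisimulation. The true difficulty has already been absorbed into the bisimulation theorem itself, which is why the text immediately preceding the corollary can rightly dismiss it with the remark that strong bisimulations ``trivially preserve termination.''
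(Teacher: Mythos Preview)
Your proposal is correct and takes essentially the same approach as the paper, which offers no detailed proof beyond the remark that ``strong bisimulations trivially preserve termination and the length of runs.'' Your spelling-out of that triviality is sound; one tiny imprecision is that $\tomachdottwo$ does apply to abstractions (but requires $\resm$ on top of the tape), so the reason it is disabled in a final state is the empty tape, not the head shape---your conclusion is unaffected.
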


\section{Sequence Types}
\label{sect:types}
Here we introduce a type system that we shall use to measure the length of \LIAM runs.

\paragraph{Intersections, Multi Sets, and Sequences} The framework that we adopt is the one of intersection types. As 
many recent works, we use the non-idempotent variant, where the type $A\wedge A$ is not equivalent to $A$, and which has 
 stronger ties to linear logic and time analyses, because it takes into account how many times a resource/type $A$ is 
used, and not just whether $A$ is used or not. Non-idempotent intersections are multi sets, which is why these types are 
sometimes called \emph{multi types}. Here we add a further change, we also consider \emph{non-commutative} multi types. 
Removing commutativity turns multi sets into lists, or sequences---thus, we call them \emph{sequence types}. Adopting 
sequences is an inessential tweak. Our study does not really depend on their sequential structure, we only constantly 
need to use bijections between multi sets, to describe the $\SIAM$, and these bijections are just more easily managed 
using sequences rather than multi sets. This \emph{rigid} approach has been 
already used 
fruitfully by \citet{ongrigid} and \citet{MazzaPellissierVial}.
\begin{figure}[t]
	\[
	\begin{array}{c@{\hspace{1cm}}c@{\hspace{1cm}}c}
	\infer[\tyvar]{\tjudg{\var:\mset{\ty}}{\var}{\ty}}{}
	&
	\infer[\tylam]{\tjudg{\tye}{\lambda\var.\tm}{\arr{\mty}{\ty}}} 
	{\tjudg{\tye,\var:\mty}{\tm}{\ty}}
	&
	\infer[\tylamstar]{\tjudg{\tye}{\lambda\var.\tm}{\initty}}{}
	\\[8pt]
	\multicolumn{3}{c}{\infer[\tyapp]{\tjudg{\tye\uplus \sum_{i\in\mset{1,\ldots,n}} \tyetwo_i  }{\tm\tmtwo}{\ty 
		}}{\tjudg{\tye}{\tm}{\arr{\mset{\tytwo_1,\ldots,\tytwo_n}}{\ty}}
		& \mset{\tjudg{\tyetwo_i}{\tmtwo}{\tytwo_i}}_{i\in\mset{1,\ldots,n}}}}
	\end{array}
	\]
	\vspace{-8pt}
	\caption{The sequence type system.}
	\label{fig:asstypesystem}
	\end{figure}
\paragraph{Basic Definitions} As for multi types, there are two layers of types, \emph{linear types} and \emph{sequence 
types}, mutually defined as follows.
	\begin{center}$
	\begin{array}{rrcl@{\hspace{1cm}}rrcl}
	\textsc{Linear types}&\ty,\tytwo&\grameq&\initty\grammarpipe\arr{\mty}{\ty} &
	\textsc{Sequence types}&\mty,\mtytwo&\grameq&\mset{\ty_1,\ldots,\ty_n}
	\end{array}$
	\end{center}
Since commutativity is ruled out, we have, e.g., $\mset{\ty,\tytwo} \neq 
\mset{\tytwo,\ty}$. We shall use $\mset\cdot$ as a 
generic list constructor not limited to types, thus writing $\mset{2,1,12,4}$ for a list of natural numbers, and also 
use it for lists of judgments or type derivations. Note that there is a ground 
type $\initty$, which can be thought as 
the type of normal forms, that in \ccbn are precisely abstractions. Note also that arrow (linear) types $\arr{\mty}{\ty}$ 
can have a sequence only on the left. The empty sequence is noted $\emmset$, and the concatenation of two sequences 
$\mty$ and $\mtytwo$ is noted
$\mty\uplus\mtytwo$.

Type 
judgments have the following form $\tjudg{\tye}{\tm}{\ty}$, where $\tye$ is a type environment, defined below. The 
typing rules are in \reffig{asstypesystem}, type derivations are noted $\tyd$ and we write 
$\tyd\pof\tjudg{\tye}{\tm}{\ty}$ for a type derivation $\tyd$ of ending 
judgement $\tjudg{\tye}{\tm}{\ty}$.
Type environments, ranged over by $\tye,\tyetwo$ are total maps
from variables to sequence types such that only finitely
many variables are mapped to non-empty sequence types, and we write $\tye = \var_1:\mty_1,\ldots,\var_n:\mty_n$ if 
$\dom\tye = \set{\var_1,\ldots,\var_n}$---note that type environments are commutative, what is non-commutative is only 
the sequence constructor $\mset\cdot$.
Given two type environments $\tye,\tyetwo$, the expression
$\tye\uplus\tyetwo$ stands for the type environment
assigning to every variable $\var$ the list
$\tye(\var)\uplus\tyetwo(\var)$.
A sequence $\tye_{i_1}, \ldots, \tye_{i_k}$ of type environments is also noted $\{\tye_i\}_{i\in 
\mset{i_1,\ldots,i_k}}$, or $\{\tye_i\}_{i\in I}$ with $I = \mset{i_1,\ldots,i_k}$. Moreover, we use
$\sum_{i\in I}\tye_i$ for the type environment defined as $\sum_{i\in I}\tye_i \defeq \emmset$ if $I=\emmset$, and 
$\sum_{i\in I}\tye_i\defeq \tye_{i_1} \uplus \sum_{i\in I'}\tye_i$ if $I = \mset{i_1}\uplus I'$.

In the following we use two basic properties of the type system, collected in the following straightforward lemma. One is the absence of 
weakening, and the other is a correspondence between sequence types and axioms. We write $\size\mty$ for the length of 
$\mty$ as a sequence.
\begin{lemma}[Relevance and axiom sequences]
\label{l:relevance}
If $\tyd \pof \tjudg{\tye}{\tm}{\ty}$ then $\dom\tye \subseteq \fv\tm$, thus if $\tm$ is closed then $\tye$ is empty. 
Moreover, there are exactly $\size{\tye(\var)}$ axioms typing $\var$ in $\tyd$, which appear from left to right as 
leaves of $\tyd$ (seen as an ordered tree) in the order given by $\tye(\var)= \mset{\ty_1,\ldots, \ty_k}$ and that the $i$-th axiom types $\var$ with $\ty_i$.
\end{lemma}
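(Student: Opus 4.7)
The plan is to prove both clauses of the lemma simultaneously by a straightforward induction on the structure of the type derivation $\tyd$. The system has only four rules, each of which introduces or concatenates environments in a completely explicit way, so the induction should be routine and the two clauses do not interact; one could even prove them separately, but bundling them keeps the case analysis uniform.

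For the relevance clause $\dom\tye \subseteq \mathsf{fv}(\tm)$, I would case-split on the last rule of $\tyd$. In the axiom rule $\tyvar$ we have $\tye = \var:\mset\ty$ and $\tm = \var$, so $\dom\tye = \set\var = \mathsf{fv}(\var)$ directly. The rule $\tylamstar$ has empty environment and is trivial. For $\tylam$ the premise satisfies $\dom{(\tye, \var:\mty)} \subseteq \mathsf{fv}(\tm)$ by the induction hypothesis; removing $\var$ from both sides gives $\dom\tye \subseteq \mathsf{fv}(\tm)\setminus \set\var \subseteq \mathsf{fv}(\lambda\var.\tm)$. For $\tyapp$ the conclusion environment $\tye \uplus \sum_i \tyetwo_i$ has domain $\dom\tye \cup \bigcup_i \dom{\tyetwo_i}$, which by the induction hypotheses is contained in $\mathsf{fv}(\tm) \cup \mathsf{fv}(\tmtwo) = \mathsf{fv}(\tm\tmtwo)$. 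The corollary about closed terms is then immediate: if $\mathsf{fv}(\tm) = \emptyset$ then $\dom\tye = \emptyset$, so $\tye$ is the empty environment.

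For the axiom clause, the same induction works, but now I would track, for each fixed variable $\var$, the ordered list of axiom leaves of $\tyd$ that type $\var$, read from left to right in the derivation tree. In the axiom case $\tyvar$, $\tyd$ consists of a single leaf typing $\var$ with $\ty$, and $\tye(\var) = \mset\ty$, so the length-one correspondence holds. In both abstraction rules the leaf sequence is inherited from the unique premise, with the caveat that in $\tylam$ the axioms typing the bound variable $\var$ are no longer considered for the conclusion, precisely matching the fact that $\var$ is discharged from the environment. The application case is the only genuinely interesting one: the leaves of $\tyd$ typing $\var$ are, by construction, those of the function-part subderivation followed---in the order $i = 1, \ldots, n$---by those of the $n$ argument-part subderivations of $\tmtwo$; this is exactly the order in which rule $\tyapp$ concatenates the sequence types $\tye(\var), \tyetwo_1(\var), \ldots, \tyetwo_n(\var)$ via $\uplus$, so the induction hypotheses combine to give the stated bijection together with the order-preservation property.

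The main obstacle---modest, given the simplicity of the rules---is fixing a precise convention for the left-to-right ordering of the children of the $\tyapp$ rule and verifying that the non-commutative concatenation $\uplus$ on sequence types and environments mirrors this convention faithfully. Once this convention is made explicit (left premise first, then the sequence of right premises indexed by $i = 1, \ldots, n$), both clauses reduce to careful bookkeeping and the induction goes through without further subtleties.
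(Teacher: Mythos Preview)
The paper does not give a proof for this lemma, introducing it only as ``the following straightforward lemma''; your induction on the typing derivation is exactly the expected argument and is correct. One minor observation worth making explicit: the rule $\tylamstar$ as printed in Figure~\ref{fig:asstypesystem} carries an unconstrained $\tye$ in its conclusion, which taken literally would falsify relevance---you correctly read it as requiring the empty environment, which is clearly the intended meaning and how every example in the paper instantiates the rule. Also, your sentence ``in both abstraction rules the leaf sequence is inherited from the unique premise'' is a small slip for $\tylamstar$, which has no premise; but you already dispatched that case separately, so no harm done.
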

	
\paragraph{Characterization of Termination} It is well-known that intersection and multi types characterize \ccbn 
termination, that is, they type \emph{all} and only those $\l$-terms that 
terminate with respect to weak head reduction. If 
terms are closed, the same result smoothly holds for sequence types, as we now explain. The only point where 
non-commutativity is delicate for the characterization is in the proof of the typed substitution lemma for subject 
reduction (and the dual lemma for subject expansion), as substitution may change the order of concatenation in type 
environments. In our simple setting where terms are closed, however, the term to substitute is closed\footnote{It is 
well known that in \ccbn the substitutions  $\tm\isub\var\tmtwo$ associated to reduced $\beta$-redexes are such that 
$\tmtwo$ is closed. The term $\tm$ is of course (potentially) open, and its type derivation has a type environment 
$\tye$, but the important point here is that the type derivation of $\tmtwo$ has no type environment, so that the 
substitution does not concatenate sequence types.} and---by the relevance lemma---its type derivation comes with no type 
environment, so the order-of-concatenation problem disappears. Therefore, sequence types characterize termination in 
\ccbn too. Thus from now on we essentially identify multi and sequence types.
\begin{theorem}
	A closed term $\tm$ has weak head normal form if and only if $\ctjudg{\tm}{\initty}$.
\end{theorem}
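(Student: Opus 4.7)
The plan is to prove both directions by the standard subject reduction / subject expansion technique for non-idempotent intersection types, specialized to the \ccbn setting where the paper has already observed that the non-commutativity issue disappears.

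For the \emph{only if} direction, the normal forms of $\towh$ on closed terms are precisely the abstractions. Every abstraction $\la\var\tm$ is directly typable by $\tylamstar$ with empty environment and type $\initty$. I would then establish a subject expansion lemma: if $\tm \towh \tmtwo$ and $\ctjudg{\tmtwo}{\initty}$, then $\ctjudg{\tm}{\initty}$. Iterating, every term that reaches a whnf is typable with $\initty$. Subject expansion, in turn, reduces to an anti-substitution lemma: from $\tjudg{\tye}{\tm\isub\var\tmtwo}{\ty}$, decompose the derivation to recover $\tjudg{\tye, \var : \mset{\ty_1,\ldots,\ty_n}}{\tm}{\ty}$ together with derivations $\tjudg{\emmset}{\tmtwo}{\ty_i}$ for each occurrence used.

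For the \emph{if} direction, I would set up a subject reduction lemma equipped with a strict decrease of a measure $\size\tyd$ (say, the number of rules in $\tyd$, or its number of non-axiom rules). The statement: if $\tyd \pof \tjudg{\tye}{\tm}{\ty}$ and $\tm \towh \tmtwo$, then there exists $\tyd' \pof \tjudg{\tye}{\tmtwo}{\ty}$ with $\size{\tyd'} < \size{\tyd}$. The key tool is a typed substitution lemma: given $\tyd \pof \tjudg{\tye, \var : \mset{\ty_1,\ldots,\ty_n}}{\tm}{\ty}$ and, for each $i$, $\tyd_i \pof \tjudg{\emmset}{\tmtwo}{\ty_i}$ (where $\tmtwo$ closed is crucial), build $\tyd' \pof \tjudg{\tye}{\tm\isub\var\tmtwo}{\ty}$ by replacing, from left to right, the $i$-th axiom typing $\var$ (granted to exist by the relevance/axiom-sequence lemma) with $\tyd_i$. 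Because each $\tyd_i$ comes with empty environment, no concatenation of type environments is triggered, so the non-commutative order in sequence types is preserved and the construction is well-defined, exactly as the paper foreshadows.

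With the substitution lemma in hand, subject reduction is a direct case analysis on a weak head step: if $\tm = \ctxp{(\la\vartwo\tmthree) \tmfour}$ for a head context, the derivation at the redex must use $\tyapp$ with a premise $\arr{\mset{\ty_1,\ldots,\ty_n}}{\ty'}$ for $\la\vartwo\tmthree$, necessarily derived by $\tylam$ (rule $\tylamstar$ has type $\initty$, which is not an arrow); substituting via the lemma yields the derivation of the reduct, and the two rules $\tylam$, $\tyapp$ plus the $n$ axioms typing $\vartwo$ are removed while only smaller derivations for $\tmfour$ are glued in, producing a strict decrease of $\size\tyd$. Since $\size\tyd \in \nat$ and strictly decreases at each $\towh$ step, reduction from a typed $\tm$ must terminate; by \ccbn determinism, the normal form reached is the whnf of $\tm$.

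The main obstacle is purely bookkeeping in the substitution lemma: tracking exactly how the $n$ axioms for $\var$ are matched—in the correct left-to-right order—with the derivations $\tyd_1,\ldots,\tyd_n$. This is where the \emph{relevance and axiom sequences} lemma is indispensable, as it guarantees both that exactly $n$ such axioms exist and that their left-to-right order coincides with the order in $\tye(\var) = \mset{\ty_1,\ldots,\ty_n}$, so the construction preserves the sequential type information without requiring any commutation. Everything else is standard, and closure of $\tm$ ensures the final judgment is $\ctjudg{\tm}{\initty}$ by the relevance lemma.
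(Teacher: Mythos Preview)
Your proposal is correct and matches the paper's approach exactly: the paper does not spell out a proof but explicitly invokes the standard subject reduction / subject expansion argument via a typed substitution lemma (and its dual), noting that the only delicate point---non-commutativity of sequence concatenation in type environments---vanishes in \ccbn because the substituted term is closed and hence, by the relevance lemma, has empty type environment. Your outline fills in precisely these details, including the use of the axiom-sequences part of \reflemma{relevance} to match axioms with argument derivations in the correct order, and the strict decrease of a size measure to obtain termination.
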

\begin{figure}
\small
			\begin{center}
			\begin{tabular}{c|c}
			\KAM & \LIAM
			\\[3pt]
			$\begin{array}{c@{\hspace{.4cm}}c@{\hspace{.4cm}}c}
			\infer[\tyvar]{\wtjudgone{1}}{}
&
			\infer[\tylam]{\wtjudgone{w+1}}{\wtjudgone{w}}
			&
			\infer[\tylamstar]{\wtjudgone{0}}{}
\\[8pt]
\multicolumn{3}{c}{			\infer[\tyapp]{\wtjudgone{w+\sum v_i+1}}{\wtjudgone{w} & 
				\mset{\wtjudgone{v_i}}_{i\in\mset{1,\ldots,n}}}
}
			\end{array}$
			&
			$\begin{array}{c@{\hspace{.4cm}}c@{\hspace{.4cm}}c}
			\infer[\tyvar]{\wtjudgt{\mset{\ty}}{\occstar{\ty}}{}{\ty}}{}
			&
			\infer[\tylam]{\wtjudgt{\tye}{w+\occstar{\arr\mty\ty}} 
				{}{\arr{\mty}{\ty}}}
			{\wtjudgt{\tye,\mty}{w}{}{\ty}}
			&
			\infer[\tylamstar]{\wtjudgt{\tye}{0}{}{\initty}}{}
			\\[8pt]
			\multicolumn{3}{c}{\infer[\tyapp]{\wtjudgt{\tye\uplus\sum_{i\in\mset{1,\ldots,n}} \tyetwo_i  }{w+\sum 
v_i+\occstar{\ty}}{}{\ty 
					}}{\wtjudgt{\tye}{w}{}{\arr{\mset{\tytwo_1,\ldots,\tytwo_n}}{\ty}}
					& \mset{\wtjudgt{\tyetwo_i}{v_i}{}{\tytwo_i}}_{i\in\mset{1,\ldots,n}} }}
			\end{array}$
			\end{tabular}
			\end{center}
		\vspace{-8pt}
		\caption{The weight assignments 
			$\WeightTimeKAM{\cdot}$, on the left, and 
			$\WeightTimeIAM{\cdot}$, on the right.}\label{fig:aaweightKAMtime}
		\label{fig:weightskam}  
	\end{figure}
\paragraph{Sequence Types and \KAM Time} Multi types have been successfully applied in quantitative 
analyses of normalization, starting with \citet{Carvalho07,deCarvalho18} who 
used them to give a bound to the length of \KAM runs. 
De Carvalho's technique can be re-phrased and distilled as a decoration of  
type derivations with \emph{weights}, that is, cost annotations, following the scheme of \reffig{weightskam}. Please 
note that the 
weight assignment is blind to types, and thus relies only on the structure of 
the type derivation. 
De Carvalho's result can be formulated as 
follows.
\begin{theorem}[De Carvalho]\label{thm:decarvalho}
There is a 
complete \KAM run of length $n$ from $\tm$ if 
and only if
	$\WeightTimeKAM{\tyd}=n$ for every $\tyd\pof\tjudg{}{\tm}{\initty}$.
\end{theorem}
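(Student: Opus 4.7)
The plan is to extend the typing system of \reffig{asstypesystem} and the weight assignment of \reffig{weightskam} to entire \KAM states and then establish a preservation lemma saying that each transition decreases the weight by exactly one, following the blueprint of \citet{Carvalho07,deCarvalho18}. Concretely, I would type each environment entry $\esub\var{(\tmtwo,\ctxtwo,\envtwo)}$ with a sequence $\mty = \mset{\ty_1, \ldots, \ty_k}$ by giving a matching sequence of derivations typing $\tmtwo$ (each under a typing of $\envtwo$) at the types $\ty_i$; stacks are typed by lists of such sequence-typed closures; and the whole state $(\tm,\ctx,\env,\stack)$ is typed by a derivation whose root types $\tm$ with a linear type $\ty$ compatible with the stack so as to yield $\initty$, while $\env$ supplies the sequence types demanded by the term typing environment. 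The state weight is then the term weight of $\tm$ plus the weights of all component derivations living in $\env$ and $\stack$.

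Two boundary facts are needed. For an initial state $\state_\tm$, the environment and the stack are empty, so state typings reduce to term typings $\tyd \pof \tjudg{}{\tm}{\initty}$ and the state weight coincides with $\WeightTimeKAM{\tyd}$. For a final state $(\la\var\tmtwo, \ctx, \env, \stempty)$, the empty stack forces the abstraction to be typed with $\initty$, which in turn forces $\tylamstar$ at the root of the term derivation (the only rule producing $\initty$ on an abstraction), whose weight is $0$; moreover $\tylamstar$ has an empty type environment, so by relevance (\reflemma{relevance}) all sequence types in $\env$ are empty and all environment derivations are trivial, giving a total final-state weight of $0$.

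The heart of the argument is a preservation lemma: for every transition $\state \tokam \statetwo$ and every state derivation $\tyd$ of $\state$ with $\initty$, there is a state derivation $\tyd'$ of $\statetwo$ with $\initty$ and with $\WeightTimeKAM{\tyd'} = \WeightTimeKAM{\tyd} - 1$, and symmetrically for subject expansion. The three cases are immediate from the shape of the rules: $\kamdap$ removes the $\tyapp$ at the root of the term derivation (weight contribution $+1$) and relocates the argument subderivations into the new top stack entry; $\kamdlam$ removes the $\tylam$ (weight contribution $+1$) and moves the top stack entry, with its derivations, into a new environment entry bound to $\var$; $\iamdvar$ removes the $\tyvar$ axiom typing $\var$ (weight $1$) and promotes the derivation stored in the environment entry for $\var$ to become the new root term derivation, its underlying environment becoming the new state environment. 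Iterating this over a complete run of length $n$ and combining with the two boundary facts yields $\WeightTimeKAM{\tyd} = n + 0 = n$ in both directions.

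The main obstacle is technical and sits in the $\iamdvar$ case: the environment entry for a closure carries a sequence $\mset{\ty_1, \ldots, \ty_k}$ of types with a matching sequence of derivations, and successive $\iamdvar$ transitions on the same variable must consume them in an order compatible with the order of the $\tyvar$ axioms in the term derivation. This is exactly what the rigid (sequence rather than multiset) discipline of \refsect{types} buys: by \reflemma{relevance} the axioms typing $\var$ appear from left to right as leaves of the term derivation in the order given by $\tye(\var) = \mset{\ty_1, \ldots, \ty_k}$, so at each $\iamdvar$ firing the axiom to be removed and the derivation to promote are uniquely determined, and both subject reduction and subject expansion go through without ambiguity.
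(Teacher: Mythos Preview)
The paper does not prove this theorem; it is stated as de Carvalho's result and cited to \cite{Carvalho07,deCarvalho18}. Your approach---extending the typing and the weight assignment to \KAM states and proving a quantitative subject reduction/expansion showing that each transition decreases the state weight by exactly one---is precisely the standard argument, and the paper explicitly alludes to it in \refsect{typed-invariant}: ``The \KAM performs an evaluation that essentially mimics cut-elimination and so the number of \KAM transitions to normal form is obtained via a refined, quantitative form of subject reduction. One may say that it is obtained in a \emph{step-by-step} manner.'' So your plan is correct and aligned with what the paper (implicitly) takes as known.

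One small correction to your last paragraph. You locate the ordering obstacle in $\iamdvar$, but in the standard state typing the term at that point is the bare variable $\var$, whose derivation is a single axiom $\tjudg{\var\!:\!\mset{\ty}}{\var}{\ty}$; hence the environment typing for $\var$ supplies exactly one derivation at $\ty$ and there is nothing to choose. The place where the sequence discipline actually intervenes is $\kamdap$: the root context $\tye \uplus \sum_i \tyetwo_i$ must be split between the new term and the new stack closure, and with multisets this split is not canonical. But since $\WeightTimeKAM{\cdot}$ is additive over any such split, rigidity buys you a \emph{canonical} target state typing (useful for the bijection with axioms via \reflemma{relevance}), not correctness of the weight count itself.
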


\paragraph{Sequence Types and \LIAM Time} 
We use the same idea to capture the length of a \LIAM run. We keep the same type system but we change the weight 
assignment to typing rules. First, we 
have to define a norm on types and sequence types, counts the number of occurrences of $\initty$:
\begin{center}$
\occstar{\initty}=1\qquad
\occstar{\arr{\mty}{\ty}}=\occstar{\mty}+\occstar{\ty}\qquad
\occstar{\mset{\ty_1,\ldots,\ty_n}}=\sum_{1\leq i\leq n}\occstar{\ty_i}$
\end{center}
Then we define the weight system $\WeightTimeIAM{\cdot}$ in \reffig{weightskam}. Observe how this weight 
system is structurally very similar to $\WeightTimeKAM\cdot$, the only 
difference being the
fact that whenever the latter adds $1$ to the weight, the former adds
the number of occurrences of $\initty$ in the underlying type. The next section proves the following theorem, that is 
the \LIAM analogous of de Carvalho's theorem.
\begin{theorem}
There is a complete \LIAM run of length $n$ from $\tm$ if and only if
	$\WeightTimeIAM{\tyd}=n$ for every $\tyd\pof\tjudg{}{\tm}{\initty}$.
\end{theorem}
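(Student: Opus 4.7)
The plan is to adapt de Carvalho's strategy (Theorem~\ref{thm:decarvalho}) to the \LIAM, using the exhaustible invariant technique as the main technical tool. The proof should proceed by a typed refinement of the invariant: instead of just tracking the structural coherence of data structures, I would track additional type information attached to each logged position and each $\resm$ on the tape, drawn from a fixed derivation $\tyd \pof \tjudg{}{\tm}{\initty}$. Initial states carry the whole derivation, final states (of shape $\dstate{\la\var\tmtwo}{\ctx}{\epsilon}{\tlog}$ with $\la\var\tmtwo$ typed $\initty$) carry the trivial residual $\tylamstar$.

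Concretely, I would define a notion of typed \LIAM state in which the active position $(\tmtwo, \ctx)$ is decorated with a linear type $\ty$ via a subderivation of $\tyd$, the log $\tlog$ carries the type environment under which $\tmtwo$ is typed (realised by matching each logged position with a subderivation), and the tape $\tape$ carries the sequence types of the pending arguments and the types of the entries of $\mty$ in the ambient $\arr{\mty}{\ty}$-typed enclosing abstractions. The key bridge between machine and typing is the axiom sequence part of Lemma~\ref{l:relevance}: the $i$-th encounter by the \LIAM with a free occurrence of $\var$ in a subterm typed with environment $\var:\mset{\ty_1,\ldots,\ty_k}$ corresponds exactly to the $i$-th axiom leaf typing $\var$ in $\tyd$, and the type $\ty_i$ determines the subderivation the machine must explore next.

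Next, I would associate to each typed state $\state$ a residual weight $W(\state)$ counting the weight still to be consumed: $W(\state_{\tm})=\WeightTimeIAM{\tyd}$, $W=0$ at typed final states, and every \LIAM transition decreases $W$ by exactly one. The transitions $\tomachdotone$--$\tomachdotfour$ that traverse applications and abstractions in either direction absorb one $\occstar{\cdot}$ unit coming from the $+\occstar{\arr{\mty}{\ty}}$ charge of $\tylam$ and one of the $+\occstar{\ty}$ charges of $\tyapp$; transition $\tomachvar$ consumes the $\tyvar$ charge $\occstar\ty$ at the selected axiom; and $\tomacharg$ consumes one of the $v_i$ summands of $\tyapp$, handing control to the subderivation of the $i$-th copy of the argument.

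The main obstacle is the correct accounting for the backtracking transitions $\tomachbtone$ and $\tomachbttwo$ and, dually, of $\tomachdotthree$/$\tomachdotfour$ during upward travel, where no new $\beta$-structure is entered and hence no axiom activated: these transitions need to be paired with the matching charges generated when the machine previously passed the same nodes in the opposite direction. The natural move is to formulate a \emph{typed exhaustible invariant}---adapting the definition in Section~\ref{sect:invariant}---stating that every typed logged position in a reachable typed state can be exhausted by a typed \LIAM run whose length is exactly the weight attributed to the corresponding subderivation of $\tyd$. Proving this invariant should proceed by induction on the structure of the focused subderivation, using the tape lift lemma (Lemma~\ref{l:iam-pumping}) to insert test runs into larger contexts; this is essentially the same pattern already used in Section~\ref{sect:iam-jam} and Section~\ref{sect:hopping}. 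Once the invariant and the one-step weight decrease are established, the forward direction of the theorem (type derivation implies run of matching length) follows by iterating the weight decrease along the initial run until a final state is reached, while the reverse direction uses bi-determinism of the \LIAM together with a subject-expansion argument to lift the complete run to a unique typed run on top of any $\tyd\pof\tjudg{}{\tm}{\initty}$, so that the resulting weight necessarily coincides with $n$.
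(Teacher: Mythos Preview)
Your proposal diverges from the paper's proof, and as written it has a genuine gap.

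The paper does \emph{not} adapt de Carvalho's step-by-step technique; it explicitly argues (\refsect{typed-invariant}) that this cannot work for the \LIAM, because the \LIAM does not mimic cut-elimination: it walks over $\tyd$ without modifying it and, because of backtracking, passes over the same judgement several times. The paper instead introduces the \SIAM, a machine that moves over a fixed derivation $\tyd$ and whose states are precisely the occurrences of $\initty$ in the right-hand types of $\tyd$. A strong bisimulation between the \SIAM and the \LIAM (built via an S-exhaustible invariant and an extraction function) transports run lengths. The length of the \SIAM run is then obtained by a \emph{global} argument: one shows the \SIAM is acyclic (via a loop-preserving bisimulation along $\towh$, reducing to the trivially acyclic normal-form case), and since it is bi-deterministic on a finite set of states with a unique initial state, every $\initty$-occurrence is visited exactly once, whence the run length equals $\WeightTimeIAM{\tyd}$.

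Your attempted local accounting is internally inconsistent. You assert that each transition decreases $W$ by exactly one, yet you also write that $\tomachvar$ ``consumes the $\tyvar$ charge $\occstar\ty$'' and that $\tomacharg$ ``consumes one of the $v_i$ summands''---both of these quantities may be arbitrarily large. To make the one-step decrease work, a typed state must point to a \emph{single} occurrence of $\initty$ in $\tyd$: it is not enough to decorate the current position with the linear type $\ty$ of the judgement; you need in addition a type context $\tyctx$ with $\tyctxp\initty = \ty$ singling out one occurrence. Your decoration (position $\mapsto$ linear type, log $\mapsto$ type environment, tape $\mapsto$ sequence types) does not carry this datum. Once you add it, your typed state \emph{is} a \SIAM state, and ``$W$ decreases by $1$'' becomes ``no \SIAM state is ever revisited'', i.e.\ acyclicity---which is exactly what the paper proves by an independent argument, and which your typed-exhaustible invariant does not yield: exhaustibility tells you that backtracking \emph{reaches} the intended position, not that the resulting state is fresh.
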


\section{The Sequence IAM}
\label{sect:SIAM}
This section introduces yet another machine, the \emph{Sequence} $\IAMold$, or \emph{\SIAM}, that mimics the \LIAM directly on top of a type derivation $\tyd$. It is the key tool used in the next section to show that the \LIAM weights on type derivations do measure the time cost of \LIAM runs.

\paragraph{\SIAM} The idea behind the \SIAM is simple but a formal definition is a technical nightmare. Let us 
explain the idea. The machine moves over a fixed type derivation $\tyd\pof\tjudg{}{\tm}{\initty}$, to be thought as the 
code. The position of the machine is expressed by an occurrence of a type judgement\footnote{A 
judgement may occur repeatedly in a derivation, which is why we talk about \emph{occurrences} of  
judgements. To avoid too many technicalities, we usually just write the judgement, leaving implicit that we refer to an 
occurrence of that judgement.}  $\ruleoc$ 
of $\tyd$. As the \LIAM, the \SIAM has two possible directions, noted $\downpt$ and $\uppt$\footnote{Type 
derivations are upside-down wrt to the term structure, then direction $\downp$ of the \LIAM becomes here $\uppt$, and 
$\upp$ is $\downpt$.}. In direction $\uppt$ the machine looks at the rule above the focused judgement, in direction 
$\downpt$ at the rule below. The only "data structure" is a type context $\tyctx$ isolating an occurrence of 
$\initty$ in the type $\ty$ of the focused judgement (occurrence) $\tjudg{\tye}{\tmtwo}{\ty}$, defined as follows (careful to not 
 confuse type contexts $\tyctx$ with type environments $\tye$):
\begin{center}$
\begin{array}{rl@{\hspace{.3cm}}|rlcccc}
\textsc{Type ctxs}&\tyctx \grameq  \ctxhole \grammarpipe \arr\mty\tyctx 
\grammarpipe 
\arr{\mtyctx}\ty
&
\textsc{Sequence ctxs}&\mtyctx\grameq\mset{\ty_1,..,\ty_k, \tyctx,\ty_{k+1}..,\ty_n}
\end{array}$
\end{center}
Summing up, a state $\state$ is a 
quadruple $(\tyd, \ruleoc, \tyctx, \pol)$. If $\ruleoc$ is in the form 
$\tjudg{\tye}{\tmtwo}{\ty}$, we often write $\state$ as 
$\tjudg{}{\tmtwo}{\tyctxp{\initty_\pol}}$, where $\tyctxp{\initty}=\ty$. In 
fact we will see soon that the type environment is not needed.
\begin{figure}[t]
  \begin{center}\footnotesize
\begin{tabular}{c}
$\begin{array}{ccc||ccc}
	\infer{\tjudg{}{\red{\tm\tmtwo}}{\tyctxp{\initty_\uppt} (=\ty)}} 
	{\tjudg{}{\tm}{\arr{\mty}{\ty}} & \mset{\vdash}} 
	&\tomachdotone&
	\infer{\tjudg{}{\tm\tmtwo}{\ty 
		}}{\tjudg{}{\red\tm}{\arr{\mty}{\tyctxp{\initty_\uppt}}} & 
		\mset{\vdash}}
		&
	\infer{\tjudg{}{\red{\lambda\var.\tm}}{\arr{\mty} 
			{\tyctxp{\initty_\uppt}}}} 
	{\tjudg{}{\tm}{\ty (= \tyctxp{\initty})}}
	& \tomachdottwo &
	\infer{\tjudg{}{\lambda\var.\tm}{\arr{\mty}{\ty}}} 
	{\tjudg{}{\red\tm}{\tyctxp{\initty_\uppt}}}
	 \\[8pt]\hhline{======}&&&\\

	\infer{\tjudg{}{\tm\tmtwo}{\ty(= \tyctxp{\initty}) 
		}}{\tjudg{}{\blue\tm}{\arr{\mty}{\tyctxp{\initty_\downpt}}} & 
		\mset{\vdash}}
		
	&\tomachdotthree&
		\infer{\tjudg{}{\blue{\tm\tmtwo}}{\tyctxp{\initty_\downpt}}} 
		{\tjudg{}{\tm}{\arr{\mty}{\ty}} & \mset{\vdash}}
		&
		\infer{\tjudg{}{\lambda\var.\tm}{\arr{\mty}{\ty}}} 
		{\tjudg{}{\blue\tm}{\tyctxp{\initty_\downpt} (=\ty)}}
		 & \tomachdotfour &
		\infer{\tjudg{}{\blue{\lambda\var.\tm}}{\arr{\mty} 
				{\tyctxp{\initty_\downpt}}}} 
		{\tjudg{}{\tm}{\ty}}
		\\[8pt]\hhline{======}\\
		\end{array}$
		\\
	$\begin{array}{ccccccc }
	\infer*{\infer{\tjudg{}{\la\var\ctxp{\var}} 
			{\arr{\mset{\myldots\ty_i\myldots}}\tytwo}}{}}
	{\infer[i]{\tjudg{}{\red\var}{\tyctxp{\initty_\uppt}_i (= \ty_i)}}{}}   
	&\tomachvar&
	 \infer*{\infer{\tjudg{}{\blue{\la\var\ctxp{\var}}} 
			{\arr{\mset{\myldots\tyctxp{\initty_\downpt}_i\myldots}}\tytwo}}{}}
	{\infer[i]{\tjudg{}{\var}{\ty_i}}{}}
	\\[8pt]\hhline{===}\\
	
	\infer*{\infer{\tjudg{}{\red{\la\var\ctxp{\var}}} 
			{\arr{\mset{\myldots\tyctxp{\initty_\uppt}_i\myldots}}\tytwo}}{}}
	{\infer[i]{\tjudg{}{\var}{\ty_i (=\tyctxp{\initty}_i)}}{}}
	 & \tomachbttwo &
	 \infer*{\infer{\tjudg{}{\la\var\ctxp{\var}} 
			{\arr{\mset{\myldots\ty_i\myldots}}\tytwo}}{}}
	{\infer[i]{\tjudg{}{\blue\var}{\tyctxp{\initty_\downpt}_i}}{}} 
	\\[8pt]\hhline{===}\\
		\infer{\tjudg{}{\tm\tmtwo}{\tytwo}} 
		{\tjudg{}{\blue\tm}{\arr{\mset{\myldots 
						\tyctxp{\initty_\downpt}_i\myldots}}{\tytwo}}
			& \tjudgi{}{\tmtwo}{\ty_i (=\tyctxp{\initty}_i)}}
		& \tomacharg &
		\infer{\tjudg{}{\tm\tmtwo}{\tytwo}} 
		{\tjudg{}{\tm}{\arr{\mset{\myldots 
						\ty_i\myldots}}{\tytwo}}
			& \tjudgi{}{\red\tmtwo}{\tyctxp{\initty_\uppt}_i}}
		\\[8pt]\hhline{===}\\

		\infer{\tjudg{}{\tm\tmtwo}{\tytwo}} 
		{\tjudg{}{\tm}{\arr{\mset{\myldots 
						\ty_i\myldots}}{\tytwo}}
			& \tjudgi{}{\blue\tmtwo}{\tyctxp{\initty_\downpt}_i (=\ty_i)}}
		 & \tomachbtone &
		\infer{\tjudg{}{\tm\tmtwo}{\tytwo}} 
		{\tjudg{}{\red\tm}{\arr{\mset{\myldots 
						\tyctxp{\initty_\uppt}_i\myldots}}{\tytwo}}
			& \tjudgi{}{\tmtwo}{\ty_i}}
		\end{array}$
		\end{tabular}
		
		\end{center}

	\vspace{-8pt}
	\caption{The transitions of the Sequence $\IAMold$ (\SIAM).}\label{fig:assweightIAMtime} 
	\label{fig:multiiam}  
\end{figure}
\paragraph{Transitions} The \SIAM starts on the final judgement of $\tyd$, with empty type context $\tyctx = \ctxhole$ 
and direction $\uppt$. It moves from judgement to judgement, following occurrences of $\initty$ around $\tyd$. The 
transitions are in \reffig{multiiam}, their union noted $\tosiam$, as we now 
explain them---the transitions have the 
 labels of \LIAM transitions, because they correspond to each other, as we shall show. 

Let's start with the simplest, $\tomachdottwo$. The state focusses on the conclusion judgement $\ruleoc$ of a $\tylam$ 
rule with direction $\uppt$. The eventual type environment $\tye$ is omitted because the transition does not depend 
on it---none of the transitions does, so type environments are omitted from all transitions. The judgement assigns  
type $\arr\mty\ty$ to $\la\var\tm$, and the type context is $\arr\mty\tyctx$, that is, it selects an occurrence of 
$\initty$ in the target type $\ty = \tyctxp\initty$. The transition then simply moves to the judgement above, stripping 
down the type context to $\tyctx$, and keeping the same direction. Transition $\resm 4$ does the opposite move, in 
direction $\downpt$, and transitions $\resm 1$ and $\resm 3$ behave similarly on $\tyapp$ rules: 
$\mset\vdash$ simply denotes the right premise that is left unspecified since not relevant to the transition.

Transitions $\tomacharg$: the focus is on the left premise of a $\tyapp$ rule, of type $\arr\mty{\tytwo}$ isolating $\initty$ 
inside the $i$-th type $\ty$ in $\mty$. The transition then moves to the final judgement of the $i$-th derivation in 
the right premise, changing direction. Transition $\tomachbtone$ does the opposite move.

Transitions $\tomachvar$ and $\tomachbttwo$ are based on the axiom sequences property of  \reflemma{relevance}. Consider 
a $\tylam$ rule occurrence whose right-hand type of the conclusion is $\arr\mty{\tytwo}$. The premise has shape 
$\tjudg{\tye, \var:\mty}{\tm}{\tytwo}$, and by the lemma there is a bijection between the sequence of linear types in 
$\mty$ and the axioms on $\var$, respecting the order in $\mty$. The left side of $\tomachbttwo$ focuses on the $i$-th 
type $\ty$ in $\mty$ and the \SIAM moves to the judgement of the axiom corresponding to that type, which is exactly 
the $i$-th from left to right seeing the derivation as a tree where the children of nodes are ordered as in the typing 
rules. Transition $\tomachvar$ does the opposite move, which can always happen because the code is the type derivation 
of a closed term. 

The only typing rule not inducing a transition is $\tylamstar$. Accordingly, when the \SIAM reaches one of these rules 
it is in a final state. Exactly as the \LIAM, the \SIAM is bi-deterministic.

\begin{proposition}
	The \SIAM is bi-deterministic for each type derivation $\tyd\pof\tjudg{}{\tm}{\initty}$.
\end{proposition}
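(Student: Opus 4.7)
The plan is to prove both halves---forward determinism and deterministic reversibility---by a case analysis on the shape of an \SIAM state. Each state $(\tyd,\ruleoc,\tyctx,\pol)$ is characterised by four orthogonal pieces of information: the typing rule whose conclusion is the focused judgement occurrence $\ruleoc$ (axiom, $\tylam$, $\tyapp$, or $\tylamstar$); the role of $\ruleoc$ as premise of its parent in $\tyd$ (root, premise of $\tylam$, left premise of $\tyapp$, or $i$-th right premise of $\tyapp$); the direction $\pol$, which is either $\uppt$ or $\downpt$; and the top-level shape of the type context $\tyctx$, namely whether $\tyctx=\ctxhole$, $\tyctx=\arr{\mty}{\tyctx'}$, or $\tyctx=\arr{\mtyctx}{\ty}$. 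The strategy is to inspect each combination and check that at most one transition rule of \reffig{multiiam} admits the state as source, and at most one admits it as destination.

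For forward determinism, the direction selects the family of transitions whose source patterns are to be matched. In direction $\uppt$ only the transitions that treat the focused judgement as the conclusion of some rule can fire: the axiom case is dispatched by $\tomachvar$, the $\tyapp$ case by $\tomachdotone$, and the $\tylam$ case splits into $\tomachdottwo$ when $\tyctx=\arr{\mty}{\tyctx'}$ and $\tomachbttwo$ when $\tyctx=\arr{\mtyctx}{\ty}$; since the conclusion of $\tylam$ has an arrow type, these two shapes are exhaustive and mutually exclusive. In direction $\downpt$ it is the parent role that becomes relevant: a $\tylam$ parent fires only $\tomachdotfour$, a right-premise-of-$\tyapp$ parent only $\tomachbtone$, and a left-premise-of-$\tyapp$ parent splits analogously into $\tomachdotthree$ and $\tomacharg$ through the very same $\tyctx$ dichotomy. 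The remaining source-less configurations---axioms in $\downpt$, $\tylamstar$ occurrences, and the root judgement in $\downpt$---are precisely the ones the machine treats as final. Reverse determinism follows from the dual inspection of destination patterns: the only non-trivial dichotomies arise at left premises of $\tyapp$ in $\uppt$ ($\tomachdotone$ versus $\tomachbtone$) and at $\tylam$ conclusions in $\downpt$ ($\tomachdotfour$ versus $\tomachvar$), both again resolved by the top-level shape of $\tyctx$.

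The only subtle ingredient is the uniqueness of the axiom selected by $\tomachbttwo$ from a $\tylam$ conclusion, and dually of the $\tylam$ ancestor reached by $\tomachvar$ from an axiom. Both points reduce to the axiom sequences part of \reflemma{relevance}: for each binder $\lambda\var$ the $\var$-axioms in the subderivation are in order-preserving bijection with the entries of the sequence type $\mty$ annotating that binder, so the position of the hole inside an $\mtyctx$ sitting under the binder selects a unique axiom, and conversely each $\var$-axiom determines a unique position in the ancestral $\mtyctx$. Beyond this, the argument is pure bookkeeping on the grammar of type contexts; the main obstacle is simply verifying that, in each dichotomy, the $\arr{\mty}{\tyctx'}$ versus $\arr{\mtyctx}{\ty}$ split is genuinely forced by the type at $\ruleoc$, so that the source and destination patterns partition the set of states in the right way.
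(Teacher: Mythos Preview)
The paper does not supply a proof of this proposition; it is stated as evident from inspection of the transitions in \reffig{multiiam}. Your case analysis is the natural way to make that inspection explicit, and the core of your argument is sound: in direction $\uppt$ the applicable transition is dictated by the rule whose conclusion is $\ruleoc$ (with the $\tylam$ case split by whether the hole of $\tyctx$ lies in the domain or the codomain of the arrow), in direction $\downpt$ it is dictated by the role of $\ruleoc$ as a premise of its parent (with the left-$\tyapp$-premise case split in the same way), and the dual inspection handles reverse determinism. The appeal to \reflemma{relevance} for the well-definedness and determinacy of $\tomachvar$ and $\tomachbttwo$ is exactly what is needed.

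There is one slip in your enumeration of final configurations: axioms in direction $\downpt$ are \emph{not} source-less. In $\downpt$ the machine inspects the parent rule, and an axiom always has a parent in a derivation for a closed term, so one of $\tomachdotthree$, $\tomachdotfour$, $\tomacharg$, $\tomachbtone$ fires according to your own parent-role analysis. The genuinely stuck configurations are only $\tylamstar$ conclusions in $\uppt$ and the root judgement in $\downpt$. This does not affect the bi-determinism argument---your parent-role cases already cover axioms in $\downpt$---it just removes a spurious entry from your list.
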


\paragraph{An example} We present below the very same example analyzed in 
Section~\ref{sec:IJK}. We have reported its type derivation, with the 
occurrences of $\initty$ on the right of $\vdash$ annotated with increasing 
integers and a direction. The occurrence of $\initty$ marked with 1 represents 
the first state, and so on.
\begin{center}
$
\infer{\tjudg{}{(\la\vartwo{\la\var{\var\vartwo}})\mathsf{I}(\la\varthree\varthree)}
	{\initty_{\uppt{\red 1}}}}{
	\infer{\tjudg{}{(\la\vartwo{\la\var{\var\vartwo}})\mathsf{I}} 
		{\arr{\mset{\arr{\mset{\initty_{\uppt{\red 
									{13}}}}}\initty_{\downpt{\blue 
									9}}}}\initty_{\uppt{\red 
					2}}}}{
		{\infer{\tjudg{}{\la\vartwo{\la\var{\var\vartwo}}}{\arr{\mset{\initty_{\downpt{\blue
									{18}}}}}{ 
						\arr{\mset{\arr{\mset{\initty_{\uppt{\red 
												{14}}}}}\initty_{\downpt{\blue 
										8}}}}\initty_{\uppt{\red 
								3}}}}}{
				\infer{\tjudg{\vartwo:\mset\initty}{\la\var{\var\vartwo}} 
					{\arr{\mset{\arr{\mset{\initty_{\uppt{\red 
												{15}}}}}\initty_{\downpt{\blue 
												7}}}}\initty_{\uppt{\red 
								4}}}}{
					\infer{\tjudg{\vartwo:\mset\initty,\var:\mset{\arr{\mset\initty}\initty}}{\var\vartwo}
						{\initty_{\uppt{\red 5}}}}{
						\infer{\tjudg{\var:\mset{\arr{\mset\initty}\initty}} 
							{\var}{\arr{\mset{\initty_{\downpt{\blue 
												{16}}}}}\initty_{\uppt{\red 
												6}}}}{}\qquad 
						\infer{\tjudg{\vartwo:\mset\initty}{\vartwo}{\initty_{\uppt{\red
										{17}}}}}{}}}}\qquad 
			\infer{\tjudg{}{\mathsf{I}}{\initty_{\uppt{\red {19}}}}}{}}}\qquad 
	\infer{\tjudg{}{\la\varthree\varthree}{\arr{\mset{\initty_{\downpt{\blue 
							{12}}}}}\initty_{\uppt{\red 
					{10}}}}}{
		\tjudg{\varthree:\mset\initty}{\varthree}{\initty_{\uppt{\red {11}}}}}}$
\end{center}
One can immediately notice that every occurrence of $\initty$ is visited 
exactly once. Moreover, the sequence of the visited subterms is the same as the 
one obtained in the example of Section~\ref{sec:IJK}.
\section{$\lambda$-IAM Time via Exhausting Sequence Types}
\label{sect:typed-invariant}
The aim of this section is to explain the strong bisimulation between the \SIAM and the 
\LIAM, that, once again, is based on a variation on the exhaustible invariant. 
A striking point of the \SIAM is that it does not have the log nor the tape. 
They are encoded in the 
judgement occurrence $\ruleoc$ and in the type context $\tyctx$ of its states, as we shall show. But first, let's make 
a step back.

\paragraph{Handling Duplications} $\beta$-reducing a $\l$-term (potentially) duplicates arguments, whose different 
copies may be used differently, typically being applied to different further arguments. The machines in this paper never 
duplicates arguments, but have nonetheless to distinguish different uses of a same piece of code. This is why the 
\LIAM uses \emph{logged} positions instead of simple 
positions: the log is a trace of (part of) the previous run that allows to distinguishing different uses of the position---the closures of the \KAM or the history mechanism of the \LPAM are alternatives.

The key point of multi/sequence type derivations is that duplication is explicitly accounted for, somewhat \emph{in 
advance}, by multi-set/sequences: all arguments come with as many type derivations as the times they are 
duplicated during evaluation. Note indeed that the type derivation may be way bigger than the term itself, 
while this is not possible with, say, simple types. Therefore, there is no need to resort to logs, closures, or 
histories to distinguish copies, because all copies are already there: simple positions in the type 
derivation (not in the term!) are informative enough.

In the Appendix~\ref{ap:dup} we provide the execution of the term 
$(\la\var\var\var)\mathsf{I}$, that actually 
duplicates the sub-term $\mathsf{I}$, for all the machines presented in the 
paper. 

\paragraph{Relating Logs and Tapes with Typed Positions} In the \LIAM, the log 
$\tlog=\lpos_1\cons\ldots\cons\lpos_n$ has a logged position for every argument $\tmtwo_1, \ldots, \tmtwo_n$ in which 
the position of the current state is contained. In the \SIAM this is simply given by the sub-derivations for 
$\tmtwo_1, \ldots, \tmtwo_n$ in which the current judgement occurrence $\ruleoc$ is contained---the way in which 
$\lpos_k$ identifies a copy of $\tmtwo_k$ in the \LIAM corresponds on the type derivation $\pi$ to the index $i$ of the sub-derivation (in the sequence of 
sub-derivations) typing $\tmtwo_k$ in which $\ruleoc$ is located. Note that the \LIAM manipulates the log only via 
transitions $\tomacharg$ and $\tomachbtone$, that on the \SIAM correspond exactly to entering/exiting derivations for 
arguments.

The tape, instead, contains logged positions for which the \LIAM either has not yet found the associated argument, or 
it is backtracking to. Note that the \LIAM puts logged positions on the tape via transitions $\tomachvar$ and 
$\tomachbtone$, and removes them using $\tomacharg$ and $\tomachbttwo$. By looking at \reffig{assweightIAMtime}, it is 
evident that there is a logged position on the \LIAM tape for every type sequence $\mty$ in which it lies the hole 
$\ctxhole$ of the current type context $\tyctx$ of the \SIAM.

These ideas are used to extract from every \SIAM state $\state$ a \LIAM state $\extr\state$ in a quite technical 
way. A notable point is that the extraction procedure is formally defined by means of yet another reformulation on the 
\SIAM of the exhaustible invariant, called \emph{S-exhaustibility}, relying on \emph{typed} tape and log tests built 
following the explained correspondence. For lack of space the technical development is in \refapp{types-appendix}. The 
extraction process induces a relation $\state \bisimtypes \extr\state$ that is easily proved to be a strong 
bisimulation between the \SIAM and the \LIAM.

\begin{proposition}
\label{prop:str-bisim-typed}
Let $\tm$ a closed and $\towh$-normalizable term, and $\tyd\pof\tjudg{}{\tm}{\initty}$ a type derivation. Then 
$\bisimtypes$ is a strong bisimulation between \SIAM states on $\tyd$ and \LIAM states on $\tm$.
\end{proposition}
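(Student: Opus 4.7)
The plan is to make the extraction $\extr{\cdot}$ explicit, then check transition-by-transition that $\extr{\cdot}$ commutes with the identically-labeled transitions of the two machines, which establishes that $\bisimtypes$ is a strong bisimulation.

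First I would pin down the extraction $\extr{\cdot}$. Given a \SIAM state $(\tyd, \ruleoc, \tyctx, \pol)$ with $\ruleoc$ of the form $\tjudg{\tye}{\tmtwo}{\tyctxp{\initty}}$, one reads off: (i) the position $(\tmtwo, \ctx)$ in $\tm$ as the position of $\ruleoc$ in $\tyd$ projected to the subject; (ii) the tape by walking through $\tyctx$ from the outside in, contributing one $\resm$ marker per arrow $\arr{\mty}{\tyctx'}$ crossed and one logged position per sequence $\mset{\ldots, \tyctx'', \ldots}$ crossed; (iii) the log by walking outward through $\tyd$ from $\ruleoc$, contributing one logged position per $\tyapp$-ancestor whose right premise contains $\ruleoc$ (identified by the binder for $\var$ above and by the sequence-index of the sub-derivation housing $\ruleoc$), and nothing for $\tylam$ or left-premise $\tyapp$ ancestors. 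The direction $\pol$ is transported under the convention $\uppt \mapsto \downp$, $\downpt \mapsto \upp$, to account for the fact that derivations grow upward. A preliminary S-exhaustible invariant---the typed analogue of \refprop{good-invariant}---guarantees that this reading satisfies the \emph{position and log} invariant of \reflemma{invarianttwo}, so that $\extr{\state}$ is a well-formed \LIAM state.

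Second, the initial \SIAM state $(\tyd, \tjudg{}{\tm}{\initty}, \ctxhole, \uppt)$ has empty tape context and sits at the root, so its extraction is $\dstate{\tm}{\ctxhole}{\epsilon}{\epsilon} = \state_\tm^{\text{\LIAM}}$. Hence initial states are related. Since both machines are bi-deterministic, to conclude it suffices to show that each forward \SIAM transition is matched by the forward \LIAM transition on $\extr{\cdot}$ with the same label.

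Third, I would go through the eight transition labels in \reffig{multiiam}. The four "administrative" transitions $\resm 1$--$\resm 4$ are routine: they traverse an application or abstraction node of $\tyd$ without altering the log, while adding or removing a single outer $\resm$ from the tape---exactly what the \LIAM does on the term. The four interesting transitions couple the type-context and log readings through the axiom-to-sequence bijection of \reflemma{relevance}. For $\tomacharg$ (resp.\ $\tomachbtone$) the \SIAM moves from the left premise of a $\tyapp$ rule into its $i$-th right sub-derivation (resp.\ the opposite); on the extracted state this consumes (resp.\ produces) the topmost logged position on the tape and pushes (resp.\ pops) it on the log, exactly as prescribed by \LIAM transition $\tomacharg$ (resp.\ $\tomachbtone$), with the sequence-index $i$ identifying the same copy on both sides. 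For $\tomachvar$ (resp.\ $\tomachbttwo$) the \SIAM jumps from an axiom on $\var$ up (resp.\ down) to the binder $\la\var$, the bijection of \reflemma{relevance} matching the axiom to the $i$-th type of the relevant sequence $\mty$; this is mirrored by the \LIAM transitions that cross the binder and transfer the $n$-prefix of the log to a new logged position on the tape (or back), where $n$ is the number of $\tyapp$ nodes on the path from $\ruleoc$ to the binder---precisely the level of the context $\ctxtwo_n$.

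The main obstacle is precisely this last bookkeeping: showing that the prefix of log entries moved by \LIAM transition $\tomachvar$ coincides with the list of argument sub-derivations separating the axiom from its binder in $\tyd$, and that the sequence-indices produced by the \SIAM match those of \reflemma{relevance}. This is exactly what the typed exhaustible invariant is engineered to package, so once S-exhaustibility is in place each case reduces to directly reading off the transition rules of \reffig{multiiam} and \reffig{iam}.
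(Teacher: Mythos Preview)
Your overall strategy---case analysis on the eight transition labels after fixing the extraction map---is the same as the paper's. The gap is in your definition of $\extr{\cdot}$ itself.

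You describe the log as ``one logged position per $\tyapp$-ancestor whose right premise contains $\ruleoc$ (identified by the binder for $\var$ above and by the sequence-index of the sub-derivation housing $\ruleoc$)''. But a \LIAM logged position is $(\var,\la\var\ctxtwo_n,\tlog_n)$: it carries a \emph{variable} and a \emph{nested log}. Neither of these can be read off statically from the $\tyapp$-ancestor. The variable $\var$ is the one for which this argument copy will eventually be substituted, and that is determined by the dynamics of evaluation, not by the rule sitting above the $\tyapp$. Your parenthetical does not specify which $\var$ you mean, and there is no canonical choice from the derivation shape alone. The same issue arises for the logged positions you place on the tape when crossing a sequence in $\tyctx$.

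The paper resolves this by making S-exhaustibility part of the \emph{definition} of extraction, not just a well-formedness check. For each judgement or type test $\state_\focus$ one \emph{runs the \SIAM} from $\state_\focus$ until it performs $\tomachbttwo$ and lands on an axiom $\tjudg{}{\var}{\ty}$; the extracted logged position is then $(\var,\la\var\ctxtwo_n,\elpos{\statetwo^1}\cdots\elpos{\statetwo^n})$, where the inner log is obtained \emph{recursively} from the judgement tests of that axiom state. S-exhaustibility guarantees both that the run exists and that the recursion terminates. With this definition in hand, the $\tomachvar$/$\tomachbttwo$ cases go through because the new type test created after $\tomachvar$ has an exhausting run consisting of a single $\tomachbttwo$ back to the axiom you started from, so its extracted position is exactly $(\var,\la\var\ctxtwo_n,\tlog_n)$ with $\tlog_n$ the first $n$ entries of the current log. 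Your sketch treats S-exhaustibility as an afterthought for ``bookkeeping'', but it is the engine that makes the extraction map well-defined in the first place.
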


\paragraph{Weights and the Length of \SIAM Runs via Acyclicity} {We now turn to 
the proof of the correctness of 
the weight assignment $\WeightTimeIAM{\tyd}$, that is, the fact that it correctly measures the length of \LIAM 
complete runs. While the weight assignment for the \LIAM is similar to de Carvalho's one for the \KAM, the proof of its 
correctness is completely different, and it must be, as we know explain. 

The \KAM performs an 
evaluation that essentially mimics cut-elimination and so the number of \KAM transitions to normal form is 
obtained via a refined, quantitative form of subject reduction. One may say that it is obtained in a 
\emph{step-by-step} manner. The \LIAM, instead, does not mimic subject reduction. It walks over the type derivation 
\emph{without ever changing it}, potentially passing many times over the same judgement (because of backtracking). 
Correctness of weights cannot then be obtained via a refined subject reduction property, because the reduced 
derivation gives rise to a different run, and not to a sub-run. It must instead follow from a \emph{global} analysis of 
a fixed derivation, that we now develop. The proof technique is an original contribution of this paper.
}

Weights as in $\WeightTimeIAM{\tyd}$ count the number 
of occurrences of $\initty$ in $\tyd$, and every such occurrence corresponds to 
a state of the \SIAM. Proving 
the correctness of the weight system amounts to showing that every state of the 
\SIAM is reachable, and reachable exactly once. In order to do so, we have to show that the \SIAM 
never loops on typed derivations. 

Note a subtlety: by the bisimulation with the \LIAM 
(\refprop{str-bisim-typed}) we know that the run of the \SIAM terminates, but we do not know whether it reaches all 
states. What we have to prove, then, is that there are no unreachable loops, that is, loops that are not reachable 
from an initial state. The next easy lemma guarantees that this is enough.

\begin{lemma}
	Let $\tsys{}$ be an acyclic bi-deterministic transition system on a finite set of states $\mathcal{S}$ and with only 
one 	initial state $\state_i$. Then all states in $\mathcal{S}$ are reachable from $\state_i$, and reachable only once.
\end{lemma}

We show the absence of loops using a sort of subject reduction property. We first show that if 
the \SIAM loops on $\tyd\pof\tjudg{}{\tm}{\initty}$ and $\tm \towh\tmtwo$, then 
there is a type derivation $\tydtwo\pof\tjudg{}{\tmtwo}{\initty}$ on which the \SIAM loops---that 
is, \SIAM looping is preserved by reduction of the underlying term. This is done by defining a relation $\relf$ 
between the \SIAM states on $\tyd$ and on $\tydtwo$---see Appendix~\ref{ssect:acyclic-app}. 

\begin{proposition}
	$\relf$ is a loop-preserving bisimulation between \SIAM states.
\end{proposition}

Then, by the trivial fact that the 
\SIAM does not loop on $\towh$-normal terms (as they are typed using just one rule, namely $\tylamstar$), we obtain 
that it never loops.

\begin{corollary}
	Let $\tyd\pof\tjudg{}{\tm}{\initty}$ be a type derivation. Then the \SIAM does not loop on $\tyd$.
\end{corollary}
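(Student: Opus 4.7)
The plan is to derive the corollary from the preceding proposition by induction on the length of the weak head reduction to normal form. The key inputs are: (i) the characterization theorem saying that $\tm$ being typable with $\initty$ implies $\tm$ is $\towh$-normalizable, (ii) the loop-preserving bisimulation $\relf$ relating \SIAM states on $\tyd$ to \SIAM states on a derivation $\tydtwo$ of the reduct of $\tm$, and (iii) the trivial observation that \SIAM cannot loop on a derivation of a $\towh$-normal form.

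First, I would argue by contradiction: suppose the \SIAM loops on $\tyd$. Since $\tjudg{}{\tm}{\initty}$ is derivable, the characterization theorem yields a finite reduction $\tm = \tm_0 \towh \tm_1 \towh \cdots \towh \tm_k$ where $\tm_k$ is $\towh$-normal, hence an abstraction (as $\tm$ is closed and \ccbn normal forms are abstractions). I would then iterate the preceding proposition: starting from $\tyd_0 \defeq \tyd$ and the assumed loop on $\tyd_0$, for each step $\tm_i \towh \tm_{i+1}$ the loop-preserving bisimulation supplies a derivation $\tyd_{i+1} \pof \tjudg{}{\tm_{i+1}}{\initty}$ on which the \SIAM still loops. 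After $k$ applications we obtain a derivation $\tyd_k \pof \tjudg{}{\tm_k}{\initty}$ on which the \SIAM loops.

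Second, I would derive a contradiction from the shape of $\tyd_k$. Since $\tm_k$ is an abstraction of type $\initty$, the only applicable typing rule at its root is $\tylamstar$, so $\tyd_k$ consists of a single $\tylamstar$ axiom. Inspecting the transitions in \reffig{multiiam}, no transition fires on such a derivation: the only judgement occurrence is the final one, and $\tylamstar$ is precisely the rule not inducing any transition (the state is final). In particular the \SIAM reaches its unique initial state, which is also final, and cannot loop. This contradicts the conclusion of the previous paragraph.

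The main (non-obstacle) work is already packaged in the preceding proposition about $\relf$; here the only delicate point is the iterated use of that proposition along the reduction $\tm_0 \towh \cdots \towh \tm_k$. This is legitimate because weak head reduction of a typed closed term terminates after finitely many steps, so the induction has a fixed termination point, and at each step the relation $\relf$ is defined between consecutive $\tm_i, \tm_{i+1}$ with the loop transported along. I expect the only subtlety to be making sure that ``loops on $\tyd$'' is formulated as the existence of a (non-trivial) cycle in the \SIAM transition graph on $\tyd$, so that the bisimulation transfers an infinite, rather than finite, behaviour; once this is fixed, the argument above closes the proof in one line per paragraph.
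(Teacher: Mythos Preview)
Your proposal is correct and follows essentially the same approach as the paper: both argue that typability gives $\towh$-normalization, then iterate the loop-preserving bisimulation $\relf$ along the finite reduction sequence to push any hypothetical cycle down to a derivation of the normal form, which consists of a single $\tylamstar$ rule and hence admits no transitions. The paper's version is more terse (packaging the one-step loop-transport as its own preliminary corollary), but the logical structure is the same.
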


The correctness of the weights for the length of \SIAM runs immediately follows, and, via the strong bisimulation 
in \refprop{str-bisim-typed}, it transfers to the \LIAM.

\begin{theorem}[\LIAM time via sequence types]\label{thm:iamtimetypes}
	Let $\tm$ be a closed term that is $\towh$-normalizable, $\runtwo$ the 
	complete \LIAM run from $\state_\tm$, and 
$\tyd\pof\tjudg{}{\tm}{\initty}$ a type derivation for $\tm$. Then $\size\runtwo = \WeightTimeIAM{\tyd}$.
\end{theorem}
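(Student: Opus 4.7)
The plan is to combine two earlier results, reducing the claim to a bookkeeping argument about the \SIAM. First, \refprop{str-bisim-typed} establishes a strong bisimulation $\bisimtypes$ between \SIAM runs on $\tyd$ and \LIAM runs on $\tm$ with related initial states. Since strong bisimulations preserve run length, it suffices to show that the complete \SIAM run on $\tyd$ from its initial state has length exactly $\WeightTimeIAM{\tyd}$.

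Second, the \SIAM on a fixed $\tyd$ is bi-deterministic, its state space is finite, and by the preceding corollary it is acyclic. Moreover, there is a unique initial state: the conclusion judgment $\tjudg{}{\tm}{\initty}$ paired with the empty type context $\ctxhole$ and direction $\uppt$, which is well-formed precisely because the conclusion type is $\initty$. Applying the lemma on acyclic bi-deterministic finite-state transition systems with a unique initial state, the complete \SIAM run visits every reachable state exactly once, so its length equals the number of reachable \SIAM states minus $1$.

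The remaining step is to verify that this state count, minus one, equals $\WeightTimeIAM{\tyd}$. A \SIAM state is a triple consisting of a judgment occurrence $\ruleoc$, a type context selecting an occurrence of $\initty$ in the right-hand type of $\ruleoc$, and a direction; inspection of \reffig{multiiam} shows that the direction of a reachable state is forced by its other two components (via the structure of the transitions and the invariants inherited from the \LIAM through $\bisimtypes$). Reachable states are therefore essentially in bijection with the occurrences of $\initty$ on the right of $\vdash$ in $\tyd$, and the weight definition in \reffig{weightskam} is engineered to count exactly these by recursively adding $\occstar(\ty)$ at every non-$\tylamstar$ rule for the right-hand type $\ty$ of its conclusion. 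The fact that $\tylamstar$ contributes $0$ instead of $\occstar\initty = 1$ absorbs the off-by-one between states and transitions, since the terminal state of the run is a $\tylamstar$ axiom whose unique $\initty$-position admits no outgoing transition. A routine induction on $\tyd$ formalizes the match.

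The main obstacle is this last bookkeeping step: one has to justify that directions are determined by position (so the naive count of $\initty$-occurrences really gives the number of reachable states and not twice as many), and to line up the inductive contributions of $\WeightTimeIAM{\tyd}$ against the global $\initty$-count minus one, explaining the discrepancy via the $\tylamstar$ convention.
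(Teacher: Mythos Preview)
Your proof is correct and follows the paper's approach exactly: transfer the question to the \SIAM via the strong bisimulation of \refprop{str-bisim-typed}, invoke the acyclicity lemma to conclude that the \SIAM run visits each state once, and then match the state count against $\WeightTimeIAM{\tyd}$ by reading states as occurrences of $\initty$ in right-hand types. You are in fact more careful than the paper's appendix proof about the off-by-one between states and transitions, and your resolution via the $\tylamstar$ convention is the right one; to make it airtight, note that the acyclicity lemma forces the deterministic path to have a \emph{unique} terminal state, hence exactly one $\tylamstar$ rule, which is what makes the $-1$ line up with the single missing contribution in the weight.
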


\section{Our Two Cents About Space}
\label{sect:space}
Here we provide an interesting example about space usage, with the only purpose of stressing that the situation is 
subtler than for time. Among the machines we have presented, the \LIAM is the 
only one tuned 
for space efficiency, as shown by the literature 
\cite{ghica_geometry_2007,DBLP:journals/entcs/GhicaS10,bllspace, 
DBLP:conf/esop/LagoS10,DBLP:conf/csl/Mazza15,DBLP:conf/icalp/MazzaT15}. 
In fact, the space used by the \LJAM (thus the \LPAM) and the KAM is 
proportional to their time, \ie their space usage is inflationary. Nonetheless, 
there are terms 
for which  the \LJAM outperforms the \LIAM in space consumption, showing that 
the space relationship between the 
\LIAM and the \LJAM is less smooth than the time one.
\begin{proposition}
	Let $\tmthree_k^h$ be defined as 
	$\tmthree_k^h\defeq (\l{\var_1}...\l{\var_k}.\la\vartwo\vartwo 
	(\l{\varthree_1}...\l{\varthree_h}.\la\varthree\varthree))\tm_1...\tm_k(\la w{w\tmtwo_1\,...\tmtwo_h})$. The the 
\LIAM space 
	consumption for the evaluation of $\tmthree_k^h$ is 2 
	logged positions plus $h+k$ occurrences of $\resm$, while the \LJAM needs 
	2 logged positions plus $\max\{h,k+1\}$ occurrences of	$\resm$.
\end{proposition}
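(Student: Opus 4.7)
The plan is to trace both deterministic machines step by step on $\tmthree_k^h$ and to read off the maxima of $\resm$-occurrences and of simultaneously-alive logged positions. The two runs coincide up to the first $\iamuapr$/$\iamujump$, so most of the analysis is shared. Set $L\defeq\la\varthree_1\ldots\la\varthree_h\la\varthree\varthree$, $M\defeq\la w.w\tmtwo_1\ldots\tmtwo_h$, and $E\defeq\la\var_1\ldots\la\var_k\la\vartwo\vartwo L$.

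For the \LIAM the run splits into three phases. First, the $\downp$-descent pushes $k+1$ $\resm$'s via $\iamdap$ and then consumes them via $\iamdlamone$ under $\la\var_1,\ldots,\la\var_k,\la\vartwo$; a further $\iamdap$ pushes $\resm$ for $L$ and reaches $\vartwo$, at which point $\iamdvar$ pushes the logged position $\lpos_\vartwo\defeq(\vartwo,\la\vartwo(\ctxhole L),\epsilon)$ and enters the $\upp$-phase. After $k$ $\iamulam$'s and $k$ $\iamuapltwo$'s the tape returns to $\lpos_\vartwo\cdot\resm$, and $\iamuaplone$ moves $\lpos_\vartwo$ into the log while descending into $M$. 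A symmetric descent ends with an $\iamdvar$ on $w$ that pushes $\lpos_w\defeq(w,\la w(\ctxhole\tmtwo_1\ldots\tmtwo_h),\epsilon)$, so the state has tape $\lpos_w\cdot\resm^h$ and log $\lpos_\vartwo$. Now backtracking fires: $\iamuapr$ moves $\lpos_\vartwo$ on top of the tape (flipping to $\downp$), and $k$ further $\iamdap$'s descend the $k$ outer applications, pushing $k$ fresh $\resm$'s. The tape reaches
\[
\resm^k\cdot\lpos_\vartwo\cdot\lpos_w\cdot\resm^h,
\]
with exactly $2$ logged positions and $h+k$ occurrences of $\resm$. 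The remaining transitions (a chain of $\iamdlamone$'s, an $\iamdlamtwo$ matching $\lpos_\vartwo$ with $\la\vartwo$, an $\iamuaplone$ into $L$, and $h$ further $\iamdlamone$'s) only consume tape entries, making this the global peak.

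For the \LJAM the trace coincides with the \LIAM's (up to logged positions carrying global rather than local contexts) until the second $\iamdvar$; the resulting state has tape $\lpos_w\cdot\resm^h$ and log $\lpos_\vartwo$, exhibiting $2$ simultaneously-alive logged positions and $h$ occurrences of $\resm$. Instead of backtracking, $\iamujump$ pops $\lpos_\vartwo$ from the log and restores the saved position of $\vartwo$ without touching the tape; the run then finishes via $\iamuaplone$ into $L$ and $h$ $\iamdlamone$'s. Scanning every state, the maximum number of $\resm$'s simultaneously on the tape is $\max\{h,k+1\}$, attained in the initial descent, at the peak $\resm^k\cdot\lpos_\vartwo\cdot\resm$ of the $\upp$-climb, and after the second $\iamdvar$; the maximum number of logged positions simultaneously alive (on the tape plus in the log) is $2$, attained only in the window between the second $\iamdvar$ and $\iamujump$.

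The main obstacle is merely bookkeeping: at each state one must verify the invariants (position/log and tape/direction of \reflemma{invarianttwo} and \reflemma{jam-simple-tape}) and that the claimed transition is the one enabled. Since both machines are bi-deterministic, the resulting traces are unique and these checks suffice for the two bounds.
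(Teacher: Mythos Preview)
Your trace is correct and is precisely the kind of direct step-by-step calculation the paper has in mind; in fact the paper states this proposition without proof, leaving the verification to the reader, so your argument is exactly what is needed. The key state you isolate for the \LIAM, with tape $\resm^k\cdot\lpos_\vartwo\cdot\lpos_w\cdot\resm^h$ during backtracking, is indeed the global maximum, and your identification of the three candidate peaks $k+1$, $k+1$, and $h$ for the \LJAM is also right.

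Two minor remarks. First, you implicitly rely on $h\geq 1$ so that the backtracking peak $h+k$ dominates the earlier $k+1$ peaks of the \LIAM run; it would be worth saying this explicitly, since the proposition's claim of $h+k$ is only the true maximum under that assumption. Second, when counting ``2 logged positions'' for the \LJAM you should note (as the paper does when discussing the cost of $\iamdvar$) that $\lpos_w$ contains a copy of $\lpos_\vartwo$ in its internal log, but that this is realized as a shared pointer rather than a duplicated structure, so the count of $2$ is the right space measure. Neither point is a gap; your argument is complete.
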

%
%
%

\section{Conclusions}
\label{sect:conclusions}
In this paper, we analysed the relative time performances of three
game machines, namely the IAM the JAM and the PAM, establishing a series
of results which can be summarized as follows:
\begin{center}
  \includegraphics[scale=1]{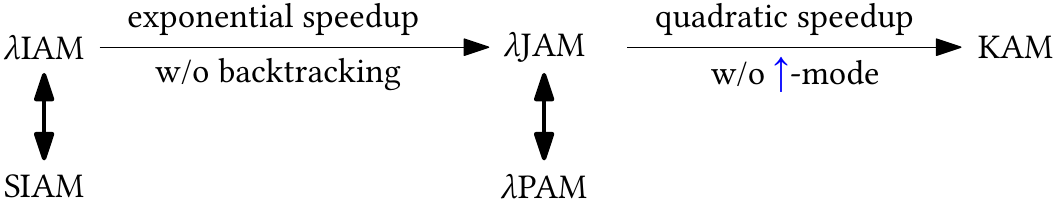}
\end{center}
Here, thicker arrows represent a stronger correspondence, i.e.,
the \LPAM and \LJAM are isomorphic, the \LJAM improves
the \LIAM with a possibly exponential advantage, while the
KAM improves the \LJAM (thus the \LPAM) with a quadratic
advantage.

Besides settling the question about the relative efficiency of the
main game machines, we also prove non-idempotent intersection types to
be able to precisely characterize the time performance of the \LIAM
when run on the typed term, in analogy with the classic results on
environment machines by~\citet{deCarvalho18}.  This way, the time
behavior of two heterogeneous machines, namely the KAM and the
\LIAM, on a given normalizing term $\tm$ can be captured by just
comparing two different ways of weighting the same sequence type derivation, 
the former attributing weight $1$ to any instance rule in the
type derivation, the latter taking into account the size of the underlying
type in an essential way. In other words, \emph{the bigger the types,
  the more inefficient the \LIAM.}

 Among the topics for future work, we can certainly mention the extension
 of the results obtained here to \emph{call-by-value} game machines, which
 seems within reach. A study on the relative \emph{space} efficiency
 of game machines is more elusive, as the partial results in Section
 \ref{sect:space} show.


\begin{acks}                            
  The second author is funded by the ERC CoG ``DIAPASoN'' (GA
  818616). This work has been partially funded by the ANR JCJC grant
  ``COCA HOLA'' (ANR-16-CE40-004-01).
\end{acks}

\bibliography{main}

\appendix
\newpage

\section{Proof of \refsect{iam-jam} ($\lambda$-IAM and the $\lambda$-JAM: Jumping is Exhausting)}
\label{sect:iam-jam-app}
First of all, we have to finally define log tests.

\paragraph{Log Tests} We need to define the log test focussing on the $m$-th \trpos $\lpos_m$ in the log of a state $ 
\state = \nopolstate{\tm}{\ctx_n}{\tape}{\lpos_n\cdots \lpos_2\cdot\lpos_1}{\pol}$. We 
remove the prefix $\lpos_n\cdots \lpos_{m+1}$ (if any), and move the current 
position up by $n-m$ levels. Moreover, the tape is emptied and 
the direction is set to $\upp$. Let us define the (technical) position change.

Let $(\tmtwo,\lctx{n+1})$ be a position. Then, for 
every
decomposition of $n$ into two natural numbers $m,k$ with
$m+k=n$, we can find contexts $\lctx{m}$ and $\lctx{k}$ such that $\tm = \lctxp{m}{\tmthree \lctxp{k}\tmtwo}$.
  Then, the \emph{$m+1$-outer context} of the position $(\tmtwo,\lctx{n+1})$
  is the context $\octx{m+1} \defeq\lctxp{m}{r\ctxhole}$ 
  of level $m+1$ and the \emph{$m+1$-outer position} is
  $(\lctxp{k}\tmtwo,\octx{m+1})$.

  Note that the $m$-outer context and the $m$-outer position (of a given position) have level $m$.
It is easy to realize that any position having level $n$ has \emph{unique}
$m$-outer context and $m$-outer position, for every $1\leq m\leq n+1$, and that, moreover, outer positions are 
hereditary, in the following sense: the $i$-outer position of the $m$-outer position of $(\tmtwo,\lctx{n+1})$ is 
exactly 
the $i$-outer position of $(\tmtwo,\lctx{n+1})$.
\begin{definition}[Log tests]
  Let $\state =
  \nopolstate{\tm}{\ctx_n}{\tape}{\lpos_n\cdots \lpos_2\cdot\lpos_1}{\pol}$ be a
   state with $1\leq m\leq n$, and $(\tmtwo,\octx{m})$ be the $m$-outer 
  position of
  $(\tm,\lctx{n})$. The \emph{$m$-log test  of $\state$ of focus $\lpos_m$} is 
  $\state_{\lpos_m}\defeq\nopolstate{\tmtwo}{\underline{\blue{\octx{m}}}}{\stempty}{\lpos_m\cdots
   \lpos_2\cdot \lpos_1}{\upp}$.
\end{definition}

We also need to recall a lemma about log tests from \cite{IamPPDPtoAppear}, to be used in the proof of the I-exhaustible invariant.

\begin{lemma}[Invariance properties of log tests]
  \label{l:outer-inv} 

  Let $\state = \nopolstate{\tm}{\ctx_n}{\tape}{\tlog_n}{\pol}$ be a \LIAM
   state. Then:
  \begin{varenumerate}
  \item \label{p:outer-inv-one} \emph{Direction}: the dual
    $\nopolstate{\tm}{\ctx_n}{\tape}{\tlog_n}{\pol^1}$ of $\state$
    induces the same log tests;
  \item \label{p:outer-inv-two} \emph{Tape}: the state
    $\nopolstate{\tm}{\ctx_n}{\tapetwo}{\tlog_n}{\pol}$ obtained from
    $\state$ replacing $\tape$ with an arbitrary tape $\tapetwo$
    induces the same log tests;
  \item \label{p:outer-inv-three} \emph{Head translation}: if $\tm =
    \hctxp\tmthree$ then the head translation\\
    $\nopolstate{\tmthree}{\ctx_n\ctxholep\hctx}{\tapetwo}{\tlog_n}{\pol}$
    of $\state$ induces the same log tests.
  \item \label{p:outer-inv-four} \emph{Inclusion}: if $\ctx_n = 
  \ctx_m\ctxholep{\ctx_i}$ and $\tlog_n = \tlog_i\cdot \tlog_m$
  then the log tests of 
  $\nopolstate{\ctx_i\ctxholep\tm}{\ctx_m}{\tapetwo}{\tlog_m}{\pol}$
  are log tests of $\state$.
  \end{varenumerate}
\end{lemma}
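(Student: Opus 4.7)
The overall plan is to unfold the definition of the $m$-log test and observe that it depends on the source state through a very narrow projection; each of the four invariances then amounts to checking that the indicated transformation leaves this projection untouched. Concretely, $\state_{\lpos_m}$ is determined by $\state = \nopolstate{\tm}{\ctx_n}{\tape}{\lpos_n\cdots\lpos_1}{\pol}$ only through (i) the $m$-outer position $(\tmtwo,\octx{m})$ of $(\tm,\ctx_n)$ and (ii) the log suffix $\lpos_m\cdots\lpos_1$; the output direction is fixed to $\upp$ and the output tape to $\stempty$. Since the source direction $\pol$ and source tape $\tape$ are entirely discarded, items (1) and (2) hold by direct inspection: replacing $\pol$ by $\pol^1$ or $\tape$ by any $\tapetwo$ yields the same data (i)--(ii), hence exactly the same family of log tests.

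For item (3) the plan is to exploit the hereditariness of outer positions recalled just after their definition: if $\tm = \hctxp\tmthree$ with $\hctx$ a head context (of level $0$), then $\ctx_n\ctxholep\hctx$ still has level $n$, and a routine induction on $\hctx$ shows that the $m$-outer position of $(\tmthree,\ctx_n\ctxholep\hctx)$ coincides with that of $(\tm,\ctx_n)$, simply because $\hctx$ introduces no new argument-enclosing layer. The log is unchanged, so (i)--(ii) match on both sides and every log test is preserved.

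For item (4), assume $\ctx_n = \ctx_m\ctxholep{\ctx_i}$ (so $n=m+i$) and $\tlog_n = \tlog_i\cdot\tlog_m$, and let $\statetwo = \nopolstate{\ctx_i\ctxholep\tm}{\ctx_m}{\tapetwo}{\tlog_m}{\pol}$. Every log test of $\statetwo$ focuses some $\lpos_k$ with $1 \leq k \leq m$, and I would show $\statetwo_{\lpos_k} = \state_{\lpos_k}$ by checking (i) and (ii) separately. The log suffix $\lpos_k\cdots\lpos_1$ agrees on both sides because $\tlog_m$ is precisely the bottom $m$ entries of $\tlog_n$. For the outer positions, since $k \leq m$, the $k$-outer computation on $\statetwo$ strips $m-k$ argument layers from $\ctx_m$, while on $\state$ it strips $n-k = i+(m-k)$ layers from $\ctx_n = \ctx_m\ctxholep{\ctx_i}$; the first $i$ strippings consume the whole of $\ctx_i$ and produce the intermediate position $(\ctx_i\ctxholep\tm,\ctx_m)$, so the remaining $m-k$ strippings then proceed identically in the two situations. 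The $k$-outer positions therefore coincide.

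The main obstacle I anticipate is not conceptual but notational: the whole argument rests on careful bookkeeping of the layered decomposition of contexts that underlies the definition of outer positions. I would isolate this into a short preliminary observation stating that outer positions are stable both under head translation and under enlarging the hole by an inner context of small enough level, and then reuse it to discharge parts (3) and (4) uniformly, keeping the proof essentially verification-style.
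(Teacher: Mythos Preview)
Your proposal is correct. The paper does not actually prove this lemma: it is explicitly ``recalled'' from the companion work \cite{IamPPDPtoAppear} and stated without proof, so there is no in-paper argument to compare against. Your direct-verification approach---observing that the $m$-log test is a function only of the $m$-outer position of $(\tm,\ctx_n)$ and of the log suffix $\lpos_m\cdots\lpos_1$, with tape and direction discarded---is exactly the natural way to establish such invariance properties, and your treatment of items (3) and (4) via the uniqueness and hereditariness of the level decomposition of contexts is sound.
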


\begin{lemma}[I-exhaustible invariant]
\label{l:good-invariant-jam-app}
	Let $\tm$ be a closed term and $\run: \state_\tm \toljam^k 
\state$ a \LJAM run. Then $\state$ is I-exhaustible. 
\end{lemma}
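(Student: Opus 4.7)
My plan is to mimic the strategy used to prove the \LIAM version of the invariant (\refprop{good-invariant}), adapting it to the translated tests $\JAMtoIAM{\state_\lpos}$ and the \LJAM data structures. Since $\exstates_I$ is defined as a \emph{smallest} set, I would not reason directly by induction on $k$ alone; instead I would isolate a candidate set $\mathcal{G}$ of \LJAM states that is closed under the defining clause of $\exstates_I$, and then show separately that every reachable state lies in $\mathcal{G}$. A natural choice is
$\mathcal{G} \defeq \{\state : \state\text{ is reachable}\} \cup \{\indstate\lpos : \lpos\text{ occurs in some reachable }\state\}$,
together with suitable \emph{induced} states obtained by iterating $\indstate{\cdot}$. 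Showing $\mathcal{G}\subseteq\exstates_I$ then reduces, for any $\statetwo\in\mathcal{G}$ and any focus $\lpos$, to producing the \LIAM run $\JAMtoIAM{\statetwo_\lpos}\toliam^*\tomachbttwodecp{\JAMtoIAM\lpos}\JAMtoIAM{\indstate\lpos}$ and checking that $\indstate\lpos\in\mathcal{G}$ as well.

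The core argument is by induction on the length $k$ of the run reaching $\state$. The base case $k=0$ is trivial, as $\state_\tm$ contains no logged position. For the inductive step I would do a case analysis on the last transition $\state'\toljam\state$ and, for each transition, on whether the focus $\lpos$ is old (already present in $\state'$) or fresh. For old foci the I-exhaustibility of $\state'$, together with invariance properties of tests analogous to \reflemma{outer-inv} (direction, tape, head translation, inclusion) lifted to the \LJAM via $\JAMtoIAM{\cdot}$, transfers I-exhaustibility to $\state$. The only transition creating a new logged position is $\iamdvar$: in $\state=\ustate{\la\var\ctxtwo_n\ctxholep\var}{\ctx}{\lpos\cdot\tape}{\tlog}$ with $\lpos=(\var,\ctxp{\la\var\ctxtwo_n},\tlog_n\cdot\tlog)$, the tape test focused on $\lpos$ translates to
$\JAMtoIAM{\state_\lpos}=\dstate{\la\var\ctxtwo_n\ctxholep\var}{\ctx}{(\var,\la\var\ctxtwo_n,\JAMtoIAM{\tlog_n})}{\JAMtoIAM\tlog}$,
and one \LIAM transition $\iamdlamtwo$ already takes it to
$\ustate{\var}{\ctxp{\la\var\ctxtwo_n}}{\stempty}{\JAMtoIAM{\tlog_n}\cdot\JAMtoIAM\tlog}$,
which coincides with $\JAMtoIAM{\indstate\lpos}$ because $\JAMtoIAM{\cdot}$ distributes over log concatenation.

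For the transitions that do not create foci, the analysis is mostly bookkeeping. For $\iamdap$, $\iamdlamone$, $\iamuapltwo$, $\iamulam$ the position changes by an adjacent move, tape and log get only $\resm$-decorations shifted, and each test in $\state$ is related to the corresponding test in $\state'$ via tape- and direction-invariance, so the exhausting \LIAM run from the IH is still valid. For $\iamuaplone$ a logged position migrates from the tape to the log, turning a tape test into a log test on the same focus; invariance under head translation and inclusion aligns the two, and the IH applies. For $\iamujump$ one log entry is consumed and the position jumps to the stored one, so old tape foci are unchanged while the log tests of $\state$ are a prefix of those of $\state'$, again covered by the IH.

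The main obstacle, exactly as in the \LIAM case, is making the membership $\indstate\lpos\in\mathcal{G}$ precise: the induced state is not literally a reachable \LJAM state, and its own tape and log tests must themselves be exhausted. I expect to handle this by closing $\mathcal{G}$ under the operation $\lpos\mapsto\indstate\lpos$ (noting that the logs inside a logged position $\lpos$ of a reachable state are themselves sub-logs of a reachable log, by the position-and-log invariant of \reflemma{jam-simple-tape}) and by verifying that each such $\indstate\lpos$ has strictly simpler tests than the state it came from, so the closure argument terminates. Once $\mathcal{G}\subseteq\exstates_I$ is established, minimality of $\exstates_I$ and reachability of $\state$ give the conclusion.
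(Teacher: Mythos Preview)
Your core plan---induction on the length $k$ of the run, case analysis on the last transition, and appeal to the invariance properties of log tests (\reflemma{outer-inv})---matches the paper's. The detour through the candidate set $\mathcal{G}$, however, is both unnecessary and logically problematic: you argue that $\mathcal{G}\subseteq\exstates_I$ follows from showing every state in $\mathcal{G}$ has all its tests exhausting to targets again in $\mathcal{G}$, but that is the coinduction principle for a \emph{greatest} fixed point, whereas $\exstates_I$ is the \emph{smallest} one. Your termination remark (``strictly simpler tests'') gestures toward the missing well-foundedness but never supplies a measure, so as written that part of the argument does not close.

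The clean resolution---and what the paper does---is to drop $\mathcal{G}$ and observe that plain induction on $k$ already handles the obstacle you flag. In the $\iamdvar$ case the induced state $\indstate\lpos=\ustate{\var}{\ctxp{\la\var\ctxtwo_n}}{\stempty}{\tlog_n\cdot\tlog}$ differs from the predecessor $\statetwo=\dstate{\var}{\ctxp{\la\var\ctxtwo_n}}{\tape}{\tlog_n\cdot\tlog}$ only in tape and direction; by \reflemma{outer-inv} it therefore has exactly the same log tests as $\statetwo$ and, having empty tape, no tape tests at all. Since $\statetwo\in\exstates_I$ by the induction hypothesis, every test of $\indstate\lpos$ exhausts to a target already in $\exstates_I$, whence $\indstate\lpos\in\exstates_I$ directly (the defining operator is monotone and $\exstates_I$, being its least fixed point, is a fixed point). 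No closure of a candidate set under $\indstate{\cdot}$ is needed. A similar remark applies to $\iamujump$, which you treat too quickly: the log $\tlogtwo$ of the target state is not a suffix of the source's log but the \emph{inner} log of the consumed logged position, so the target's log tests are not literally among the source's. Rather, the target coincides (up to tape) with $\indstate{(\var,\ctxtwo,\tlogtwo)}$, which is I-exhaustible precisely because it is the exhausting target of the source's top log test and $\statetwo\in\exstates_I$.
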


\begin{proof}
	By
	induction on $k$. For $k=0$ there is nothing to prove because the tape has no logged positions (so it does not decompose) and $\state$ has no outer
	state. Then suppose
	$\run':\state_0\toliam^{k-1}\statetwo$ and that the run continues with $\statetwo\toljam\state$. By \ih, $\statetwo$ is 
I-exhaustible.

\emph{Terminology}: when a test state satisfies the clause in the definition of I-exhaustible states we say that it is 
\emph{positive}. 

	There are many cases to take into account, depending on the transition used 
	to 	move from $\statetwo$ to $\state$---the cases for $\resm$ are rather trivial, the other ones instead are subtle, 
	the subtlest one being the jump, that is, transition $\iamujump$ (it is the last case). First, suppose that $\pol = 
\downp$. Cases of 
	$\statetwo\toliam\state$:
	\begin{enumerate}
		\item Transition $\tomachdotone$:		
		$$
		\statetwo = \dstate{ \tmthree\tmfour }{ \ctx }{ \tape }{ \tlog } 
		\iamdap 
		\dstate{ \tmthree }{ \ctxp{\ctxhole\tmfour} }{ \resm\cons \tape }{ \tlog 
		} = \state.
		$$		
		\begin{itemize}
		\item \emph{Log tests}. Positivity of log tests follows from	
		\reflemmap{outer-inv}{three} and the \ih: $\statetwo$ is a head 
		translation of $\state$, and the	lemma states that they have the 
		same log tests, which are positive because $\statetwo$ is I-exhaustible 
		by \ih
		
		\item \emph{Tape tests}. The direction is $\downp$ and by \reflemma{jam-simple-tape}, the tape of  $\state$ has no logged positions, and so there are not tape tests.
		\end{itemize}
		 
		\item Transition $\tomachdottwo$:
			$$\statetwo = \dstate{ \la\var\tmthree }{ \ctx }{ \resm\cdot\tape }{ \tlog } 
				\iamdlamone 
				\dstate{ \tmthree }{ \ctxp{\la\var\ctxhole} }{ \tape }{ \tlog } = \state$$
		Exactly as the previous case.

		\item Transition $\tomachvar$:
		$$
		\statetwo = \dstate{ \var }{ \ctxp{\l\var.\ctxtwo_n} }{ \tape }{ 
		\tlog_n\cdot\tlog } 
		\iamdvar 
		\ustate{ \l\var.\ctxtwo_n\ctxholep\var}{ \ctx }{ 
		(\var,\ctxp{\l\var.\ctxtwo_n},\tlog_n\cdot\tlog)\cdot\tape }{ \tlog } = 
		\state
		$$			
		\begin{itemize}
		\item \emph{Log tests}. By \reflemmap{outer-inv}{four}, all 
		log tests of $\state$
		are also log tests of $\statetwo$. Since the latter is
		I-exhaustible by \ih, then all these tests are positive.
			
		\item \emph{Tape tests}. Let $\lpos\defeq (\var,\ctxp{\l\var.\ctxtwo_n},\tlog_n\cdot\tlog)$.  The only tape state 
of $\state$ is 
		$\state_{\lpos}\defeq \dstate{ \l\var.\ctxtwo_n\ctxholep\var}{ \ctx }{ 
			\lpos}{ \tlog }$ and the one-step run
		\[\begin{array}{rllll}
		\runtwo: \JAMtoIAM{\state_{\lpos}}=&
		\dstate{ \l\var.\ctxtwo_n\ctxholep\var}{ \ctx 	}{ \JAMtoIAM\lpos}{ \JAMtoIAM\tlog }
		\\
		=&
		\dstate{ \l\var.\ctxtwo_n\ctxholep\var}{ \ctx 	}{ (\var,\l\var.\ctxtwo_n,\JAMtoIAM{\tlog_n})}{ \JAMtoIAM\tlog }
		\\
		\tomachbttwodecp{\JAMtoIAM\lpos}
		&
		\ustate{ \var }{ \ctxp{\l\var.\ctxtwo_n} }{ 
			\stempty }{ 
				\JAMtoIAM{\tlog_n}\cdot\JAMtoIAM\tlog } 
		\\
		=&\JAMtoIAMstate{\ustate{ \var }{ \ctxp{\l\var.\ctxtwo_n} }{ 
			\stempty }{ 
				\tlog_n\cdot\tlog } } &=\JAMtoIAM{\indstate\lpos}
				\end{array}
		\]
		exhausts $\lpos$ as required. 		
		Now, we prove that $\indstate\lpos$ is I-exhaustible.  Note that $\indstate\lpos$ is $\statetwo$ with 
empty tape, so they have the same log tests, which are positive because $\statetwo$ is I-exhaustible by \ih, and 
$\indstate\lpos$ has no tape test.
		\end{itemize}		
	\end{enumerate}
	Now, suppose that $\pol = \upp$. Cases of
	$\statetwo\toliam\state$:
	\begin{enumerate}
		\item Transition $\tomachdotthree$:
		$$
		\statetwo = \ustate{ \tmtwo }{ \ctxtwop{\ctxhole\tmthree} }{ \resm\cdot\tape }{ \tlog } 
		\iamuapltwo 
		\ustate{ \tmtwo\tmthree }{ \ctxtwo }{ \tape }{ \tlog } = \state.
		$$
		\begin{enumerate}
			\item	\emph{Log tests}. 	Positivity of log tests follows from \reflemmap{outer-inv}{three} and the \ih: 
$\statetwo$ is a head translation of $\state$, and the lemma states that they have the same log tests, which are 
positive because $\statetwo$ is I-exhaustible by \ih 

			\item \emph{Tape tests}. The direction of $\state$ is $\upp$ and by \reflemma{jam-simple-tape}, the tape of 
$\state$ has exactly one logged positions $\lpos$, and so just one tape test $\state_\lpos$. Note that $\statetwo$ also 
has a tape test $\statetwo_\lpos$ and that by \ih it is positive, that is, there is a run 
$\runtwo:\JAMtoIAM{\statetwo_\lpos} \toliam^*\tomachbttwodecp{\JAMtoIAM\lpos} \JAMtoIAM{\indstate\lpos}$ with 
$\indstate\lpos$ I-exhaustible. Since the direction of $\state_\lpos$ and $\statetwo_\lpos$ is 
$\downp$, we have a run $\runthree: \JAMtoIAM{\state_\lpos} \iamdap \JAMtoIAM{\statetwo_\lpos} \toliam^*\tomachbttwodecp{\JAMtoIAM\lpos}
\JAMtoIAM{\indstate\lpos}$ prefixing $\runtwo$ with a step and exhausting $\state_\lpos$.
		\end{enumerate}
		
		\item Transition $\tomachdotfour$
		$$
		\statetwo = \ustate{ \tmtwo }{ \ctxtwop{\la\var\ctxhole} }{ \tape }{ \tlog } 
		\iamulam 
		\ustate{ \la\var\tmtwo }{ \ctxtwo }{ \resm\cdot\tape }{ \tlog } = \state.
		$$
		This case is exactly as the previous one.
		
		\item Transition $\tomacharg$:
		$$
		\statetwo = \ustate{ \tmtwo }{ \ctxtwop{\ctxhole\tmthree} }{ 
		\lpos\cdot\tape }{ \tlog } 
		\iamuaplone 
		\dstate{ \tmthree }{ \ctxtwop{\tmtwo\ctxhole} }{ \tape }{ 
		\lpos\cdot\tlog } = \state.
		$$
		
		\begin{enumerate}
			\item
			\emph{Log tests}. The log tests of
			$\state$ are those of $\statetwo$ plus
			$\state_\lpos=\ustate{ \tmthree }{ \ctxtwop{\tmtwo\ctxhole} }{ 
			\epsilon}{ 
			\lpos\cdot\tlog }$.
			The former are positive because of the \ih, while
			about the latter, observe that
			\begin{equation}
			\label{eq:step-in-proof-JAMtoIAM}
			\begin{array}{rll}
				\JAMtoIAM{\state_\lpos}=
				&
				\JAMtoIAMstate{\ustate{ \tmthree }{ \ctxtwop{\tmtwo\ctxhole} }{ 	\epsilon}{ \lpos\cdot\tlog }}
				\\
				=&
				\ustate{ \tmthree }{ \ctxtwop{\tmtwo\ctxhole} }{ 	\epsilon}{ \JAMtoIAM\lpos\cdot\JAMtoIAM\tlog }
				\\
				\tomachbtone&
				\dstate{ \tmtwo }{ \ctxtwop{\ctxhole\tmthree} }{ \JAMtoIAM\lpos }{ \JAMtoIAM\tlog }
				\\
				=&
				\JAMtoIAMstate{\dstate{ \tmtwo }{ \ctxtwop{\ctxhole\tmthree} }{ \lpos }{ \tlog }}=\statetwo_\lpos.
				\end{array}				
		\end{equation}
			Note that $\statetwo_\lpos $ is a tape test of $\statetwo$. By \ih, there is a run $\runtwo: 
\JAMtoIAM{\statetwo_\lpos}\toliam^*\tomachbttwodecp{\JAMtoIAM\lpos} \JAMtoIAM{\indstate\lpos}$ such that $\indstate\lpos$ is I-exhaustible. Now, the run 
for the test of interest is $\runthree: \JAMtoIAM{\state_\lpos}\toliam\JAMtoIAM{\statetwo_\lpos}\toliam^*\tomachbttwodecp{\JAMtoIAM\lpos}
\JAMtoIAM{\indstate\lpos}$, obtained by prefixing $\runtwo$ with the step in \refeq{step-in-proof-JAMtoIAM}. 
			
			\item \emph{Tape tests}. The direction is $\downp$ of $\state$ and by \reflemma{jam-simple-tape}, the tape of  
$\state$ has no logged positions, and so there are not tape tests for of $\state$.
		\end{enumerate}
		
		\item Transition $\tomachjump$:
		$$    \statetwo = \ustate{ \tmtwo }{ \ctxtwop{\tmthree\ctxhole} }{ 
		\tape }{ (\var,\ctx,\tlogtwo)\cdot\tlog }
		\iamujump 
		\ustate{ \var }{ \ctx }{ \tape }{ 
		\tlogtwo } = \state.
		$$	
		Let $\lpos \defeq (\var,\ctx,\tlogtwo)$.
		\begin{enumerate}
			\item	\emph{Log tests}:	By \ih, $\statetwo$ is 
I-exhaustible, and since $\statetwo_{\lpos}=\ustate{ 	\tmtwo }{ \ctxtwop{\tmthree\ctxhole} }{ 				
\epsilon }{ (\var,\ctx,\tlogtwo)\cdot\tlog }$ is a log test of 
				$\statetwo$, then it is positive and there exist a run
				\[\begin{array}{rllll}
					\runtwo:\JAMtoIAM{\statetwo_{\lpos}}&					
					\toliam^*\tomachbttwodecp{\JAMtoIAM\lpos}&
					\JAMtoIAM{\indstate{\lpos}} 
				\end{array}
			\]
			where $\indstate\lpos$ is I-exhaustible. 
By \reflemmap{outer-inv}{two}, 
				$\state$ and $\indstate{\lpos}$ have the same log tests, which are then positive.

			\item \emph{Tape tests}. 
			Since the direction of $\state$ is $\upp$, 
			by \reflemma{jam-simple-tape} $\sizee{\tape}=1$, there is 
			only one possible decomposition: $\tape=\tapetwo\cdot \lpos\cdot \tapethree$. Then 
			the only tape test of $\state$ is
			\[
			\state_{\lpos}=\dstate{\var}{\ctx}{\tapetwo\cdot \lpos}{\tlogtwo}
			\]
			and the only tape test of $\statetwo$ is
			\[
			\statetwo_{\lpos}=\dstate{ \tmtwo }{ \ctxtwop{\tmthree\ctxhole} }{ \tapetwo\cdot \lpos }{ 
(\var,\ctx,\tlogtwo)\cdot\tlog }
			\]
			that by \ih is positive and so there is a run $\runtwo:\JAMtoIAM{\statetwo_\lpos}\toliam^*\tomachbttwodecp{\JAMtoIAM\lpos} 
\JAMtoIAM{\indstate\lpos}$ with $\indstate\lpos$ I-exhaustible. 
			
			Now, we show that $\JAMtoIAM{\state_\lpos}\toliam^+\JAMtoIAM{\statetwo_\lpos}$, that proves the positivity of the 
tape tests, using an argument analogous to the one for the log tests. Let $\lpostwo \defeq (\var,\ctx,\tlogtwo)$ and 
consider the state $\statetwo_{\lpostwo} \defeq \ustate{ \tmtwo }{ \ctxtwop{\tmthree\ctxhole} }{ \stempty }{ 
\lpostwo\cdot\tlog }$, that it is a log test of $\statetwo$. By \ih, it is positive, thus there is a run 
$\runthree:\JAMtoIAM{\statetwo_{\lpostwo}} 
\toliam^*\tomachbttwodecp{\JAMtoIAM{\lpostwo}} \JAMtoIAM{\indstate\lpostwo}$. 
By {reversibility},  we obtain 
a run $\runthree': \JAMtoIAM{\indstate\lpostwo}^\bot \toliam^+\JAMtoIAM{\statetwo_{\lpostwo}}^\bot$, where $\cdot^\bot$ 
is the operation on states that changes the direction. Explicitly, we have:
			\[
			\runtwo':\JAMtoIAM{\indstate\lpostwo}^\bot=\JAMtoIAMstate{\dstate{ \var }{ \ctx }{ \stempty }{ \tlogtwo }}
			\toliam^+
			\JAMtoIAMstate{\dstate{ \tmtwo }{ \ctxtwop{\tmthree\ctxhole} }{ 	\stempty }{ (\var,\ctx,\tlogtwo)\cdot\tlog }}
			=\JAMtoIAM{\statetwo_{\lpostwo}}^\bot
		\]
		By \reflemma{iam-pumping}, we can lift the run to states extended with the tape $\tapetwo\cdot \lpos$, obtaining:
		\[
			\runthree'':\JAMtoIAM{\state_\lpos}
			=\JAMtoIAMstate{\dstate{ \var }{ \ctx }{ \tapetwo\cdot \lpos }{ \tlogtwo }}
			\toliam^+
			\JAMtoIAMstate{\dstate{ \tmtwo }{ \ctxtwop{\tmthree\ctxhole} }{ 	\tapetwo\cdot \lpos }{ (\var,\ctx,\tlogtwo)\cdot\tlog }}
			=\JAMtoIAM{\statetwo_\lpos}
		\]
				
The run $\JAMtoIAM{\state_\lpos}\toliam^+ \JAMtoIAM{\indstate\lpos}$ obtained by concatenating $\runthree''$ and 
$\runtwo$ exhausts $\state_\lpos$.
		\end{enumerate}
		
\end{enumerate}
\end{proof}

\begin{theorem}[\LIAM and \LJAM relationship]
\label{thm:ij-relationship}
\hfill
\begin{enumerate}
	\item \emph{\LJAM to \LIAM}: for every \LJAM run $\run_J: 
	\state_\tm^{\text{\LJAM}} 
	\toljam^* \state$ there exists a \LIAM run\\ $\JAMtoIAM{\run_J}: 	
	\JAMtoIAM{\state_\tm^\text{\LJAM}} \toliam^* \JAMtoIAM\state$ such that 
	$\size{\run_I} \geq 
\size{\run_J}$ and $\sizevar{\run_I} \geq 
\sizevar{\run_J}$.
	
	\item \emph{\LIAM to \LJAM}: for every \LIAM run $\run_I: 
	\state_\tm^\text{\LIAM} \toliam^* \state$ there exist a 
\LJAM run\\ $\run_J: \state_\tm^\text{\LJAM} \toljam^* \statetwo$ and a \LIAM run 
$\runtwo_I:\state 
\toliam^*\JAMtoIAM\statetwo$ such that $\run_I \runtwo_I = \JAMtoIAM{\run_J}$.
	
	\item \emph{Termination and \LJAM implementation}: 
	$\terminates{\text{\LIAM}}\tm$ if and only if 
	$\terminates{\text{\LJAM}}\tm$. 
Therefore, the \LJAM implements \ccbn.
\end{enumerate}
\end{theorem}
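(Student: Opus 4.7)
The plan is to bootstrap the theorem from Lemma 6.5 (Jumps simulation via backtracking), using bi-determinism of both machines as the backbone. The non-jump transitions of the \LJAM are syntactically identical to \LIAM transitions once we apply $\JAMtoIAM{\cdot}$: by inspection of the two tables in Figures 1 and 2, each shared transition acts on positions, tapes, and logs in exactly the same way modulo the fact that \LJAM \trposs carry a global context/log while \LIAM \trposs carry only the local part---precisely what the translation $\JAMtoIAM{\cdot}$ forgets. So for every \LJAM reachable state $\state$ and every non-jump transition $\state \toljam \statetwo$, we get a single \LIAM step $\JAMtoIAM\state \toliam \JAMtoIAM\statetwo$ with the same label.

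For Part (1), I would proceed by induction on $|\run_J|$. The empty case is immediate. For $\run_J = \run_J' \cdot (\state \toljam \statetwo)$, the IH supplies $\JAMtoIAM{\run_J'}: \JAMtoIAM{\state_\tm^{\text{\LJAM}}} \toliam^* \JAMtoIAM\state$ satisfying the bounds. If the new step is not a jump, we append one matching \LIAM step. If it is a jump $\state \tomachhole{\jumpsym,\lpos} \statetwo$, we apply Lemma 6.5 to obtain $\JAMtoIAM\state \tomachbtonedecp{\JAMtoIAM\lpos} \toliam^* \tomachbttwodecp{\JAMtoIAM\lpos} \JAMtoIAM\statetwo$, which is strictly longer than a single step. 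The $\sizevar$ bound is preserved because all transitions produced by the backtracking simulation lie in $\{\tomachbtone,\tomachbttwo,\tomachdotone,\tomachdottwo\}$, none of which is $\iamdvar$, and the non-jump case trivially preserves $\iamdvar$ counts.

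Part (2) follows by aligning the unique \LIAM trajectory from $\state_\tm^{\text{\LIAM}}$ with the unique \LJAM trajectory from $\state_\tm^{\text{\LJAM}}$ under $\JAMtoIAM{\cdot}$. Using Part (1) and bi-determinism of the \LIAM, the \LIAM run $\JAMtoIAM{\run_J}$ produced by any maximal \LJAM run is a prefix of, or extends, any \LIAM run from $\state_\tm^{\text{\LIAM}}$. Concretely, take the longest \LJAM prefix $\run_J$ such that $\JAMtoIAM{\run_J}$ is a prefix of $\run_I$; let $\statetwo$ be its target. If $|\JAMtoIAM{\run_J}| \geq |\run_I|$, set $\runtwo_I$ to be the remaining segment of $\JAMtoIAM{\run_J}$, which finishes an in-progress backtracking block; otherwise, $\run_I$ still has a next transition, which by bi-determinism must be the first transition of some further \LJAM step (either a direct match or the start of another backtracking block), contradicting the choice of $\run_J$ unless $\run_I$ ends exactly at $\JAMtoIAM\statetwo$, in which case $\runtwo_I$ is empty.

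Part (3) is now a formality. If $\run_J$ is a complete \LJAM run, its target has shape $\dstate{\la\var\tmtwo}{\ctx}{\epsilon}{\tlog}$; this shape is preserved verbatim by $\JAMtoIAM{\cdot}$, and no \LIAM transition applies to it, so $\JAMtoIAM{\run_J}$ is complete. Conversely, if $\run_I$ is complete, Part (2) gives $\run_J$ and a trailing $\runtwo_I$ with $\run_I\runtwo_I = \JAMtoIAM{\run_J}$; since $\run_I$ is already final, $\runtwo_I$ is empty and $\run_J$'s target translates to the \LIAM final state---hence has \LJAM-final shape. The main technical subtlety---that backtracking can always be completed---is already bundled into Lemma 6.5 via the I-exhaustible invariant (Lemma 6.4), so the only residual difficulty lies in the bookkeeping of Part (2) when a complete \LIAM run stops midway through a simulated jump; this is exactly what the $\runtwo_I$ slack in the statement accommodates.
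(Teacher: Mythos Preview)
Your overall strategy matches the paper's and Part~3 is fine, but there are two concrete gaps.

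\textbf{Part 1.} Your claim that the backtracking block simulating a jump uses only transitions in $\{\tomachbtone,\tomachbttwo,\tomachdotone,\tomachdottwo\}$ is false. Take $\tm_4=((I\,I)\,I)\,I$: the \LJAM jumps when exiting the third $I$ from the application $(I\,I)\,I$, and the corresponding \LIAM backtracking block does $\tomachbtone$ into $I\,I$, then $\tomachdotone$, $\tomachdottwo$, and hits the head variable of the first $I$, firing $\tomachvar$. More generally, backtracking can nest and may use every \LIAM transition---this is precisely the source of the exponential gap. The paper does not make your claim: it just observes that the backtracking block contributes $\sizevar{\runthree_I}\geq 0$, so $\sizevar{\run_I}\geq\sizevar{\run_J}$ follows trivially. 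Your conclusion is right; your justification is not.

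\textbf{Part 2.} Your maximality argument is internally inconsistent. You choose $\run_J$ so that $\JAMtoIAM{\run_J}$ is a prefix of $\run_I$, hence $|\JAMtoIAM{\run_J}|\leq|\run_I|$ always; your case ``$|\JAMtoIAM{\run_J}|\geq|\run_I|$'' therefore only covers equality, where $\runtwo_I$ is empty and there is no ``in-progress backtracking block'' to finish. What you actually need is to extend $\run_J$ by \emph{one more} \LJAM step whose translation overshoots $\run_I$, and take $\runtwo_I$ to be the overshoot. But for that you must argue that the \LJAM \emph{can} step from $\statetwo$ whenever the \LIAM can step from $\JAMtoIAM\statetwo$. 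This is not a consequence of bi-determinism alone: it requires the case analysis the paper carries out. In particular you must rule out that the next \LIAM transition from $\JAMtoIAM\statetwo$ is $\tomachbttwo$, which would need a logged position on the tape in direction $\downp$---impossible for $\JAMtoIAM\statetwo$ by the \LJAM tape-and-direction invariant (\reflemma{jam-simple-tape}). The paper handles Part~2 by induction on $|\run_I|$ with exactly this case analysis; your global approach could be repaired along the same lines, but not without this missing step.
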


\begin{proof}\hfill
	\begin{enumerate}
	\item We proceed by induction on the length of $\run_J$. If $\size{\run_J}=0$ 
	there is nothing to prove. Now, let us consider 
	$\run_J:\state_\tm\toljam^*\statetwo\toljam\state$. Considering the property 
	true for the reduction $\runtwo_J:\state_\tm\toljam^*\statetwo$, we prove that 
	it is 
	true for $\run_J$. In particular, there exists a reduction 
	$\runtwo_I:\JAMtoIAM{\state_\tm}\toliam^*\JAMtoIAM\statetwo$ such that 
	$\size{\runtwo_I} \geq \size{\runtwo_J}$ and $\sizevar{\runtwo_I} \geq \sizevar{\runtwo_J}$ . We 
	proceed 
	considering all the possible transitions from 
	$\statetwo$ to $\state$.
	\begin{itemize}
		\item Transitions $\tomachdotone,\tomachdottwo,\tomachdotthree,\tomachdotfour, 
		\tomacharg$. This group of transitions behaves identically, modulo 
		$\JAMtoIAM\cdot$ in the two machines. Then $\size{\run_J}=1+\size{\runtwo_J}\leq_\ih
			1+\size{\runtwo_I}=\size{\run_I}$ and $\sizevar{\run_J}=\sizevar{\runtwo_J}\leq_\ih
			\sizevar{\runtwo_I}=\sizevar{\run_I}$.

		\item Transition $\iamdvar$. This transition behaves identically, 
		modulo 
		$\JAMtoIAM\cdot$, in the two machines. 
Therefore, $\size{\run_J}=1+\size{\runtwo_J}\leq_\ih
			1+\size{\runtwo_I}=\size{\run_I}$, and exactly the same sequence of (in)equalites holds with respect to 
$\sizevar\cdot$.

		\item Transition $\iamujump$. This is the only non trivial case. If $\state \tomachhole{\jumpsym,\lpos} \statetwo$ 
then by the simulation of jumps via backtracking (\reflemma{jumps-simulation}) we have a run 
$\runthree_I:\JAMtoIAM\state\tomachbtonedecp{\JAMtoIAM\lpos} 
\toliam^*\tomachbttwodecp{\JAMtoIAM\lpos}\JAMtoIAM\statetwo$. Then we define 
$\run_I$ as $\runtwo_I$ followed by 
$\runthree_I$, so that $\size{\run_J}=1+\size{\runtwo_J}\leq_\ih 1+\size{\runtwo_I}< \size{\runthree_I} + 
\size{\runtwo_I} = \size{\run_I}$ and $\sizevar{\run_J}=\sizevar{\runtwo_J}\leq_\ih \sizevar{\runtwo_I}\leq 
\sizevar{\runthree_I} + 
\sizevar{\runtwo_I} = \sizevar{\run_I}$.
	\end{itemize}

	\item By induction on the length of $\run_I$. If $\timem{\run_I}=0$ 	there is nothing to prove. Now, let us consider 	
$\run_I:\state_\tm\toliam^*\state_1\toliam\state$. Considering the property 
	true for the reduction $\run_I':\state_\tm\toliam^*\state_1$, we prove that 
	it is true for $\run_I$. By \ih, there are 
runs $\run_J': \state_\tm \toljam^*\state_2$ and $\runtwo_I': \state_1 \toliam^* 
\JAMtoIAM{\state_2}$ such that 
$\run_I'\runtwo_I' = \JAMtoIAM{\run_J'}$. If $\runtwo_I'$ is non-empty then by determinism of the \LIAM we are done, 
because $\runtwo_I'$ has to pass through $\state$ and the suffix $\runtwo_I$ of $\runtwo_I'$ starting on $\state$ 
proves 
the statement. If $\runtwo_I'$ is empty then $\state_1 = \JAMtoIAM{\state_2}$. Then consider the cases of transition 
$\state_1\toliam\state$:
	\begin{itemize}
		\item Transitions $\iamdap,\iamdlamone,\iamuapltwo,\iamulam, \iamuaplone, \iamdvar$. The \LJAM can do the same step 
and close the diagram, as these transitions behaves identically, modulo $\JAMtoIAM\cdot$, in the two machines. 
		\item Transition $\iamdlamtwo$. Impossible because then the state $\state_1 = \JAMtoIAM{\state_2}$ would have 
direction $\downp$, have a logged position on the tape, and be the projection of a \LJAM state---by the direction and 
tape invariant of the \LJAM such states have no logged positions on the tape.
		\item Transition $\iamuapr$. Then the \LJAM can make a jump and we can close the diagram using the simulation of 
jumps via backtracking (\reflemma{jumps-simulation}), as in the previous point of the theorem.	
	\end{itemize}
	
	\item The first two points of the theorem provide the proof that $I$ is a 
	bisimulation between the \LIAM and the \LJAM. Clearly, $I$ preserves 
	termination.
\end{enumerate}
\end{proof}
\refthm{ij-relationship} immediately implies the following more concise statement given in the body of the 
paper (as \refthm{ij-concise}).

\begin{corollary}[\LIAM and \LJAM relationship]
\label{coro:ij}
There is a complete \LJAM run $\run_J$ from  $\tm$ if and only if there 
is a complete \LIAM 
run $\run_I$ from $\tm$. In particular, the \LJAM implements \ccbn. Moreover, 
$\size{\run_J}\leq \size{\run_I}$ and $\sizevar{\run_J}\leq \sizevar{\run_I}$.
\end{corollary}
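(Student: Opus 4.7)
The plan is to derive this corollary directly from Theorem~\ref{thm:ij-relationship}, which already establishes the bisimulation and the relevant inequalities between corresponding runs of the two machines. The only non-automatic part is checking that completeness is preserved in both directions of the bisimulation, since the theorem is phrased in terms of arbitrary runs rather than complete ones. The inequalities $\size{\run_J}\leq\size{\run_I}$ and $\sizevar{\run_J}\leq\sizevar{\run_I}$ will simply be inherited from the \LJAM-to-\LIAM direction of the theorem.

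For the forward direction, I would assume a complete \LJAM run $\run_J$ from $\tm$, ending in a state of the shape $\dstate{\la\var\tmtwo}{\ctx}{\epsilon}{\tlog}$ (as noted in \refsect{jam} the final states have this form). Applying the first clause of Theorem~\ref{thm:ij-relationship} yields a \LIAM run $\JAMtoIAM{\run_J}$ to the state $\JAMtoIAM{\dstate{\la\var\tmtwo}{\ctx}{\epsilon}{\tlog}}$. A quick inspection of the translation $\JAMtoIAM{\cdot}$ shows that it preserves the shape of final states: position, direction, empty tape, and (the translated) log are all of the required form for a \LIAM final state, and by inspection no \LIAM transition applies. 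Hence the resulting \LIAM run is complete, and the length and variable-transition inequalities come for free.

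For the converse, I would start from a complete \LIAM run $\run_I$ from $\tm$ and invoke the second clause of the theorem to obtain a \LJAM run $\run_J$ together with a \LIAM run $\runtwo_I$ such that $\run_I\runtwo_I = \JAMtoIAM{\run_J}$. Since $\run_I$ is already complete, $\runtwo_I$ must be empty by determinism of the \LIAM (no transition applies at the end of $\run_I$), so $\run_I = \JAMtoIAM{\run_J}$. It remains to argue that $\run_J$ is complete, \emph{i.e.}\ that the \LJAM is stuck at its target state $\statetwo$. This follows because any \LJAM transition from $\statetwo$ would, under $\JAMtoIAM{\cdot}$ and by the jump-simulation lemma (\reflemma{jumps-simulation}) together with the obvious one-to-one simulation of the other transitions, extend $\JAMtoIAM{\run_J}=\run_I$ by at least one \LIAM transition, contradicting completeness of $\run_I$.

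Finally, the "in particular" clause is immediate: the \LIAM implements \ccbn\ (the theorem at the end of \refsect{IJK}), so $\terminates{\text{\LIAM}}{\tm}$ iff $\towh$ terminates on $\tm$; combined with the termination equivalence just established, this gives $\terminates{\text{\LJAM}}{\tm}$ iff $\towh$ terminates on $\tm$, \emph{i.e.}\ the \LJAM implements \ccbn. The main (mild) obstacle is the completeness preservation argument in the \LIAM-to-\LJAM direction, which relies on the fact that $\JAMtoIAM{\cdot}$ faithfully accounts for every \LJAM transition—routine once the jump-simulation lemma is in hand.
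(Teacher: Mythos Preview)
Your proposal is correct and follows the same route as the paper: the corollary is derived directly from Theorem~\ref{thm:ij-relationship}, and the paper simply states that it ``immediately implies'' the corollary without spelling out details. Your argument is a faithful unpacking of that implication; the only remark is that you could shortcut the termination-equivalence part by invoking clause~3 of the theorem directly (which already gives $\terminates{\text{\LIAM}}\tm \Leftrightarrow \terminates{\text{\LJAM}}\tm$ and the implementation claim), leaving only the inequalities to extract from clause~1 plus determinism.
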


\section{Proofs of \refsect{hopping} (Hopping is Also Exhausting)}
\label{sect:hopping-app}
\begin{lemma}[\HAM basic invariants]\label{l:ham-simple-tape}
  Let $\state = \hamstatenopol{\tm}{\ctx_n}{\tlog}{\env}{\tape}{\pol}$ be a \HAM reachable state. Then 
  \begin{enumerate}
  	\item \emph{Position and log}: $(\tm,\ctx_n, \tlog)^\env$ is a closed position, and 
	\item \emph{Tape and direction}: if $\pol=\downp$, then $\tape$ does not contain any closed positions, 
    otherwise, if $\pol=\upp$, then $\tape$ contains exactly one closed position.
  \end{enumerate}
\end{lemma}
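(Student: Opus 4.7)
The proof goes by induction on the length of a run $\run: \state_\tm^{\HAM} \toham^* \state$. The base case is immediate: an initial state $\state_\tm$ has position $(\tm,\ctxhole,\epsilon)^\stempty$ (the empty log matches the level $0$ of $\ctxhole$), an empty tape, and direction $\downp$, so both clauses hold trivially.

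Before carrying out the inductive step I would strengthen the invariant to also assert a well-formedness property on the data structures themselves: every logged closure $(\tmtwo,\ctxtwo_{n+1},E')^{\tlogtwo_n}$ occurring in $\env$ or in $\tape$, and every closed position $(\var,\ctx_n,\tlog_n)^{E}$ occurring in $\tape$ or in $\tlog$, is itself well-formed in the sense of the first clause (the length of the inner log matches the level of the inner context, and this is true hereditarily inside $E'$, $E$). Without this auxiliary invariant, transitions $\tomachvark$ and $\iamujump$, which restore state information retrieved from $\env$ or the top of $\tlog$, cannot be shown to preserve the position-and-log clause. This additional requirement lifts the analogous property of the \LJAM (\reflemma{jam-simple-tape}) to the enriched \HAM data, exactly along the lines used in the \LJAM proof.

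\textbf{Inductive step.} I would case-analyze the eight \HAM transitions of \reffig{jkam}. The pure $\downp$-to-$\downp$ steps $\tomachdotoneapp$ and $\tomachdottwoabs$ balance the $\pm 1$ change in the level of the context with the addition or absorption of a \emph{logged closure} (not a closed position) on the tape, so both clauses are preserved and the auxiliary invariant holds because the new logged closure is built from already well-formed components. Transition $\tomachvarj$ drops $\tlogn$ from the log while decreasing the level of the code context by $n$, pushes a fresh well-formed closed position $\clpos$ onto $\tape$, and flips the direction from $\downp$ to $\upp$, matching the tape-and-direction clause. The crucial case is $\tomachvark$: by the auxiliary invariant applied to the entry $\esub\var\lclos$ of $\env$ with $\lclos=(\tmtwo,\ctxtwo,F)^{\tlogtwo}$, the retrieved pair $(\tmtwo,\ctxtwo,\tlogtwo)^F$ is already a closed position, and the tape is untouched; the direction remains $\downp$ as required. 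The $\upp$ cases $\tomachdotthree$ and $\tomachdotfour$ only shuffle a logged closure between $\tape$ and $\env$ and change the level of $\ctx$ by one, leaving $\tlog$ and any closed position on $\tape$ unchanged. Transition $\tomacharg$ pops the unique closed position $\clpos$ from $\tape$, pushes it onto $\tlog$ (increasing its length by one to match the increase of the context level), and switches direction from $\upp$ to $\downp$. Finally $\iamujump$ uses the auxiliary invariant on the topmost entry $\clpos=(\var,\ctxtwo,\tlogtwo)^{\envtwo}$ of $\tlog$ to conclude that the new position-log-environment triple is well-formed, while $\tape$ keeps its unique closed position and the direction remains $\upp$.

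\textbf{Main obstacle.} The only nontrivial point is the bookkeeping of the strengthened auxiliary invariant across the variable and application/jump transitions: one must check that the logs and environments embedded inside the closures pushed by $\tomachvarj$ and $\tomacharg$ are themselves well-formed before the step, and that $\tomachvark$ and $\iamujump$ do nothing more than unpack already well-formed data. Once this bookkeeping is set up correctly, the rest reduces to elementary arithmetic on context levels and log lengths, exactly mirroring the \LJAM argument.
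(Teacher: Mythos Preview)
Your approach is the same as the paper's: induction on the length of the run with a case analysis on the last transition. The paper's own proof is a one-liner (``immediate inspection of the transitions using the \ih''), so your more detailed write-up, including the explicit strengthening to hereditary well-formedness of the nested logs and environments, is perfectly in line with what the paper leaves implicit via the grammars for logged closures and closed positions.

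One small correction: your description of the $\resm$-transitions is off. In $\tomachdotoneapp$, $\tomachdottwoabs$, $\tomachdotthree$, $\tomachdotfour$ the level of the code context does \emph{not} change (the hole moves from $\ctx$ to $\ctxp{\ctxhole\tmtwo}$ or $\ctxp{\la\var\ctxhole}$, both still of level $n$), and the log is untouched; so there is no ``$\pm 1$ balancing'' to perform for the position-and-log clause---it is trivially preserved. The logged closure pushed/popped on the tape is relevant only for the tape-and-direction clause. This slip does not break your argument, since your conclusion for each of these cases is still correct, but the stated justification should be fixed.
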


\begin{proof}  
By induction on the length of the run reaching $\state$, together with an immediate inspection of the transitions  using the \ih
\end{proof}

\paragraph{Tape Tests} By the \emph{tape and direction} invariant, there is exactly one closed position $\clpos$ on the tape in direction $\upp$ and none in direction $\downp$. Essentially, we test only the logged closures added to the tape in a $\downp$ phase, which---in $\upp$-states---are those on the right of $\clpos$ on the tape. Moreover, in a $\upp$-state we test them starting on $\clpos$ (which records a $\downp$-state), not from the current position.
\begin{definition}[\HAM Tape tests]
	Let $\state=\hamstatenopol{\tm}{\ctx}{\tlog}{\env}{\tape}{\pol}$ be a \HAM state. 
	Tape tests of $s$ are defined depending on whether there is a closed position $\clpos$ on $\tape$, that is, on 
whether $\sizeclpos{\tape}$ is $1$ or $0$.
	\begin{itemize}
		\item If $\pol= \downp$ then $\state_{\lclos} \defeq \hamstateu{\tm}{\ctx}{\tlog}{E}{\tapetwo}$ is a 
		tape test of $\state$ of focus $\lclos$ for each decomposition $\tape=\tapetwo\cons \lclos \cons\tapethree$ of the 
tape.
		\item If $\pol =\upp$ then $\state_{\lclos}\defeq \hamstateu{\var}{\ctxtwo}{\tlogtwo}{E'}{\tape_1}$ is a tape test 
		of 	$\state$ of focus $\lclos$ for each decomposition
		$\tape=\tapetwo\cons\clpos\cons\tape_1\cons\clos\cons \tape_2$ with $\clpos = (\var,\ctxtwo,\tlogtwo)^{\envtwo}$ of the tape.
	\end{itemize} 
\end{definition}

\begin{lemma}[Invariance properties of \HAM environment tests]
	\label{l:env-inv} 
	Let $\state = \hamstatenopol{\tm}{\ctx_n}{\tlog_n}{\env}{\tape}{\pol}$ be a
	state. Then:
	\begin{varenumerate}
		\item \label{p:env-inv-one} \emph{Direction}: the dual
		$\hamstatenopol{\tm}{\ctx_n}{\tlog_n}{\env}{\tape}{\pol^1}$ of $\state$
		induces the same environment tests;
		
		\item \label{p:env-inv-two} \emph{Tape}: the state
		$\hamstatenopol{\tm}{\ctx_n}{\tlog_n}{\env}{\tapetwo}{\pol}$ obtained from
		$\state$ replacing $\tape$ with an arbitrary tape $\tapetwo$
		induces the same environment tests;
		
		\item \label{p:env-inv-three} \emph{Weak shift}: let weak contexts be defined by $W\grameq\ctxhole\grammarpipe 
		W\tmtwo\grammarpipe\tmtwo W$. Then
		\begin{enumerate}
		\item  if $\tm =
		W\ctxholep\tmthree$, then for every $\tlogtwo$ and $\tapetwo$ the state
		$\hamstatenopol{\tmthree}{\ctx_n\ctxholep W}{\tlogtwo}{\env}{\tapetwo}{\pol}$
		 induces the same environment tests of $\state$.
		
		\item if $\ctx_n = \ctxtwo_h\ctxholep{\ctxthree_k}$ with $\ctxthree_k$ weak context, then 
		for every $\tlogtwo$ and $\tapetwo$ the state
		$\hamstatenopol{\ctxthree_k\ctxholep{\tm}}{\ctx_h}{\tlogtwo}{\env}{\tapetwo}{\pol}$
		induces the same environment tests of $\state$.
		\end{enumerate}
		
		\item \label{p:env-inv-four} \emph{Inclusion}: if $\ctx_n = 
		\ctx_m\ctxholep{\la\var\ctx_i}$, $\tlogn = \tlog_i\cdot \tlog_m$ and 
		$\env=\envtwo\cdot\esub\var \lclos\cdot\envthree$
		then the environment tests of 
		$\hamstatenopol{\la\var\ctx_i\ctxholep\tm}{\ctx_m}{\tlog_m}{\envthree}{\tapetwo}{\pol}$
		are environment tests of $\state$.
	\end{varenumerate}
\end{lemma}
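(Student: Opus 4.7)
The plan is to prove each of the four parts by unfolding the definition of environment test and identifying which components of $\state$ the test $\state_\lclos$ actually depends on. Uniformly, $\state_\lclos$ is always in direction $\upp$ with empty tape, and only the code, outer context, outer log tail, and tail environment carry information from $\state$. Parts (1) and (2) are immediate: the definition of $\state_\lclos$ does not mention the direction $\pol$ and sets the tape component to $\stempty$, so flipping $\pol$ or replacing $\tape$ by an arbitrary $\tapetwo$ produces literally the same set of tests.

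For part (3), the key observation is that weak contexts contain no binders and have level zero at their hole. Writing $\ctx_n = \ctx\ctxholep{\la\var\ctxtwo_k}$ for a generic decomposition isolating an environment entry $\esub\var\lclos$, in case (a) the shifted state has context $\ctx_n\ctxholep W = \ctx\ctxholep{\la\var\ctxtwo_k\ctxholep W}$, where $\ctxtwo_k\ctxholep W$ is still a level-$k$ context. The code of the resulting test is $\la\var\ctxtwo_k\ctxholep{W\ctxholep\tmthree} = \la\var\ctxtwo_k\ctxholep\tm$, matching the original; the outer context $\ctx$ and the environment tail $E$ are unchanged. Case (b) is symmetric: a weak sub-context of $\ctx_n$ is re-associated into the code, and the same observation applies. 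The arbitrariness of $\tlogtwo$ in the statement is accommodated by the fact that the weak shift preserves the level of the outer context, so the below-binder portion of the log can range over all values of the right length, recovering exactly the collection of tests already present for $\state$.

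For part (4), I take any environment test of the inner state focused on an entry $\esub\vartwo\lclostwo$ in $\envthree$, with decomposition $\envthree = F'\cdot\esub\vartwo\lclostwo\cdot F$. The same entry appears in $\env$ through $\env = (\envtwo\cdot\esub\var\lclos\cdot F')\cdot\esub\vartwo\lclostwo\cdot F$. By $\alpha$-hygiene the binder of $\vartwo$ lies in the outer portion $\ctx_m$ (not inside $\la\var\ctx_i$), so the context decomposition and the log split are identical in the two states, and the two constructions yield the same test state.

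The hardest part will be (3): making precise how the quantification over arbitrary $\tlogtwo$ interacts with the implicit decomposition of logs in environment tests, which amounts to verifying that the level invariance of weak contexts forces the allowed below-binder logs on the two sides to coincide. The remaining parts are essentially diagrammatic checks against the definitions, following the same template as the corresponding LIAM invariance lemma (\reflemma{outer-inv}).
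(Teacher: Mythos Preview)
The paper does not prove this lemma: it is stated in the appendix and left without proof, exactly as its \LIAM analogue (\reflemma{outer-inv}) is merely recalled from the cited companion paper. Your plan of unfolding the definition of environment test and checking which components of $\state$ it depends on is the right and only thing to do, and your treatment of parts~(1), (2) and~(4) is fine.

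Part~(3), however, contains a genuine error. You assert that weak contexts ``have level zero at their hole'' and hence that $\ctxtwo_k\ctxholep W$ ``is still a level-$k$ context''. This is false: the production $W \grameq \tmtwo W$ places the hole in argument position, so weak contexts may have arbitrary level (e.g.\ $\tmtwo\ctxhole$ has level~$1$). The property you actually need---and which you do also mention---is that weak contexts contain no $\lambda$-binders. That alone forces every decomposition $\ctx_n\ctxholep W = \ctx\ctxholep{\la\var\ctxtwo'}$ to factor through $\ctx_n = \ctx\ctxholep{\la\var\ctxtwo_m}$ with $\ctxtwo' = \ctxtwo_m\ctxholep W$, so the code $\la\var\ctxtwo'\ctxholep\tmthree = \la\var\ctxtwo_m\ctxholep\tm$, the outer context $\ctx$, and the environment tail $E$ of the test are unchanged, exactly as you say. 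But since the level of $\ctxtwo'$ is $m$ plus the level of $W$, not $m$, the length of the log prefix stripped off when forming the test changes accordingly; your ``level invariance'' argument for why the below-binder log portions coincide does not go through as written. The right accounting is that the extra level contributed by $W$ is matched by the extra prefix of $\tlogtwo$ that gets stripped, so the remaining tail has the same length and---when the lemma is actually used, i.e.\ for the specific $\tlogtwo$ produced by a \HAM transition---the same content as the tail of $\tlog_n$.
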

%

As in \refsect{jam-kam}, we consider the \LJAM and the \KAM as special 
instances of the \HAM. In particular the \LJAM always uses the 
$\tomachvarj$ 
transition, while the \KAM always the transition $\tomachvark$. This way, 
states can be 
compared without any kind of projection.

\begin{lemma}[Logged closures and closed positions were visited]
\label{l:lclosure-were-visited}
Let $\run:\state_\tm \toham^* \state$.
\begin{enumerate}
\item \emph{Logged closures}: if $\lclos = (\tmtwo,\ctxp{\tm\ctxhole},\env)^{\tlog}$ is a logged closure in $\state$ 
then $\run$ passes through a state $\hamstated{\tm}{\ctxp{\ctxhole\tmtwo}}{\tlog}{\env}{\tape}$ for some tape $\tape$.
\item \emph{Closed positions}: if $\clpos = (\tmtwo,\ctx,\tlog)^{\env}$ is a closed position in $\state$ then $\run$ 
passes through a state $\hamstated{\tmtwo}{\ctx}{\tlog}{\env}{\tape}$ for some tape $\tape$.
\end{enumerate}
\end{lemma}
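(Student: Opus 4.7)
The plan is to prove both clauses simultaneously by induction on $|\run|$, after strengthening the statement so that ``in $\state$'' means ``occurring anywhere inside $\state$'', including nested occurrences deep inside environments of logged closures or inside closed positions. This nested reading is what I expect will make the inductive step go through for the transitions that expose a previously buried environment or log.

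For the base case, the initial state $\state_\tm$ contains no logged closures and no closed positions, so both clauses hold vacuously. For the inductive step, I write $\run = \run' \cdot (\state' \toham \state)$. Every logged closure or closed position occurring in $\state$ then falls into one of two groups: either (a) it already occurs somewhere in $\state'$, in which case the inductive hypothesis directly supplies a visited state along $\run'$ (which is a prefix of $\run$); or (b) it is \emph{newly produced} by the transition $\state' \toham \state$, and then I would verify by direct inspection that the visited state demanded by the lemma is precisely $\state'$ (or $\state$) itself, whose shape matches the required pattern.

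The routine cases are the transitions $\tomachdottwoabs$, $\tomachdotthree$, $\tomachdotfour$, $\tomacharg$, which only rearrange existing closures and positions between environment, tape and log without creating new ones, and so reduce to the inductive hypothesis. The creative cases are three: $\tomachdotoneapp$ introduces a fresh logged closure $\lclos=(\tmtwo,\ctxp{\tm\ctxhole},\env)^{\tlog}$ whose target state $\hamstated{\tm}{\ctxp{\ctxhole\tmtwo}}{\tlog}{\env}{\lclos\cons\tape}$ matches clause (1) on the nose; $\tomachvarj$ introduces a closed position whose required visited state is exactly its source state; and similarly for $\tomachvark$, where the closed position $\clpos=(\var,\ctx,\tlog)^{\env}$ pushed onto the new log has the source state of the transition as its required visited state.

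The main obstacle will be $\iamujump$ and, to a lesser extent, $\tomachvark$, because these transitions not only create new top-level objects but also \emph{expose} nested ones: after $\iamujump$ the current environment and log of $\state$ are promoted from inside a closed position $\clpos$ that previously sat on the log of $\state'$, and after $\tomachvark$ the new environment is promoted from inside a logged closure that previously sat in the environment of $\state'$. Without the nested strengthening of the statement, no inductive hypothesis would be available for these previously-buried objects. With it, every exposed logged closure or closed position was already nested somewhere in $\state'$, so the inductive hypothesis applies immediately. Moreover, for $\iamujump$ the visited state demanded for $\clpos$ by clause (2), when applied at $\state'$, coincides with the new current position of $\state$, which is what closes the diagram cleanly.
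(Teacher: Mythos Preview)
Your proposal is correct and follows essentially the same approach as the paper, which simply records ``by induction on the length of $\run$, together with an immediate inspection of the transitions using the \ih''. Your explicit identification of the need to read ``in $\state$'' as covering nested occurrences (so that $\iamujump$ and $\tomachvark$, which promote previously buried environments and logs to top level, remain covered by the inductive hypothesis) is exactly the content hidden behind the paper's terse phrasing; your final remark about $\iamujump$ and ``closing the diagram'' is unnecessary (no new object is created there, so the IH alone suffices), but it does no harm.
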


\begin{proof}  
By induction on the length of $\run$, together with an immediate inspection of the transitions  using the \ih
\end{proof}

\begin{lemma}[$\upp$-exhaustible invariant]
		Let $\state$ be a \HAM reachable state. Then $\state$ is $\upp$-exhaustible.
\end{lemma}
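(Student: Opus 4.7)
The plan is to prove the invariant by induction on the length of the initial run $\run : \state_\tm \toham^* \state$, following the template of the \LIAM exhaustible invariant (\refprop{good-invariant}) and the \LJAM I-exhaustible invariant (\reflemma{good-invariant-jam}). The base case is trivial: the initial state $\state_\tm$ has empty environment, tape, and log, so no logged closure or closed position is available to be tested, and the exhaustibility condition holds vacuously.

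For the inductive step I would case-split on the last transition $\state' \toham \state$, using the induction hypothesis (IH) that $\state' \in \exstates_\upp$. The transitions fall into two groups. The first group---$\tomachdotthree$, $\tomachdotfour$, $\tomachvark$, $\iamujump$, and $\tomacharg$---does not add any fresh logged closure or closed position; every closure appearing in $\state$ is already present in $\state'$, possibly shuffled between tape, environment, and log. For these cases, the tape and environment tests in $\state$ can be recast as tests in $\state'$ via the invariance properties of \reflemma{env-inv} (direction, tape, weak shift, inclusion), so that the exhausting runs given by IH transfer, possibly after tape lifting through \reflemma{ham-pumping}.

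The second group adds fresh closures or positions. Transition $\tomachdotoneapp$ pushes a new logged closure $\lclos = (\tmtwo, \ctxp{\tm\ctxhole}, \env)^\tlog$ on the tape; its tape test of focus $\lclos$ is precisely $\indstate\lclos$ (up to tape lifting), so the exhausting run is empty. Transition $\tomachvarj$ adds a new closed position on the tape, which by \reflemma{lclosure-were-visited} was visited earlier in the run, and again its tape test coincides with its induced state up to lifting. The delicate case is $\tomachdottwoabs$, which moves a closure $\lclos$ from the top of the tape to the top of the environment while stepping through a $\lambda$-binder. Here the new environment test of focus $\lclos$ in $\state$ must be exhausted; by IH the tape test of $\lclos$ in $\state'$ admits an exhausting run, and the weak-shift and inclusion properties of \reflemma{env-inv} allow identifying the new environment test in $\state$ with the old tape test in $\state'$, so that the same exhausting run serves.

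The main obstacle will be the $\tomachdottwoabs$ case, together with the symmetric situation for $\tomachdotfour$, where the position in the code genuinely changes across the transition. Here the alignment between tape tests in $\state'$ and environment tests in $\state$ rests on the weak-shift invariance of the test construction and requires careful bookkeeping of the log, environment, and position shifts. A secondary subtlety is ensuring that the induced states $\indstate\lclos$ are themselves $\upp$-exhaustible---required by the smallest-set definition of $\exstates_\upp$---which I would handle by showing that each $\indstate\lclos$ agrees (up to tape lifting) with a state already reachable from $\state_\tm$, so that it inherits $\upp$-exhaustibility from the IH instance applied to that state.
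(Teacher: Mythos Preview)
Your proposal follows essentially the same approach as the paper: induction on the length of the initial run, case analysis on the last transition, and transfer of tests between $\state'$ and $\state$ via the invariance properties of \reflemma{env-inv} together with tape lifting (\reflemma{ham-pumping}). This is indeed how the $\upp$-exhaustible invariant is established.

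A few small inaccuracies in your case grouping are worth noting, though they do not affect the overall strategy. Transition $\tomachvarj$ adds a \emph{closed position} $\clpos$, not a logged closure; closed positions are not themselves testable objects, so $\tomachvarj$ creates no new test. What actually happens is that the tape tests of $\state$ (computed in $\upp$ mode from $\clpos$) coincide exactly with the tape tests of $\state'$ (computed in $\downp$ mode from the current position), and the environment shrinks, so by inclusion all tests of $\state$ were already tests of $\state'$. Conversely, $\tomachdotfour$ moves a logged closure from the environment to the tape, but since in $\upp$ mode tape tests only focus on closures to the \emph{right} of the unique closed position $\clpos$, the freshly pushed $\lclos$ (which sits to its left) does not acquire a tape test; so this case is also easy. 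The genuinely delicate case is $\tomachvark$: here position, log, and environment all change, and the tape tests of $\state$ must be exhausted starting from the \emph{new} position. The argument requires exhausting the environment test of $\state'$ for the closure $\lclos$ being hopped to, obtaining that $\indstate\lclos\in\exstates_\upp$, and then using weak-shift invariance to identify the tests of $\state$ with those of $\indstate\lclos$ (after prepending the fresh $\clpos$ to the log). Your last paragraph gestures at exactly this mechanism, so the plan is sound.
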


\begin{proof}
	By 	induction on $k$. For $k=0$ there is nothing to prove because $\state=\state_\tm$ has 
	no tests. Then suppose
	$\state_\tm\toham^{k-1}\statetwo\toham\state$. By \ih\ $\statetwo
	= \hamstatenopol{\tmtwo}{\ctx}{\tlog}{\env}{\tape}{\pol}$ is $\upp$-exhaustible, and 
	with
	this hypothesis we need to conclude that $\state$ is $\upp$-exhaustible, too.
	There are many cases to take into account, depending on the transition used 
	to
	move from $\statetwo$ to $\state$.
	
	\emph{Terminology}: when a test satisfies the clause for tests in the definition of $\upp$-exhaustibility, we say 
that 
it is \emph{positive}.
	
	 First,
	suppose that $\pol = \downp$. Cases of 
	$\statetwo\toham\state$:
	\begin{enumerate}
		\item Transition $\tomachdotoneapp$:
		\[
		\statetwo = \hamstated{ \tm\tmtwo }{ \ctx }{ \tlog } { \env }{ \tape }
		\tomachdotoneapp
		\hamstated{ \tm }{ \ctxp{\ctxhole\tmtwo} }{ \tlog }{ \env }{(\tmtwo,\ctxp{\tm\ctxhole},\env)^\tlog\cons\tape 
}=\state
		\]
		\begin{itemize}
			\item \emph{Environment tests}. It follows by the \ih, since the environment tests 
			of 
			$\state$ are the same of those of $\statetwo$.
			\item \emph{Tape tests}. The first tape test of $\state$ is 
			trivially positive since $\state_{\lclos}=\indstate{\lclos}$. All 
			other tape tests of $\state$ are in the 
			form 
			$\state_{\lclos}=\hamstateu{ \tm }{ \ctxp{\ctxhole\tmtwo} } { \tlog }{ \env } 
{(\tmtwo,\ctxp{\tm\ctxhole},\env)^\tlog\cons\tapetwo }$, 
			where $\tapetwo\cons \lclos$ is a prefix of $\tape$. Clearly
			\[
			\state_{\lclos}=\hamstateu{ \tm }{ \ctxp{\ctxhole\tmtwo} } { \tlog }{ \env } 
{(\tmtwo,\ctxp{\tm\ctxhole},\env)^\tlog\cons\tapetwo }
			\tomachdotthree
			\hamstateu{ \tm\tmtwo }{ \ctx }{ \tlog } { \env }{ \tapetwo }=\statetwo_{\lclos}
		\]
		and $\statetwo_{\lclos}$ is a tape test for $\statetwo$. By 
		\ih, $\statetwo_{\lclos}$ is positive. Hence, since
		\[
		\state_{\lclos}\tomachup\statetwo_{\lclos}
		\]
		also $\state_{\lclos}$ is positive.
	\end{itemize}
			
	\item Transition $\tomachdottwoabs$:
	\[
	\statetwo= \hamstated{ \la\var\tm }{ \ctx }{ \tlog }{ \env }{ \lclos\cons\tape }
	\tomachdottwoabs
	\hamstated{ \tm }{ \ctxp{\la\var\ctxhole} }{ \tlog } { \esub\var \lclos\cons \env } { \tape }
	= \state
	\]
	\begin{itemize}
		\item \emph{Environment tests}. The environment tests of $\state$ 
		are 
		those of $\statetwo$ plus $\state_{\lclos} \defeq \hamstateu{ \la\var\tm }{ \ctx }{ \tlog }{ \env }{	\stempty }$. 
Note that $\state_{\lclos}$ is also a tape test of $\statetwo$, which by \ih is positive.
		\item \emph{Tape tests}. Each tape test of $\state$ is in the form 
		$\state_{\lclostwo}=\hamstateu{\tm}{\ctxp{\la\var\ctxhole}}{\tlog}{\esub \var {\lclos}\cons \env}{\tapetwo}$, 
		where $\tapetwo\cons \lclostwo$ is a prefix of $\tape$. Clearly
		\[
		\state_{\lclostwo}=\hamstateu{\tm}{\ctxp{\la\var\ctxhole}}{\tlog}{\esub \var \lclos\cons	\env}{\tapetwo}
			\tomachdotfour
			\hamstateu{\la\var\tm}{\ctx}{\tlog}{\env}{\lclos\cons\tapetwo}=\statetwo_{\lclostwo}
		\]
		and $\statetwo_{\lclostwo}$ is a tape test for $\statetwo$. Thus, by 
		\ih\ $\statetwo_{\lclostwo}$ is positive, and so is $\state_{\clos'}$.
	\end{itemize}
	
	\item \emph{Transition $\tomachvarj$.}
	\[ 	\hamstated{ \var }{ \ctxp{\la\var\ctxtwo_{n}} }{ \tlogn\cons\tlog}{ \envtwo\esub\var \lclos\env }{ \tape }    
		\tomachvarj 
		\hamstateu{ \la\var\ctxtwo_{n}\ctxholep\var}{ \ctx }{ \tlog }{ \env }{ \clpos\cons\tape } \]
		where $\clpos\defeq(\var, \ctxp{\la\var\ctxtwo},\tlogn\cons\tlog)^{\envtwo\esub\var {\lclos} \env}$.
	\begin{itemize}
		\item \emph{Environment tests}. It follows by the \ih, since all the environment tests 
		of 
		$\state$ are environment tests of $\statetwo$.
		\item \emph{Tape tests}. It follows by the \ih, since the tape tests of 
		$\state$ are the same of those of $\statetwo$.		
	\end{itemize}
	
	\item \emph{Transition $\tomachvark$.}
	\[
	\statetwo=\hamstated{ \var }{ \ctx }{ \tlog}{ \env }{ \tape } 
		\tomachvark
	\hamstated{ \tmtwo}{ \ctxtwop{\tm\ctxhole} }{ (\var,\ctx,\tlog)^\env\cons \tlogtwo }{ F }{ \tape }=\state
	\]
	where 
	$\env=\envtwo\cons\esub\var{(\tmtwo,\ctxtwop{\tm\ctxhole},F)^{\tlogtwo}}\cons \envthree$.
	\begin{itemize}
		\item \emph{Environment tests}. By \reflemma{lclosure-were-visited}, we have that the run $\run$ passed through a 
state $\statethree \defeq \hamstated{ \tm }{ \ctxtwop{\ctxhole\tmtwo} }{ \tlogtwo}{ F }{ \tapetwo } $ for some 
$\tapetwo$. Note that $\statethree$ is a weak shift of $\state$ as defined in \reflemmap{outer-inv}{three}, and so 
$\statethree$ and $\state$ have the same environment tests, which are then positive by \ih

		\item \emph{Tape tests}. Note that for each prefix $\tapetwo\cons\lclos$ of $\tape$ we have
		\[
		\state_{\lclos} = \hamstateu{ \tmtwo}{ \ctxtwop{\tm\ctxhole} }{ 
				(\var,\ctx,\tlog)^\env\cons \tlogtwo }{ F }{ \tapetwo }
			\tomachjump
			\hamstateu{ \var }{ \ctx }{ \tlog}{ \env }{ \tapetwo } = \statetwo_{\lclos}
	\]
	and by \ih\ $\statetwo_{\lclos}$ is positive. Then 
	$\state_{\lclos}$ is positive.
\end{itemize}
\end{enumerate}
Then, suppose that $\pol = \upp$. Cases of $\statetwo\toham\state$:
\begin{enumerate}
	\item Transition $\tomachdotthree$:
	\[	\hamstateu{ \tm }{ \ctxp{\ctxhole\tmtwo} }{ \tlog }{ \env }{ \lclos\cons\tape }
		\tomachdotthree 
		\hamstateu{ \tm\tmtwo }{ \ctx }{ \tlog }{ \env }{ \tape }  \]
	\begin{itemize}
		\item \emph{Environment tests}. It follows by the \ih, because all the environment tests of 
		$\state$ are environment tests of $\statetwo$ by \reflemmap{outer-inv}{three}.
		\item \emph{Tape tests}. By \ih\ since the tape tests of 
		$\state$ are the same of those of $\statetwo$.
	\end{itemize}
	
	\item Transition $\tomachdotfour$:
		\[	\hamstateu{ \tm }{ \ctxp{\la\var\ctxhole} }{ \tlog }{ \esub\var {\lclos}\cons \env }{ \tape }    
	\tomachdotfour 
	\hamstateu{ \la\var\tm }{ \ctx }{ \tlog }{ \env }{ \lclos\cons\tape }  \]
	\begin{itemize}
		\item \emph{Environment tests}. It follows by the \ih, because all the environment tests of 
		$\state$ are environment tests of $\statetwo$.
		\item \emph{Tape tests}. By \ih\ since the tape tests of 
		$\state$ are the same of those of $\statetwo$ ($\lclos$ appears on the left of the enriched logged position in the 
tape, and so needs not to be tested).
	\end{itemize}

	\item Transition $\tomacharg$:
	\[
	\statetwo=\hamstateu{ \tm }{ \ctxp{\ctxhole\tmtwo} }{ \tlog } 	{ \env } { \lpos\cons\tape }
	\tomacharg
	\hamstated{ \tmtwo }{ \ctxp{\tm\ctxhole} }{ \lpos\cons\tlog } { \env }{ \tape }
	=\state
	\]
	\begin{itemize}
		\item \emph{Environment tests}. By \ih\ since the environment tests of 
		$\state$ are the same of those of $\statetwo$ by \reflemmap{outer-inv}{three}.
		\item \emph{Tape tests}. Since the direction of $\state$ is $\downp$, by the tape and direction invariant 
(\reflemma{ham-simple-tape}) there are no closed position on $\tape$, and the tape tests of 
		$\state$ are in the form 
		$\state_{\lclos} \defeq \hamstateu{\tmtwo}{\ctxp{\tm\ctxhole}}{\clpos\cons\tlog}{\env}{\tapetwo}$,
		where $\tapetwo\cons\lclos$ is a prefix of $\tape$.
		If $\clpos=(\var,\ctxtwo,\tlogtwo)^{\envtwo}$, then
		\[
		\state_{\lclos} = \hamstateu{ \tmtwo }{ \ctxp{\tm\ctxhole} }  {(\var,\ctxtwo,\tlogtwo)^{\envtwo}\cons\tlog 
} { \env }{ \tapetwo }
		\tomachjump
		\hamstateu{ \var }{ \ctxtwo }{ \tlogtwo }{ \envtwo}{\tapetwo }.
		\]
		Those states in the form $\hamstateu{ \var }{ \ctxtwo }{ \tlogtwo }{ \envtwo}{\tapetwo }$ are exactly the tape 
tests $\statetwo_{\lclos}$ of $\statetwo$. Thus, 
		by \ih they are positive, and so are the tests $\state_{\lclos}$.
	\end{itemize}

	\item \emph{Jumping.} 
	\[ 		\statetwo = \hamstateu{ \tm }{ \ctxp{\tmtwo\ctxhole} }{ \clpos\cons\tlog }{ \env }{ \tape }
	\iamujump 
	\hamstateu{ \var }{ \ctxtwo }{ \tlogtwo }{ \envtwo }{ \tape } = \state\]
	where $\clpos=(\var,\ctxtwo,\tlogtwo)^{\envtwo}$.
	\begin{itemize}
		\item \emph{Environment tests}. By \reflemma{lclosure-were-visited}, the run $\run$ passes through a state 
		$\statethree \defeq \hamstated{ \var }{ \ctxtwo }{ \tlogtwo }{ \envtwo }{ \tapetwo }$ for some $\tapetwo$. Note 
that $\statethree$ and $\state$ differ only for direction and tape, and so by \reflemma{env-inv} they have the same 
environment tests, which are positive by the \ih
		\item \emph{Tape tests}. It follows by the \ih, since the tape tests of 
		$\state$ are the same of those of $\statetwo$.
	\end{itemize} %
\end{enumerate}
\end{proof}

	\begin{theorem}[\LJAM and \KAM relationship via the \HAM]
	\label{thm:jk-relationship} 
	Let $\state_\tm$ be a \HAM initial state.
\begin{enumerate}
	\item \label{p:jk-relationship-one}
	\emph{\KAM to \LJAM}: for every run $\run_K: \state_\tm 
	\tohamk^* \state$ there exists a run $J(\run_K): 	
	\state_\tm \tohamj^* \state$ such that $\size{J(\run_K)} = \size{\run_K} + \sizeup{J(\run_K)}$ and 
$\sizeparam{J(\run_K)}{\varjsym} = 
\sizeparam{\run_K}{\varksym}$.
	
	\item \emph{\LJAM to \KAM}: for every run $\run_J: \state_\tm \tohamj^* 
	\state$ there exist a run $\run_K: \state_\tm \tohamk^* \statetwo$ and a 
	run $\runtwo_J:\state \tomachup^*\statetwo$ such that $\run_J \runtwo_J = 
	J(\run_K)$.
	
	\item \emph{Termination}: $\tohamk$ terminates if and only if $\tohamj$ terminates.
\end{enumerate}
	\end{theorem}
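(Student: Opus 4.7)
The three parts will be proved in order, with parts 1 and 2 being established by induction on the length of the relevant run and part 3 being a straightforward consequence. The key tool throughout is \reflemma{hops-simulation} (simulation of hops via $\upp$), together with two easy structural facts about the \HAM: \emph{(i)} the sub-relation $\tohamk$ only produces states in direction $\downp$ (inspection of the three \KAM-style transitions), and \emph{(ii)} $\tohamj$ is deterministic, as is readily verified by inspection. I will use throughout that $J(\epsilon)=\epsilon$ and that the LJAM and the KAM agree on the non-variable transitions $\tomachdotoneapp,\tomachdottwoabs$.

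For \emph{Part 1}, I would induct on $|\run_K|$. The base case is trivial: set $J(\epsilon)=\epsilon$. For the inductive step, write $\run_K = \run_K'\cdot t$ with $t$ the last transition and apply the \ih\ to $\run_K'$, obtaining $J(\run_K')$. If $t$ is $\tomachdotoneapp$ or $\tomachdottwoabs$, these transitions are also $\tohamj$ transitions, so define $J(\run_K) \defeq J(\run_K')\cdot t$: the size and variable-count equalities are preserved because such $t$ contributes equally to both sides and $\sizeup{J(\run_K)}=\sizeup{J(\run_K')}$. If $t$ is $\tomachvark$, apply \reflemma{hops-simulation} at the source state of $t$ (which is reachable by \ih) to obtain a factorization $\tomachvarj \tomachup^+$ landing on the same state as $t$, and concatenate this factorization to $J(\run_K')$. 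The size accounting goes through because one $\tomachvark$ is replaced by one $\tomachvarj$ plus some positive number of $\tomachup$ transitions, and $\sizeparam{J(\run_K)}{\varjsym}$ grows by exactly one, as does $\sizeparam{\run_K}{\varksym}$.

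For \emph{Part 2}, I would induct on $|\run_J|$ using determinism of $\tohamj$ together with Part 1 as a source of canonical runs. The base case is trivial with $\run_K=\epsilon$ and $\runtwo_J=\epsilon$. For the inductive step, write $\run_J=\run_J'\cdot t$ and apply the \ih\ to $\run_J'$, yielding $\run_K'$ and $\runtwo_J':\state_1\tomachup^*\statetwo'$ with $\run_J'\runtwo_J' = J(\run_K')$, where $t:\state_1\tohamj\state$. I split on $\runtwo_J'$. If $\runtwo_J'$ is non-empty, then its first transition must be $t$ by determinism of $\tohamj$; take $\runtwo_J$ to be the tail of $\runtwo_J'$ and keep $\run_K=\run_K'$. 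If $\runtwo_J'$ is empty, then $\state_1=\statetwo'$ is the target of $J(\run_K')$ and has direction $\downp$ (by fact \emph{(i)}, since this is the target of the \KAM run $\run_K'$), so $t$ can only be $\tomachdotoneapp$, $\tomachdottwoabs$, or $\tomachvarj$. In the first two cases, $t$ extends both runs and we take $\run_K=\run_K'\cdot t$ and $\runtwo_J=\epsilon$. In the $\tomachvarj$ case, apply \reflemma{hops-simulation} at $\state_1$ to obtain a \KAM transition $\state_1\tomachvark \statetwo$ that factors through $\state$ via $\state\tomachup^+\statetwo$; set $\run_K=\run_K'\cdot\tomachvark$ and $\runtwo_J$ the witnessing $\tomachup$ path.

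For \emph{Part 3}, termination equivalence follows immediately. If $\tohamk$ terminates with complete run $\run_K$, then by Part 2 every $\tohamj$ run is extendable by $\tomachup$ transitions to a prefix of $J(\run_K')$ for some prefix $\run_K'$ of $\run_K$, and all such $J(\run_K')$ are prefixes of $J(\run_K)$, which is finite; hence by determinism $\tohamj$ terminates. Conversely, if $\tohamj$ terminates with complete run $\run_J$, then by Part 1 any $\tohamk$ run $\run_K$ produces $J(\run_K)$, which must be a prefix of $\run_J$ (by determinism, since $J(\run_K)$ is a $\tohamj$ run from $\state_\tm$), and in particular is bounded in length; hence $\tohamk$ terminates. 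The main obstacle I foresee is Part 2, specifically ensuring that $J$ commutes correctly with prefixing so that the \ih\ is stated at the right granularity; the cleanest way to handle this is to prove, jointly with Part 1, the auxiliary fact that $J$ preserves prefixes, which is immediate from the case analysis in Part 1.
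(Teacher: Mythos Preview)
Your proof is correct and, for Parts 1 and 2, follows essentially the same route as the paper: induction on the length of the relevant run, with the non-variable transitions handled trivially (they belong to both sub-relations) and the variable transition handled via \reflemma{hops-simulation}. One small phrasing point in your Part 2: in the $\tomachvarj$ case you write ``apply \reflemma{hops-simulation} at $\state_1$ to obtain a \KAM transition''; strictly, the lemma takes the $\tomachvark$ transition as \emph{hypothesis}. What you actually need is that $\tomachvark$ is applicable at $\state_1$, which follows because the source pattern of $\tomachvarj$ already requires the environment to contain an entry for $\var$; then the lemma gives the factorization and determinism of $\tohamj$ identifies the intermediate state with your $\state$.

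For Part 3 the paper argues differently. Your approach bounds the length of all runs on one side via prefix-preservation of $J$ and determinism, which is clean but requires you to carry along the auxiliary fact that $J$ preserves prefixes (indeed immediate from your inductive construction) and that both $\tohamk$ and $\tohamj$ are deterministic. The paper instead argues each direction more directly: for $\tohamk \Rightarrow \tohamj$ it observes that $\tohamk$-final states (of shape $\hamstated{\la\var\tm}\ctx\tlog\env\stempty$, relying on the closed-term invariant that the environment always covers the current variable) are also $\tohamj$-final; for $\tohamj \Rightarrow \tohamk$ it takes the contrapositive, noting that a diverging $\tohamk$ run must contain infinitely many $\tomachvark$ steps (since $\tomachdotoneapp,\tomachdottwoabs$ alone strictly decrease code size), and Part~1 then maps these to infinitely many $\tomachvarj$ steps. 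Both approaches work; yours is more uniform, the paper's avoids the prefix bookkeeping.
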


\begin{proof}\hfill
	\begin{enumerate}
	\item We proceed by induction on the length of $\run_K$. If $\size{\run_K}=0$ 
	there is nothing to prove. Now, let us consider 
	$\run_K:\state_\tm\tohamk^*\statetwo\tohamk\state$. Considering the property 
	true for the reduction $\runtwo_K:\state_\tm\tohamk^*\statetwo$, we prove that 
	it is 
	true for $\run_K$. In particular, there exists a reduction 
	$J(\runtwo_K): {\state_\tm}\tohamj^* \statetwo$ such that 
	$\size{J(\runtwo_K)} = \size{\runtwo_K}+\sizeup{J(\runtwo_K)}$ and $\sizeparam{J(\runtwo_K)}{\varjsym} = 
\sizeparam{\runtwo_K}{\varksym}$. We 
	proceed 
	considering all the possible $\tohamk$ transitions from $\statetwo$ to $\state$.
	\begin{itemize}
		\item Transitions $\tomachdotoneapp$ and $\tomachdottwoabs$. These transitions belong also to $\tohamj$, so the 
statement trivially holds. In particular, $\size{J(\run_K)} = 1+\size{J(\runtwo_K)} =_\ih 1+\size{\runtwo_K} +
\sizeup{J(\runtwo_K)}= 
\size{\run_K}$.

		\item Transition $\tomachvark$. By \reflemma{hops-simulation}, we have a run
$\runthree:\statetwo\tomachvarj\statethree\tomachup^+\state$. Then we define $J(\run_k)$ as the concatenation of 
$J(\runtwo_k)$ and $\runthree$, for which	
$$\size{J(\run_k)} = \size{J(\runtwo_k)}+1+\size{\runthree} =_\ih \size{\runtwo_K} +\sizeup{J(\runtwo_k)}+1 + 
\size{\runthree_J} = 
\size{\runtwo_K} +1 +\sizeup{J(\run_k)}= \size{\run_K} + \sizeup{J(\run_k)}$$
and $\sizeparam{J(\run_k)}{\varjsym} = 1+ \sizeparam{J(\runtwo_k)}{\varjsym} 
=_\ih 
1+ \sizeparam{\runtwo_K}{\varksym} = \sizeparam{\run_K}{\varksym}$.
	\end{itemize}
	
	\item By induction on the length of $\run_J$. If $\size{\run_J}=0$ 	
	there is nothing to prove. Now, let us consider 	
	$\run_J:\state_\tm\tohamj^*\state_1 \tohamj \state$. Considering the property 
	true for the reduction $\run_J':\state_\tm\tohamj^*\state_1$, we prove that 
	it is true for $\run_J$. By \ih, there are runs $\run_K': \state_\tm 
	\tohamk^*\state_2$ and $\runtwo_J': \state_1 \tomachup^* \state_2$ 
	such that $\run_J'\runtwo_J' = J(\run_K')$. If $\runtwo_J'$ is 
	non-empty then by determinism of the \LJAM we are done, because 
	$\runtwo_J'$ has to pass through $\state$ and the suffix $\runtwo_J$ of 
	$\runtwo_J'$ starting on $\state$ proves the statement. If $\runtwo_J'$ is 
	empty then $\state_1 = \state_2$ and in particular $\state_2$ has direction $\downp$, because it is reached by 
$\tohamk$. Then consider the cases of 
	transition $\state_1\toljam\state$:
	\begin{itemize}
		\item Transitions $\tomachdotoneapp$ and $\tomachdottwoabs$. These transitions belong also to $\tohamk$, so 
$\tohamk$ can
		do the same 
		step and close the 
		diagram.
		\item Transition $\tomachvarj$. Then we can 
		close the diagram via the reasoning used at the previous point of the 
		theorem, based on \reflemma{hops-simulation}.	
	\end{itemize}
	
	\item Two directions:
	\begin{itemize}
	 \item \emph{$\tohamk$ termination implies $\tohamj$ termination}: an omitted standard invariant 
ensures that if terms are closed then, whenever the code is a variable $\var$, the environment is defined on $\var$. 
This fact forbids $\tohamk$ to get stuck on $\tomachvark$ transitions. So $\tohamk$ final states have the shape 
$\hamstated{\la\var\tm}\ctx\tlog\env\stempty$, which are also $\tohamj$ final states. Then if $\tohamk$ 
terminates $\tohamj$ terminates.

  \item \emph{$\tohamj$ termination implies $\tohamk$ termination}: we prove the contrapositive statement. Suppose that 
$\tohamk$ diverges starting from $\state_\tm$. Note that it has to make an infinity of $\tomachvark$ transitions, 
because without them---that is considering only $\tomachdotoneapp$ and $\tomachdottwoabs$---the size of the code 
strictly decreases. By the first point of the theorem, projecting the diverging $\tohamk$ run we obtain a diverging 
$\tohamj$ run, because the projection maps the infinity of $\tomachvark$ transitions to an infinity of $\tomachvarj$ 
transitions.
	\end{itemize}

\end{enumerate}
\end{proof}

\refthm{jk-relationship} immediately implies the following more concise statement given in the body of the 
paper (as \refthm{jk-concise}).

\begin{corollary}[\LJAM and \KAM relationship]
\label{coro:jk}
There is a complete \LJAM run $\run_J$ from 
$\tm$ if and only if 
there is a complete \KAM run $\run_K$ from $\tm$. Moreover, 
$\size{\run_J}= \size{\run_K} + \sizeup{\run_J}$ and $\sizevar{\run_J}= \sizevar{\run_K}$.
\end{corollary}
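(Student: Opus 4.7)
The plan is to derive the corollary as an immediate consequence of Theorem~\ref{thm:jk-relationship} by recalling that the KAM and the LJAM are both determinized views of the HAM, obtained by restricting the variable transition to $\tomachvark$ and $\tomachvarj$ respectively, and that they agree on initial states.

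First, I would handle the qualitative (termination) statement. By construction, the HAM initial state $\state_\tm$ is simultaneously the initial state of a KAM computation and of an LJAM computation, and both sub-systems $\tohamk$ and $\tohamj$ are deterministic with final states of the shape $\hamstated{\la\var\tmtwo}{\ctx}{\tlog}{\env}{\stempty}$. Thus a complete KAM (resp. LJAM) run from $\tm$ exists iff the corresponding HAM sub-system terminates on $\state_\tm$. Part (3) of Theorem~\ref{thm:jk-relationship} states precisely that $\tohamk$-termination and $\tohamj$-termination coincide, so the two-sided implication follows at once.

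Second, for the quantitative statement, assume both complete runs exist. Applying part (1) of Theorem~\ref{thm:jk-relationship} to $\run_K$ yields an LJAM run $J(\run_K)$ from $\state_\tm$ whose target state coincides with the target of $\run_K$, and which therefore is final as an LJAM state as well. By determinism of $\tohamj$, this run must coincide with $\run_J$. The size identity $\size{\run_J} = \size{\run_K} + \sizeup{\run_J}$ and the transition-count identity $\sizevar{\run_J} = \sizevar{\run_K}$ are then exactly the two equalities provided by part~(1) of the theorem, once we note that $\sizeparam{\run_J}{\varsym}$ is $\sizeparam{J(\run_K)}{\varjsym}$ (the LJAM has no $\tomachvark$ transitions) and $\sizeparam{\run_K}{\varsym}$ is $\sizeparam{\run_K}{\varksym}$ (the KAM has no $\tomachvarj$ transitions).

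There is no real obstacle here: all the technical content has been concentrated in Theorem~\ref{thm:jk-relationship} and, via that, in the $\upp$-exhaustible invariant and the hop-simulation lemma (\reflemma{hops-simulation}). The only point requiring a bit of care is the implicit identification of the deterministic KAM and LJAM with their HAM embeddings, so that Theorem~\ref{thm:jk-relationship} can be invoked without having to replay a projection argument; this is justified by the observation that the HAM data structures strictly extend those of the KAM and LJAM and that the restricted transition relations are verbatim the original ones.
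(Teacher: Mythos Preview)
Your proposal is correct and follows essentially the same approach as the paper: both derive the corollary from Theorem~\ref{thm:jk-relationship} together with the obvious strong bisimulations between the \KAM and $\tohamk$, and between the \LJAM and $\tohamj$. You simply spell out in more detail what the paper compresses into one sentence, in particular the use of determinism of $\tohamj$ to identify $J(\run_K)$ with $\run_J$ and the coincidence of final states.
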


\begin{proof}
 It follows immediately from the previous theorem by the two obvious (and omitted) strong bisimulations between the 
\KAM and the transition subrelation $\tohamk$ of the \HAM, and between the \LJAM and the transition subrelation 
$\tohamj$ of the \HAM.
\end{proof}

\section{Proofs of \refsect{jam-complexity} (The $\lambda$-JAM is Slowly Reasonable)}
\label{sect:jam-complexity-app}

\begin{proposition}[Depth invariant]
	Let $\run:\state_\tm\toljam^*\state$ be 
	an initial run of the \LJAM. Then 
	$ \spdepth\state = \sizevar{\run}$. Moreover $\spdepth\state \geq \spdepth\lpos$ for every logged position $\lpos$ in 
$\state$.
\end{proposition}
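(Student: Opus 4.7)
\smallskip
\textbf{Plan.} The natural approach is induction on the length $k$ of the run $\run$, proving both parts of the statement simultaneously as a strengthened invariant: (i) $\spdepth\state = \sizevar{\run}$, and (ii) every logged position $\lpos$ appearing anywhere in $\state$ (whether in the tape, the log, or nested inside another logged position) satisfies $\spdepth\lpos \leq \spdepth\state$.

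\smallskip
The base case is immediate since $\state_\tm$ has empty log and tape, direction $\downp$, zero depth, and contains no logged positions. For the inductive step I would do a transition-by-transition case analysis on the last transition $\state \toljam \statetwo$ of $\run$, using \reflemma{jam-simple-tape} to know which direction each state has. The ``bookkeeping'' transitions are routine: $\iamdap$, $\iamdlamone$, $\iamuapltwo$, $\iamulam$ only push or pop a $\resm$ marker, which by definition does not affect depth; and $\iamuaplone$ transfers the unique tape logged position $\lpos$ onto the head of the log while switching direction from $\upp$ to $\downp$, so $\spdepth\statetwo = \spdepth{\lpos\cons\tlog} = \spdepth\lpos = \spdepth{\lpos\cons\tape} = \spdepth\state$. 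In each of these cases $\sizevar$ is unchanged and the set of logged positions only shrinks or stays the same, so the second claim follows from the IH.

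\smallskip
The two interesting cases are the ones that manipulate logged positions. For $\iamdvar$ going from $\dstate{\var}{\ctxp{\la\var\ctxtwo_n}}{\tape}{\tlog_n\cdot\tlog}$ to $\ustate{\la\var\ctxtwo_n\ctxholep\var}{\ctx}{\lpos\cdot\tape}{\tlog}$ with $\lpos = (\var,\ctxp{\la\var\ctxtwo_n},\tlog_n\cdot\tlog)$, I compute $\spdepth\statetwo = \spdepth{\lpos\cdot\tape} = \spdepth\lpos = 1 + \spdepth{\tlog_n\cdot\tlog} = 1 + \spdepth\state$, which matches the increment of $\sizevar$ by one. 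For (ii), the brand new $\lpos$ has depth exactly $\spdepth\statetwo$, while all other logged positions in $\statetwo$ were already present in $\state$ and so by IH have depth $\leq \spdepth\state < \spdepth\statetwo$. For $\iamujump$ going from $\ustate{\tm}{\ctxp{\tmtwo\ctxhole}}{\tape}{(\var,\ctxtwo,\tlogtwo)\cdot\tlog}$ to $\ustate{\var}{\ctxtwo}{\tape}{\tlogtwo}$, the depths agree because both states have direction $\upp$ and the same tape $\tape$, so $\spdepth\statetwo = \spdepth\tape = \spdepth\state$ and $\sizevar$ is unchanged; for (ii), the logged positions in $\statetwo$ are those in $\tape$ (already in $\state$, so fine by IH) together with the ones nested inside $\tlogtwo$, which were nested inside the popped logged position of $\state$ and thus also counted as logged positions of $\state$ by the IH.

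\smallskip
I do not expect any serious obstacle: the invariant is purely syntactic and the transitions essentially dictate the argument. The only mildly delicate point is making sure claim (ii) quantifies over \emph{all} logged positions appearing in the state, including those nested deep inside the log; this matters specifically for the jump transition, where the new log is extracted from inside a previously stored logged position, and without nested coverage in the IH one could not conclude. Strengthening the IH accordingly from the outset removes the difficulty.
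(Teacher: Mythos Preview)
Your proposal is correct and follows essentially the same approach as the paper: induction on the length of the run, with a case analysis on the last transition, computing $\spdepth$ in each case and checking that the set of logged positions in the target state is already contained in the source state. Your treatment is in fact slightly more careful than the paper's, since you explicitly flag that the ``moreover'' clause must quantify over \emph{nested} logged positions for the $\tomachjump$ case to go through (the new log $\tlogtwo$ comes from inside the discarded logged position); the paper leaves this implicit.
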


	\begin{proof}
		We proceed by induction on the length of the run $\run$. If 
		$|\run|=0$, 
		then $\state = \state_\tm$ and $\spdepth{\state_\tm} = \spdepthnopar{\dstate{\tm}{\ctx}{\stempty}{\stempty}} = 
\spdepth\stempty = 0 =  \sizevar\run$. If $|\run|\geq 1$, let $\runtwo$ be the prefix of $\run$ such that $\state_\tm 
\toljam^* \statetwo$, and let's consider the various cases of the last transition 
		$\statetwo \toljam \state$:
		\begin{itemize}
		\item Transitions $\iamdap$ or $\iamdlamone$: the result holds by \ih, since the 
		polarity 	has not changed and neither the depth of the log. 

		\item Transition $\tomachvar$:
		\[\statetwo = \dstate{ \var }{ \ctxp{\la\var\ctxtwo_n} }{ \tape }{ 
		\tlog_n\cdot\tlog 
		} 
		\iamdvar
		\ustate{ \la\var\ctxtwo_n\ctxholep\var}{ \ctx }{ 
			(\var,\ctxp{\la\var\ctxtwo_n},\tlog_n\cdot\tlog)\cdot\tape }{ \tlog 
			} = \state\]
		Then $\spdepth\state = \spdepth{\tlog_n\cdot\tlog} + 1 = 
		\spdepth{\statetwo} +1 =_{\ih} \sizevar{\runtwo} +1 = 
\sizevar\run$. For the \emph{moreover} part, let $\lpos\defeq 
(\var,\ctxp{\la\var\ctxtwo_n},\tlog_n\cdot\tlog)$ and 
consider a logged position $\lpostwo \neq \lpos$ in $\state$. By \ih $\spdepth{\lpostwo} \leq \spdepth{\statetwo} < 
\spdepth\state$. For $\lpos$, instead, by definition of $\spdepth\cdot$ we have $\spdepth\lpos = \spdepth\state$.

		\item Transitions $\iamuapltwo$, $\iamulam$, and $\iamujump$: the result holds by \ih, since 
		the polarity has not changed and neither has the depth of the tape. For the \emph{moreover} part, the every logged 
position of $\state$ is in $\statetwo$, and so it follows by the \ih
		
		\item Transition $\tomacharg$: the result follows by \ih, since the depth of the tape of $\state$ is the same of the 
depth 	of 	the log of $\statetwo$.
		\[
		\state=\ustate{ \tmtwo }{ \ctxp{\ctxhole\tm} }{ \lpos\cdot\tape }{ 
			\tlog } 
		\iamuaplone
		\dstate{ \tm }{ \ctxp{\tmtwo\ctxhole} }{ \tape }{ \lpos\cdot\tlog 
		}=\statetwo
		\]
		For the \emph{moreover} part, the every logged position of $\state$ is in $\statetwo$, and so it follows by the \ih
		\end{itemize}
	\end{proof}
	
	Remember that $\tomachup \defeq \tomachhole{\resm 3, \resm 4 , \argsym, \jumpsym}$. We also set $\tomachdown \defeq 
\tomachhole{\resm 1, \resm 2, \varsym}$.

	\begin{lemma}[Bound on $\upp$ phases]
\hfill
\begin{enumerate}
\item \emph{One $\upp$ phase}: if $\state=\ustate{\tm}{\ctx}{\tape}{\tlog}$ is a reachable state and 
	$\run:\state\tomachup^*\statetwo$ then 
	$\size\run\leq \spdepth\state\cdot\size{\ctxp\tm}$.
	
	\item 
	\emph{All $\upp$ phases}:	if $\run:\state_\tm\toljam^*\state$ then 
	$\sizeup\run \leq \sizevar\run^2\cdot\size\tm$. 
\end{enumerate}
\end{lemma}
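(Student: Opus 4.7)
The plan is to view the $\upp$-phase run $\run:\state\tomachup^*\statetwo$ of Part 1 as a strict alternation of maximal $\tomachhole{\resm 3,\resm 4}^*$ blocks and ``position-changing'' $\tomachhole{\jumpsym}$ transitions, optionally terminated by a single $\tomachhole{\argsym}$ transition. The $\argsym$ transition must come last because it is the only $\tomachup$ transition flipping the direction to $\downp$, after which no $\tomachup$ step is enabled. The proof then reduces to two elementary bounds: one on the number of $\jumpsym$ transitions, and one on the length of each $\tomachhole{\resm 3,\resm 4}^*$ block.

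For the first bound, I would observe that among the $\tomachup$ transitions only $\jumpsym$ modifies the log: it replaces a log of the form $(\var,\ctxtwo,\tlogtwo)\cdot\tlog$ by $\tlogtwo$, and therefore drops the log's depth by exactly $1$. Since depth is non-negative, the total number of $\jumpsym$ transitions in $\run$ is bounded by the depth of the log of $\state$, which by the \emph{moreover} clause of \refprop{ljam-var-invariant} is itself bounded by $\spdepth\state$. For the second bound, \reflemma{boundC} states that any maximal $\tomachhole{\resm 3,\resm 4}^*$ sub-run starting at a state with position $(\tmthree,\ctxtwo)$ has length at most $\size{\ctxtwop\tmthree}$; since the machine never modifies the code, this quantity is always (textually equal to) $\size{\ctxp\tm}$. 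Combining the two bounds and absorbing the at most $\spdepth\state+1$ additive contributions from the $\jumpsym$ transitions and the terminal $\argsym$ into the multiplicative term yields the announced $\size\run\leq\spdepth\state\cdot\size{\ctxp\tm}$.

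For Part 2 I would partition the initial run $\run:\state_\tm\toljam^*\state$ into maximal $\upp$-phases interleaved with $\downp$-fragments. A $\upp$-phase is entered exactly when a $\tomachvar$ transition fires (the unique $\downp\to\upp$ transition), so there are at most $\sizevar\run$ such phases. For the phase starting at an intermediate state $\state_i$, Part 1 gives a length bound of $\spdepth{\state_i}\cdot\size\tm$, while the main clause of \refprop{ljam-var-invariant}, applied to the prefix $\run_i$ of $\run$ ending at $\state_i$, gives $\spdepth{\state_i}=\sizevar{\run_i}\leq\sizevar\run$. Summing over all phases, $\sizeup\run\leq\sizevar\run\cdot\sizevar\run\cdot\size\tm=\sizevar\run^2\cdot\size\tm$, as required.

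The main obstacle is making the per-phase bookkeeping of Part 1 precise, in particular the step tying the number of jumps to the log's depth via the \emph{moreover} clause of the depth invariant, together with the careful observation that no other $\tomachup$ transition touches the log. Once this is in place, the global bound in Part 2 is a one-line summation.
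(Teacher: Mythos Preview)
Your proposal is correct and follows essentially the same approach as the paper: decompose the $\upp$-phase into $\tomachhole{\resm 3,\resm 4}$ blocks separated by jumps, bound the blocks via \reflemma{boundC} and the number of jumps via the log depth using the \emph{moreover} clause of the depth invariant, then sum over phases for Part 2. You are in fact slightly more careful than the paper in explicitly accounting for the terminal $\tomacharg$ transition (which the paper's decomposition omits); note however that your remark that ``no other $\tomachup$ transition touches the log'' is not literally true, since $\tomacharg$ does extend the log---but as you correctly observe it can only occur last, so the jump-count argument is unaffected.
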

\begin{proof}\hfill
\begin{enumerate}
\item 
	We can split $\run$ in many subruns $\run_1\ldots\run_n$ consisting only of 
	$\tomachhole{\resm 3, \resm 4}$ sequences and interleaved by $\iamujump$ 
	transitions, \ie $\run=\run_1\cdot\iamujump\run_2\cdot\iamujump\cdots 
	\run_n$. By \reflemma{boundC}, each $\run_i$ is such that 
	$\size{\run_i}\leq\size{\ctxp\tm}$. Moreover, note that the log is 
	untouched by $\run_i$ and that the number of $\iamujump$ transitions is bound by the depth of the first logged 
position in $\tlog$, itself bound by $\spdepth\state$ by \reflemma{ljam-var-invariant}. Then 	$\size\run\leq 
\spdepth\state\cdot\size{\ctxp\tm}$.
	\item The run $\run$ has shape 
$\run_1^{\downp}\run_1^{\upp}\run_2^{\downp}\run_2^{\upp}\ldots\run_n^{\downp}\run_n^{\upp}$ where $\run_i^{\downp}$ is 
made out of $\tomachdown$ transitions and $\run_i^{\upp}$ is made out of $\tomachup$ transitions. By the previous point, 
we have $\size{\run_i^{\upp}} \leq \spdepth{\state_i^{\upp}}\cdot \size\tm$ where $\state_i^{\upp}$ is the source state 
of $\run_i^{\upp}$. By \reflemma{ljam-var-invariant}, $\spdepth{\state_i^{\upp}} = 
\sum_{j=1}^i\sizevar{\run_j^{\downp}}$. Now, 
	$\sizeup\run = \sum_{i=1}^n \size{\run_i^{\upp}} \leq \sum_{i=1}^n \spdepth{\state_i^{\upp}}\cdot \size\tm = \size\tm 
\cdot \sum_{i=1}^n \sum_{j=1}^i\sizevar{\run_j^{\downp}} \leq \size\tm \cdot \sum_{i=1}^n \sizevar\run \leq \size\tm 
\cdot \sizevar\run^2.$
\end{enumerate}
\end{proof}

\begin{theorem}[\LJAM complexity]
	Let $\tm$ be a closed term such that $\tm \towh^n \tmtwo$, $\tmtwo$ be 
	$\towh$ normal, and $\run_J$ and $\run_K$ be the complete \LJAM and  \KAM runs from $\tm$. Then:
	\begin{enumerate}
		\item \emph{The \LJAM is quadratically slower than the \KAM}: $\size{\run_K} \leq \size{\run_J} 
		= \bigo{\size{\run_K}^2\cdot \size\tm}$.

		\item \emph{The \LJAM is (slowly) reasonable}: $\size{\run_J} = \bigo{n^4\cdot \size\tm}$, and the cost of 
		implementing $\run_J$ on a RAM is also $\bigo{n^4\cdot 
			\size\tm}$.
	\end{enumerate}
\end{theorem}
\begin{proof}
	\begin{enumerate}
		\item
 By \refthmp{jk-relationship}{one}, $\size{\run_J} = \size{\run_K} + \sizeup{\run_J}$ and $\sizevar{\run_K} = 
\sizevar{\run_J}$. By 
\reflemmap{bound-up-phase}{global}, $\sizeup{\run_J} = \sizevar{\run_J}^2\cdot\size\tm = 
\sizevar{\run_k}^2\cdot\size\tm \leq^* \size{\run_k}^2\cdot\size\tm$, from 
which the statement follows.
\item The previous point gives $\size{\run_J} = \bigo{\size{\run_K}^2\cdot 
\size\tm}$ where $\run_K$ is the corresponding run 
on the \KAM. As recalled in \refsect{jam-complexity}, $\size{\run_K} = 
\bigo{n^2}$, from which we obtain $\size{\run_J} = 
\bigo{n^4\cdot \size\tm}$.

To obtain the cost of implementing on a RAM, we need to consider the cost of 
implementing single transitions. They all 
have constant cost but for $\tomachvar$ that costs $\size\tm$. Now note that in 
the length bound $\size{\run_J} = 
\bigo{n^4\cdot \size\tm}$ the component $\size\tm$ comes from the $\upp$ 
transitions, not $\tomachvar$, so that the 
cost on RAM is not $\bigo{n^4\cdot \size\tm^2}$ but simply $\bigo{n^4\cdot 
\size\tm}$.
\end{enumerate}
\end{proof}

\section{Proofs of \refsect{pam} (The Pointer Abstract Machine)}

\begin{lemma}[\LPAM invariants]
    Let $\state=\pamstatenopol\tm\ctx \history i \tape\pol$ be a reachable \LPAM state. 
    Then:
  \begin{enumerate}
  	\item \emph{Depth}: $\history$ has depth $n$ at $i$. Moreover, if $((\tmtwo,\ctxtwo_m), j)$ is the $k$-th indexed 
position of $\history$, with $k>0$, then $\history$ has depth $m$ at $k-1$.
	\item \emph{Tape, index, and direction}: if $\pol=\downp$, then 
$i = \size\history$ and $\tape$ does not contain any \trpos, otherwise if $\pol=\upp$ then $\tape$ contains exactly 
one position.	
  \end{enumerate}
\end{lemma}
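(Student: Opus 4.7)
The plan is to prove both invariants simultaneously by induction on the length of the run reaching $\state$. In the base case, the initial state $\pamstated{\tm}{\ctxhole}{\epsilon}{0}{\stempty}$ satisfies both invariants trivially: the context has level $0$, the history is empty (so the moreover clause of the depth invariant is vacuous), $i = 0 = \size\epsilon$, and the tape holds no position.

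For the inductive step I would inspect each transition of Fig.~\ref{fig:pam} and check preservation. To make the argument close on $\tomacharg$, I would first strengthen the statement with an auxiliary clause: in any reachable $\upp$-state whose unique tape position is $(\var, \ctxthree)$ with $\ctxthree$ of level $h$, the history $\history$ has depth $h$ at $\size\history$. With this in hand, the transitions $\tomachdotone$, $\tomachdottwo$, $\tomachdotthree$, $\tomachdotfour$ are routine because they change neither the level of the context, nor the history, nor the index, nor the position-content of the tape. The delicate cases are $\tomachvar$, $\tomacharg$, and $\tomachjump$.

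For $\tomachvar$, the source is a $\downp$-state with $i = \size\history$ and source context $\ctxp{\la\var\ctxtwo_n}$ of level $k+n$ (where $\ctx$ has level $k$); the inductive depth invariant therefore gives depth $k+n$ at $\size\history$, making $\phi_\history^n(i)$ well-defined and establishing depth $k$ at the new index in the target, while the newly pushed position of level $k+n$ satisfies the auxiliary clause for the very same reason. For $\tomacharg$, writing $\history' = \indp{\pos}{i}\cons \history$ and using $\phi_{\history'}(\size{\history'}) = i$ together with the fact that $\phi_{\history'}$ coincides with $\phi_\history$ at indices $\leq \size\history$, the target depth of $k+1$ at $\size{\history'}$ reduces to the source depth of $k$ at $i$; the moreover clause for the new entry demands depth $h$ at $\size\history$ in $\history'$, which is exactly the auxiliary clause applied to the source. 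For $\tomachjump$, the source depth of $k+1$ at $i$ forces $i > 0$, so both $i-1$ and the $i$-th entry of $\history$ exist; the level $m_i$ of the context of that entry determines the target context, and the moreover clause applied to the $i$-th entry delivers depth $m_i$ at $i-1$, which is precisely what the target needs.

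The principal obstacle I expect is recognising that the auxiliary tape-position clause is unavoidable: without it, reestablishing the moreover clause at $\tomacharg$ requires information about $\history$ at index $\size\history$ that is not locally available in the source of the $\tomacharg$ step, but was instead planted by the earlier $\tomachvar$ transition that originally pushed the position on the tape (between that $\tomachvar$ and the $\tomacharg$ only $\tomachdotthree$, $\tomachdotfour$, and $\tomachjump$ can fire, none of which touch $\history$). Beyond spotting this, the argument amounts to careful bookkeeping of how the level of leveled contexts shifts under each transition and how these shifts are mirrored by applications of $\phi_\history$.
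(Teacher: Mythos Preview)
Your proposal is correct and follows the paper's approach: induction on the length of the run together with a case analysis on the transitions. The paper's own proof is a single sentence (``by induction \dots\ immediate inspection of the transitions''), so your write-up is considerably more detailed than what the authors provide. In particular, your strengthening with the auxiliary clause---that in any reachable $\upp$-state the level of the unique tape position equals the depth of $\history$ at $\size\history$---is exactly what is needed to re-establish the \emph{moreover} part of the depth invariant at $\tomacharg$, and the paper's terse proof leaves this implicit. One small point worth making explicit in your bookkeeping: the argument that $\phi_{\history'}$ agrees with $\phi_\history$ along the chain starting from $i$ also relies on $i\leq\size\history$ in the source $\upp$-state, which follows easily once you observe that every index stored in $\history$ is strictly below its own position (this is established at $\tomacharg$ by the same reasoning).
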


\begin{proof}  
By induction on the length of the run reaching $\state$, together with an immediate inspection of the transitions using the \ih
\end{proof}


\begin{lemma}[Logs and histories]\label{l:pam-bisim-aux}
Let $\tlog \bisimlog (\history,i)$.
\begin{enumerate}
	\item\label{p:pam-bisim-aux-log} \emph{Log splitting}:
	if $\tlog = \tlog_n\cons\tlogtwo$ then $\tlogtwo\bisimlog(\history,\phi^n_\history(i))$.
	\item\label{p:pam-bisim-aux-history} \emph{History extension}:
	if $\indp \pos j$ be an indexed position then $\tlog \bisimlog (\indp \pos j \cons \history, i)$.
\end{enumerate}
\end{lemma}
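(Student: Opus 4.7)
The plan is to prove each part by a straightforward induction, using the structure of the $\bisimlog$ inference rules. Both parts hinge on reading off what one unfolding of the recursive rule gives us.

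For part \ref{p:pam-bisim-aux-log} (Log splitting), I would proceed by induction on $n$. The base case $n=0$ is immediate since $\tlog_0 = \epsilon$ gives $\tlogtwo = \tlog$ and $\phi^0_\history(i) = i$ by definition. For the inductive step, write $\tlog_n = (\var,\ctx,\tlogthree)\cons\tlog_{n-1}$ so that $\tlog = (\var,\ctx,\tlogthree)\cons\tlog_{n-1}\cons\tlogtwo$. By the only applicable rule for $\bisimlog$ on a non-empty log, we extract from the hypothesis $\tlog\bisimlog(\history,i)$ the premise $\tlog_{n-1}\cons\tlogtwo\bisimlog(\history,\phi_\history(i))$. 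Applying the induction hypothesis with parameter $n-1$ yields $\tlogtwo\bisimlog(\history,\phi^{n-1}_\history(\phi_\history(i)))$, and by definition $\phi^{n-1}_\history \circ \phi_\history = \phi^n_\history$, which concludes the case.

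For part \ref{p:pam-bisim-aux-history} (History extension), I would proceed by induction on the derivation of $\tlog\bisimlog(\history,i)$, equivalently on $|\tlog|$. The base case $\tlog = \epsilon$, $i=0$ is immediate from the first rule of $\bisimlog$ applied to the extended history. For the inductive case $\tlog = (\var,\ctx,\tlogtwo)\cons\tlog'$, the key observation is that the second $\bisimlog$ rule forces $i \in \{1,\dots,|\history|\}$, since otherwise $\var_i^\history$ and $\ctxtwo_i^\history$ would be undefined. Because $\indp \pos j$ is prepended to $\history$ and the transition $\tomacharg$ assigns index $|\history|+1$ to the freshly prepended entry (as can be read off from \reffig{pam}), all indices $k \leq |\history|$ refer to the same entries in $\history$ and in $\indp \pos j \cons \history$. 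In particular, $\var_i^{\indp \pos j \cons \history} = \var_i^\history$, $\ctxtwo_i^{\indp \pos j \cons \history} = \ctxtwo_i^\history$, and $\phi_{\indp \pos j \cons \history}(i) = \phi_\history(i)$. The induction hypothesis then transfers the two premises $\tlog'\bisimlog(\history,\phi_\history(i))$ and $\tlogtwo\bisimlog(\history,i-1)$ to the extended history, and reapplying the $\bisimlog$ rule closes the case.

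The main thing to be careful about is the indexing convention: the content of both parts is really that prepending preserves access for indices at most $|\history|$ and that unfolding $\phi_\history$ once corresponds to peeling off one element of the log. Once the indexing is pinned down via the $\tomacharg$ rule, both inductions are essentially bookkeeping; no genuine obstacle arises.
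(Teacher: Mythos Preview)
Your proof is correct and essentially matches the paper's: both parts are handled by the same inductions and the same one-step unfolding of the $\bisimlog$ rule. One small caveat: in part~\ref{p:pam-bisim-aux-history} your aside ``equivalently on $|\tlog|$'' is not quite right---the inner log $\tlogtwo$ of the head logged position need not be shorter than $\tlog$---so keep the induction on the derivation (or on the nested size of $\tlog$), which is what you and the paper both actually use.
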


\begin{proof}\hfill
\begin{enumerate}
\item By induction on $n$:
\begin{itemize}
 \item \emph{Base case}: if $n=0$, then it is trivially satisfied since $\tlog_n=\epsilon$ and $\phi_\history^n(i)=i$, 
so that $\tlogtwo= \tlog_n\cons\tlogtwo \bisimlog (\history, i) = (\history, \phi_\history^n(i))$, as required. 

 \item \emph{Inductive case}: if $n>0$ first of all note that $\phi^m_\history(i)$ is defined for all $m\leq n$ by the 
depth invariant (\reflemma{pam-simple-tape}). Then, $\tlog=\tlog_{n-1}\cons e\cons\tlogtwo$ and by \ih 
$e\cons\tlogtwo\bisimlog(\history, \phi_\history^{n-1}(i))$. By definition of $\bisimlog$ this is possible only if 
$\tlogtwo\bisimlog(\history, \phi_\history(\phi^{n-1}_\history(i)))$, \ie 
$\tlogtwo\bisimlog(\history, \phi^n_\history(i))$.
\end{itemize}

\item  By induction on $\tlog$. Cases:
 \begin{itemize}
  \item \emph{Empty}, \ie $\tlog = \stempty$. We have that the hypothesis is $\stempty \bisimlog (\history, 
0)$, because it is the only derivable relation for empty logs. Then $\stempty \bisimlog (\indp \pos j \cons \history, 
0)$.

  \item \emph{Non-empty}, \ie $\tlog = (\var,\ctx,\tlogthree)\cons \tlogtwo$. By hypothesis, $\lpos\cons \tlogtwo 
\bisimlog (\history, i)$, which implies that 
\begin{enumerate}
 \item $(\var,\ctx)=(\var_i^\history,\ctxtwo_i^\history)$,
 \item $\tlogtwo\bisimlog(\history,\phi_\history(i))$, and
 \item $\tlogthree\bisimlog(\history,i-1)$.
\end{enumerate}
By \ih, we obtain $\tlogtwo\bisimlog(\indp \pos j \cons \history,\phi_\history(i))$ and $\tlogthree\bisimlog(\indp \pos 
j\cons \history,i-1)$, and clearly $(\var,\ctx)=(\var_i^{\indp \pos j \cons \history},\ctxtwo_i^{\indp \pos j \cons 
\history})$. Then $\tlog = (\var,\ctx,\tlogthree)\cons \tlogtwo \bisimlog (\indp \pos j \cons \history, i)$.
 \end{itemize}
\end{enumerate}
\end{proof}

\begin{theorem}[$\bisimstate$ is a strong bisimulation]
	\hfill
	\begin{enumerate}
		\item for every run $\run_J: \state_{\tm}^{\text{\LJAM}} 
		\toljam^* s_J$  there exists a run $\run_P: 
\state_{\tm}^{\text{\LPAM}} \tolpam^* s_P$ such that $s_J\bisimstate s_P$ and $\size{\run_J} = \size{\run_P}$ and 
performing 
exactly the same transitions;
		\item for every run $\run_P: 
\state_{\tm}^{\text{\LPAM}} \tolpam^* s_P$ there exists a  run $\run_J: 
\state_{\tm}^\text{\LJAM} \toljam^* s_J$ such 
that $s_J\bisimstate s_P$ and $\size{\run_J} = \size{\run_P}$ and performing 
exactly the same transitions.
	\end{enumerate}
	Moreover, if $s_J = (\tm,\underline{\blue\ctx},\tlog,\tape_J,\upp) 
	\bisimstate (\tm,\underline{\blue\ctx},\history,i,\tape_P,\upp) = s_P$ and 
$(\var,\ctxtwo,\tlogtwo)$ is the unique logged position in $\tape_J$ then $\tlogtwo\bisimlog(\history,|\history|)$.
\end{theorem}

\begin{proof}
We prove the first point, the second point is symmetrical (simply replacing the use of \reflemma{pam-simple-tape}---in the case of transition $\tomachvar$ below---with \reflemma{jam-simple-tape}). By induction on $\size{\run_J}$. If $\run_J$ is empty then 
simply take $\run_P$ as the empty run. Otherwise, by \ih there exists a \LPAM run $\run_P: 
\state_{\tm}^{\text{\LPAM}} \tolpam^* s_P$ such that $s_J\bisimstate s_P$ and $\size{\run_J} = \size{\run_P}$. Note 
that 
$s_J\bisimstate s_P$ implies $s_J=\nopolstate{\tm}{\ctxthree}{\tape_J}{\tlog}{\pol}$ and 
$s_P=\pamstatenopol{\tm}{\ctxthree}{\tape_P}{\history}{i}{\pol}$ with $\tape_J \bisimtape \tape_P$ and $\tlog 
\bisimlog 
(\history, i)$.

Let's 
consider the possible extensions of $\run_J$ with a further transition. Cases:
	\begin{itemize}
		\item Transitions $\tomachdotone$,$\tomachdottwo$,$\tomachdotthree$,$\tomachdotfour$: we show one such case, the 
other are analogous. 			
			\begin{center}\begin{tikzpicture}[node distance=30mm, auto, 
				transform shape]
				\node (p) at (0,0) 
				{$\dstate {\tmtwo\tmthree} \ctxthree {\tape_J} \tlog $};
				\node (q) at (6,0) 
				{$\dstate  \tmtwo {\ctxthreep{\ctxhole\tmthree}} {\resm\cons\tape_J} \tlog $};
				\node (w) at (0,-1) 
				{$\pamstated  {\tmtwo\tmthree} \ctxthree \history i {\tape_P}$};				
				\node (r) at (6,-1) 
				{$\pamstated \tmtwo {\ctxthreep{\ctxhole\tmthree}} \history i {\resm\cons\tape_P}$};
				\node at (0,-.5) {$\bisimstate$};
				\node at (6,-.5) {$\bisimstate$};
				\node at (3,0) {$\tomachdotone$};
				\node at (3,-1) {$\tomachdotone$};				
		\end{tikzpicture}\end{center}
		For $\tomachdotthree$ and $\tomachdotfour$ the \emph{moreover} part follows from the \ih
				
		\item Transition $\tomachvar$. We are in the following situation:		
		$$\state_J = \dstate \var {\ctxp{\la\var\ctxtwo_n}} {\tape_J} {\tlog_n\cons\tlogtwo}
		\bisimstate \pamstated \var {\ctxp{\la\var\ctxtwo_n}} \history i {\tape_P} = \state_P$$		
		with $\tape_J \bisimtape \tape_P$ and $\tlog = \tlog_n\cons\tlogtwo \bisimlog (\history, i)$.
		The \LPAM can do a $\tomachvar$ transition (guaranteed by the depth invariant of \reflemma{pam-simple-tape}), but 
we have to verify that the two target states are 
still $\bisimstate$-related. By \reflemmap{pam-bisim-aux}{log}, we have $\tlogtwo\bisimlog(\history, \phi^n_\history(i))$. 
Then:
		\begin{center}
		\begin{tikzpicture}[node distance=30mm, auto, transform shape]
		\node (p) at (0,0) {$\dstate \var {\ctxp{\la\var\ctxtwo_n}} {\tape_J} {\tlog_n\cons\tlogtwo}$};
		\node (q) at (7,0) {$\ustate {\la\var\ctxtwo_n\ctxholep{\var}} \ctx {(\var,\ctxp{\la\var\ctxtwo_n},
\tlog_n\cons\tlogtwo)\cons\tape_J} {\tlogtwo}$};
		\node (w) at (0,-1) {$\pamstated \var {\ctxp{\la\var\ctxtwo_n}} \history i {\tape_P}$}; 
		\node (r) at (7,-1) 
{$\pamstateu {\la\var\ctxtwo_n\ctxholep{\var}} \ctx \history {\phi^n_\history(i)} {(\var,\ctxp{\la\var\ctxtwo_n}
)\cons\tape_P}$ }
;
		\node at (2.8,0) {$\tomachvar$};
		\node at (2.8,-1) {$\tomachvar$};
		\node at (0,-.5) {$\bisimstate$};
		\node at (7,-.5) {$\bisimstate$};
		\end{tikzpicture}
		\end{center}
		
		Now, the \emph{moreover} part. We have to prove that $\tlog_n\cons\tlogtwo \bisimlog (\history, \size\history)$. By 
hypothesis $\tlog_n\cons\tlogtwo \bisimlog (\history, i)$ and by \reflemma{pam-simple-tape}, we have $i = 
\size\history$.
		
		\item Transition $\tomacharg$. We are in the following situation:	
			$$\state_J = \ustate \tm {\ctxp{\ctxhole\tmtwo}} {\lpos\cons\tape_J'} \tlog
		\bisimstate \pamstateu \tm {\ctxp{\ctxhole\tmtwo}} \history i {\pos\cons\tape_P'} = \state_P$$		
		with $\lpos\cons\tape_J'\bisimtape\pos\cons\tape_P'$ and $\tlog\bisimlog(\history, i)$. The \LPAM can do a 
$\tomacharg$ transition, but we have to verify that the two target states are still $\bisimstate$-related. Namely, 
		we have to show that 
$\lpos\cons\tlog\bisimlog(\indp\pos i \cons \history,|\history|+1)$. Let us set $\historytwo \defeq 
\indp\pos i \cons \history$ 
and $j \defeq |\historytwo|=|\history|+1$. We check that all three hypothesis of the rule defining $\bisimlog$ hold:
		\begin{enumerate}
			\item Since $\tlog\bisimlog(\history, i)$, \reflemmap{pam-bisim-aux}{history} gives $\tlog\bisimlog(\indp\pos i \cons \history, 
i)$. Note that $\phi_{\historytwo}(j)=i$, that is,
$\tlog\bisimlog(\historytwo, \phi_{\historytwo}(j))$.
			\item Since $\lpos\cons\tape_J'\bisimtape \pos\cons\tape_P'$, if $\lpos=(\var,\ctxtwo,\tlogtwo)$, then 
$\pos=(\var,\ctxtwo)$ and thus $(\var,\ctxtwo)=(\var_j^{\historytwo},\ctxtwo_j^{\historytwo})=(\var,\ctxtwo)$.
			\item By \ih, the logged position $\lpos=(\var,\ctxtwo,\tlogtwo)$ on the \LJAM tape verifies 
$\tlogtwo\bisimlog(\history,|\history|)$. By \reflemmap{pam-bisim-aux}{history}, $\tlogtwo\bisimlog(\indp\pos i \cons 
\history,|\history|)$, that is, $\tlogtwo\bisimlog(\historytwo,j-1)$.
		\end{enumerate}
		Then the two target states match:
		\begin{center}\begin{tikzpicture}[node distance=30mm, auto, transform shape]
			\node (p) at (0,0) {$\ustate \tm {\ctxp{\ctxhole\tmtwo}} {\lpos\cons\tape_J'} \tlog$};
			\node (q) at (6,0) {$\dstate \tmtwo {\ctxp{\tm\ctxhole}} {\tape_J'} {\lpos\cons\tlog}$};
			\node (w) at (0,-1){$\pamstateu \tm {\ctxp{\ctxhole\tmtwo}} \history i {\pos\cons\tape_P'}$};
			\node (r) at (6,-1) {$\pamstated \tmtwo {\ctxp{\tm\ctxhole}} {\historytwo} j {\tape_P'}$};
			\node at (3,0) {$\tomacharg$};
			\node at (3,-1) {$\tomacharg$};
			\node at (0,-.5) {$\bisimstate$};
		\node at (6,-.5) {$\bisimstate$};
			\end{tikzpicture}\end{center}		
		
		\item Transition $\tomachjump$. We are in the following 
situation:	
$$\state_J = \ustate \tm {\ctxp{\tmtwo\ctxhole}} {\tape_J} {(\var,\ctxtwo,\tlogtwo)\cons\tlogtwo}
		\bisimstate \pamstateu \tm {\ctxp{\tmtwo\ctxhole}} \history i {\tape_P} = \state_P$$					
    with $\tape_J \bisimtape \tape_P$ and $(\var,\ctxtwo,\tlogtwo)\cons\tlogtwo \bisimlog (\history,i)$. The \LPAM can 
do a $\tomachjump$ transition, but we have to verify that the two target states are still $\bisimstate$-related. Note 
that, 
		since 
$(\var,\ctxtwo,\tlogtwo)\cons\tlogtwo\bisimlog(\history, i)$, we 
have $(\var,\ctxtwo)=(\var_i^\history,\ctxtwo_i^\history)$ and $\tlogtwo\bisimlog(\history, i-1)$. Therefore:
		
		\begin{center}\begin{tikzpicture}[node distance=30mm, auto, transform shape]
			\node (p) at (0,0) {$\ustate \tm {\ctxp{\tmtwo\ctxhole}} {\tape_J} {(\var,\ctxtwo,\tlogtwo)\cons\tlogtwo}$};
			\node (q) at (6,0) {$\ustate \var \ctxtwo {\tape_J} {\tlogtwo}$};
			\node (w) at (0,-1) 
{$\pamstateu \tm {\ctxp{\tmtwo\ctxhole}} \history i {\tape_P}$};
			\node (r) at (6,-1) {$\pamstateu {\var_i^\history} {\ctxtwo_i^\history} \history {i-1} {\tape_P}$};

			\node at (3,0) {$\tomachjump$};
			\node at (3,-1) {$\tomachjump$};
			\node at (0,-.5) {$\bisimstate$};
		\node at (6,-.5) {$\bisimstate$};
			\end{tikzpicture}\end{center}
			The \emph{moreover} part follows from the \ih		
	\end{itemize}
\end{proof}

\section{Proofs of \refsect{typed-invariant} (Typed Interactions Are 
Exhausting)}
\label{sect:types-appendix}
In the first part of this section we prove the S-exhaustible state invariant for the \SIAM, then use it to extract 
\LIAM states from \SIAM ones, and finally prove the strong bisimulation between the two machines.

In the second part we deal with showing that the \SIAM never loops on type derivations. The key tool shall be a 
loop-preserving bisimulation between \SIAM states of the type derivation of $\tm$ and $\tmtwo$ if $\tm \towh \tmtwo$.

\subsection{S-Exhaustible Invariant}
We present an example of type derivation for the term 
$\tm=(\la\vartwo{\la\var{\var\vartwo}})\mathsf{I}(\la\varthree\varthree)$, the 
same 
example used in Section~\ref{sec:IJK}. We use it to explain the next 
technical definitions. We have annotated the occurrences of $\initty$ with 
natural numbers, so that they represent the run on the type derivation.
\[
\infer{\tjudg{}{(\la\vartwo{\la\var{\var\vartwo}})\mathsf{I}(\la\varthree\varthree)}
 {\initty_{\uppt{\red 1}}}}{
 \infer{\tjudg{}{(\la\vartwo{\la\var{\var\vartwo}})\mathsf{I}} 
 {\arr{\mset{\arr{\mset{\initty_{\uppt{\red 
 						{13}}}}}\initty_{\downpt{\blue 9}}}}\initty_{\uppt{\red 
 2}}}}{
 {\infer{\tjudg{}{\la\vartwo{\la\var{\var\vartwo}}}{\arr{\mset{\initty_{\downpt{\blue
 						{18}}}}}{ 
 	\arr{\mset{\arr{\mset{\initty_{\uppt{\red 
 						{14}}}}}\initty_{\downpt{\blue 
 	8}}}}\initty_{\uppt{\red 
 	3}}}}}{
 \infer{\tjudg{\vartwo:\mset\initty}{\la\var{\var\vartwo}} 
 {\arr{\mset{\arr{\mset{\initty_{\uppt{\red 
 						{15}}}}}\initty_{\downpt{\blue 7}}}}\initty_{\uppt{\red 
 4}}}}{
 \infer{\tjudg{\vartwo:\mset\initty,\var:\mset{\arr{\mset\initty}\initty}}{\var\vartwo}
  {\initty_{\uppt{\red 5}}}}{
 \infer{\tjudg{\var:\mset{\arr{\mset\initty}\initty}} 
 {\var}{\arr{\mset{\initty_{\downpt{\blue 
 				{16}}}}}\initty_{\uppt{\red 6}}}}{}\qquad 
\infer{\tjudg{\vartwo:\mset\initty}{\vartwo}{\initty_{\uppt{\red 
				{17}}}}}{}}}}\qquad 
\infer{\tjudg{}{\mathsf{I}}{\initty_{\uppt{\red {19}}}}}{}}}\qquad 
\infer{\tjudg{}{\la\varthree\varthree}{\arr{\mset{\initty_{\downpt{\blue 
{12}}}}}\initty_{\uppt{\red 
				{10}}}}}{
	\tjudg{\varthree:\mset\initty}{\varthree}{\initty_{\uppt{\red {11}}}}}}
\]
We start by defining the notions of typed tests used to define S-exhaustible 
states. Somewhat surprising, while in the 
\LIAM tape tests are easy to define and log tests require some syntactical gymnastics, here it is the other way 
around.

\paragraph{Type Positions and Generalized States} To define tests, we have to consider 
a slightly more general notion of \SIAM state. In \refsect{SIAM}, a state 
is a quadruple $(\tyd, \ruleoc, \tyctx, \pol)$ where $\ruleoc$ is an occurrence 
of a judgement 
$\tjudg{\tye}{\tmtwo}{\ty}$ in $\tyd$, $\pol$ is a direction, and $\tyctx$ is a type context isolating an occurrence of 
$\initty$ in $\ty$. The generalization simply is to consider type contexts $\tyctx$ such that $\tyctxp{\tytwo} = \ty$ 
for some $\tytwo$, that is, not necessarily isolating $\initty$. A pair $(\tytwo, \tyctx)$ such that $\tyctxp{\tytwo} = 
\ty$ is called a position in $\ty$. 

Note that the \SIAM can be naturally adapted to this more general notion of state, that follows an arbitrary formula 
$\tytwo$, not necessarily $\initty$
and it amounts to simply replace $\initty$ 
with $\tytwo$.

To easily manage \SIAM states we also use a concise notations, writing $\tjudg{}{\tm}{\ty,\tyctx}$ for a state 
$\state= (\tyd, \ruleoc, (\ty,\tyctx), \pol)$ where $\ruleoc$ is 
$\tjudg{\tye}{\tm}{\tyctxp\ty}$ for some $\tye$, potentially 
specifying the direction via colors and under/over-lining.


\paragraph{\SIAM Tests} Given a \SIAM state $\state= (\tyd, \ruleoc, (\ty,\tyctx), \pol)$, the underlying idea is that 
the judgement occurrence $\ruleoc$ encodes the log of the \LIAM, while the type 
context $\tyctx$ encodes the 
tape. It is then natural to define two kinds of test, one for judgements and 
one for type contexts.

The intuition is that a test focuses on (the occurrence of) an element $\tytwo$ 
of a sequence $\mty$ related to 
$\state$, and that these sequence elements play the role of logged positions in the \LIAM. These sequence elements 
are of two kinds:
\begin{enumerate}
 \item \emph{Elements containing $\ruleoc$}: those in which the focused 
 judgment $\ruleoc$ itself is contained, 
corresponding to the logged positions in the log of the \LIAM. Note that the positions on the log are those for which 
the \LIAM has previously found the corresponding arguments. In the \SIAM these arguments are exactly those in which the 
focused judgment is contained.

\item \emph{Elements appearing in $\tyctx$}: those in 
the right-hand type of $\state$ in which the focused type $\ty$ is contained, 
corresponding to the logged positions on 
the tape of the \LIAM. They correspond to \LIAM queries for which the argument has not yet been found, or positions to 
which the \LIAM is backtracking to.
\end{enumerate}
Each one of these elements is then identified by a judgement occurrence 
$\tydtwo$ and a position $(\tytwo,\tyctxtwo)$ 
in the right-hand type of $\tydtwo$. 

\begin{definition}[Focus]
 A \emph{focus} $\focus$ in a derivation $\tyd$ is a pair $\focus = (\ruleoc, 
 (\ty,\tyctx))$ of a judgement occurrence 
$\ruleoc$ and of a type position $(\ty,\tyctx)$ in the right-hand type $\tyctxp\ty$ of $\ruleoc$.
\end{definition}

The intuition is that exhausting a test $\state_{\ruleoc, (\ty,\tyctx)}$ in $\tyd$ shall amount to retrieve the 
axiom of 
$\tyd$ of type $\ty$ that would be substituted by that sequence element of type $\ty$ by reducing $\tyd$ via 
cut-elimination---the 
definition of exhaustible tests is given below, after the definition of tests.

\begin{definition}[Judgement tests]
	Let $\state=(\tyd, \ruleoc, (\ty,\tyctx), \pol)$ be a \SIAM state. Let $r_i$ be $i$-th $\tyapp$ rule found traversing $\tyd$ 
by descending from the focused judgement $\ruleoc$ towards the final judgement 
of $\tyd$. Let $\ruleoc_i$ be the 
judgement of the sequence $\mty_i$ in the right premise of $r_i$ traversed in 
such a descent (careful: $\ruleoc_i$ is 
the $j$-th judgement of $\mty_i$ for some $j$, that is, the index $i$ denotes 
the connection with rule $r_i$, not the 
position in $\mty_i$). 
Let $\ruleoc_i$ be $\tjudg{\tye}{\tm}{\tytwo}$. Then 
$\state_{\focus}^i = (\tyd,\ruleoc_i,(\tytwo,\ctxhole), 
 \downpt)$ is the $i$-th judgement test of $\state$, having as focus $\focus 
 \defeq (\ruleoc_i,(\tytwo,\ctxhole))$. 
\end{definition}
We 
often
omit the judgement from the focus, writing simply $\state_{(\tytwo,\ctxhole)}$, 
and even concisely 
note $\state_{\focus}$ as 
$\tjudg{}{\blue\tm}{\tytwo,\ctxhole_\downpt}$.

Note that judgement tests always have type context $\ctxhole$. According to the 
intended correspondance judgement/ 
log and type context/tape between the \SIAM and the \LIAM, having type context $\ctxhole$ corresponds to the fact that 
the log tests of the \LIAM always have an empty tape.

\begin{example}[Judgement test]
Let us give an example of judgement test in the context of the given example of 
\SIAM run. If we 
consider the state $\uppt\red{11}$, we find its log tests going down in the 
type 
derivation for each $\tyapp$ rule traversed from the right hand side. In this 
case we immediately find the judgment 
$\tjudg{}{\la\varthree\varthree}{\arr{\mset\initty}\initty}$. Then, 
$\tjudg{}{\la\varthree\varthree}{\ctxholep{\arr{\mset\initty}\initty}_\downpt}$
 is a log test for 
$\uppt\red{11}$. Since between 
$\tjudg{}{\la\varthree\varthree}{\arr{\mset\initty}\initty}$
and the root of the derivation we do not cross any other suitable $\tyapp$ 
rule, there are no other log tests for $\uppt\red{11}$. 
\end{example}

\paragraph{Type (Context) Tests} While judgement tests depend only on the 
judgement occurrence $\ruleoc$ of a state 
$\state = (\tyd, \ruleoc, (\ty,\tyctx), \pol)$, type context tests---dually---fix $\ruleoc$ and depend only on the 
type 
context $\tyctx$ of $\state$, that is, they all focus on sequence elements of the form $(\ruleoc, (\tytwo,\tyctxtwo))$ 
where $\tyctxtwop\tytwo = \tyctxp\ty$ and $\tyctx = \tyctxtwop\tyctxthree$ for some type context $\tyctxthree$. Namely, 
there is one type context test (shortened to \emph{type test}) for every sequence 
in which the hole of $\tyctx$ is contained. We need some notions about type contexts, in particular a notion of level 
analogous to the one for term contexts.

\paragraph{Terminology About Type Contexts} Define type contexts $\tyctx_n$ of level $n\in\nat$ as follows:

\[\begin{array}{lclr}
	\tyctx_0 &\defeq &\ctxhole \mid \arr\mty\tyctx_0
	\\
	\tyctx_{n+1} &\defeq &\arr{\mset{\myldots\tyctx_n\myldots}}\ty \mid \arr\mty\tyctx_{n+1}	
	\end{array}\]
Clearly, every type context $\tyctx$ can be seen as a type context $\tyctx_n$ for a unique $n$, and viceversa a type 
context of level $n$ is also simply a type context---the level is then sometimes omitted.
A \emph{prefix} of a context $\tyctx$ is a context $\tyctxtwo$ such that $\tyctxtwop\tyctxthree = \tyctx$ for some 
$\tyctxthree$. Given $\tyctx$ of level $n>0$, there is a smallest prefix context $\tyctx|_i$ of level $0<i\leq 
n$, and it has the form $\tyctxtwo\ctxholep{\arr{\mset{\myldots\ctxhole\myldots}}\ty}$ for a type context $\tyctxtwo$ 
of level $i-1$.


\begin{definition}[Type tests]
	Let $\state=(\tyd, \ruleoc, (\ty,\tyctx), \pol)$ be a \SIAM state and $n$ be the level of $\tyctx$. The sequence 
of directed prefixes $\DiPref\tyctx$ of $\tyctx$ is the sequence of pairs  
	$(\tyctxtwo,\poltwo)$, where $\tyctxtwo$ is a prefix of $\tyctx$, defined as follows:
	\[\begin{array}{lclll}
	\DiPref\tyctx & \defeq & \mset\cdot & \mbox{if }$n=0$
	\\
	\DiPref\tyctx & \defeq & \mset{(\tyctx|_1,\uppt), \ldots,(\tyctx|_n,\uppt^{n-1})} &\mbox{if }$n>0$	
	\end{array}\]
The $i$-th directed prefix (from left to right) $(\tyctxtwo,\poltwo)$ in $\DiPref\tyctx$ induces the type test 
$\state_{\focus}^i \defeq (\tyd, \ruleoc, (\tyctxthreep\ty,\tyctxtwo), \poltwo)$ of $\state$ and focus $\focus \defeq 
(\ruleoc,(\tyctxthreep\tytwo,\tyctxtwo))$, where $\tyctxthree$ is the 
unique type context such that $\tyctx = \tyctxtwop\tyctxthree$.
\end{definition}

According to the idea that type tests correspond to 
the tape tests of the \LIAM,  note that the first element (on the left) of the sequence $\DiPref\tyctx$ has $\uppt$ 
direction, and that the direction alternates along the sequence. This is the analogous to the fact that the tape test 
associated to the first logged position on the tape 
(from left to right) has always direction $\downp$, and passing to the test of the next logged position on the tape switches the direction. 

\begin{example}[Type test]
 Let us now give examples of type tests in the example of \SIAM run that we provided. We 
compute the tape tests of $\uppt\red{13}$. Its type is 
\[\arr{\mset{\arr{\mset{\ctxholep\initty}}\initty}}\initty\]
with respect to the notation of the previous definition, we have $\ty = \initty$ and $\tyctx = 
\arr{\mset{\arr{\mset{\ctxhole}}\initty}}\initty$. The level of $\tyctx$ is $2$. 
Tape tests are associated with the pairs in 
$\DiPref{\arr{\mset{\arr{\mset{\ctxhole}}\initty}}\initty}$, namely 
$\mset{(\arr{\mset{\ctxhole}}\initty, \uppt), (\arr{\mset{\arr{\mset{\ctxhole}}\initty}}\initty,\downpt)}$.
\end{example}

\begin{definition}[State respecting a focus]
	Let $\focus=(\ruleoc, (\ty,\tyctx))$ be a focus. A \SIAM state $\state$ respects $\focus$ if it is an axiom 
	$\tjudg{}{\blue\var}{\ctxholep\ty_\downpt}$ for some variable $\var$ (the typing context of $\state$, which is omitted 
by convention, is $\var:\ty$).
\end{definition}

\begin{definition}[S-Exhaustible states]
	The set $\exstates_S$ of S-exhaustible states is the 
	smallest set such that if $\state\in\exstates_S$, then for each type or 
	judgement test of $\state_\focus$ of focus $\focus$ there exists a run 	
$\run: \state_f \tosiam^*\tomachbttwo\statetwo$ where $\statetwo$ respects $\focus$ and for 
the shortest such run $\statetwo\in\exstates_S$.
\end{definition}

\begin{lemma}[S-exhaustible invariant]
\label{l:S-invariant-siam}
	Let $\tm$ be a closed term, $\tyd\pof\tjudg{\tye}{\tm}{\ty}$ a sequence type derivation for it, and $\run:\ \tjudg{}{\tm}{\ctxholep\ty_\uppt} \tosiam^k
\state$ an initial \SIAM run. Then $\state$ is S-exhaustible. 
\end{lemma}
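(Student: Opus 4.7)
The plan is to proceed by induction on the length $k$ of the run $\run$, mirroring the template of the exhaustibility proofs for the \LIAM (\refprop{good-invariant}), the \LJAM (\reflemma{good-invariant-jam}), and the \HAM ($\upp$-exhaustibility). Since the \SIAM is bi-deterministic, the combination of the inductive hypothesis with reversibility will let me exhaust most tests by running the machine ``backwards'' from the tested position.

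For the base case $k = 0$, the state is $\tjudg{}{\tm}{\ctxholep\ty_\uppt}$, positioned on the final judgement of $\tyd$ with type context $\ctxhole$. There are no $\tyapp$ rules strictly below the final judgement, hence no judgement tests; and $\DiPref{\ctxhole} = \mset\cdot$, hence no type tests. The initial state is therefore vacuously S-exhaustible. For the inductive step, assume $\run: \state_0 \tosiam^{k} \state \tosiam \statetwo$ with $\state \in \exstates_S$, and consider each possible transition $\state \tosiam \statetwo$. The tests of $\statetwo$ split into two classes:
\begin{itemize}
\item \emph{Inherited tests}, whose focus lies in a region of $\tyd$ untouched by the transition. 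For these the exhausting run is built by prepending/appending the transition (or its reverse) to the exhausting run provided by the IH on $\state$, together with invariance properties of tests analogous to \reflemma{outer-inv} (direction flip, irrelevance of the ``other'' data structure).
\item \emph{Fresh tests}, created by the transition itself. Transition $\tomacharg$ creates a new type test (a sequence element of $\tyctx$ has just been uncovered) and transition $\tomachvar$ creates a new judgement test (a new $\tyapp$ rule lies below the focus, namely the one whose right premise contains the visited axiom). In both cases, the fresh test is exhausted in a single $\tomachbttwo$/$\tomachbtone$ step by bi-determinism, and the resulting surrounding state is either $\state$ itself or its trivial reshuffling, hence S-exhaustible by the IH.
\end{itemize}

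The main technical obstacle lies in the variable transitions $\tomachvar$ and $\tomachbttwo$, which teleport across $\tyd$ between an axiom and the binder via the bijection of \reflemma{relevance}. For these the tests of $\statetwo$ live in syntactically different parts of $\tyd$ than those of $\state$, so inheritance must be justified by showing that a judgement (resp.\ type) test of $\statetwo$ coincides, up to direction flip, with a test of $\state$ or of some intermediate state reachable by a constant number of moves. This step uses in an essential way the correspondence between the $i$-th linear type of a sequence $\mty$ in a $\tylam$ premise and the $i$-th axiom on $\var$ from \reflemma{relevance}, which is the typed counterpart of the \LIAM operation of storing $\tlog_n$ in a logged position. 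Once this correspondence is set, the remaining bookkeeping transitions $\tomachdotone$–$\tomachdotfour$ and the application transitions $\tomacharg$, $\tomachbtone$ follow from the same template.
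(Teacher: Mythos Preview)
Your approach is correct and matches the paper's template: induction on the run length, with the inductive step splitting the tests of $\statetwo$ into inherited ones (handled via the IH on $\state$ together with invariance properties analogous to \reflemma{outer-inv}) and fresh ones (handled by reversing the last transition and then appealing to the IH).

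One correction is in order, though: you have the roles of $\tomacharg$ and $\tomachvar$ swapped. Transition $\tomacharg$ moves the focus into a right premise of a $\tyapp$ rule, so it creates a fresh \emph{judgement} test (and drops a type test, since the type context sheds its outermost sequence level); this fresh judgement test is not exhausted in one step, but rather by a single $\tomachbtone$ step---the reverse of $\tomacharg$---landing on what was the first type test of $\state$, and then continuing with the exhausting run supplied by the IH. Conversely, transition $\tomachvar$ moves from an axiom down to its binding $\tylam$ rule, wrapping the type context in one additional sequence level, so it creates a fresh \emph{type} test (and loses the judgement tests associated to the $\tyapp$ rules lying between the axiom and the binder); it is \emph{this} fresh test that is exhausted in a single $\tomachbttwo$ step back to the axiom, as your bi-determinism argument describes. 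With this swap corrected, your sketch goes through.
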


\begin{proof}
		By
	induction on $k$. For $k=0$ there is nothing to prove because the initial state $\state_0 = \tjudg{}{\tm}{\ctxholep\ty_\uppt}$ has has no judgement nor type tests. Then suppose
	$\run':\state_0\toliam^{k-1}\statetwo$ and that the run continues with $\statetwo\tosiam\state$. By \ih, $\statetwo$ is 
S-exhaustible.

\emph{Terminology}: when a test state satisfies the clause in the definition of S-exhaustible states we say that it is \emph{positive}. 

	 Cases of 
	$\statetwo\tosiam\state$:
	
	\begin{itemize}
		\item Case $\tomachdotone$.
		\[\begin{array}{clc}
		\statetwo=\infer{\tjudg{}{\red{\tm\tmtwo}}{\tyctxp{\initty_{\uppt}}(=\ty)}} 
		{\tjudg{}{\tm}{\arr{\mty}{\ty}} & \mset\vdash} &
		\tomachdotone &
		\infer{\tjudg{}{\tm\tmtwo}{\ty 
			}}{\tjudg{}{\red\tm}{\arr{\mty}{\tyctxp{\initty_{\uppt}}}} & 
			\mset\vdash}=\state
		\end{array}\]
		\begin{itemize}
			\item \emph{Judgement tests.} Note that $\state$ has the same judgement tests of $\statetwo$, which are 
positive by the \ih
			\item \emph{Type tests.} We first consider the type tests of direction $\uppt$. 
			Let us $\red{\state_\focus}$ be one of them. We observe that there 
			is a 
			corresponding type test $\red{\statetwo_\focus}$ of $\statetwo$, that by \ih it is positive, and that 
			$\red{\statetwo_\focus}\tosiam\red{\state_\focus}$. Since the machine is deterministic also $\red{\state_\focus}$ 
is positive. Let us now consider a type test $\blue{\state_\focus}$ of direction $\downpt$. We observe 
			that there is a 
			corresponding type test $\blue{\statetwo_\focus}$ of $\statetwo$, that it is positive by \ih, and that 
			$\blue{\state_\focus}\to\blue{\statetwo_\focus}$. Then $\blue{\state_\focus}$ is positive.
		\end{itemize}
		\item Case $\tomachdottwo$. Identical to the previous one.
		\item Case $\tomachvar$.
		\[
		\begin{array}{clc}
		\statetwo=\infer*{\infer{\tjudg{}{\la\var\ctxp{\var}} 
				{\arr{\mset{\ldots\ty_i\ldots}}\tytwo}}{}}
		{\infer[i]{\tjudg{}{\red\var}{\tyctxp{\initty_{\uppt}}_i(=\ty_i)}}{}}  & 
		\tomachvar
		& \infer*{\infer{\tjudg{}{\blue{\la\var\ctxp{\var}}}
				{\arr{\mset{\ldots\tyctxp{\initty_{\downpt}}_i\ldots}}\tytwo}}{}}
		{\infer[i]{\tjudg{}{\var}{\ty_i}}{}}=\state		
		\end{array}
		\]
		\begin{itemize}
			\item \emph{Judgement tests.} Judgement tests of $\state$ are a subset 
			of judgement tests of $\statetwo$ and thus positive by \ih
			
			\item \emph{Type tests.} Let $n$ be the level of $\tyctx$. Let $\state^j$ be the type test of $\state$ 
associated to the $j$-th triple in $\DiPref{\arr{\mset{\ldots\tyctx\ldots}}\tytwo}$. Three 
cases, depending on the index $j$ of $\state^j$:
\begin{enumerate}
 \item $j = 1$: then $\state^1$ is $\tjudg{}{\red{\la\var\ctxp{\var}} }
			{\tyctxp{\initty}_{i\uppt},\arr{\mset{\ldots\ctxhole\ldots}}\tytwo}$. Note that 
$\state^1\tomachbttwo\,\tjudg{} 
			{\blue\var}{\tyctxp{\initty}_{i\downpt},\ctxhole}$, which has no type tests 
			and has the same judgement tests of $\statetwo$, which by 
			\ih are positive. Hence, $\state^1$ is S-exhaustible.
 
    \item \emph{$j$ is even}: for 
			$\blue\state^j$ (of direction $\downpt$) there is a corresponding type test $\red\statetwo^{j-1}$ of odd index 
of
			$\statetwo$, having 
direction $\uppt$ and such that 
			$\red\statetwo^{j-1}\tomachvar\blue\state^j$.  Thus one can 
			conclude by \ih and determinism of the \SIAM.
    
    \item \emph{$j \neq 1$ is odd}: for 
			$\red\state^j$ (of direction $\uppt$) there is a 
			corresponding type test $\blue\statetwo^{j-1}$ of even index of
			$\statetwo$, having direction $\downpt$ and such that 
			$\red\state^j\tomachbttwo \blue\statetwo^{j-1}$. Thus one can 
			conclude by \ih
\end{enumerate}

		\end{itemize}
		\item Case $\tomachbttwo$.
		\[\begin{array}{clc}
		\statetwo=\infer*{\infer{\tjudg{}{\red{\la\var\ctxp{\var}}}
				{\arr{\mset{\ldots\tyctxp{\initty_{\uppt}}_i\ldots}}\tytwo}}{}}
		{\infer[i]{\tjudg{}{\var}{\ty_i (= \tyctxp\initty_i)}}{}}
		& \tomachbttwo
		& \infer*{\infer{\tjudg{}{\la\var\ctxp{\var}} 
				{\arr{\mset{\ldots\ty_i\ldots}}\tytwo}}{}}
		{\infer[i]{\tjudg{}{\blue\var}{\tyctxp{\initty_{\downpt}}_i}}{}}=\state
		\end{array}\]
		\begin{itemize}
			\item \emph{Judgement tests.} The first type test of $\statetwo$ is 
			$\statetwo^1 \defeq \tjudg{}{\red{\la\var\ctxp{\var}}} 
			{\tyctxp{\initty}_{i\uppt},\arr{\mset{\ldots\ctxhole\ldots}}\tytwo}$. Note that 
$\statetwo^1\tomachbttwo\,\tjudg{} 
			{\blue\var}{\tyctxp{\initty}_{i\downpt},\ctxhole} \eqdef \statethree$ and that $\statethree$ exhausts 
$\statetwo^1$, and it is the first such state. Since $\statetwo^1$ is positive, $\statethree$ is S-exhaustible. Note 
that $\statethree$ has the same judgment tests of $\state$, which are then 
positive.
			
			\item \emph{Type tests.} For each odd type test 
			$\red\state^i$ of $\state$ (whose direction is $\uppt$), the corresponding even 
			type test $\blue\statetwo^{i+1}$ of $\statetwo$ has direction $\downpt$, is positive by \ih, and such that 
			$\red\state^i\tomachvar\blue\statetwo^{i+1}$. Then $\red\state^i$ is positive. For each even type 
			test 
			$\blue\state^i$ of $\state$ (whose direction is $\downpt$), the corresponding odd 
			type test $\red\statetwo^{i+1}$ of $\statetwo$ has direction $\uppt$, is positive by \ih, and such that 
			$\red\statetwo^{i+1}\tomachbttwo\blue\state^i$. Then $\state^i$ is positive by determinism of the 
\SIAM.
		\end{itemize}
		
		\item Cases $\tomachdotthree$ and $\tomachdotfour$. They are identical to case 
		$\tomachdotone$.
		
		\item Case $\tomacharg$.
		\[\begin{array}{clc}
		\statetwo=\infer{\tjudg{}{\tm\tmtwo}{\ty}} 
		{\tjudg{}{\blue\tm}{\arr{\mset{\ldots 
						\tyctxp{\initty_{\downpt}}_i\ldots}}{\ty}}
			& \tjudgi{}{\tmtwo}{\tytwo_i (=\tyctxp{\initty_{\downpt}}_i)}}
		& \tomacharg &
		\infer{\tjudg{}{\tm\tmtwo}{\ty}} 
		{\tjudg{}{\tm}{\arr{\mset{\ldots 
						\tytwo_i\ldots}}{\ty}}
			& \tjudgi{}{\red\tmtwo}{\tyctxp{\initty_{\uppt}}_i}}=\state
		\end{array}\]
		\begin{itemize}
			\item \emph{Judgement tests.} Judgement tests of $\state$ are those of 
			$\statetwo$, which are positive by \ih, plus 
			$\state^\tmtwo \defeq \tjudg{}{\blue\tmtwo}{\tyctxp{\initty}_{i\downpt},\ctxhole}$. 
			Please note that 
			$\state^\tmtwo\tomachbtone\,\tjudg{}{\red\tm}{\tyctxp{\initty}_{i\uppt}, 
				\arr{\mset{\ldots \ctxhole\ldots}}{\ty}}\eqdef\statetwo^t$. Now,
			$\statetwo^t$ is a type test of $\statetwo$ and by \ih is 
			positive. Then $\state^\tmtwo$ is positive. 
			
			\item \emph{Type tests.} For each odd type test 
			$\red\state^i$ of $\state$ (whose direction is $\uppt$), the corresponding even 
			type test $\blue\statetwo^{i+1}$ of $\statetwo$ has direction $\downpt$, is positive by \ih, and such that 
			$\blue\statetwo^{i+1}\tomacharg\red\state^i$. Then $\state^i$ is positive by determinism of the 
\SIAM. For each even type 
			test 
			$\blue\state^i$ of $\statetwo$ (whose direction is $\downpt$), the corresponding odd 
			type test $\red\statetwo^{i+1}$ of $\statetwo$ has direction $\uppt$, is positive by \ih, and such that 
			$\blue\state^i\tomachbtone\red\statetwo^{i+1}$. Then $\blue\state^i$ is positive.
		\end{itemize}
		
		\item Case $\tomachbtone$.
		\[\begin{array}{clc}
		\statetwo=\infer{\tjudg{}{\tm\tmtwo}{\ty}} 
		{\tjudg{}{\tm}{\arr{\mset{\ldots 
						\tytwo_i\ldots}}{\ty}}
			& \tjudgi{}{\blue\tmtwo}{\tyctxp{\initty_{\downpt}}_i(=\tytwo_i)}}
		& \tomachbtone &
		\infer{\tjudg{}{\tm\tmtwo}{\ty}} 
		{\tjudg{}{\red\tm}{\arr{\mset{\ldots 
						\tyctxp{\initty_{\uppt}}_i\ldots}}{\ty}}
			& \tjudgi{}{\tmtwo}{\tytwo_i}}=\state
		\end{array}\]
		\begin{itemize}
			\item \emph{Judgement tests.} All judgement tests of $\state$ are judgement 
			test of $\statetwo$, which are this way positive by \ih
			
			\item \emph{Type tests.} The first type test of $\state$ is 
			$\state^1 \defeq \tjudg{}{\red\tm}{\tyctxp{\initty}_{i\uppt}, 
				\arr{\mset{\ldots \ctxhole\ldots}}{\ty}}$. 
			Please note that $\statetwo^\tmtwo \defeq \tjudg{}{\blue\tmtwo} {\tyctxp{\initty}_{i\downpt},\ctxhole}$ is a 
judgement test of $\statetwo$ such that $\statetwo^\tmtwo \tomachbtone \state^1$. By \ih, 
$\statetwo^\tmtwo$ is positive. By determinism of the \SIAM, $\state^1$ is positive.

			For each odd type test 
			$\red\state^i$ of $\state$ (whose direction is $\uppt$), the corresponding even 
			type test $\blue\statetwo^{i-1}$ of $\statetwo$ has direction $\downpt$, is positive by \ih, and such that 
			$\blue\statetwo^{i-1}\tomachbtone\red\state^i$. Then $\red\state^i$ is positive by determinism 
of the \SIAM. For each even type 
			test 
			$\blue\state^i$ of $\statetwo$ (whose direction is $\downpt$), the corresponding odd 
			type test $\red\statetwo^{i-1}$ of $\statetwo$ has direction $\uppt$, is positive by \ih, and such that 
			$\blue\state^i\tomacharg\red\statetwo^{i-1}$. Then $\blue\state^i$ is positive.
			
		\end{itemize}
	\end{itemize}
\end{proof}

\subsection{Extracting \LIAM States from \SIAM S-Exhaustible States, and the \LIAM/\SIAM Strong Bisimulation} 
From S-exhaustible states one is able to \emph{extract} \LIAM states, as the 
following definition shows. Please note that the definition is well-founded, 
precisely because the objects are S-exhaustible states. Indeed, the induction 
principle used to define S-exhaustability allows recursive definition on 
S-exhaustible states to be well-behaved. 
\begin{definition}[Extraction of logged positions]
	Let $\state$ be an S-exhaustible \SIAM state in a derivation $\tyd$, $\tm$ 
	be the final term in $\tyd$, and $\state_\focus$ be a judgement or type 
	test of $\state$. Since 
		$\state$ is 
		S-exhaustible, there is an exhausting run 
		$\state_\focus\tosiam^+ \statetwo
		\in\exstates_S$. Let $\var$ be the variable of $\statetwo$. Then the logged position extracted from 
$\state_\focus$ is 
		$\elpos{\state_\focus} \defeq 
(\var,\la\var\ctxtwo_n,\elpos{\statetwo^1}\cdot\ldots\cdot\elpos{\statetwo^n})$, where 
		$\ctxtwo_n$ is the context (of level $n$)
		retrieved traversing $\tyd$ from $\statetwo$ to 
		the binder of $\l\var$ of $\var$ in $\tm$ and $\statetwo^i$ is the 
		$i$-th judgement test of $\statetwo$.
\end{definition}

\begin{definition}[Extraction of logs, tapes, and states]
	Let $\state=(\tyd, \ruleoc, (\ty,\tyctx), \pol)$ be an S-exhaustible \SIAM state where $\tm$ is the final term in 
$\tyd$, and $\ruleoc$ is $\tjudg{\tye}{\tmtwo}{\tyctxp\ty}$. The \LIAM state extracted from $\state$  is 
	$\estate{\state} \defeq \nopolstate{\tmtwo}{\ctx_\state}{\etape\state}{\elog{\state}}{\pol}$ 
	where
	\begin{itemize}
		\item \emph{Context}: $\ctx_\state$ is the only term context such that $\tm = \ctx_\state\ctxholep\tmtwo$;
		\item \emph{Log}:
		$\elog{\state}\defeq\lpos_1\cdots\lpos_i\cdots\lpos_n$ where $\lpos_i = \elpos{\state^i_\focus}$ where 
$\state^i_\focus$ is the $i$-th judgement test of $\state$.		
		\item \emph{Tape}: $\etape\state = \etapeauxs{\tyctx,0}$ where $\etapeauxs{\tyctx,i}$ is the auxiliary function 
defined by induction on $\tyctx$ as follows.
		\[\begin{array}{lcl}
		\etapeauxs{\ctxhole,i} &\defeq &\stempty 
		\\
		\etapeauxs{\arr\mty{\tyctx}, i} & \defeq & \resm\cdot \etapeauxs{\tyctx,i}
		\\
		\etapeauxs{\arr{\mset{\myldots\tyctx\myldots}}\tytwo,i}
		& \defeq & \elpos{\state^i_\focus}\cdot\etapeauxs{\tyctx, i+1}
		\end{array}
		\]
		where $\state^i_\focus$ is the $i$-th type test of $\state$.
	\end{itemize}
	We use $\bisimtypes$ for the extraction relation between S-exhaustible \SIAM states and \LIAM states defined as $(\state, \estate{\state}) \in\bisimtypes$.
\end{definition}

First of all, we show that the extracted stated respects the \LIAM invariant about the length of the log.

\begin{lemma}
\label{l:extraction-length}
 Let $\state$ be an S-exhaustible \SIAM state and $\estate{\state} = 
\nopolstate{\tm}{\ctx_\state}{\etape\state}{\elog{\state}}{\pol}$ the \LIAM state extracted from it. Then the level of 
$\ctx_\state$ is exactly the length of $\elog{\state}$, that is, $(\tm, \ctx_\state,\elog{\state})$ is a logged 
position.
\end{lemma}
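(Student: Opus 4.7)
The plan is to observe that both quantities count the same structural feature of the type derivation $\tyd$ and the term $\tm$ it types, via the bijective correspondence between the shape of $\tyd$ and that of $\tm$. The length $\size{\elog{\state}}$ equals, by construction, the number of judgement tests of $\state$, which by the definition of judgement tests is the number of $\tyapp$ rules $r_i$ encountered when descending from $\ruleoc$ to the conclusion of $\tyd$ \emph{through a right premise}. On the other hand, the level of $\ctx_\state$ counts the number of argument positions of $\tm$ in which the hole of $\ctx_\state$ (i.e., the subterm $\tmtwo$ being typed by $\ruleoc$) is nested.

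The key observation is purely structural: $\tyapp$ rules in $\tyd$ are in bijection with application nodes in the subject term, and the rules $\tylam$/$\tylamstar$ with abstractions. Hence the path in $\tyd$ from $\ruleoc$ to the conclusion is in bijection with the path in the term from $\tmtwo$ up to $\tm$, which is precisely the spine of $\ctx_\state$. Along this path, descending through the right premise of an $\tyapp$ rule corresponds to moving from an argument $\tmtwo$ of an application $\tmthree\tmtwo$ into that application, which by definition of leveled contexts is exactly the move that increments the level by one. Conversely, descending through the left premise of an $\tyapp$ rule, or through a $\tylam$/$\tylamstar$ rule, corresponds to moving out of a function position or out of an abstraction body, and does not change the level.

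Formally, I would proceed by induction on the number $k$ of inference rules strictly between $\ruleoc$ and the conclusion of $\tyd$. In the base case $k=0$, the state is focused on the conclusion, so $\ctx_\state = \ctxhole$ has level $0$, and there are no $\tyapp$ rules to descend through, so $\elog{\state} = \epsilon$, matching lengths. In the inductive step, let $r$ be the rule immediately below $\ruleoc$ and apply the induction hypothesis to the \SIAM state focused on the conclusion of $r$ (equipped with an appropriately adjusted type context, which is irrelevant for the present count as both the log extraction and the context level depend only on $\ruleoc$). The three cases $r = \tylam$, $r = \tyapp$ with $\ruleoc$ its left premise, and $r = \tyapp$ with $\ruleoc$ its right premise produce, respectively, an extension of $\ctx_\state$ by $\la\var\ctxhole$, by $\ctxhole\tmthree$, and by $\tmthree\ctxhole$; the first two preserve both the level and the number of judgement tests, while the third increments both by exactly one.

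I do not expect a significant obstacle: the statement is essentially a bookkeeping result following directly from the definitions of judgement tests, log extraction, and leveled context. The only care required is in formally articulating the bijective correspondence between descending through right premises of $\tyapp$ rules in $\tyd$ and the nesting of $\tmtwo$ in argument positions of $\tm$, which is immediate once one unfolds both definitions.
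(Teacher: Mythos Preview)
Your proposal is correct and takes essentially the same approach as the paper: both arguments identify the length of $\elog{\state}$ with the number of judgement tests (i.e., the number of $\tyapp$ rules crossed from the right-premise side when descending from $\ruleoc$ to the root of $\tyd$) and the level of $\ctx_\state$ with the number of argument positions enclosing the hole, and observe that these coincide. The paper states this correspondence in two sentences without the explicit induction; your inductive case analysis on the rule immediately below $\ruleoc$ is a perfectly sound way to make the same point rigorous.
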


\begin{proof}
The length of $\elog{\state}$ is the number of judgement tests of $\state$, 
which is the number of 
$\tyapp$ rules traversed descending from the focused judgement $\ruleoc$ of 
$\state$ to the final judgement of $\tyd$. 
The level of $\ctx_\state$ is the number of arguments in which the hole of $\ctx_\state$ is contained, which are 
exactly the number of 
$\tyapp$ rules traversed descending from $\ruleoc$ to the final judgement of 
$\tyd$.
\end{proof}

\begin{proposition}[\SIAM-\LIAM bisimulation]
Let $\tm$ a closed and $\towh$-normalizable term, and $\tyd\pof\tjudg{}{\tm}{\initty}$ a type derivation. Then 
$\bisimtypes$ is a strong bisimulation between S-exhaustible \SIAM states on $\tyd$ and \LIAM states on $\tm$. 
Moreover, if $\state_\tyd \bisimtypes \state_\l$ then $\state_\tyd$ is \SIAM reachable if and only if $\state_\l$ is 
\LIAM reachable.
\end{proposition}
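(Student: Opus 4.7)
The plan is to prove the bisimulation by case analysis on \SIAM transitions, exploiting the fact that both machines are bi-deterministic. It suffices to show the forward direction: for each S-exhaustible \SIAM state $\state$ and each transition $\state \tosiam \statetwo$, the extracted states satisfy $\estate\state \toliam \estate\statetwo$ with the \emph{same} label. The converse diagram then follows from determinism: a \LIAM transition from $\estate\state$ must match the unique \SIAM transition from $\state$ since both bear the same label. That $\estate\statetwo$ is well-defined hinges on S-exhaustibility being preserved, which we get for free on runs from initial states by \reflemma{S-invariant-siam}. Before the case analysis, I would verify the basic \LIAM invariants for the extracted state: \reflemma{extraction-length} gives the position/log invariant, and the alternating direction in the sequence of directed prefixes $\DiPref\tyctx$ gives the tape/direction invariant.

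The transitions then split into three groups. First, the \emph{navigation transitions} $\tomachdotone$--$\tomachdotfour$ merely wrap or unwrap the outermost $\arr\mty\cdot$ on the type context; by the clause $\etapeauxs{\arr\mty\tyctx,i}\defeq \resm\cdot\etapeauxs{\tyctx,i}$, this exactly toggles a $\resm$ at the head of the extracted tape, and the log is untouched because the set of $\tyapp$ rules between the focus and the root is unchanged. Second, the \emph{argument transitions} $\tomacharg$ and $\tomachbtone$ move between the left premise and the $i$-th right subderivation of a $\tyapp$ rule: the outermost wrapper $\arr{\mset{\myldots\ctxhole\myldots}}\tytwo$ is popped/pushed on the type context, which---by the second clause of $\etapeauxs{}$---pops/pushes a logged position on the tape; simultaneously, exactly one $\tyapp$ rule enters or leaves the path to the root, so the log gains or loses one entry. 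I would show that the entry popped from the tape and the entry pushed on the log coincide by observing that the tape test of the removed wrapper and the judgement test of the added log entry carry the \emph{same} focus, and hence, by S-exhaustibility and bi-determinism, extract the same logged position.

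Third, the \emph{variable transitions} $\tomachvar$ and $\tomachbttwo$ are the subtle cases, where one traverses the $n$ $\tyapp$ rules between an axiom for $\var$ inside $\la\var\ctxtwo_n$ and the $\tylam$ rule of $\la\var\ctxtwo_n$. Between the axiom-state and the binder-state, the path to the root loses (resp.\ gains) $n$ $\tyapp$ rules, so the extracted log loses (resp.\ gains) $n$ entries; meanwhile, the new outermost wrapper $\arr{\mset{\myldots\ctxhole\myldots}}\tytwo$ introduces (resp.\ removes) a single tape entry. The crucial verification is that this new tape entry equals $(\var,\la\var\ctxtwo_n,\tlog_n)$, where $\tlog_n$ is the stripped log prefix. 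I would establish this by running exhaustibility on the target tape test: the exhausting \SIAM run must end in an axiom-state respecting the focus, and by bi-determinism it is the unique such state, namely the original axiom, from which the extraction rule $\elpos{\state_\focus}=(\var,\la\var\ctxtwo_n,\elpos{\statetwo^1}\cdots\elpos{\statetwo^n})$ yields exactly the required logged position, with the inner logs recursively matching the stripped $\tlog_n$.

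The main obstacle will be precisely this variable-case bookkeeping: aligning the $n$ log entries stripped or added on the \SIAM side with the prefix $\tlog_n$ demanded by the \LIAM rule, and ensuring that the recursively extracted logs inside the new tape entry agree with the recursively stripped logs. Apart from that, everything is routine. For the reachability addendum, the initial \SIAM state extracts to $\nopolstate{\tm}{\ctxhole}{\stempty}{\stempty}{\downp}=\state_\tm$, the initial \LIAM state, and because both machines are deterministic the established strong bisimulation lifts $n$-step \SIAM runs bijectively to $n$-step \LIAM runs, so reachability transfers in both directions.
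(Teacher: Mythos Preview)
Your overall plan—case analysis on \SIAM transitions, showing that $\estate\cdot$ commutes with each, and closing the converse via bi-determinism—is correct and is what the paper does. There is however a recurring gap that affects every case, not only the variable one.

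Every \SIAM transition $\state\tosiam\statetwo$ changes the focused judgement occurrence, and the tape extraction $\etapeauxs{\tyctx,i}$ is defined relative to the type tests of the \emph{current} state, which sit at that judgement. So even in your navigation cases, invoking the clause $\etapeauxs{\arr\mty\tyctx,i}=\resm\cdot\etapeauxs{\tyctx,i}$ does not by itself give tape-tail invariance: the recursive call on the $\statetwo$-side uses the type tests of $\statetwo$, which are different \SIAM states from the type tests of $\state$. The same issue arises for the non-head tape entries in your $\tomacharg/\tomachbtone$ and $\tomachvar/\tomachbttwo$ cases, where you only verify the newly pushed entry and take the rest of the tape as ``obviously'' unchanged.

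What one actually needs is a one-step convergence lemma: for each pair of corresponding type tests $\state_\focus$ and $\statetwo_{\focus'}$, one of them reaches the other in a single step of the generalized \SIAM (that step being the transition in question or its inverse, depending on the parity of the test index). By bi-determinism their exhausting runs then share the same terminal axiom, hence $\elpos{\state_\focus}=\elpos{\statetwo_{\focus'}}$, and only then is the tape tail genuinely preserved. A smaller related imprecision: in your $\tomacharg$ case the source's first type test and the target's first judgement test do \emph{not} carry the same focus (different judgement occurrences and type contexts); what holds is that the latter, after one $\tomachbtone$ step, coincides with the former—same conclusion, different justification.
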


\begin{proof}
  Assuming the bisimulation part of the statement, the moreover part follows from a trivial induction on the length of 
the initial run, since initial state are bisimilar and the bisimulation is exactly the fact that $\bisimtypes$ is 
stable by transitions.

For the bisimulation part, we consider each possible transitions. We focus on the half of the proof showing that \SIAM 
transitions are simulated by the \LIAM, the other half is essentially identical. 
	
	\begin{itemize}
		\item Case $\tomachdotone$.
		\[\begin{array}{clc}
		\statetwo=\infer{\tjudg{}{\red{\tm\tmtwo}}{\tyctxp{\initty_{\uppt}}(=\ty)}} 
		{\tjudg{}{\tm}{\arr{\mty}{\ty}} & \mset\vdash} &
		\tomachdotone &
		\infer{\tjudg{}{\tm\tmtwo}{\ty 
			}}{\tjudg{}{\red\tm}{\arr{\mty}{\tyctxp{\initty_{\uppt}}}} & 
			\mset\vdash}=\state
		\\[8pt]	
			\bisimtypes&&
		\\[8pt]
		\estate\state=\dstate{ \tm\tmtwo }{ \ctx_\statetwo }{ \etape{\statetwo} }{ \elog{\statetwo} } 
		&\iamdap& 
		\dstate{ \tm }{ \ctxp{\ctxhole\tmthree} }{ \resm\cdot \etape{\statetwo} }{ 
			\elog{\statetwo} } = \state_\l
		\end{array}\]
		
			Note that $\ctx_\state = \ctx_\statetwo\ctxholep{\ctxhole\tmthree}$, $\elog\state = 
\elog{\statetwo}$, and $\etape{\statetwo} = \resm\cdot\etape{\state}$. Then, $\state_\l = \estate{\state}$, that is, 
$\state \bisimtypes \state_\l$.

    \item Case $\tomachdottwo$. Identical to the previous one.
    
		\item Case $\tomachvar$.
		\[\begin{array}{clc}
		\statetwo=\infer*{\infer{\tjudg{}{\la\var\ctxtwo_n\ctxholep{\var}} 
				{\arr{\mset{\ldots\ty_i\ldots}}\tytwo}}{}}
		{\infer[i]{\tjudg{}{\red\var}{\tyctxp{\initty_{\uppt}}_i(=\ty_i)}}{}}  & 
		\tomachvar
		& \infer*{\infer{\tjudg{}{\blue{\la\var\ctxtwo_n\ctxholep{\var}}}
				{\arr{\mset{\ldots\tyctxp{\initty_{\downpt}}_i\ldots}}\tytwo}}{}}
		{\infer[i]{\tjudg{}{\var}{\ty_i}}{}}=\state		
		\\[8pt]
		\bisimtypes&&
		\\[8pt]
		\estate\state=\dstate{ \var }{ \underbrace{\ctxp{\l\var.\ctxtwo_n}}_{=\ctx_\statetwo} }{ \etape\statetwo }{ 
			\underbrace{\tlog_n\cdot\tlog}_{=\elog\statetwo} } 
		&\tomachvar &
		\ustate{ \l\var.\ctxtwo_n\ctxholep\var}{ \ctx }{ 
			(\var,\l\var.\ctxtwo_n,\tlog_n)\cdot\etape\statetwo }{ \tlog } = \state_\l		
		\end{array}\]
		First of all, $\ctx_\state$ has shape $\ctxp{\l\var.\ctxtwo_n}$ for some $n$ has the descending path from the 
focused judgement to the final judgement passes through the showed $\tylam$ rule. Then $\ctx_{\statetwo} = \ctx$.

About the log, 
by \reflemma{extraction-length} there is a correspondance between the level of term contexts and the length of the 
extracted log, so that $\elog\state$ has at least length $n$, that is, $\elog\statetwo = \tlog_n\cdot\tlog$, and 
$\elog\state = \tlog$.
		
		About the tape, note that $\etape\state = \elpos{\state^1_\focus}\cons\etapeaux{\tyctx,1}\state$ where 
$\state^1_\focus$ is the first type test of $\state$. To show that $\estate\state = 
(\var,\l\var.\ctxtwo_n,\tlog_n)\cdot\etape\statetwo$ we have to show two things:
\begin{enumerate}
 \item $\elpos{\state^1_\focus} = 
(\var,\l\var.\ctxtwo_n,\tlog_n)$.
		Note that $\state^1_\focus$ is $\tjudg{}{\red{\la\var\ctxp{\var}} }
			{\tyctxp{\initty}_{i\uppt},\arr{\mset{\ldots\ctxhole\ldots}}\tytwo}$. Note that 
$\state^1_\focus\tomachbttwo\,\tjudg{} 
			{\blue\var}{\tyctxp{\initty}_{i\downpt},\ctxhole} = \statethree$, where $\statethree$ focusses on the same 
judgement of $\statetwo$, and that $\statethree$ is the state that S-exhausts $\state^1_\focus$. By definition of 
extraction, $\elpos{\state^1_\focus} = (\var,\l\var.\ctxtwo_n,\tlog_n)$. 

  \item $\etapeaux{\tyctx,1}\state = \etape\statetwo$, that is, $\etapeaux{\tyctx,1}\state = 
\etapeaux{\tyctx,0}\statetwo$. Note that $\etapeaux{\tyctx,1}\state$ and $\etapeaux{\tyctx,0}\statetwo$ may differ only 
in the content of logged positions (obtained by extracting from tape tests), which is the only thing that depends on 
the direction and the state, the rest being uniquely determined by the type context $\tyctx$. Here one has to repeat 
the reasoning done in the $\tomachbttwo$ case of the proof of the S-exhaustible invariant 
(\reflemma{S-invariant-siam}), that shows that the tape test of index $i>1$  for $\state$ and the one of index $i-1$ of 
$\statetwo$ exhaust on the same state, and thus induce the same logged position. Then $\etapeaux{\tyctx,1}\state = 
\etape\statetwo$.
\end{enumerate}
Then $\estate\state = (\var,\l\var.\ctxtwo_n,\tlog_n)\cdot\etape\statetwo$, and so $\state_\l = \estate{\state}$, that 
is, $\state \bisimtypes \state_\l$.

		\item Case $\tomachbttwo$.
		\[\small\begin{array}{clc}
		\statetwo=\infer*{\infer{\tjudg{}{\red{\la\var\ctxp{\var}}}
				{\arr{\mset{\ldots\tyctxp{\initty_{\uppt}}_i\ldots}}\tytwo}}{}}
		{\infer[i]{\tjudg{}{\var}{\ty_i (= \tyctxp\initty_i)}}{}}
		& \tomachbttwo
		& \infer*{\infer{\tjudg{}{\la\var\ctxp{\var}} 
				{\arr{\mset{\ldots\ty_i\ldots}}\tytwo}}{}}
		{\infer[i]{\tjudg{}{\blue\var}{\tyctxp{\initty_{\downpt}}_i}}{}}=\state
		\\[8pt]
		\bisimtypes&&
		\\[8pt]
		\estate\statetwo=\dstate{ \la\var\ctxtwo_n\ctxholep{\var} }{ \ctx_\statetwo }{
			\underbrace{(\var,\la\var\ctxtwo_n,\tlog_n)\cons\etapeaux{\tyctx,1}\statetwo}_{=\etape\statetwo} }{ \elog\statetwo 
}
		&	\tomachbttwo & 
		\ustate{ \var}{ \ctx_{\statetwo}\ctxholep{\la\var\ctxtwo_n} }{ \etapeaux{\tyctx,1}\statetwo }{	\tlog_n 
\cons\elog\statetwo } = \statetwo_\l
		\end{array}\]
		
		About the tape of $\estate\statetwo$, note that $\etape\statetwo = 
\elpos{\statetwo^1_\focus}\cons\etapeaux{\tyctx,1}\statetwo$ where $\state^1_\focus$ is 
the first type test of $\statetwo$. We have to show that $\state^1_\focus$ exhausts on $\var$, so that 
$\elpos{\state^1_\focus} = (\var,\l\var.\ctxtwo_n,\tlog_n)$ for some $\tlog_n$.
		Note that $\state^1_\focus$ is $\tjudg{}{\red{\la\var\ctxp{\var}} }
			{\tyctxp{\initty}_{i\uppt},\arr{\mset{\ldots\ctxhole\ldots}}\tytwo}$. Note that 
$\state^1_\focus\tomachbttwo\,\tjudg{} 
			{\blue\var}{\tyctxp{\initty}_{i\downpt},\ctxhole} = \statethree$, where $\statethree$ focusses on the same 
judgement of $\state$, and that $\statethree$ is the state that S-exhausts $\state^1_\focus$. By definition of 
extraction, $\elpos{\state^1_\focus} = (\var,\l\var.\ctxtwo_n,\tlog_n)$ where $\tlog_n$ is the extraction of the first 
$n$ judgement tests of $\state$.  Then $\ctx_\state = \ctx_{\statetwo}\ctxholep{\la\var\ctxtwo_n}$ and $\elog\state = 
\tlog_n \cons\elog\statetwo$.

About the tape, for $\state$ we have to prove that $\etapeaux{\tyctx,1}\statetwo = \etape\state = 
\etapeaux{\tyctx,0}\state$. This is done as for $\tomachvar$, mimicking the reasoning in the proof of the S-exhaustible 
invariant (\reflemma{S-invariant-siam}).

Then, $\state_\l = \estate{\state}$, that is, $\state \bisimtypes \state_\l$.

		\item Cases $\tomachdotthree$ and $\tomachdotfour$. They are identical to case 
		$\tomachdotone$.
		
		\item Case $\tomacharg$.
		\[\begin{array}{clc}
		\statetwo=\infer{\tjudg{}{\tm\tmtwo}{\ty}} 
		{\tjudg{}{\blue\tm}{\arr{\mset{\ldots 
						\tyctxp{\initty_{\downpt}}_i\ldots}}{\ty}}
			& \tjudgi{}{\tmtwo}{\tytwo_i (=\tyctxp{\initty_{\downpt}}_i)}}
		& \tomacharg &
		\infer{\tjudg{}{\tm\tmtwo}{\ty}} 
		{\tjudg{}{\tm}{\arr{\mset{\ldots 
						\tytwo_i\ldots}}{\ty}}
			& \tjudgi{}{\red\tmtwo}{\tyctxp{\initty_{\uppt}}_i}}=\state
		\\[8pt]
		\bisimtypes&&
		\\[8pt]
		\estate\statetwo=\ustate{ \tm }{ \underbrace{\ctxtwop{\ctxhole\tmtwo}}_{=\ctx_{\statetwo}} }{ 
\underbrace{\elpos{\statetwo^1_\focus}\cdot\etapeaux{\tyctx,1}{\statetwo}}_{=\etape\statetwo} }{ \elog\statetwo } 
		& \tomacharg &
		\dstate{ \tmtwo }{ \ctxtwop{\tm\ctxhole} }{ \etapeaux{\tyctx,1}{\statetwo} }{ 
\elpos{\statetwo^1_\focus}\cdot\elog\statetwo } = \statetwo_\l
		\end{array}\]
		where $\statetwo^1_\focus$ is the first type test of $\statetwo$. Obviously, $\ctx_{\state} = \ctxtwop{\tm\ctxhole} 
$. For the log we have to show that $\elog\state$ is equal to $\elpos{\statetwo^1_\focus}\cdot\elog\statetwo $, which 
amounts to show that the first judgement test $\state^1$ of $\state$ exhausts on the same state as the first tape test 
$\statetwo^1_\focus$ of $\statetwo$. This is exactly the reasoning done in the proof of the S-exhaustible invariant. 
Similarly, one obtains that $\etapeaux{\tyctx,1}\statetwo = \etape\state = 
\etapeaux{\tyctx,0}\state$.

		\item Case $\tomachbtone$.
		\[\begin{array}{clc}
		\statetwo=\infer{\tjudg{}{\tm\tmtwo}{\ty}} 
		{\tjudg{}{\tm}{\arr{\mset{\ldots 
						\tytwo_i\ldots}}{\ty}}
			& \tjudgi{}{\blue\tmtwo}{\tyctxp{\initty_{\downpt}}_i(=\tytwo_i)}}
		& \tomachbtone &
		\infer{\tjudg{}{\tm\tmtwo}{\ty}} 
		{\tjudg{}{\red\tm}{\arr{\mset{\ldots 
						\tyctxp{\initty_{\uppt}}_i\ldots}}{\ty}}
			& \tjudgi{}{\tmtwo}{\tytwo_i}}=\state
			\\[8pt]
			\bisimtypes&&
			\\[8pt]
		\estate\statetwo=\ustate{ \tmtwo }{ \underbrace{\ctxtwop{\tm\ctxhole}}_{=\ctx_{\statetwo} }}{ \etape\statetwo }{ 
\underbrace{\elpos{\state^1_\focus}\cdot\tlog}_{=\elog\statetwo} }
		& \tomachbtone &
		\dstate{ \tm }{ \ctxtwop{\ctxhole\tmtwo} }{ \elpos{\state^1_\focus}\cdot\etape\statetwo }{ \tlog } 
= \statetwo_\l
		\end{array}\]
  where $\statetwo^1_\focus$ is the first judgement test of $\statetwo$. Obviously, $\ctx_{\state} = 
\ctxtwop{\ctxhole\tmtwo} 
$. For the log, there is nothing to prove. For the tape, we have to show that $\etape\state$ is equal to 
$\elpos{\state^1_\focus}\cdot\etape\statetwo$, which 
amounts to show two things. First, that the first tape test $\state^1$ of $\state$ exhausts on the same state as 
the first judgement test 
$\statetwo^1_\focus$ of $\statetwo$. Second, that $\etapeaux{\tyctx,1}\state = \etape\statetwo = 
\etapeaux{\tyctx,0}\statetwo$. Both points follow exactly the reasoning done in the proof of the S-exhaustible 
invariant.   
  \end{itemize}
\end{proof}

\subsection{The \SIAM is acyclic and thus weights measure its time}
\label{ssect:acyclic-app}
First of all, we prove the abstract lemma that says that every state is reachable in a bi-deterministic transition 
system with only one initial state.

\begin{lemma}
	Let $\tsys{}$ be an acyclic bi-deterministic transition system on a finite set of states $\mathcal{S}$ and with only 
one 	initial state $\state_i$. Then all states in $\mathcal{S}$ are reachable from $\state_i$, and reachable only once.
\end{lemma}
\begin{proof}
	Let us consider a generic state $\state\in \mathcal{S}$ and show that it is reachable from $\state_i$. If 
$\state=\state_i$ we are done. 
	Otherwise, since the system is bi-deterministic we can go deterministically go backwards from $\state$. Since states 
are finite and there are no cycles, then the reduction sequence must end on an initial state, that is, on $\state_i$. 
Thus $\state$ is reachable from $\state_i$. If a state is reachable twice, then clearly there is a cycle, absurd.
\end{proof}

In order to prove that the \SIAM is acyclic, we need to show that if 
$\tm\towh\tmtwo$, then cycles are preserved between the sequence type 
derivation $\tyd$ for $\tm$ and the sequence type derivation $\tydtwo$ for 
$\tmtwo$. One way to show this fact is building a bisimulation between states 
of $\tyd$ and states of $\tydtwo$, since bisimulations preserve 
(non)termination. This idea has been already exploited 
by~\citet{IamPPDPtoAppear}, in order to prove the correctness of the \LIAM.

\paragraph{Weak Head Contexts} First of all, we need the notion of weak head context $\hctx$ defined as:
$$\hctx \defeq \ctxhole \mid \hctx \tm$$
Note that if $\tm \towh \tmtwo$ then $\tm = \hctxp{(\la\var\tmthree)\tmfour}$ and $\tmtwo = 
\hctxp{\tmthree\isub\var\tmfour}$.

\paragraph{Explaining the Bisimulation via a Diagram} Let us give an intuitive explanation of the bisimulation 
$\relf$.	Given two type derivations 
	$\tyd\pof\tjudg{}{\hctxp{(\la\var\tmthree)\tmfour}}{\initty}$ and 
	$\tydtwo\pof\tjudg{}{\hctxp{\tmthree\isub\var\tmfour}}{\initty}$, it is 
	possible 
	to define a relation $\relf$ between states of the former and of the latter 
	as depicted in the figure below. The key points are:
	\begin{enumerate}
	 \item each axiom for $\var$ in $\tyd$ is $\relf$-related with the 
judgement for the argument $w$ that replaces it in $\tydtwo$.
  \item Both the judgement for $r$ and the one for $(\la\var r)w$ are 
  $\relf$-related to $r\isub\var w$.
  \item The judgement for $\la\var r$ is not $\relf$-related to any judgement 
  of $\tydtwo$.

	\end{enumerate}
	
	\includegraphics[scale=0.9]{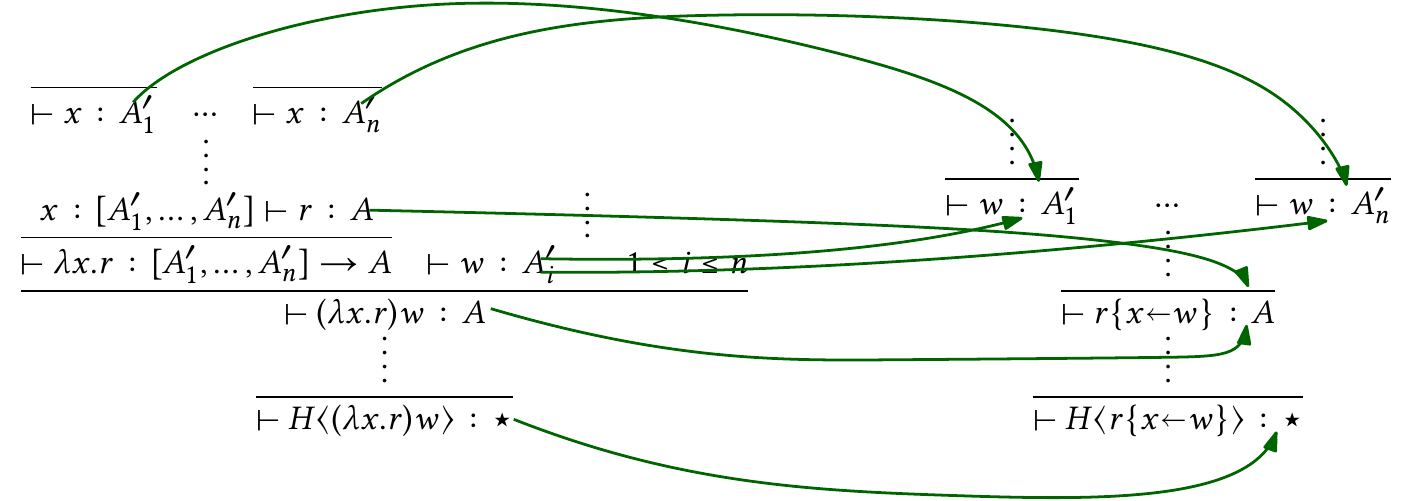}

	\paragraph{Defining $\relf$} In order to define $\relf$ formally, we enrich each type judgment (occurrence)
 $\tjudg{}{\tm}{\tyctxp\initty}$ with a context $\ctx$ such that $\ctxp\tm$ is 
the term in the final judgement of the derivation $\tyd$, obtaining 
$\tjudg{}{(\tm,\ctx)}{\tyctxp\initty}$. 
	
	\begin{definition}[Bisimulation $\relf$]
	The 
definition of $\relf$ for $\tjudg{}{(\tm,\ctx)}{\tyctxp\initty}$ has 4 clauses:

	\begin{itemize}
		\item 
		$\mathsf{rdx}$: the redex is in $\tm$, that is, $\tm = \hctxp{(\la\var\tmtwo)\tmthree}$, and so $\ctx$ is a head 
context $\hctxtwo$:
		$$\tjudg{}{(\hctxp{(\la\var\tmtwo)\tmthree},\hctxtwo)}{\tyctxp\initty} 
		\,\relfrdx
		\tjudg{}{(\hctxp{\tmtwo\isub\var\tmthree},\hctxtwo)}{\tyctxp\initty}$$
		
		\item $\mathsf{body}$: the term $\tm$ is part of the body of the abstraction involved in the redex:
		$$\tjudg{}{(\tm,\hctxp{(\la\var\ctxtwo)\tmtwo})}{\tyctxp\initty} 
		\,\relfbody 
		\tjudg{}{(\tm\isub\var\tmtwo,\hctxp{\ctxtwo\isub\var\tmtwo})}{\tyctxp\initty}$$
		
		\item $\mathsf{arg}$: the term $\tm$ is part of the argument of the redex:
		$$\tjudg{}{(\tm,\hctxp{(\la\var\ctxtwop\var)\ctxthree})}{\tyctxp\initty}
		\,\relfarg 
		\tjudg{}{(\tm,\hctxp{\ctxtwo\isub\var{\ctxthreep\tm}\ctxholep\ctxthree})}{\tyctxp\initty}$$
		
		\item $\mathsf{ext}$: The term $\tm$ is disjoint form the redex, that then takes place only in $\ctx$:
		$$\tjudg{}{(\tm,\hctxtwop{\hctxp{(\la\var\tmthree)\tmtwo}\ctxtwo})}{\tyctxp\initty}
		 \relfext\,
		\tjudg{}{(\tm,\hctxtwop{\hctxp{\tmthree\isub\var\tmtwo}\ctxtwo})}{\tyctxp\initty}$$
	\end{itemize}
\end{definition}
Please note that the only states of $\tyd$ which are not mapped to any state 
of 
$\tydtwo$ are those relative to the judgment 
$\tjudg{}{\la\var\tmthree}{\arr{\mset{\tytwo_1...\tytwo_n}}\ty}$.

\begin{proposition}
	$\relf$ is a loop-preserving bisimulation between \SIAM states.
\end{proposition}

\begin{proof}\footnote{Also this proof requires colors.}
We inspect the 4 cases of the definition of $\relf$.
\begin{itemize}
 \item Rule $\mathsf{rdx}: \tjudg{}{(\hctxp{(\la\var\tm)\tmtwo},\hctxtwo)}{\tyctxp\initty} 
		\,\relfrdx
		\tjudg{}{(\hctxp{\tm\isub\var\tmtwo},\hctxtwo)}{\tyctxp\initty}$. Cases for $\uppt$ (by cases of $\hctx$):
	\begin{itemize}		
		\item $\hctx=\ctxhole$. The diagram is closed by rule $\mathsf{body}$:
		\[\small
		\begin{array}{ccccc}
		\tjudg{}{(\red{(\la\var\tm)\tmtwo},\hctxtwo)}{\tyctxp{\initty}} & \tosiam 
		& 
		\tjudg{}{(\red{\la\var\tm},\hctxtwop{\ctxhole\tmtwo})}{\arr\mty\tyctxp{\initty}}
		& \tosiam & 
		\tjudg{}{(\red{\tm},\hctxtwop{(\la\var\ctxhole)\tmtwo})}{\tyctxp{\initty}}\\[4pt]
		\relfrdx&&&&\relfbody\\[4pt]
		\tjudg{}{(\red{\tm\isub\var\tmtwo},\hctxtwo)}{\tyctxp{\initty}} &&=&& 
\tjudg{}{(\red{\tm\isub\var\tmtwo},\hctxtwo)}{\tyctxp{\initty}}
		\end{array}
		\]	
		
		\item $\hctx=\hctxthree\tmfive$. The diagram is closed by rule $\relfrdx$:
		\[
		\begin{array}{ccc}
		\tjudg{}{(\red{\hctxthreep\tmthree\tmfive},\hctxtwo)}{\tyctxp{\initty}} 
		& \tosiam & 
		\tjudg{}{(\red{\hctxthreep\tmthree},\hctxtwop{\ctxhole\tmfive})} 
		{\arr\mty{\tyctxp{\initty}}}\\[4pt]
		\relfrdx&&\relfrdx\\[4pt]
		\tjudg{}{(\red{\hctxthreep\tmfour\tmfive},\hctxtwo)}{\tyctxp{\initty}} 
		& 
		\tosiam & 
		\tjudg{}{(\red{\hctxthreep\tmfour},\hctxtwop{\ctxhole\tmfive})} 
		{\arr\mty\tyctxp{\initty}}
		\end{array}
		\]

		\end{itemize}
		
		Cases for $\downpt$ (by cases of $\hctxtwo$):
		\begin{itemize}
		\item $\hctxtwo=\ctxhole$. Both machines are stuck.
		\[
		\begin{array}{c}
		\tjudg{}{(\tmthree,\blue\ctxhole)}{\tyctxp{\initty}}\\[4pt]
		\relfrdx\\[4pt]
		\tjudg{}{(\tmfour,\blue\ctxhole)}{\tyctxp{\initty}}
		\end{array}
		\]
		\item $\hctxtwo=\hctxthreep{\ctxhole\tmfive}$. Two subcases depending 
		on the type context.
		If the focus is on the right of the arrow the diagram is closed by rule 
		$\mathsf{rdx}$.
		\[
		\begin{array}{ccc}
		\tjudg{}{(\tmthree,\blue{\hctxthreep{\ctxhole\tmfive}})} 
		{\arr\mty{\tyctxp{\initty}}} & \tosiam & 
		\tjudg{}{(\tmthree\tmfive,\blue{\hctxthree})} {{\tyctxp{\initty}}}
		\\[4pt]
		\relfrdx&&\relfrdx\\[4pt]
		\tjudg{}{(\tmfour,\blue{\hctxthreep{\ctxhole\tmfive}})} 
		{\arr\mty{\tyctxp{\initty}}} & \tosiam &
		\tjudg{}{(\tmfour\tmfive,\blue{\hctxthree})} {{\tyctxp{\initty}}}
		\end{array}
		\]
		If the focus is on the left of the arrow the diagram is closed by rule 
		$\mathsf{ext}$.
		\[
		\begin{array}{ccc}
		\tjudg{}{(\tmthree,\blue{\hctxthreep{\ctxhole\tmfive}})} 
		{\arr{\mset{\ldots\tyctxp{\initty}\ldots}}{\ty}} & \tosiam & 
		\tjudg{}{(\red\tmfive,\hctxthreep{\tmthree\ctxhole})} {{\tyctxp{\initty}}}
		\\[4pt]
		\relfrdx&&\relfext\\[4pt]
		\tjudg{}{(\tmfour,\blue{\hctxthreep{\ctxhole\tmfive}})} 
		{\arr{\mset{\ldots\tyctxp{\initty}\ldots}}{\ty}} & \tosiam & 
		\tjudg{}{(\red\tmfive,\hctxthreep{\tmfour\ctxhole})} {{\tyctxp{\initty}}}
		\end{array}
		\]
	\end{itemize}	
	
	\item Rule $\mathsf{body}$: $\tjudg{}{(\tm,\hctxp{(\la\var\ctxtwo)\tmtwo})}{\tyctxp\initty} 
		\,\relfbody 
		\tjudg{}{(\tm\isub\var\tmtwo,\hctxp{\ctxtwo\isub\var\tmtwo})}{\tyctxp\initty}$. Cases of $\uppt$ (by cases of 
$\tm$):
	\begin{itemize}
		\item $\tm =\tmthree\tmfour$. Trivially 
		closed by rule $\mathsf{body}$.
		\item  $\tm=\la\vartwo\tmthree$. If $\tm:\initty$ both machines are 
		stuck. If $\tm:\arr\mty\ty$, the diagram is trivially closed by rule 
		$\mathsf{body}$.
		\item $\tm=\var$. Diagram closed by rule $\mathsf{arg}$.
		\[
		\small\begin{array}{ccccl}
		\tjudg{}{(\red\var,\hctxp{(\la\var\ctxtwo)\tmtwo}}{\tyctxp{\initty}} 
		&\tosiam& 
		\tjudg{}{(\la\var\ctxtwop\var,\blue{\hctxp{\ctxhole\tmtwo}})} 
		{\arr{\mset{...\tyctxp{\initty}...}}\tytwo}{} 
		& \tosiam &
		\tjudg{}{(\red{\tmtwo},\hctxp{(\la\var\ctxtwop\var)\ctxhole})}
		{\tyctxp{\initty}}\\[4pt]
		\relfbody&&&&\relfarg\\[4pt]
		
\tjudg{}{(\red{\tmtwo},\hctxp{\ctxtwo\isub\var\tmtwo})}{\tyctxp{\initty}}&&=&&\tjudg{}{(\red{\tmtwo},\hctxp{
\ctxtwo\isub\var\tmtwo})}{\tyctxp{\initty}}
		\end{array}
		\]
		\end{itemize}
		
		Cases of $\downpt$ (by cases of $\ctxtwo$):
	\begin{itemize}
		\item $\ctxtwo=\ctxhole$. The diagram is closed by rule $\mathsf{rdx}$
		\[
		\small\begin{array}{ccccl}
		\tjudg{}{(\tm,\blue{\hctxp{(\la\var\ctxhole)\tmtwo})}}{\tyctxp\initty} 
		&\tosiam& 
		\tjudg{}{(\la\var\tm,\blue{\hctxp{\ctxhole\tmtwo}})} 
		{\arr\mty{\tyctxp\initty}}
		& \tosiam &
		\tjudg{}{((\la\var\tm)\tmtwo,\blue\hctx)}
		{\tyctxp{\initty}}\\[4pt]
		\relfbody&&&&\relfrdx\\[4pt]
		
		\tjudg{}{(\tm\isub\var\tmtwo,\blue{\hctx})}{\tyctxp\initty}&&=&& 
		\tjudg{}{(\tm\isub\var\tmtwo,\blue{\hctx})}{\tyctxp\initty}
		\end{array}
		\]
		
		\item $\ctxtwo=\ctxthreep{\la\vartwo\ctxhole}$, 
		$\ctxtwo=\ctxthreep{\ctxhole\tmthree}$ and 
		$\ctxtwo=\ctxthreep{\tmthree\ctxhole}$. The diagram is trivially 
		closed by rule $\mathsf{body}$.
	\end{itemize}
	
	\item Rule $\relfarg$: 
	$\tjudg{}{(\tm,\hctxp{(\la\var\ctxtwop\var)\ctxthree})}{\tyctxp\initty}
	\,\relfarg 
	\tjudg{}{(\tm,\hctxp{\ctxtwo\isub\var{\ctxthreep\tm}\ctxholep\ctxthree})}{\tyctxp\initty}$.
	Cases of $\uppt$ (by cases of $\tm$) are all trivial: they are closed by 
	rule $\relfarg$ itself. The only non trivial case for $\downpt$ (by cases 
	of 
	$\ctxthree$) is when $\ctxthree=\ctxhole$.
	\[
	\small\begin{array}{ccccl}
	\tjudg{}{(\tm,\blue{\hctxp{(\la\var\ctxtwop\var)\ctxhole}})}
	{\tyctxp{\initty}}
	&\tosiam& 
	\tjudg{}{(\red{\la\var\ctxtwop\var},\hctxp{\ctxhole\tm})} 
	{\arr{\mset{...\tyctxp{\initty}...}}\tytwo}{} 
	& \tosiam &
	\tjudg{}{(\var,\blue{\hctxp{(\la\var\ctxtwo)\tm}}}{\tyctxp{\initty}} \\[4pt]
	\relfarg&&&&\relfbody\\[4pt]
	\tjudg{}{(\tm,\blue{\hctxp{
			\ctxtwo\isub\var\tm}})}{\tyctxp{\initty}}&&=&&
	\tjudg{}{(\tm,\blue{\hctxp{\ctxtwo\isub\var\tm})}}{\tyctxp{\initty}}
	\end{array}
	\]
	
	\item Rule $\relfext$: 
	$\tjudg{}{(\tm,\hctxtwop{\hctxp{(\la\var\tmthree)\tmtwo}\ctxtwo})}{\tyctxp\initty}
	\relfext\,
	\tjudg{}{(\tm,\hctxtwop{\hctxp{\tmthree\isub\var\tmtwo}\ctxtwo})}{\tyctxp\initty}$.
	Cases of $\uppt$ (by cases of $\tm$) are all trivial: they are closed by 
	rule $\relfext$ itself. The only non trivial case for $\downpt$ (by cases 
	of 
	$\ctxtwo$) is when $\ctxtwo=\ctxhole$. We put $\tmfive\defeq 
	\hctxp{(\la\var\tmthree)\tmtwo}$ and $\tmfour\defeq 
	\hctxp{\tmthree\isub\var\tmtwo}$.
	\[
	\begin{array}{ccc}
	 \tjudg{}{(\tm,\blue{\hctxtwop{\tmfive\ctxhole}})} {{\tyctxp{\initty}}} & 
	 \tosiam & \tjudg{}{(\red\tmfive,{\hctxtwop{\ctxhole\tm}})} 
	 {\arr{\mset{\ldots\tyctxp{\initty}\ldots}}{\ty}}
	\\[4pt]
	\relfext&&\relfrdx\\[4pt]
	\tjudg{}{(\tm,\blue{\hctxtwop{\tmfour\ctxhole}})} {{\tyctxp{\initty}}} & 
	\tosiam & 
	\tjudg{}{(\red\tmfour,{\hctxtwop{\ctxhole\tm}})} 
	{\arr{\mset{\ldots\tyctxp{\initty}\ldots}}{\ty}}
	\end{array}
	\]
	\end{itemize}
\end{proof}

\begin{corollary}
	If $\tyd\pof\tjudg{}{\hctxp{(\la\var\tmthree)\tmfour}}{\initty}$ contains a 
	cycle, the also 
	$\tydtwo\pof\tjudg{}{\hctxp{\tmthree\isub\var\tmfour}}{\initty}$ contains a 
	cycle.
\end{corollary}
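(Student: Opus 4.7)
The plan is to derive the corollary as a direct consequence of the loop-preserving bisimulation $\relf$ stated in the preceding proposition, together with the fact that the \SIAM is bi-deterministic on a finite set of judgement occurrences. First I would unfold what ``$\tyd$ contains a cycle'' means for a bi-deterministic machine on finitely many states: it means that some reachable \SIAM state $\state$ in $\tyd$ admits a non-empty run $\state \tosiam^+ \state$, equivalently, the \SIAM run from $\state$ is infinite (it keeps revisiting the same finite orbit).

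Next I would lift this infinite run to $\tydtwo$ using $\relf$. The only states of $\tyd$ that are \emph{not} in the domain of $\relf$ are those sitting on the judgement $\tjudg{}{\la\var\tmthree}{\arr{\mset{\tytwo_1,\ldots,\tytwo_n}}{\ty}}$ of the redex, which is a finite and bounded set (its states are parametrized by positions inside the single type of that judgement). Hence any infinite run in $\tyd$ must visit infinitely often some state $\state$ in the domain of $\relf$; pick such a $\state$ in the cycle and let $\statetwo$ be the unique state of $\tydtwo$ with $\state \relf \statetwo$. Because $\relf$ is a loop-preserving bisimulation, the infinite run from $\state$ in $\tyd$ is mirrored by an infinite run from $\statetwo$ in $\tydtwo$.

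Finally, since $\tydtwo$ itself has only finitely many \SIAM states and the \SIAM is deterministic, any infinite run must eventually revisit a state, producing a genuine cycle $\statetwo \tosiam^+ \statetwo$ in $\tydtwo$. This gives the required cycle and concludes the proof.

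The only delicate point, and thus the main obstacle, is ensuring that the excluded $\mathsf{body}$-of-redex states $\tjudg{}{\la\var\tmthree}{\dots}$ cannot form an entire cycle by themselves; this is why it matters that those states are only finitely many and that every transition out of them (namely $\tomachdottwo$ or $\tomachdotfour$ moving to the body) immediately lands on a state in the domain of $\relf$. With this observation, the pigeonhole argument above goes through cleanly.
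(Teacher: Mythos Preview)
Your proposal is correct and follows essentially the same approach as the paper: both arguments pass from a cycle to a diverging state, observe that the only states excluded from the domain of $\relf$ are those on the judgement for $\la\var\tmthree$ (which cannot constitute an entire infinite run), transfer divergence along $\relf$ to a state of $\tydtwo$, and then invoke finiteness of states to recover a cycle. Your treatment of the excluded states is slightly more explicit than the paper's (which simply notes they are not final), but the structure is the same.
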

\begin{proof}
	If the run of the \SIAM on $\tyd\pof\tjudg{}{\hctxp{(\la\var\tmthree)\tmfour}}{\initty}$ loops then there exists a 
state $\state_\tyd$ such that a computation 
	starting from $\state_\tyd$ diverges. Every state but 
	$\tjudg{}{(\la\var\tmthree,\hctxp{\ctxhole\tmfour})}{\tyctxp\initty}$, 
	which however is not final, is 
	related by $\relf$ to a state $\state_{\tydtwo}$ of $\tsys\tydtwo$. By 
	preservation of (non)termination (see \cite{IamPPDPtoAppear}), also  
$\state_{\tydtwo}$ diverges. Since 
	$\state_{\tydtwo}$ has a finite number of states, there must be a cycle.
\end{proof}
\begin{corollary}
	For each type derivation $\tyd\pof\tjudg{}{\tm}{\initty}$, $\tsys\tyd$ has 
	no cycles.
\end{corollary}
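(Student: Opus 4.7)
The plan is to iteratively apply the previous corollary along a $\towh$-reduction sequence until we reach a normal form, where no cycles can occur for trivial reasons.

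First, I would observe that $\tm$ is $\towh$-normalizable: since $\tyd\pof\tjudg{}{\tm}{\initty}$, the characterization of termination via sequence types guarantees a finite reduction $\tm = \tm_0 \towh \tm_1 \towh \cdots \towh \tm_k$ with $\tm_k$ in $\towh$-normal form. Note that all these terms are closed (closure is preserved by $\towh$) and each intermediate reduction has the required shape $\hctxp{(\la\var\tmthree)\tmfour} \towh \hctxp{\tmthree\isub\var\tmfour}$.

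Second, I would argue by contradiction: suppose $\tsys\tyd$ contains a cycle. Applying the previous corollary to the reduction $\tm_0 \towh \tm_1$, we obtain a derivation $\tyd_1 \pof \tjudg{}{\tm_1}{\initty}$ whose \SIAM transition system contains a cycle. Iterating this argument $k$ times along the reduction sequence, we obtain a derivation $\tyd_k\pof\tjudg{}{\tm_k}{\initty}$ such that $\tsys{\tyd_k}$ contains a cycle.

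Third, I would derive a contradiction from the shape of $\tyd_k$. Since $\tm_k$ is closed and $\towh$-normal, it is an abstraction $\la\var\tmfive$. Inspecting the typing rules of \reffig{asstypesystem}, the only rule whose conclusion has the form $\tjudg{}{\la\var\tmfive}{\initty}$ is $\tylamstar$ (the rule $\tylam$ yields an arrow type, not $\initty$). Hence $\tyd_k$ consists of a single application of $\tylamstar$, with no premises. By \reffig{assweightIAMtime}, the \SIAM induces no transition from a state focussed on a $\tylamstar$ rule, so $\tsys{\tyd_k}$ has exactly one state and no transitions whatsoever---in particular, no cycles. Contradiction.

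The argument is essentially routine given the two heavy tools already available (the characterization of termination by typability, and the loop-preserving bisimulation $\relf$). The only conceptual point to check is that $\tylamstar$ is genuinely a sink for the \SIAM, which is immediate from the transition table. Thus there is no real obstacle beyond carefully chaining these ingredients.
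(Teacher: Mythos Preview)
Your proof is correct and follows essentially the same approach as the paper's: use typability to obtain a finite $\towh$-reduction to normal form, propagate a hypothetical cycle along this reduction via the previous corollary, and observe that the derivation of the normal form (a single $\tylamstar$ instance) admits no transitions and hence no cycles. You are simply more explicit than the paper about the iteration and about why $\tylamstar$ is a sink, but the argument is the same.
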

\begin{proof}
	Since $\tm$ is typable, then it has normal form, call it $\tmtwo$. Clearly 
	the type derivation for $\tmtwo$ has no cycles. By the previous corollary, 
	also $\tyd$ cannot have any of them.
\end{proof}

\begin{theorem}
\label{thm:iamtimetypes-app}
	For every closed term $\tm$, the IAM takes $n$ steps in $\tm$ iff
	$\WeightTimeIAM{\tyd}=n$ for every $\tyd\pof\tjudg{}{\tm}{\initty}$.
\end{theorem}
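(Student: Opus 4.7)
\medskip
\noindent\textbf{Proof plan.} The plan is to combine the three ingredients already in place: (i) the strong bisimulation $\bisimtypes$ between the \SIAM and the \LIAM (\refprop{str-bisim-typed}), (ii) the acyclicity of the \SIAM on any typing derivation $\tyd\pof\tjudg{}{\tm}{\initty}$ (the corollary at the end of \refsect{ssect:acyclic-app}), and (iii) the bi-determinism of the \SIAM together with the uniqueness of its initial state. The goal is to show that a complete \SIAM run on $\tyd$ visits \emph{every} state exactly once, so that its length coincides by definition with the number of occurrences of $\initty$ inside $\tyd$, which is exactly $\WeightTimeIAM{\tyd}$ by inspection of \reffig{weightskam}.

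\medskip
\noindent\textbf{Step 1: counting states of the \SIAM.} First I would verify, by a quick induction on the typing rules in \reffig{asstypesystem}, that the set of \SIAM states on $\tyd$ is finite and in bijection with the occurrences of $\initty$ on the right-hand side of $\vdash$ in $\tyd$ (each such occurrence gives both a judgement and a type context isolating it, and, by the definition of the transitions in \reffig{multiiam}, comes equipped with a unique associated direction). Then a second, equally straightforward induction on $\tyd$ shows that this cardinality is precisely $\WeightTimeIAM{\tyd}$.

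\medskip
\noindent\textbf{Step 2: the \SIAM visits every state exactly once.} By the proposition stated before \refthm{iamtimetypes}, the \SIAM is bi-deterministic on $\tyd$; there is only one initial state, namely the one focusing on $\initty$ on the right-hand side of the final judgement of $\tyd$ with direction $\uppt$; and by the acyclicity corollary the transition system has no cycles. Hence by the abstract lemma stated just before the acyclicity argument, every state of the \SIAM is reached from the initial state and reached exactly once. In particular, if $\runtwo_S$ is the complete \SIAM run on $\tyd$, then $\size{\runtwo_S}+1$ equals the number of \SIAM states, so $\size{\runtwo_S} = \WeightTimeIAM{\tyd}-1$ (or $\WeightTimeIAM{\tyd}$, up to the off-by-one bookkeeping fixed by looking at how \reffig{weightskam} counts the final $\tylamstar$).

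\medskip
\noindent\textbf{Step 3: transferring to the \LIAM.} Since $\tm$ is closed and typable with $\initty$, by the characterisation of termination $\tm$ has a weak head normal form, and so the \LIAM terminates on $\tm$; let $\runtwo_I$ be the unique complete \LIAM run from $\state_\tm$. By \refprop{str-bisim-typed} the relation $\bisimtypes$ is a strong bisimulation between \SIAM states on $\tyd$ and \LIAM states on $\tm$ that furthermore preserves reachability (the \emph{moreover} part), and it pairs the initial states of the two machines. A strong bisimulation between deterministic systems with matched initial states forces $\size{\runtwo_S}=\size{\runtwo_I}$, from which the equality $\size{\runtwo_I}=\WeightTimeIAM{\tyd}$ follows by Step~2. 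Combined with the determinism of $\toliam$, which makes $n$ unique, this gives the theorem in both directions.

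\medskip
\noindent\textbf{Expected obstacle.} The conceptually delicate part is already behind us: it is the acyclicity of the \SIAM, whose only self-contained proof route is the loop-preserving bisimulation $\relf$ followed by an induction on $\towh$-evaluation down to the $\tmtwo$ in $\tylamstar$ form, where acyclicity is trivial. The mildly annoying part of the present proof is the careful bookkeeping in Step~1 to match the inductive definition of $\WeightTimeIAM{\cdot}$ in \reffig{weightskam} with the number of \SIAM states on $\tyd$, making sure that the weights contributed by each rule ($\tyvar$, $\tylam$, $\tylamstar$, $\tyapp$) are exactly accounted for by the occurrences of $\initty$ that correspond to \SIAM states reached by the transitions attached to those rules in \reffig{multiiam}; this is routine but must be done rule by rule.
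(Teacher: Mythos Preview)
Your proposal is correct and follows essentially the same route as the paper: count \SIAM states as occurrences of $\initty$, use bi-determinism plus acyclicity plus the abstract reachability lemma to conclude every state is visited exactly once, then transfer to the \LIAM via the strong bisimulation $\bisimtypes$. Your explicit treatment of the off-by-one (the $\tylamstar$ rule contributing weight $0$ while still hosting one occurrence of $\initty$, which accounts exactly for the final state) is in fact more careful than the paper's own compressed argument, which identifies the run length with the number of states without spelling out this adjustment.
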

\begin{proof}
	Every state of $\tsys{\tyd}$ is traversed exactly once, during a 
	computation that starts from the initial state. Thus the length of 
	the 
	computation is the cardinality of the states of $\tsys{\tyd}$. Since a 
	state in a type judgment $\tjudg{\tye}{\tm}{\ty}$ occurring in $\tyd$ is 
	given by an occurrence 
	of $\initty$ in $\ty$, then for every judgment  the 
	number of associated states is $\size{\ty}$. Then, it is immediate to 
	note that the 
	number of states in a type derivation ending in 
	$\tyd\pof\wtjudg{}{n}{\tm}{\initty}$ is exactly $n$.
\end{proof}


\section{Duplication Example}\label{ap:dup}

We provide an example that illustrates how duplication is handled by the 
different abstract machines. We
consider the $\lambda$-term $\tm:=(\la\var\var\var)(\la\vartwo\vartwo)$.
\subsection{The \LIAM}
The first steps of the computation are 
needed to reach the head variable, namely $\var$.
\[{\footnotesize
	\begin{array}{l|c|c|c|c|c}
	&\mathsf{Sub}\mbox{-}\mathsf{term} & \mathsf{Context} & \mathsf{\Log} & 
	\mathsf{Tape} & \mathsf{Dir}
	\\
	\cline{1-6}
	&\ndstatetab{(\la x xx)(\la y y)} {\ctxhole} {\epsilon} {\epsilon} \downp\\
	\iamdap&\ndstatetab{\la x xx} {\ctxhole(\la y y)} {\resm} {\epsilon} 
	\downp\\
	\iamdlamone&\ndstatetab{xx} {(\la x \ctxhole)(\la y y)} {\epsilon} 
	{\epsilon} \downp\\
	\iamdap&\ndstatetab{x} {(\la x \ctxhole x)(\la y y)} {\resm} {\epsilon} 
	\downp
	\end{array}}
\]
Once the head variable $\var$ has been found, the machine switches to upward 
mode $\upp$ in order to find its argument $\la\vartwo\vartwo$.
\[{\footnotesize
	\begin{array}{l|c|c|c|c|c}
	&\mathsf{Sub}\mbox{-}\mathsf{term} & \mathsf{Context} & \mathsf{\Log} & 
	\mathsf{Tape} & \mathsf{Dir}
	\\
	\cline{1-6}
	&\ndstatetab{x} {(\la x \ctxhole x)(\la y y)} {\resm} 
	{\epsilon}{\downp} \\
	\iamdvar&\nustatetab{\la x xx} {\ctxhole(\la y y)} 
	{(x,\la x \ctxhole x,\epsilon)\cdot\resm} {\epsilon}{\upp} \\
	\iamuaplone&\ndstatetab{\la y y} {(\la x xx)\ctxhole} {\resm} 
	{(x,\la x \ctxhole x,\epsilon)}{\downp} \\
	\end{array}}
\]
Intuitively, the first occurrence of $\var$ has been substituted for 
$\la\vartwo\vartwo$, thus forming a new virtual $\beta$-redex 
$(\la\vartwo\vartwo)x$. Indeed, a $\resm$ is on top of the tape, thus allowing 
the \LIAM to inspect $\la\vartwo\vartwo$, reaching its head variable $\vartwo$.
\[{\footnotesize
	\begin{array}{l|c|c|c|c|c}
	&\mathsf{Sub}\mbox{-}\mathsf{term} & \mathsf{Context} & \mathsf{\Log} & 
	\mathsf{Tape} & \mathsf{Dir}
	\\
	\cline{1-6}
	&\ndstatetab{\la y y} {(\la x xx)\ctxhole} {\resm} 
	{(x,\la x \ctxhole x,\epsilon)}\downp \\
	\iamdlamone&\ndstatetab{y} {(\la x xx)(\la y \ctxhole)} {\epsilon} 
	{(x,\la x \ctxhole x,\epsilon)}\downp \\
	\iamdvar&\nustatetab{\la y y} {(\la x xx)\ctxhole} 
	{(y,\la\vartwo\ctxhole,\epsilon)} {(x,\la x \ctxhole x,\epsilon)}\upp \\
	\end{array}}
\]
Once the head variable $y$ has been found, the machine, in upward mode $\upp$, 
starts looking for the argument of $y$ from its binder $\la\vartwo\vartwo$. 
However, $\la\vartwo\vartwo$ was not the left side of an application forming a 
$\beta$-redex. Indeed, it was virtually substituted for the first occurrence of 
$\var$, in the log, thus creating the virtual redex $(\la\vartwo\vartwo)x$. Its 
argument is thus the second occurrence of $\var$. The \LIAM is able to 
retrieve it, walking again the path towards the variable $\la\vartwo\vartwo$ 
has been virtually substituted for, namely the first occurrence of $x$, saved 
in the log. This is what we call backtracking.
\[{\footnotesize
	\begin{array}{l|c|c|c|c|c}
	&\mathsf{Sub}\mbox{-}\mathsf{term} & \mathsf{Context} & \mathsf{\Log} & 
	\mathsf{Tape} & \mathsf{Dir}
	\\
	\cline{1-6}
	&\nustatetab{\la y y} {(\la x xx)\ctxhole} 
	{(y,\la\vartwo\ctxhole,\epsilon)} {(x,\la x \ctxhole x,\epsilon)}\upp \\
	\iamuapr&\ndstatetab{\la x xx} {\ctxhole(\la y y)} 
	{(x,\la x \ctxhole x,\epsilon)\cdot(y,\la\vartwo\ctxhole,\epsilon)} 
	{\epsilon}\downp \\
	\iamdlamtwo&\nustatetab{x} {(\la x \ctxhole x)(\la y y)} 
	{(y,\la\vartwo\ctxhole,\epsilon)} {\epsilon}\upp \\
	\iamuaplone&\ndstatetab{x} {(\la x x\ctxhole)(\la y y)} {\epsilon} 
	{(y,\la\vartwo\ctxhole,\epsilon)}\downp \\
	\end{array}}
\]
Notice that we are able to backtrack because we saved the occurrence of the 
substituted variable in the token, otherwise the machine would not be able to 
know  which occurrence of $x$ is the right one. Of course, when the first 
occurrence of $x$ is reached the $\IAM$, now again in upward mode $\upp$, finds 
immediately its argument, that is the second occurrence of $x$. At this point 
the machine looks for the argument of this last occurrence of $x$, finding, of 
course, again $\la\vartwo\vartwo$.
\[{\footnotesize
	\begin{array}{l|c|c|c|c|c}
	&\mathsf{Sub}\mbox{-}\mathsf{term} & \mathsf{Context} & \mathsf{\Log} & 
	\mathsf{Tape} & \mathsf{Dir}
	\\
	\cline{1-6}
	&\ndstatetab{x} {(\la x x\ctxhole)(\la y y)} {\epsilon} 
	{(y,\la\vartwo\ctxhole,\epsilon)}\downp \\
	\iamdvar&\nustatetab{\la x xx} {\ctxhole(\la y y)} 
	{(x,\la x x\ctxhole,(y,\la\vartwo\ctxhole,\epsilon))} {\epsilon}\upp \\
	\iamuaplone&\ndstatetab{\la y y} {(\la x xx)\ctxhole} {\epsilon} 
	{(x,\la x x\ctxhole,(y,\la\vartwo\ctxhole,\epsilon))}\downp\\
	\end{array}}
\]
The computation then stops, signaling that $\tm$ has weak head normal form. 
Please notice that the position on the log has now a nested structure. Indeed 
it carries information about the virtual substitutions already performed.

\subsection{The \LJAM}
The execution on the \LJAM is very similar to the \LIAM one. Logged positions 
now save the whole term and log.

\[{\footnotesize
	\begin{array}{l|c|c|c|c|c}
		&\mathsf{Sub}\mbox{-}\mathsf{term} & \mathsf{Context} & \mathsf{\Log} & 
		\mathsf{Tape} & \mathsf{Dir}
		\\
		\cline{1-6}
		&\ndstatetab{(\la x xx)(\la y y)} {\ctxhole} {\epsilon} {\epsilon} 
		\downp\\
		\iamdap&\ndstatetab{\la x xx} {\ctxhole(\la y y)} {\resm} {\epsilon} 
		\downp\\
		\iamdlamone&\ndstatetab{xx} {(\la x \ctxhole)(\la y y)} {\epsilon} 
		{\epsilon} \downp\\
		\iamdap&\ndstatetab{x} {(\la x \ctxhole x)(\la y y)} {\resm} {\epsilon} 
		\downp\\
		\iamdvar&\nustatetab{\la x xx} {\ctxhole(\la y y)} 
		{(x,(\la x \ctxhole x)(\la\vartwo\vartwo),\epsilon)\cdot\resm} 
		{\epsilon}{\upp} \\
\end{array}}
\]

The execution continues as in the \LIAM case. We put $\lpos_\var\defeq(x,(\la x 
\ctxhole x)(\la\vartwo\vartwo),\epsilon)$.

\[{\footnotesize
	\begin{array}{l|c|c|c|c|c}
		&\mathsf{Sub}\mbox{-}\mathsf{term} & \mathsf{Context} & \mathsf{\Log} & 
		\mathsf{Tape} & \mathsf{Dir}
		\\
		\cline{1-6}
		&\nustatetab{\la x xx} {\ctxhole(\la y y)} 
		{\lpos_\var\cdot\resm} 
		{\epsilon}{\upp} \\
		\iamuaplone&\ndstatetab{\la y y} {(\la x xx)\ctxhole} {\resm} 
		{\lpos_\var}\downp \\
		\iamdlamone&\ndstatetab{y} {(\la x xx)(\la y \ctxhole)} {\epsilon} 
		{\lpos_\var}\downp \\
		\iamdvar&\nustatetab{\la y y} {(\la x xx)\ctxhole} 
		{(y,(\la\var\var)(\la\vartwo\ctxhole),\lpos_\var)} {\lpos_\var}\upp \\
\end{array}}
\]

In this situation the \LIAM would backtrack. The \LJAM, instead, directly jumps 
to the previously saved logged position.

\[{\footnotesize
	\begin{array}{l|c|c|c|c|c}
		&\mathsf{Sub}\mbox{-}\mathsf{term} & \mathsf{Context} & \mathsf{\Log} & 
		\mathsf{Tape} & \mathsf{Dir}
		\\
		\cline{1-6}
		&\nustatetab{\la y y} {(\la x xx)\ctxhole} 
		{(y,(\la\var\var)(\la\vartwo\ctxhole),\lpos_\var)} {\lpos_\var}\upp \\
		\iamujump&\nustatetab{x} {(\la x \ctxhole x)(\la y y)} 
		{(y,(\la\var\var)(\la\vartwo\ctxhole),\lpos_\var)} {\epsilon}\upp \\
\end{array}}
\]

The computation then resumes as is the \LIAM case. We put $\lpos_\vartwo\defeq 
(y,(\la\var\var)(\la\vartwo\ctxhole),\lpos_\var)$.

\[{\footnotesize
	\begin{array}{l|c|c|c|c|c}
		&\mathsf{Sub}\mbox{-}\mathsf{term} & \mathsf{Context} & \mathsf{\Log} & 
		\mathsf{Tape} & \mathsf{Dir}
		\\
		\cline{1-6}
		&\nustatetab{x} {(\la x \ctxhole x)(\la y y)} 
		{\lpos_\vartwo} {\epsilon}\upp \\
		\iamuaplone&\ndstatetab{x} {(\la x x\ctxhole)(\la y y)} {\epsilon} 
		{\lpos_\vartwo}\downp \\
		\iamdvar&\nustatetab{\la x xx} {\ctxhole(\la y y)} 
		{(x,(\la x x\ctxhole)(\la\vartwo\vartwo),\lpos_\vartwo)} {\epsilon}\upp 
		\\
		\iamuaplone&\ndstatetab{\la y y} {(\la x xx)\ctxhole} {\epsilon} 
		{(x,(\la x x\ctxhole)(\la\vartwo\vartwo),\lpos_\vartwo)}\downp\\
\end{array}}
\]

\subsection{The KAM}
The \KAM, as the other machines, looks for the head variable of the term, inspecting its spine. Every time an application is encountered, the argument is saved in the stack, together with its environment. When abstractions are encountered, an entry in the environment is created, linking the abstracted variable with the closure on the top of the stack.
\[{\footnotesize
	\begin{array}{l|c|c|c|c}
		&\mathsf{Sub}\mbox{-}\mathsf{term} & \mathsf{Context} & \mathsf{Env.} & 
		\mathsf{Stack}
		\\
		\cline{1-5}
		&\dstatetab{(\la x xx)(\la y y)} {\ctxhole} {\epsilon} {\epsilon} \\
		\kamdap&\dstatetab{\la x xx} {\ctxhole(\la y y)} 
		{(\la\vartwo\vartwo,(\la\var\var\var)\ctxhole,\epsilon)} {\epsilon} \\
		\kamdlam&\dstatetab{xx} {(\la x \ctxhole)(\la y y)} {\epsilon} 
		{\esub\var{(\la\vartwo\vartwo,(\la\var\var\var)\ctxhole,\epsilon)}=:\env}\\
		\kamdap&\dstatetab{x} {(\la x \ctxhole x)(\la y y)} {(\var,(\la x x 
		\ctxhole)(\la y y),\env)} {\env} 
		\\
\end{array}}
\]

This way, when a variable is encountered, the \KAM can hop directly to the sub-term it would be substituted for, just by inspecting the environment. Moreover, the right environment for the sub-term can be restored from its closure.

\[{\footnotesize
	\begin{array}{l|c|c|c|c}
		&\mathsf{Sub}\mbox{-}\mathsf{term} & \mathsf{Context} & \mathsf{Env.} & 
		\mathsf{Stack}
		\\
		\cline{1-5}
		&\dstatetab{x} {(\la x \ctxhole x)(\la y y)} {(\var,(\la x x 
			\ctxhole)(\la y y),\env)} {\env}\\
		\iamdvar&\dstatetab{\la y y} {(\la x xx)\ctxhole} {(\var,(\la x x 
			\ctxhole)(\la y y),\env)} 
		{\stempty}\\
		\kamdlam&\dstatetab{y} {(\la x xx)(\la y \ctxhole)} {\epsilon} 
		{\esub\vartwo{(\var,(\la x x \ctxhole)(\la y y),\env)}}\\
		\iamdvar&\dstatetab{\var}{(\la x x \ctxhole)(\la y y)}{\stempty}{\env}\\
		\iamdvar&\dstatetab{\la y 
		y}{(\la\var\var\var)\ctxhole}{\stempty}{\stempty}
\end{array}}
\]

\subsection{The \LPAM}
The \LPAM starts the computation looking for the head variable, namely $\var$, traversing the spine of$\tm$. When it has been found, the machine, now in $\upp$ mode, turns to query the argument, saving the variable position on the tape.
\[{\footnotesize
	\begin{array}{l|c|c|c|c|c|c}
		&\mathsf{Sub}\mbox{-}\mathsf{term} & \mathsf{Context} & \mathsf{Hist.} 
		& \mathsf{Index} & \mathsf{Tape} & \mathsf{Dir}
		\\
		\cline{1-7}
		&\pamstatedtab{(\la x xx)(\la y y)} {\ctxhole} {\epsilon} {0} 
		{\epsilon} &\downp\\
		\iamdap&\pamstatedtab{\la x xx} {\ctxhole(\la y y)} {\epsilon} {0} 
		{\resm} &\downp\\
		\iamdlamone&\pamstatedtab{xx} {(\la x \ctxhole)(\la y y)} {\epsilon} {0}
		{\epsilon} &\downp\\
		\iamdap&\pamstatedtab{x} {(\la x \ctxhole x)(\la y y)} 
		{\epsilon} {0} {\resm} &\downp\\
		\iamdvar&\pamstateutab{\la x xx} {\ctxhole(\la y y)} {\epsilon} {0} 
		{(x,(\la x \ctxhole x)(\la\vartwo\vartwo))\cdot\resm}&{\upp} \\
\end{array}}
\]

When the argument, namely $\la\vartwo\vartwo$ has been found, the variable gets saved in the history permanently, together with the index needed to reconstruct the chain of virtual substitutions.

\[{\footnotesize
	\begin{array}{l|c|c|c|c|c|c}
		&\mathsf{Sub}\mbox{-}\mathsf{term} & \mathsf{Context} & \mathsf{Hist.} 
		& \mathsf{Index} & \mathsf{Tape} & \mathsf{Dir}
		\\
		\cline{1-7}
		&\pamstateutab{\la x xx} {\ctxhole(\la y y)} 
		{\epsilon}{0}{\pos_\var\cdot\resm} &{\upp} \\
		\iamuaplone&\pamstatedtab{\la y y} {(\la x xx)\ctxhole} 
		{(\pos_\var,0)}{1}{\resm} &\downp \\
		\iamdlamone&\pamstatedtab{y} {(\la x xx)(\la y \ctxhole)}  
		{(\pos_\var,0)}{1}{\epsilon}&\downp \\
		\iamdvar&\pamstateutab{\la y y} {(\la x xx)\ctxhole} 
		 {(\pos_\var,0)}{1}{(y,(\la\var\var)(\la\vartwo\ctxhole))}&\upp 
		 \\
\end{array}}
\]

Jumps are handled by retrieving the position in the history at the current index, that is then decreased by one in order to coherently update the information associated to substitutions.

\[{\footnotesize
	\begin{array}{l|c|c|c|c|c|c}
		&\mathsf{Sub}\mbox{-}\mathsf{term} & \mathsf{Context} & \mathsf{Hist.} 
		& \mathsf{Index} & \mathsf{Tape} & \mathsf{Dir}
		\\
		\cline{1-7}
		&\pamstateutab{\la y y} {(\la x xx)\ctxhole} 
		 {(\pos_\var,0)}{1}{(y,(\la\var\var)(\la\vartwo\ctxhole))}&\upp \\
		\iamujump&\pamstateutab{x} {(\la x \ctxhole x)(\la y y)} 
		 {(\pos_\var,0)}{0}{(y,(\la\var\var)(\la\vartwo\ctxhole))}&\upp \\
\end{array}}
\]

In transition $\iamdvar$, if the level of the binding context is greater than zero, one has to follow the chain of pointers in order to set the correct information about substitutions. 

\[{\footnotesize
	\begin{array}{l|c|c|c|c|c|c}
		&\mathsf{Sub}\mbox{-}\mathsf{term} & \mathsf{Context} & \mathsf{Hist.} 
		& \mathsf{Index} & \mathsf{Tape} & \mathsf{Dir}
		\\
		\cline{1-7}
		&\pamstateutab{x} {(\la x \ctxhole x)(\la y y)} 
		 {(\pos_\var,0)}{0}{\pos_\vartwo}&\upp \\
		\iamuaplone&\pamstatedtab{x} {(\la x x\ctxhole)(\la y y)}  
		{(\pos_\vartwo,0)\cons(\pos_\var,0)}{2}{\epsilon}&\downp \\
		\iamdvar&\pamstateutab{\la x xx} {\ctxhole(\la y y)} 
		 {(\pos_\vartwo,0)\cons(\pos_\var,0)}{0}{(x,(\la x 
		 x\ctxhole)(\la\vartwo\vartwo))}&\upp 
		\\
		\iamuaplone&\pamstatedtab{\la y y} {(\la x xx)\ctxhole}  
		{(\pos_{\var_2},0)\cons(\pos_\vartwo,0)\cons(\pos_\var,0)}{3} 
		{\epsilon}&\downp\\
\end{array}}
\]

In general, the way in which the \LPAM works is not very intuitive. The simplest way to understand it, is to think about the correspondence with the \LJAM. Given a log $\tlog_n\defeq(\var,\ctx,\tlogtwo) \cons\lpos_{2}\cdots\lpos_n$, a history $\history$ and index $i$, if $i=|\history|$, then the \LPAM is considering the log $\tlog_n$. Otherwise diminishing $i$ by one and considering the position in the history corresponding to the index $i$ amounts to consider the log $\tlogtwo$. Finally, following the chain of pointers $k$ times, \ie considering as index $\phi_\history^k(i)$ amounts to consider the log $\lpos_{k+1}\cdots\lpos_{n}$.

\subsection{The \SIAM}
As we have already proved, the \SIAM is strongly bisimilar to the \LIAM. One can indeed observe that the sequence of states reached by the \SIAM is the same sequence of the \LIAM. The \SIAM does not need additional data structures, since every sub-term has been already duplicated in advanced as many times as needed. The argument $\la\vartwo\vartwo$, for example, is typed twice, since it is substituted for \emph{two} occurrences of $\var$ during the evaluation of $\tm$. 

\[
\infer{\tjudg{}{(\la\var\var\var)(\la\vartwo\vartwo)}{\initty_{\uppt{\red{1}}}}}{
	\infer{\tjudg{}{\la\var\var}{\arr{\mset{\initty_{\downpt{\blue 
	{12}}},\arr{\mset{\initty_{\uppt{\red
	 {9}}}}}{\initty_{\downpt{\blue
	 {5}}}}}}{\initty_{\uppt{\red
	 {2}}}}}}{
		\infer{\tjudg{\var:\mset{\initty,\arr{\mset{\initty}}{\initty}}}{\var\var}{\initty_{\uppt{\red
		 {3}}}}}{
			\infer{\tjudg{\var:\mset{\arr{\mset{\initty}}{\initty}}}{\var}{\arr{\mset{\initty_{\downpt{\blue
			 {10}}}}}{\initty_{\uppt{\red
			 {4}}}}}}{}
			 & \infer{\tjudg{\var:\mset\initty}{\var}{\initty_{\uppt{\red 
			 {11}}}}}{}}} &
		\infer{\tjudg{}{\la\vartwo\vartwo}{\arr{\mset{\initty_{\downpt{\blue 
		{8}}}}}{\initty_{\uppt{\red
		 {6}}}}}}{
			\infer{\tjudg{\vartwo:\initty}{\vartwo}{\initty_{\uppt{\red 
			{7}}}}}{}} &
		\infer{\tjudg{}{\la\vartwo\vartwo}{\initty_{\uppt{\red {13}}}}}{}}
\]


\end{document}